\documentclass[a4paper,nobind]{ociamthesis}

\usepackage{graphicx}

\usepackage{amsmath,amsthm,amssymb,amsfonts,soul}
\usepackage{braket}
\usepackage{footnote}
\usepackage{epigraph}

\correctionstrue

\usepackage{iftex}

\usepackage[utf8]{inputenc}
\usepackage{amsmath,amssymb,amsthm}
\usepackage{mathtools}
\usepackage{graphicx}
\usepackage{caption, subcaption}
\usepackage{braket}
\usepackage{textcomp}
\usepackage{verbatim}
\usepackage{adjustbox}
\usepackage{bbold}
\usepackage{chemarrow}

\usepackage{charter}
\usepackage{tikz}
\usepackage{graphicx}
\usepackage{amsmath}
\usepackage{amssymb}   

\usepackage{stmaryrd}    
\usepackage{float}

\usepackage{xspace} 
\usepackage{color} 
\usepackage{epsfig}

\newtheorem{theorem}{Theorem}[section]

\theoremstyle{definition}
\newtheorem*{theorem*}{Theorem}
\newtheorem{lemma}[theorem]{Lemma} 
\newtheorem{proposition}[theorem]{Proposition}

\newtheorem{defn}[theorem]{Definition}
 
\newtheorem{example}[theorem]{Example} 
\newtheorem{examples}[theorem]{Examples}
\newtheorem{example*}[theorem]{Example*}
\newtheorem{examples*}[theorem]{Examples*}
\newtheorem{remark}[theorem]{Remark}
\newtheorem{remark*}[theorem]{Remark*}

\newtheorem{convention}[theorem]{Convention}

\newtheorem{disclaimer}[theorem]{Disclaimer}

\newtheorem{thesis}[theorem]{Thesis}
\newtheorem{definition}[theorem]{Definition}

\usepackage{tikzfig}
\usepackage{keycommand} 

\usepackage[ruled,vlined]{algorithm2e}
\usepackage{endnotes}

\usepackage{tikz-cd}
\usepackage{tikz}
\usepackage{pgfplots}

\usepackage{bussproofs}

\usepackage{stackrel}

\ifpdf
\usepackage{underscore}         
\usepackage[T1]{fontenc}        
\else
\usepackage{breakurl}           
\fi
\usepackage{authblk}

\usepackage{url}

\usepackage[english]{babel}
\usetikzlibrary{babel}

\usepackage[backend=biber, sorting=none, style=numeric-comp,   maxcitenames=1, maxbibnames=99, giveninits=true, doi=false,
url=false, eprint=false, isbn=false]{biblatex}
\DeclareUnicodeCharacter{0301}{\'{e}}
\DeclareUnicodeCharacter{0302}{\'{e}}
\DeclareUnicodeCharacter{0300}{\'{e}}
\DeclareUnicodeCharacter{0304}{\'{e}}

\title{String Diagrams for Quantum Foundations,\\ Computing and Natural Language Processing}
\author{Muhammad Hamza Waseem}
\college{Magdalen College}

\degree{Doctor of Philosophy}
\degreedate{Trinity 2024}

\begin{document}

\setlength{\textbaselineskip}{22pt plus2pt}

\setlength{\frontmatterbaselineskip}{17pt plus1pt minus1pt}

\setlength{\baselineskip}{\textbaselineskip}

\setcounter{secnumdepth}{2}
\setcounter{tocdepth}{2}

\begin{romanpages}

\maketitle

\begin{acknowledgements}
\[
\includegraphics[width=7cm]{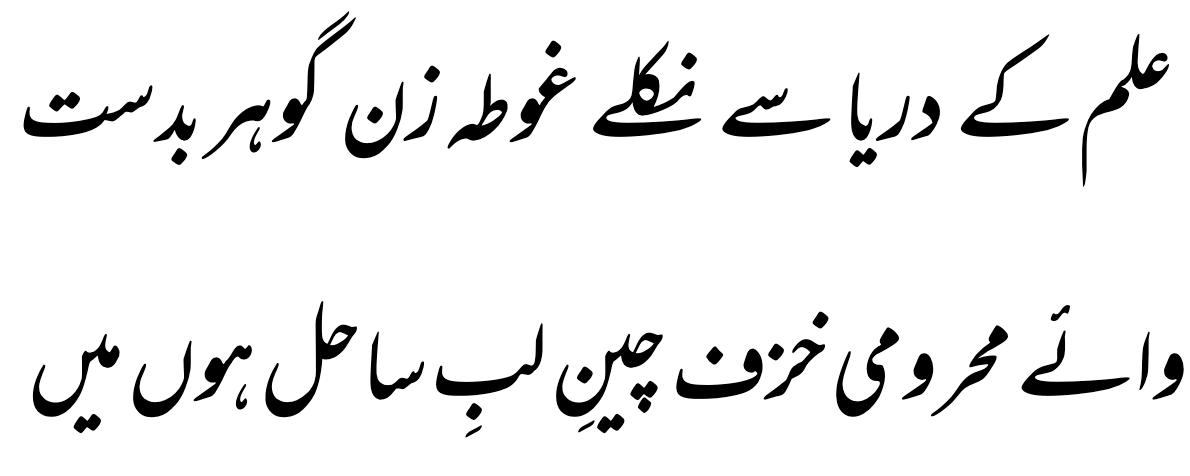}
\]
\centerline{\textit{``From the sea of knowledge, divers emerge with pearls in hand.}}
\centerline{\textit{Alas, O deprivation! I am but a mere collector of pebbles on the shore.''}}
 \\
 \\
Like Iqbal, I feel like an ordinary pebble-collector at the end of my DPhil. And yet, the chance even to touch the shore of the sea of knowledge has been an extraordinary privilege. No one writes a thesis alone. This is my attempt to acknowledge the debts I owe to the people and communities who have supported my journey.

I cannot imagine writing this acknowledgements section without mentioning the profound impact Sabieh Anwar had on my academic path. About a decade ago, I was a freshman who had reluctantly chosen electrical engineering, more out of pragmatism and societal pressure than genuine passion, which for me lay in physics. As I learned about Sabieh’s trajectory, I realised it was possible to pivot toward physics. In my second year of undergrad, I had the privilege of working with Sabieh on science outreach and popularisation. Later, I managed to undertake a physics thesis under his supervision, despite bureaucratic challenges at my home institute, since Sabieh was at a different university. That experience, I believe, was the defining factor that eventually led me to Oxford for a DPhil in Physics.

While looking for DPhil projects and research groups, I ran into some scepticism at Oxford, mainly because I did not hold a formal degree in physics. But Alexy Karenowska saw things differently. Having made the transition from engineering to physics herself, she offered me a place in her group. Looking back, I think I hit the jackpot. With Alexy, I found not just an excellent and supportive research supervisor, but also someone who genuinely appreciated my passion for science outreach and teaching. Throughout my DPhil, I collaborated with her on several related projects.

In Alexy's group, I was originally meant to study the quantum properties of electronic spin waves through experiments. But then the Covid-19 pandemic hit, and all the lab work became nearly impossible. To make things worse, I caught Covid myself and was ill for almost six months; this was well before vaccines were around. Travel restrictions meant I was stuck in Pakistan for over a year.

Still, I kept up with teaching and outreach remotely. Through it all, Alexy, my co-supervisor John Gregg, and my College Advisor Giles Barr were consistently encouraging. John and Giles, especially, made sure I always had teaching opportunities at Magdalen College. That eventually led to a Lectureship, something I am incredibly grateful for. I additionally owe a lot to my students at Oxford, who constantly challenged my understanding of physics and may have taught me more than I taught them.

When lockdowns eased, I realised my heart was not really in experimental physics anymore. I felt a strong pull toward theoretical work, especially in quantum foundations. That is when I reached out to Bob Coecke. He had just left academia, but generously agreed to supervise me at Quantinuum, a quantum computing company. I have learned so much from Bob not just in physics, maths, and philosophy, but also in how to build community and collaborate. He introduced me to the world of string diagrams and helped me see the world through a relational lens.

I am truly grateful to Alexy for backing my transition completely and continuing to believe in me throughout. This thesis is the result of work guided by both Alexy and Bob, and I really could not have asked for better supervisors.

I have likewise been deeply fortunate to have wonderful mentors beyond research. The late Vicky Neale gave me thoughtful and generous feedback on my teaching philosophy and practice. I benefited a great deal from co-teaching with the late Derek Goldrei, whose tutorials were masterclasses in mathematical clarity and wit. Sian Tedaldi always had my back on the outreach front, and Soufia Siddiqi was someone I could turn to for academic, semi-academic, and even non-academic advice whenever I needed it.

Over the course of my DPhil, I got to work on a project that aimed to make quantum physics more accessible. In the process, I gained insight into inclusive teaching from Selma Coecke, Caterina Puca, and Lia Yeh. I am thankful as well to the late Ian Shipsey for supporting the project within the Department of Physics and the MPLS Division.

To all my collaborators and co-authors: thank you for everything you have taught me. I have also had many inspiring conversations with colleagues and friends. I am probably forgetting some names, but I especially want to thank Johannes Fankhauser, Maria Violaris, Nicetu Tibau Vidal, Lia Yeh, Sadia Naeem, and Robin Lorenz. I am equally appreciative of my research group colleagues, Finlay Ryburn, Sally Lord, and Will Henderson. My work thrived in the intellectually rich and welcoming environment of the Quantinuum office in Oxford. Every visit left me with fresh ideas and motivation.

My sincere appreciation to Sean Tull, Harny Wang, Jonathon Liu, and Vincent Wang-Ma{\'s}cianica for reading parts of this thesis and offering helpful feedback. I am especially indebted to Anna Pearson for reading the whole thing and offering thoughtful, constructive critique throughout.

I was fortunate to be financially supported by the Rhodes Scholarship for the first three years of my DPhil. I am grateful to the Rhodes Trust and the selection committee for believing in me.

The Rhodes community gave me so much: space for reflection, learning, and friendship. Special thanks to the Rhodes House staff, especially Mary Eaton (Registrar) and Maureen Freed (Scholar Wellbeing Adviser). To my fellow Rhodes Scholars and friends---Khansa Maria, Arslan Chaudhry, Sana Naeem, Siobhan Tobin, Namrata Ramesh, Terence Tsui, Krishnendu Ray, and Suhas Mahesh---thank you for the deep, thoughtful, and often heartwarming conversations. You have all left a mark on me.

Magdalen College supported me in all kinds of ways: great opportunities for teaching and outreach, amazing library access, and the ever-welcoming Dining Hall and Old Kitchen Bar. I feel honoured to have been part of this college, which also helped finance the latter years of my DPhil.

On a personal note, I have been blessed with the steady support of friends throughout this journey.

Mohsin Javed has been a constant through all my highs and lows in Oxford. His advice and insight, both intellectual and emotional, have helped me in countless ways. I am also thankful to Sarah Salim, Amal Javed, and Zaha Javed for their enduring friendship and love.

Abdul Afzal encouraged me to dig into philosophy and the foundations of quantum physics. Much of my philosophical education started with his book recommendations. Our wide-ranging discussions on consciousness, politics, love, and friendship were always a source of joy and insight.

My friendship with Ayesha Ali deeply enriched my perspective on life and relationships, and truly transformed me as a person. I cannot thank her enough for seeing light and goodness in me, especially at times when I struggled to see them myself.

Hassaan Saleem has also been a great friend: always encouraging, occasionally irritating, and often hilarious. I am glad we have stayed connected despite being on opposite sides of the Atlantic.

Mahdi Murtaza made the latter half of my time at Magdalen especially fun. Thanks for dragging me to formal dinners and Magdalen Iftars, and for showing up with home-cooked food.

I am truly thankful to Duaa Jamshaid for her kindness, her encouragement, our many conversations on physics and society, and for always cheering me on. I am beyond grateful for your support and friendship.

Over these years, I was fortunate to spend time with friends in Pakistan---Khadija Maryam, Shahryar Khan, Faizan-e-Ilahi, Tooba Malik, Shumail Hassan, Waqar Ali, and Mohsina Asif---and those in Oxford: Rishika Sahgal, Ninad Rajgopal, Bhavna Verma, Vikaran Khanna, and Shai de Vries. Thank you for looking after me and being such a meaningful part of this journey.

And of course, my friends from the Khwarizmi Science Society, Salman Mahmood Qazi, Nimra Khurram, and Charisma Wafee, thank you for keeping me in the loop on all things science, science communication, and occasionally science miscommunication during my visits back to Pakistan.

But perhaps the greatest privilege of all has been having a family that is unconditionally behind me in my quest for knowledge. Without the prayers and sacrifices of my mother, Saima, and my father, Waseem, and their constant encouragement, I would not have achieved anything worthwhile. My brothers, Hammad and Haider, have celebrated my journey every step of the way. My Dadi and Phupho have always been a source of strength and love. And I am sure my late Dada would have been proud to see this chapter of my life.

The city of Oxford has been a wellspring of dreams, ideas, and inspiration. These years have been the happiest of my life so far, and I cannot leave without leaving a piece of my heart behind. Oxford has truly become my second home. In the words of Ghalib, I say to this city:

\[
\includegraphics[width=7cm]{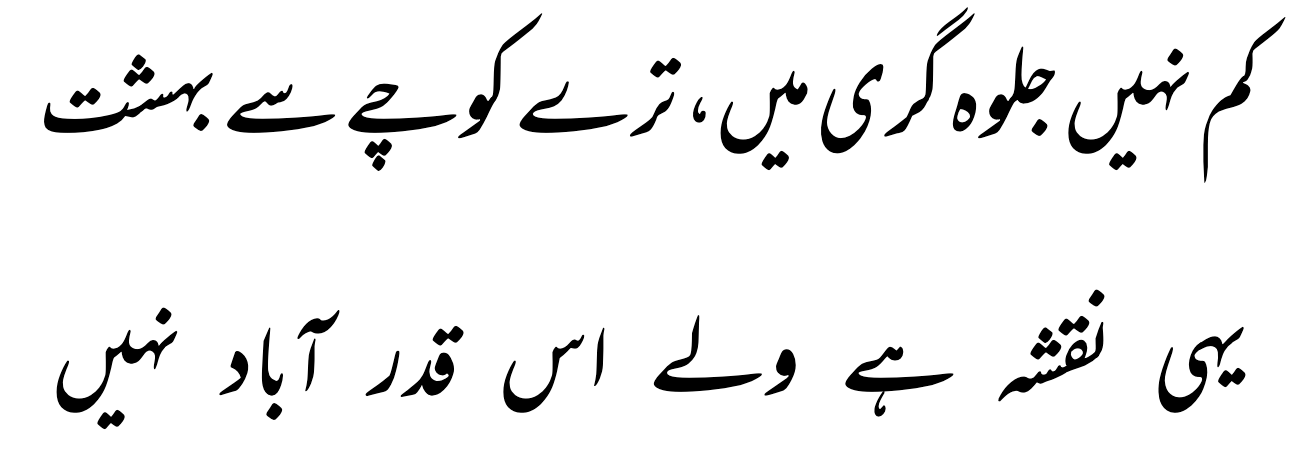}
\]
\centerline{\textit{``Not less in splendour is Paradise than thy street.}}
\centerline{\textit{This very design exists, indeed, yet not so flourishing.''}}
\end{acknowledgements}

\begin{abstract}
	
Applied category theory provides powerful mathematical tools for modelling processes and their composition. Symmetric monoidal categories, which involve series and parallel composition, are particularly well-suited for describing the composition of processes in space and time. Also called process theories, they admit string diagrams, which constitute a visually intuitive, mathematically rigorous, expressive and flexible syntax that is applicable to wide-ranging scientific domains. 

In this thesis, we employ string diagrams to investigate a selection of topics in the areas of quantum foundations, computing, and natural language processing. We report three main contributions:

\begin{itemize}

    \item We formalise constructor theory as a process theory. In the context of quantum physics, we also demonstrate the conflict between constructor-theoretic principles of locality and composition. Moreover, we argue that if the principle of locality is rejected, categorical quantum mechanics (CQM) can be conceived as a constructor theory of quantum physics.

    \item We develop a formalism for wave-based logic circuits with phase encoding. We motivate the formalism using the example of spin-wave circuits, and then demonstrate its utility in design, analysis and optimisation of Boolean logic circuits. 

    \item We investigate the elimination of inter-language grammatical bureaucracy in the distributional compositional circuits (DisCoCirc) framework. In particular, we develop a hybrid grammar for a restricted fragment of the Urdu language, and show that Urdu text endowed with this hybrid grammar maps surjectively to DisCoCirc text circuits. Furthermore, we show that for the same language fragment, Urdu and English text circuits become the same up to gate-level translation. 
\end{itemize}

The aforementioned work supports the view that a process-relational outlook in science is well-supported by applied category-theoretic tools, particularly string diagrams. 
\end{abstract}

\flushbottom

\tableofcontents

\end{romanpages}

\flushbottom

\newcommand{\textdef}[1]{\textbf{#1}}
\newcommand{\obj}[1]{\ensuremath{\text{obj}\!\left(#1\right)}}
\newcommand{\states}[2]{\ensuremath{\text{states}_{#1}\!\left(#2\right)}}
\newcommand{\id}[1]{\ensuremath{\text{id}_{#1}}}
\newcommand{\substr}[1]{\ensuremath{\texttt{#1}}}
\newcommand{\smc}[1]{\ensuremath{\textbf{#1}}}
\newcommand{\Rel}{\smc{Rel}}
\newcommand{\EndoRel}{\smc{EndoRel}}
\newcommand{\tmapsto}[1]{\ensuremath{\stackrel{\task{#1}}{\mapsto}}}
\newcommand{\suchthat}[2]{\ensuremath{\left\{#1\;\middle|\;#2\right\}}}
\newcommand{\task}[1]{\ensuremath{\mathfrak{#1}}}
\newcommand{\precond}[2]{\ensuremath{{#1}|_{#2}}}
\newcommand{\postcond}[2]{\ensuremath{{#1}|^{#2}}}
\newcommand{\ppcond}[3]{\ensuremath{{#1}|_{#2}^{#3}}}
\newcommand{\indtask}[1]{\ensuremath{\left\lfloor #1 \right\rfloor}}
\newcommand{\possibletasks}[1]{\ensuremath{#1^\checkmark}}
\newcommand{\quotask}[3]{#1|_{#2}^{#3}}

\newcommand\nuclear{\omding\char195\xspace}
\newcommand\goodskull{\omding\char194\xspace}
\newcommand\frowny{\raisebox{-0.5mm}{\scalebox{1.4}{\omding\char197}}\xspace}
\newcommand\smily{\raisebox{-0.5mm}{\scalebox{1.4}{\omding\char196}}\xspace}

\newcommand*{\threesim}{%
  \mathrel{\vcenter{\offinterlineskip
  \hbox{$\sim$}\vskip-.35ex\hbox{$\sim$}\vskip-.35ex\hbox{$\sim$}}}}

\newcommand{\C}{\ensuremath{\mathbb{C}}}

\newcommand{\smallsum}{\textstyle{\sum\limits_i}\xspace}
\newcommand{\smallsumi}[1]{\textstyle{\sum\limits_{#1}}\xspace}

\newcommand{\emptydiag}{\,\tikz{\node[style=empty diagram] (x) {};}\,}
\newcommand{\scalar}[1]{\,\tikz{\node[style=scalar] (x) {$#1$};}\,}
\newcommand{\dscalar}[1]{\,\tikz{\node[style=scalar, doubled] (x) {$#1$};}\,}
\newkeycommand{\boxmap}[style=small box][1]{\,\tikz{\node[style=\commandkey{style}] (x) {$#1$};\draw(0,-1.25)--(x)--(0,1.25);}\,}
\newkeycommand{\dboxmap}[style=dbox][1]{\,\tikz{\node[style=\commandkey{style}] (x) {$#1$};\draw[boldedge] (0,-1.25)--(x)--(0,1.25);}\,}
\newcommand{\dhadamard}{\,\tikz{\node[style=dhadamard] (x) {$H$};\draw[boldedge] (0,-1.25)--(x)--(0,1.25);}\,}
\newcommand{\hadamard}{\,\tikz{\node[style=hadamard] (x) {$H$};\draw[] (0,-0.75)--(x)--(0,0.75);}\,}

\newcommand{\tboxmap}[3]{\,\begin{tikzpicture}
  \begin{pgfonlayer}{nodelayer}
    \node [style=small box] (0) at (0, 0) {$#1$};
    \node [style=none] (1) at (0, -0.5) {};
    \node [style=none] (2) at (0, -1.25) {};
    \node [style=none] (3) at (0, 1.25) {};
    \node [style=none] (4) at (0, 0.5) {};
    \node [style=right label] (5) at (0.25, -1) {$#2$};
    \node [style=right label] (6) at (0.25, 1) {$#3$};
  \end{pgfonlayer}
  \begin{pgfonlayer}{edgelayer}
    \draw (2.center) to (1.center);
    \draw (4.center) to (3.center);
  \end{pgfonlayer}
\end{tikzpicture}}
\newcommand{\boxstate}[1]{\,\tikz{\node[style=small box] (x) {$#1$};\draw(x)--(0,1.25);}\,}
\newcommand{\boxeffect}[1]{\,\tikz{\node[style=small box] (x) {$#1$};\draw(0,-1.25)--(x);}\,}
\newcommand{\boxmapTWOtoONE}[1]{\,\tikz{\node[style=small box] (x) {$#1$};\draw(-0.8,-1)--(x)--(0,1);\draw (0.8,-1)--(x);}\,}
\newcommand{\boxmapONEtoTWO}[1]{\,\tikz{\node[style=small box] (x) {$#1$};\draw(0,-1)--(x)--(-1,1);\draw (x)--(1,1);}\,}

\newcommand{\doubleop}{\ensuremath{\textsf{double}}\xspace}

\newcommand{\kscalar}[1]{\,\tikz{\node[style=kscalar] (x) {$#1$};}\,}
\newcommand{\kscalarconj}[1]{\,\tikz{\node[style=kscalarconj] (x) {$#1$};}\,}

\newcommand{\dmap}[1]{\,\tikz{\node[style=dmap] (x) {$#1$};\draw[boldedge] (0,-1.3)--(x)--(0,1.3);}\,}
\newcommand{\dmapdag}[1]{\,\tikz{\node[style=dmapdag] (x) {$#1$};\draw[boldedge] (0,-1.3)--(x)--(0,1.3);}\,}
\newcommand{\dmaptrans}[1]{\,\tikz{\node[style=dmaptrans] (x) {$#1$};\draw[boldedge] (0,-1.3)--(x)--(0,1.3);}\,}
\newcommand{\dmapconj}[1]{\,\tikz{\node[style=dmapconj] (x) {$#1$};\draw[boldedge] (0,-1.3)--(x)--(0,1.3);}\,}

\newcommand{\map}[1]{\,\tikz{\node[style=map] (x) {$#1$};\draw(0,-1.3)--(x)--(0,1.3);}\,}

\newcommand{\maptypes}[3]{\begin{tikzpicture}
  \begin{pgfonlayer}{nodelayer}
    \node [style=map] (0) at (0, 0) {$#1$};
    \node [style=none] (1) at (0, -1.5) {};
    \node [style=none] (2) at (0, 1.5) {};
    \node [style=right label] (3) at (0.25, -1.25) {$#2$};
    \node [style=right label] (4) at (0.25, 1) {$#3$};
  \end{pgfonlayer}
  \begin{pgfonlayer}{edgelayer}
    \draw (1.center) to (0);
    \draw (0) to (2.center);
  \end{pgfonlayer}
\end{tikzpicture}}

\newcommand{\mapdag}[1]{\,\tikz{\node[style=mapdag] (x) {$#1$};\draw(0,-1.3)--(x)--(0,1.3);}\,}

\newcommand{\maptrans}[1]{\,\tikz{\node[style=maptrans] (x) {$#1$};\draw(0,-1.3)--(x)--(0,1.3);}\,}

\newcommand{\mapconj}[1]{\,\tikz{\node[style=mapconj] (x) {$#1$};\draw(0,-1.3)--(x)--(0,1.3);}\,}

\newcommand{\mapONEtoTWO}[1]{\,\begin{tikzpicture}
  \begin{pgfonlayer}{nodelayer}
    \node [style=map] (0) at (0, 0) {$#1$};
    \node [style=none] (1) at (0, -0.5) {};
    \node [style=none] (2) at (-0.75, 0.5) {};
    \node [style=none] (3) at (-0.75, 1.25) {};
    \node [style=none] (4) at (0, -1.25) {};
    \node [style=none] (5) at (0.75, 0.5) {};
    \node [style=none] (6) at (0.75, 1.25) {};
  \end{pgfonlayer}
  \begin{pgfonlayer}{edgelayer}
    \draw (4.center) to (1.center);
    \draw (2.center) to (3.center);
    \draw (5.center) to (6.center);
  \end{pgfonlayer}
\end{tikzpicture}\,}

\newcommand{\classicalstate}[1]{%
\,\begin{tikzpicture}
  \begin{pgfonlayer}{nodelayer}
    \node [style=dkpoint] (0) at (0, -0.75) {$#1$};
    \node [style=white dot] (1) at (0, 0.25) {};
    \node [style=none] (2) at (0, 1) {};
  \end{pgfonlayer}
  \begin{pgfonlayer}{edgelayer}
    \draw [style=boldedge] (0) to (1);
    \draw (1) to (2.center);
  \end{pgfonlayer}
\end{tikzpicture}\,}

\newkeycommand{\wirelabel}[style=white label]{\,\tikz{\node[style=\commandkey{style}]{};}\,\xspace}

\newkeycommand{\pointmap}[style=point][1]{\,\tikz{\node[style=\commandkey{style}] (x) at (0,-0.3) {$#1$};\node [style=none] (3) at (0, 0.9) {};\node [style=none] (3) at (0, -0.9) {};\draw (x)--(0,0.7);}\,}

\newkeycommand{\point}[style=point,estyle=][1]{\,\tikz{\node[style=\commandkey{style}] (x) at (0,-0.05) {$#1$};\draw [\commandkey{estyle}] (x)--(0,1.05);}\,}

\newcommand{\pointdag}[1]{\,\tikz{\node[style=copoint] (x) at (0,0.05) {$#1$};\draw (x)--(0,-1.05);}\,}

\newcommand{\graypointmap}[1]{\pointmap[style=gray point]{#1}}

\newkeycommand{\copointmap}[style=copoint][1]{\,\tikz{\node[style=\commandkey{style}] (x) at (0,0.3) {$#1$};\draw (x)--(0,-0.7);}\,}

\newcommand{\graycopointmap}[1]{\copointmap[style=gray copoint]{#1}}

\newcommand{\dpointmap}[1]{\,\tikz{\node[style=point,doubled] (x) at (0,-0.3) {$#1$};\draw[boldedge] (x)--(0,0.7);}\,}

\newcommand{\dcopointmap}[1]{\,\tikz{\node[style=copoint, doubled] (x) at (0,0.3) {$#1$};\draw[boldedge] (x)--(0,-0.7);}\,}

\newcommand{\graydcopointmap}[1]{\,\tikz{\node[style=gray copoint, doubled] (x) at (0,0.3) {$#1$};\draw[boldedge] (x)--(0,-0.7);}\,}

\newkeycommand{\dpoint}[style=point][1]{\,\tikz{\node[style=\commandkey{style},doubled] (x) at (0,-0.3) {$#1$};\draw[boldedge] (x)--(0,0.7);}\,}
\newcommand{\dcopoint}[1]{\,\tikz{\node[style=copoint, doubled] (x) at (0,0.3) {$#1$};\draw[boldedge] (x)--(0,-0.7);}\,}

\newkeycommand{\kpoint}[style=kpoint,estyle=][1]{\,\tikz{\node[style=\commandkey{style}] (x) at (0,-0.05) {$#1$};\draw [\commandkey{estyle}] (x)--(0,1.05);}\,}

\newkeycommand{\kpointgray}[style=gray kpoint,estyle=][1]{\,\tikz{\node[style=\commandkey{style}] (x) at (0,-0.05) {$#1$};\draw [\commandkey{estyle}] (x)--(0,1.05);}\,}

\newcommand{\kpointgrayconj}[1]{\,\tikz{\node[style=gray kpointconj] (x) at (0,-0.05) {$#1$};\draw (x)--(0,1.05);}\,}

\newcommand{\kpointgrayadj}[1]{\,\tikz{\node[style=gray kpointadj] (x) at (0,0.05) {$#1$};\draw (x)--(0,-1.05);}\,}

\newcommand{\kpointconj}[1]{\,\tikz{\node[style=kpoint conjugate] (x) at (0,-0.05) {$#1$};\draw (x)--(0,1.05);}\,}

\newcommand{\kpointdag}[1]{\,\tikz{\node[style=kpoint adjoint] (x) at (0,0.05) {$#1$};\draw (x)--(0,-1.05);}\,}
\newcommand{\kpointadj}[1]{\kpointdag{#1}}
\newcommand{\kpointtrans}[1]{\,\tikz{\node[style=kpoint transpose] (x) at (0,0.05) {$#1$};\draw (x)--(0,-1.05);}\,}

\newkeycommand{\typedkpoint}[style=kpoint,edgestyle=][2]{%
\,\begin{tikzpicture}
  \begin{pgfonlayer}{nodelayer}
    \node [style=none] (0) at (0, 1) {};
    \node [style=\commandkey{style}] (1) at (0, -0.75) {$#2$};
    \node [style=right label] (2) at (0.25, 0.5) {$#1$};
  \end{pgfonlayer}
  \begin{pgfonlayer}{edgelayer}
    \draw [\commandkey{edgestyle}] (1) to (0.center);
  \end{pgfonlayer}
\end{tikzpicture}\,}

\newkeycommand{\typedkpointdag}[style=kpoint adjoint,edgestyle=][2]{%
\,\begin{tikzpicture}
  \begin{pgfonlayer}{nodelayer}
    \node [style=none] (0) at (0, -1) {};
    \node [style=\commandkey{style}] (1) at (0, 0.75) {$#2$};
    \node [style=right label] (2) at (0.25, -0.5) {$#1$};
  \end{pgfonlayer}
  \begin{pgfonlayer}{edgelayer}
    \draw [\commandkey{edgestyle}] (0.center) to (1);
  \end{pgfonlayer}
\end{tikzpicture}\,}

\newcommand{\typedpoint}[2]{\,\begin{tikzpicture}
  \begin{pgfonlayer}{nodelayer}
    \node [style=none] (0) at (0, 1) {};
    \node [style=point] (1) at (0, -0.5) {$#2$};
    \node [style=right label] (2) at (0.25, 0.5) {$#1$};
  \end{pgfonlayer}
  \begin{pgfonlayer}{edgelayer}
    \draw (1) to (0.center);
  \end{pgfonlayer}
\end{tikzpicture}\,}

\newkeycommand{\bistate}[style=kpoint][1]{\,\begin{tikzpicture}
  \begin{pgfonlayer}{nodelayer}
    \node [style=\commandkey{style}, minimum width=1 cm, inner sep=2pt] (0) at (0, -0.25) {$#1$};
    \node [style=none] (1) at (-0.75, 0) {};
    \node [style=none] (2) at (0.75, 0) {};
    \node [style=none] (3) at (-0.75, 0.88) {};
    \node [style=none] (4) at (0.75, 0.88) {};
  \end{pgfonlayer}
  \begin{pgfonlayer}{edgelayer}
    \draw (1.center) to (3.center);
    \draw (2.center) to (4.center);
  \end{pgfonlayer}
\end{tikzpicture}\,}

\newkeycommand{\bistatebig}[style=kpoint][1]{\,\begin{tikzpicture}
  \begin{pgfonlayer}{nodelayer}
    \node [style=\commandkey{style}, minimum width=, minimum width=1.26 cm, inner sep=2pt] (0) at (0, -0.25) {$#1$};
    \node [style=none] (1) at (-1, 0) {};
    \node [style=none] (2) at (1, 0) {};
    \node [style=none] (3) at (-1, 0.88) {};
    \node [style=none] (4) at (1, 0.88) {};
  \end{pgfonlayer}
  \begin{pgfonlayer}{edgelayer}
    \draw (1.center) to (3.center);
    \draw (2.center) to (4.center);
  \end{pgfonlayer}
\end{tikzpicture}\,}

\newkeycommand{\dbistate}[style=dkpoint][1]{\,\begin{tikzpicture}
  \begin{pgfonlayer}{nodelayer}
    \node [style=\commandkey{style}, minimum width=1 cm, inner sep=2pt] (0) at (0, -0.25) {$#1$};
    \node [style=none] (1) at (-0.75, 0) {};
    \node [style=none] (2) at (0.75, 0) {};
    \node [style=none] (3) at (-0.75, 1.0) {};
    \node [style=none] (4) at (0.75, 1.0) {};
  \end{pgfonlayer}
  \begin{pgfonlayer}{edgelayer}
    \draw [boldedge] (1.center) to (3.center);
    \draw [boldedge] (2.center) to (4.center);
  \end{pgfonlayer}
\end{tikzpicture}\,}

\newcommand{\bistateadj}[1]{\,\begin{tikzpicture}[yshift=-3mm]
  \begin{pgfonlayer}{nodelayer}
    \node [style=kpointadj, minimum width=1 cm, inner sep=2pt] (0) at (0, 0.5) {$#1$};
    \node [style=none] (1) at (-0.75, 0.25) {};
    \node [style=none] (2) at (0.75, 0.25) {};
    \node [style=none] (3) at (-0.75, -0.63) {};
    \node [style=none] (4) at (0.75, -0.63) {};
  \end{pgfonlayer}
  \begin{pgfonlayer}{edgelayer}
    \draw (1.center) to (3.center);
    \draw (2.center) to (4.center);
  \end{pgfonlayer}
\end{tikzpicture}\,}

\newcommand{\dbistateadj}[1]{\,\begin{tikzpicture}[yshift=-3mm]
  \begin{pgfonlayer}{nodelayer}
    \node [style=kpointadj, doubled, minimum width=1 cm, inner sep=2pt] (0) at (0, 0.5) {$#1$};
    \node [style=none] (1) at (-0.75, 0.25) {};
    \node [style=none] (2) at (0.75, 0.25) {};
    \node [style=none] (3) at (-0.75, -0.5) {};
    \node [style=none] (4) at (0.75, -0.5) {};
  \end{pgfonlayer}
  \begin{pgfonlayer}{edgelayer}
    \draw [boldedge] (1.center) to (3.center);
    \draw [boldedge] (2.center) to (4.center);
  \end{pgfonlayer}
\end{tikzpicture}\,}

\newkeycommand{\bistatebraket}[style1=kpoint,style2=kpointadj][2]{\,\begin{tikzpicture}
  \begin{pgfonlayer}{nodelayer}
    \node [style=\commandkey{style1}, minimum width=1 cm, inner sep=2pt] (0) at (0, -0.75) {$#1$};
    \node [style=none] (1) at (-0.75, -0.5) {};
    \node [style=none] (2) at (0.75, -0.5) {};
    \node [style=none] (3) at (-0.75, 0.5) {};
    \node [style=none] (4) at (0.75, 0.5) {};
    \node [style=\commandkey{style2}, minimum width=1 cm, inner sep=2pt] (5) at (0, 0.75) {$#2$};
  \end{pgfonlayer}
  \begin{pgfonlayer}{edgelayer}
    \draw (1.center) to (3.center);
    \draw (2.center) to (4.center);
  \end{pgfonlayer}
\end{tikzpicture}\,}

\newcommand{\binop}[2]{\,\begin{tikzpicture}
  \begin{pgfonlayer}{nodelayer}
    \node [style=none] (0) at (0.75, -1.25) {};
    \node [style=right label, xshift=1 mm] (1) at (0.75, -1) {$#1$};
    \node [style=none] (2) at (0, 1.25) {};
    \node [style=white dot] (3) at (0, 0) {$#2$};
    \node [style=none] (4) at (-0.75, -1.25) {};
    \node [style=right label] (5) at (-0.5, -1) {$#1$};
    \node [style=right label] (6) at (0.25, 1) {$#1$};
  \end{pgfonlayer}
  \begin{pgfonlayer}{edgelayer}
    \draw [bend right=15, looseness=1.00] (0.center) to (3);
    \draw [bend left=15, looseness=1.00] (4.center) to (3);
    \draw (3) to (2.center);
  \end{pgfonlayer}
\end{tikzpicture}\,}

\newkeycommand{\weight}[style=dkpoint][1]{\,\begin{tikzpicture}
  \begin{pgfonlayer}{nodelayer}
    \node [style=\commandkey{style}] (0) at (0, -0.5) {$#1$};
    \node [style=upground] (1) at (0, 0.75) {};
  \end{pgfonlayer}
  \begin{pgfonlayer}{edgelayer}
    \draw [style=boldedge] (0) to (1);
  \end{pgfonlayer}
\end{tikzpicture}\,}

\newcommand{\weightmap}[1]{\,\begin{tikzpicture}
  \begin{pgfonlayer}{nodelayer}
    \node [style=none] (0) at (0, -1.25) {};
    \node [style=dmap] (1) at (0, 0) {$#1$};
    \node [style=upground] (2) at (0, 1.5) {};
  \end{pgfonlayer}
  \begin{pgfonlayer}{edgelayer}
    \draw [style=boldedge] (0.center) to (1);
    \draw [style=boldedge] (1) to (2);
  \end{pgfonlayer}
\end{tikzpicture}\,}

\newcommand{\weightdag}[1]{\,\begin{tikzpicture}
  \begin{pgfonlayer}{nodelayer}
    \node [style=downground] (0) at (0, -0.75) {};
    \node [style=dkpointdag] (1) at (0, 0.75) {$#1$};
  \end{pgfonlayer}
  \begin{pgfonlayer}{edgelayer}
    \draw [style=boldedge] (0) to (1);
  \end{pgfonlayer}
\end{tikzpicture}\,}

\newcommand{\mapbent}[1]{\,\begin{tikzpicture}
	\begin{pgfonlayer}{nodelayer}
		\node [style=map] (0) at (1, 0) {$#1$};
		\node [style=none] (1) at (1, 1.25) {};
		\node [style=none] (2) at (-0.5, -0.5) {};
		\node [style=none] (3) at (-0.5, 1.25) {};
		\node [style=none] (4) at (1, -0.5) {};
	\end{pgfonlayer}
	\begin{pgfonlayer}{edgelayer}
		\draw (0) to (1.center);
		\draw [in=-90, out=-90, looseness=1.75] (4.center) to (2.center);
		\draw (2.center) to (3.center);
		\draw (4.center) to (0);
	\end{pgfonlayer}
\end{tikzpicture}\,}

\newcommand{\qproc}[1]{\left(\vphantom{\widehat X}#1\right)}

\newcommand{\dmapdiscard}[1]{\,\begin{tikzpicture}
  \begin{pgfonlayer}{nodelayer}
    \node [style=dmap] (0) at (0, 0) {$#1$};
    \node [style=none] (1) at (0, -1.25) {};
    \node [style=upground] (2) at (0, 1.50) {};
  \end{pgfonlayer}
  \begin{pgfonlayer}{edgelayer}
    \draw [style=boldedge] (0) to (1.center);
    \draw [style=boldedge] (0) to (2);
  \end{pgfonlayer}
\end{tikzpicture}\,}

\newkeycommand{\dkpoint}[style=dkpoint][1]{\,\tikz{\node[style=\commandkey{style}] (x) at (0,-0.1) {$#1$};\draw[boldedge] (x)--(0,1);}\,}
\newkeycommand{\dkpointadj}[style=dkpointadj][1]{\,\tikz{\node[style=\commandkey{style}] (x) at (0,0.1) {$#1$};\draw[boldedge] (x)--(0,-1);}\,}
\newkeycommand{\dkpointtrans}[style=dkpointtrans][1]{\,\tikz{\node[style=\commandkey{style}] (x) at (0,0.1) {$#1$};\draw[boldedge] (x)--(0,-1);}\,}
\newkeycommand{\dkpointconj}[style=dkpointconj][1]{\,\tikz{\node[style=\commandkey{style}] (x) at (0,-0.1) {$#1$};\draw[boldedge] (x)--(0,1);}\,}

\newkeycommand{\blackkpoint}[style=black kpoint][1]{\,\tikz{\node[style=\commandkey{style}] (x) at (0,-0.1) {$#1$};\draw (x)--(0,1);}\,}
\newkeycommand{\blackkpointadj}[style=black kpointadj][1]{\,\tikz{\node[style=\commandkey{style}] (x) at (0,0.1) {$#1$};\draw (x)--(0,-1);}\,}
\newkeycommand{\blackdkpoint}[style=black dkpoint][1]{\,\tikz{\node[style=\commandkey{style}] (x) at (0,-0.1) {$#1$};\draw[boldedge] (x)--(0,1);}\,}
\newkeycommand{\blackdkpointadj}[style=black dkpointadj][1]{\,\tikz{\node[style=\commandkey{style}] (x) at (0,0.1) {$#1$};\draw[boldedge] (x)--(0,-1);}\,}

\newcommand{\trace}{\,\begin{tikzpicture}
  \begin{pgfonlayer}{nodelayer}
    \node [style=none] (0) at (0, -0.60) {};
    \node [style=upground] (1) at (0, 0.40) {};
  \end{pgfonlayer}
  \begin{pgfonlayer}{edgelayer}
    \draw [style=none] (0.center) to (1);
  \end{pgfonlayer}
\end{tikzpicture}\,}

\newcommand{\spider}[1]{\,\begin{tikzpicture}
  \begin{pgfonlayer}{nodelayer}
    \node [style=#1] (0) at (0, 0) {};
    \node [style=none] (1) at (1.5, 1.25) {};
    \node [style=none] (2) at (-1, 1.25) {};
    \node [style=none] (3) at (1.25, -1.25) {};
    \node [style=none] (4) at (-0.75, -1.25) {};
    \node [style=none] (5) at (0.25, 1) {$\cdot\cdot\cdot\cdot\cdot$};
    \node [style=none] (6) at (0.25, -1) {$\cdot\cdot\cdot$};
    \node [style=none] (7) at (-1.5, 1.25) {};
    \node [style=none] (8) at (-1.25, -1.25) {};
  \end{pgfonlayer}
  \begin{pgfonlayer}{edgelayer}
    \draw [style=swap, in=135, out=-90, looseness=0.75] (2.center) to (0);
    \draw [style=swap, in=-90, out=45, looseness=0.75] (0) to (1.center);
    \draw [style=swap, in=90, out=-45, looseness=0.75] (0) to (3.center);
    \draw [style=swap, in=90, out=-135, looseness=0.75] (0) to (4.center);
    \draw [style=swap, in=-153, out=90, looseness=0.50] (8.center) to (0);
    \draw [style=swap, in=149, out=-90, looseness=0.50] (7.center) to (0);
  \end{pgfonlayer}
\end{tikzpicture}\,}

\newcommand\discard\trace


\newcommand{\maxmix}{\,\begin{tikzpicture}
  \begin{pgfonlayer}{nodelayer}
    \node [style=none] (0) at (-0.25, 0.60) {};
    \node [style=downground] (1) at (-0.25, -0.40) {};
  \end{pgfonlayer}
  \begin{pgfonlayer}{edgelayer}
    \draw [style=none] (0.center) to (1);
  \end{pgfonlayer}
\end{tikzpicture}\,}

\newcommand{\namedeq}[1]{\,\begin{tikzpicture}
  \begin{pgfonlayer}{nodelayer}
    \node [style=none] (0) at (0, 0.75) {\footnotesize $#1$};
    \node [style=none] (1) at (0, 0) {$=$};
  \end{pgfonlayer}
\end{tikzpicture}\,}
\newcommand{\namedmapsto}[1]{\,\begin{tikzpicture}
  \begin{pgfonlayer}{nodelayer}
    \node [style=none] (0) at (0, 0.75) {\footnotesize $#1$};
    \node [style=none] (1) at (0, 0) {$\mapsto$};
  \end{pgfonlayer}
\end{tikzpicture}\,}

\newcommand{\scalareq}{\ensuremath{\approx}\xspace}
\newcommand{\scalareqalt}{\ensuremath{\threesim}\xspace}
%


\newkeycommand{\pointketbra}[style1=copoint,style2=point,estyle=][2]{\,%
\begin{tikzpicture}
  \begin{pgfonlayer}{nodelayer}
    \node [style=none] (0) at (0, -1.75) {};
    \node [style=\commandkey{style1}] (1) at (0, -0.75) {$#1$};
    \node [style=\commandkey{style2}] (2) at (0, 0.75) {$#2$};
    \node [style=none] (3) at (0, 1.75) {};
  \end{pgfonlayer}
  \begin{pgfonlayer}{edgelayer}
    \draw [\commandkey{estyle}] (0.center) to (1);
    \draw [\commandkey{estyle}] (2) to (3.center);
  \end{pgfonlayer}
\end{tikzpicture}\,}

\newkeycommand{\twocopointketbra}[style1=copoint,style2=copoint,style3=point][3]{\,%
\begin{tikzpicture}
  \begin{pgfonlayer}{nodelayer}
    \node [style=none] (0) at (-0.75, -1.75) {};
    \node [style=none] (1) at (0.75, -1.75) {};
    \node [style=\commandkey{style1}] (2) at (-0.75, -0.75) {$#1$};
    \node [style=\commandkey{style2}] (3) at (0.75, -0.75) {$#2$};
    \node [style=\commandkey{style3}] (4) at (0, 0.75) {$#3$};
    \node [style=none] (5) at (0, 1.75) {};
  \end{pgfonlayer}
  \begin{pgfonlayer}{edgelayer}
    \draw (0.center) to (2);
    \draw (1.center) to (3);
    \draw (4) to (5.center);
  \end{pgfonlayer}
\end{tikzpicture}\,}

\newcommand{\graytwocopointketbra}{\twocopointketbra[style1=gray copoint,style2=gray copoint,style3=gray point]}

\newkeycommand{\twopointketbra}[style1=copoint,style2=point,style3=point][3]{\,%
\begin{tikzpicture}
  \begin{pgfonlayer}{nodelayer}
    \node [style=none] (0) at (0, -1.75) {};
    \node [style=none] (1) at (-0.75, 1.75) {};
    \node [style=\commandkey{style1}] (2) at (0, -0.75) {$#1$};
    \node [style=\commandkey{style2}] (3) at (-0.75, 0.75) {$#2$};
    \node [style=\commandkey{style3}] (4) at (0.75, 0.75) {$#3$};
    \node [style=none] (5) at (0.75, 1.75) {};
  \end{pgfonlayer}
  \begin{pgfonlayer}{edgelayer}
    \draw (0.center) to (2);
    \draw (1.center) to (3);
    \draw (4) to (5.center);
  \end{pgfonlayer}
\end{tikzpicture}\,}

\newcommand{\graytwopointketbra}{\twopointketbra[style1=gray copoint,style2=gray point,style3=gray point]}

\newcommand{\idwire}{\,%
\begin{tikzpicture}
  \begin{pgfonlayer}{nodelayer}
    \node [style=none] (0) at (0, -1.5) {};
    \node [style=none] (3) at (0, 1.5) {};
  \end{pgfonlayer}
  \begin{pgfonlayer}{edgelayer}
    \draw (0.center) to (3.center);
  \end{pgfonlayer}
\end{tikzpicture}\,}

\newcommand{\didwire}{\,%
\begin{tikzpicture}
  \begin{pgfonlayer}{nodelayer}
    \node [style=none] (0) at (0, -1.5) {};
    \node [style=none] (3) at (0, 1.5) {};
  \end{pgfonlayer}
  \begin{pgfonlayer}{edgelayer}
    \draw [boldedge] (0.center) to (3.center);
  \end{pgfonlayer}
\end{tikzpicture}\,}

\newcommand{\didwirelong}{\,%
\begin{tikzpicture}
  \begin{pgfonlayer}{nodelayer}
    \node [style=none] (0) at (0, -2) {};
    \node [style=none] (3) at (0, 2) {};
  \end{pgfonlayer}
  \begin{pgfonlayer}{edgelayer}
    \draw [boldedge] (0.center) to (3.center);
  \end{pgfonlayer}
\end{tikzpicture}\,}

\newcommand{\shortidwire}{\,%
\begin{tikzpicture}
  \begin{pgfonlayer}{nodelayer}
    \node [style=none] (0) at (0, -1) {};
    \node [style=none] (3) at (0, 1) {};
  \end{pgfonlayer}
  \begin{pgfonlayer}{edgelayer}
    \draw (0.center) to (3.center);
  \end{pgfonlayer}
\end{tikzpicture}\,}

\newkeycommand{\pointbraket}[style1=point,style2=copoint,estyle=][2]{\,%
\begin{tikzpicture}
  \begin{pgfonlayer}{nodelayer}
    \node [style=\commandkey{style1}] (0) at (0, -0.625) {$#1$};
    \node [style=\commandkey{style2}] (1) at (0, 0.625) {$#2$};
  \end{pgfonlayer}
  \begin{pgfonlayer}{edgelayer}
    \draw [\commandkey{estyle}] (0) to (1);
  \end{pgfonlayer}
\end{tikzpicture}\,}

\newcommand{\graypointbraket}[2]{\,%
\begin{tikzpicture}
  \begin{pgfonlayer}{nodelayer}
    \node [style=gray point] (0) at (0, -0.625) {$#1$};
    \node [style=gray copoint] (1) at (0, 0.625) {$#2$};
  \end{pgfonlayer}
  \begin{pgfonlayer}{edgelayer}
    \draw (0) to (1);
  \end{pgfonlayer}
\end{tikzpicture}\,}

\newcommand{\dpointbraket}[2]{\kpointbraket[style1=dpoint,style2=dcopoint,estyle=boldedge]{#1}{#2}}

\newcommand{\dkpointbraket}[2]{\kpointbraket[style1=dkpoint,style2=dkpointdag,estyle=boldedge]{#1}{#2}}

\newcommand{\dpointbraketx}[2]{\kpointbraketx[style1=dpoint,style2=dcopoint,estyle=boldedge]{#1}{#2}}

\newcommand{\dkpointbraketx}[2]{\kpointbraketx[style1=dkpoint,style2=dkpointdag,estyle=boldedge]{#1}{#2}}

\newkeycommand{\kpointketbra}[style1=kpointdag,style2=kpoint,estyle=][2]{\,%
\begin{tikzpicture}
  \begin{pgfonlayer}{nodelayer}
    \node [style=none] (0) at (0, -1.9) {};
    \node [style=\commandkey{style1}] (1) at (0, -0.95) {$#1$};
    \node [style=\commandkey{style2}] (2) at (0, 0.95) {$#2$};
    \node [style=none] (3) at (0, 1.9) {};
  \end{pgfonlayer}
  \begin{pgfonlayer}{edgelayer}
    \draw [\commandkey{estyle}] (0.center) to (1);
    \draw [\commandkey{estyle}] (2) to (3.center);
  \end{pgfonlayer}
\end{tikzpicture}\,}

\newcommand{\dkpointketbra}[2]{\kpointketbra[style1=dkpointdag,style2=dkpoint,estyle=boldedge]{#1}{#2}}

\newcommand{\kpointketbratwocols}[2]{\,%
\begin{tikzpicture}
  \begin{pgfonlayer}{nodelayer}
    \node [style=none] (0) at (0, -1.9) {};
    \node [style=kpointdag] (1) at (0, -0.85) {$#1$};
    \node [style=kpoint,fill=gray!40!white] (2) at (0, 0.85) {$#2$};
    \node [style=none] (3) at (0, 1.9) {};
  \end{pgfonlayer}
  \begin{pgfonlayer}{edgelayer} 
    \draw (0.center) to (1);
    \draw (2) to (3.center);
  \end{pgfonlayer}
\end{tikzpicture}\,}

\newcommand{\boxpointmap}[2]{\,%
\begin{tikzpicture}
  \begin{pgfonlayer}{nodelayer}
    \node [style=point] (0) at (0, -1) {$#1$};
    \node [style=map] (1) at (0, 0.5) {$#2$};
    \node [style=none] (2) at (0, 1.75) {};
  \end{pgfonlayer}
  \begin{pgfonlayer}{edgelayer}
    \draw (0) to (1);
    \draw (1) to (2.center);
  \end{pgfonlayer}
\end{tikzpicture}%
\,}

\newcommand{\boxtranspointmap}[2]{\,%
\begin{tikzpicture}
  \begin{pgfonlayer}{nodelayer}
    \node [style=point] (0) at (0, -1.50) {$#1$};
    \node [style=maptrans] (1) at (0, 0) {$#2$};
    \node [style=none] (2) at (0, 1.25) {};
  \end{pgfonlayer}
  \begin{pgfonlayer}{edgelayer}
    \draw (0) to (1);
    \draw (1) to (2.center);
  \end{pgfonlayer}
\end{tikzpicture}%
\,}

\newkeycommand{\kpointmap}[style1=kpoint,style2=map][2]{\,%
\begin{tikzpicture}
  \begin{pgfonlayer}{nodelayer}
    \node [style=\commandkey{style1}] (0) at (0, -1.1) {$#1$};
    \node [style=\commandkey{style2}] (1) at (0, 0.5) {$#2$};
    \node [style=none] (2) at (0, 1.75) {};
  \end{pgfonlayer}
  \begin{pgfonlayer}{edgelayer}
    \draw (0) to (1);
    \draw (1) to (2.center);
  \end{pgfonlayer}
\end{tikzpicture}%
\,}

\newkeycommand{\kpointmaptrans}[style1=kpointconj,style2=mapadj][2]{\,%
\begin{tikzpicture}
  \begin{pgfonlayer}{nodelayer}
    \node [style=\commandkey{style1}] (0) at (0, -1.1) {$#1$};
    \node [style=\commandkey{style2}] (1) at (0, 0.5) {$#2$};
    \node [style=none] (2) at (0, 1.75) {};
  \end{pgfonlayer}
  \begin{pgfonlayer}{edgelayer}
    \draw (0) to (1);
    \draw (1) to (2.center);
  \end{pgfonlayer}
\end{tikzpicture}%
\,}

\newkeycommand{\kpointmapadj}[style1=kpointadj,style2=map][2]{\,%
\begin{tikzpicture}
  \begin{pgfonlayer}{nodelayer}
    \node [style=\commandkey{style1}] (0) at (0, 1.1) {$#1$};
    \node [style=\commandkey{style2}] (1) at (0, -0.5) {$#2$};
    \node [style=none] (2) at (0, -1.75) {};
  \end{pgfonlayer}
  \begin{pgfonlayer}{edgelayer}
    \draw (0) to (1);
    \draw (1) to (2.center);
  \end{pgfonlayer}
\end{tikzpicture}%
\,}

\newkeycommand{\dkpointmap}[style1=dkpoint,style2=dmap][2]{\,%
\begin{tikzpicture}
  \begin{pgfonlayer}{nodelayer}
    \node [style=\commandkey{style1}] (0) at (0, -1.2) {$#1$};
    \node [style=\commandkey{style2}] (1) at (0, 0.4) {$#2$};
    \node [style=none] (2) at (0, 1.7) {};
  \end{pgfonlayer}
  \begin{pgfonlayer}{edgelayer}
    \draw[style=boldedge] (0) to (1);
    \draw[style=boldedge] (1) to (2.center);
  \end{pgfonlayer}
\end{tikzpicture}%
\,}

\newkeycommand{\dkpointmapx}[style1=dkpoint,style2=dmap][2]{\,%
\begin{tikzpicture}
  \begin{pgfonlayer}{nodelayer}
    \node [style=\commandkey{style1}] (0) at (0, -1.4) {$#1$};
    \node [style=\commandkey{style2}] (1) at (0, 0.6) {$#2$};
    \node [style=none] (2) at (0, 1.9) {};
  \end{pgfonlayer}
  \begin{pgfonlayer}{edgelayer}
    \draw[style=boldedge] (0) to (1);
    \draw[style=boldedge] (1) to (2.center);
  \end{pgfonlayer}
\end{tikzpicture}%
\,}

\newcommand{\boxpointmapdag}[2]{\,%
\begin{tikzpicture}
  \begin{pgfonlayer}{nodelayer}
    \node [style=point] (0) at (0, -1.50) {$#1$};
    \node [style=mapdag] (1) at (0, 0) {$#2$};
    \node [style=none] (2) at (0, 1.25) {};
  \end{pgfonlayer}
  \begin{pgfonlayer}{edgelayer}
    \draw (0) to (1);
    \draw (1) to (2.center);
  \end{pgfonlayer}
\end{tikzpicture}%
\,}

\newcommand{\boxcopointmap}[2]{\,%
\begin{tikzpicture}
  \begin{pgfonlayer}{nodelayer}
    \node [style=none] (0) at (0, -1.75) {};
    \node [style=map] (1) at (0, -0.5) {$#1$};
    \node [style=copoint] (2) at (0, 1) {$#2$};
  \end{pgfonlayer}
  \begin{pgfonlayer}{edgelayer}
    \draw (0.center) to (1);
    \draw (1) to (2);
  \end{pgfonlayer}
\end{tikzpicture}%
\,}

\newkeycommand{\boxcopointmapdag}[style1=map,style2=copoint][2]{\,%
\begin{tikzpicture}
  \begin{pgfonlayer}{nodelayer}
    \node [style=none] (0) at (0, -1.75) {};
    \node [style=\commandkey{style1}] (1) at (0, -0.5) {$#1$};
    \node [style=\commandkey{style2}] (2) at (0, 1) {$#2$};
  \end{pgfonlayer}
  \begin{pgfonlayer}{edgelayer}
    \draw (0.center) to (1);
    \draw (1) to (2);
  \end{pgfonlayer}
\end{tikzpicture}%
\,}

\newkeycommand{\kpointdagmap}[style1=map,style2=kpointdag][2]{\,%
\begin{tikzpicture}
  \begin{pgfonlayer}{nodelayer}
    \node [style=none] (0) at (0, -1.75) {};
    \node [style=\commandkey{style1}] (1) at (0, -0.5) {$#1$};
    \node [style=\commandkey{style2}] (2) at (0, 1.1) {$#2$};
  \end{pgfonlayer}
  \begin{pgfonlayer}{edgelayer}
    \draw (0.center) to (1);
    \draw (1) to (2);
  \end{pgfonlayer}
\end{tikzpicture}%
\,}

\newcommand{\kpointdagmapdag}[2]{\,%
\begin{tikzpicture}
  \begin{pgfonlayer}{nodelayer}
    \node [style=none] (0) at (0, -1.75) {};
    \node [style=mapdag] (1) at (0, -0.5) {$#1$};
    \node [style=kpointdag] (2) at (0, 1.1) {$#2$};
  \end{pgfonlayer}
  \begin{pgfonlayer}{edgelayer}
    \draw (0.center) to (1);
    \draw (1) to (2);
  \end{pgfonlayer}
\end{tikzpicture}%
\,}

\newkeycommand{\sandwichmap}[style1=point,style2=map,style3=copoint][3]{\,%
\begin{tikzpicture}
  \begin{pgfonlayer}{nodelayer}
    \node [style=\commandkey{style1}] (0) at (0, -1.50) {$#1$};
    \node [style=\commandkey{style2}] (1) at (0, 0) {$#2$};
    \node [style=\commandkey{style3}] (2) at (0, 1.50) {$#3$};
  \end{pgfonlayer}
  \begin{pgfonlayer}{edgelayer}
    \draw (0) to (1);
    \draw (1) to (2);
  \end{pgfonlayer}
\end{tikzpicture}%
}

\newkeycommand{\kpointsandwichmap}[style1=kpoint,style2=map,style3=kpointdag][3]{\,%
\begin{tikzpicture}
  \begin{pgfonlayer}{nodelayer}
    \node [style=\commandkey{style1}] (0) at (0, -1.5) {$#1$};
    \node [style=\commandkey{style2}] (1) at (0, 0) {$#2$};
    \node [style=\commandkey{style3}] (2) at (0, 1.5) {$#3$};
  \end{pgfonlayer}
  \begin{pgfonlayer}{edgelayer}
    \draw (0) to (1);
    \draw (1) to (2);
  \end{pgfonlayer}
\end{tikzpicture}%
}

\newcommand{\dkpointsandwichmap}[3]{\,%
\begin{tikzpicture}
  \begin{pgfonlayer}{nodelayer}
    \node [style=dkpoint] (0) at (0, -1.5) {$#1$};
    \node [style=dmap] (1) at (0, 0) {$#2$};
    \node [style=dkpointdag] (2) at (0, 1.5) {$#3$};
  \end{pgfonlayer}
  \begin{pgfonlayer}{edgelayer}
    \draw [boldedge] (0) to (1);
    \draw [boldedge] (1) to (2);
  \end{pgfonlayer}
\end{tikzpicture}%
}

\newcommand{\dkpointsandwichmapx}[3]{\,%
\begin{tikzpicture}
  \begin{pgfonlayer}{nodelayer}
    \node [style=dkpoint] (0) at (0, -1.85) {$#1$};
    \node [style=dmap] (1) at (0, 0) {$#2$};
    \node [style=dkpointdag] (2) at (0, 1.85) {$#3$};
  \end{pgfonlayer}
  \begin{pgfonlayer}{edgelayer}
    \draw [boldedge] (0) to (1);
    \draw [boldedge] (1) to (2);
  \end{pgfonlayer}
\end{tikzpicture}%
}

\newcommand{\kpointsandwichmapdag}[3]{\,%
\begin{tikzpicture}
  \begin{pgfonlayer}{nodelayer}
    \node [style=kpoint] (0) at (0, -1.5) {$#1$};
    \node [style=mapdag] (1) at (0, 0) {$#2$};
    \node [style=kpointdag] (2) at (0, 1.5) {$#3$};
  \end{pgfonlayer}
  \begin{pgfonlayer}{edgelayer}
    \draw (0) to (1);
    \draw (1) to (2);
  \end{pgfonlayer}
\end{tikzpicture}%
}

\newcommand{\sandwichmapdag}[3]{\,%
\begin{tikzpicture}
  \begin{pgfonlayer}{nodelayer}
    \node [style=point] (0) at (0, -1.50) {$#1$};
    \node [style=mapdag] (1) at (0, 0) {$#2$};
    \node [style=copoint] (2) at (0, 1.50) {$#3$};
  \end{pgfonlayer}
  \begin{pgfonlayer}{edgelayer}
    \draw (0) to (1);
    \draw (1) to (2);
  \end{pgfonlayer}
\end{tikzpicture}%
}

\newcommand{\sandwichmaptrans}[3]{\,%
\begin{tikzpicture}
  \begin{pgfonlayer}{nodelayer}
    \node [style=point] (0) at (0, -1.50) {$#1$};
    \node [style=maptrans] (1) at (0, 0) {$#2$};
    \node [style=copoint] (2) at (0, 1.50) {$#3$};
  \end{pgfonlayer}
  \begin{pgfonlayer}{edgelayer}
    \draw (0) to (1);
    \draw (1) to (2);
  \end{pgfonlayer}
\end{tikzpicture}%
}

\newkeycommand{\sandwichmapconj}[style1=point,style2=mapconj,style3=copoint][3]{\,%
\begin{tikzpicture}
  \begin{pgfonlayer}{nodelayer}
    \node [style=\commandkey{style1}] (0) at (0, -1.50) {$#1$};
    \node [style=\commandkey{style2}] (1) at (0, 0) {$#2$};
    \node [style=\commandkey{style3}] (2) at (0, 1.50) {$#3$};
  \end{pgfonlayer}
  \begin{pgfonlayer}{edgelayer}
    \draw (0) to (1);
    \draw (1) to (2);
  \end{pgfonlayer}
\end{tikzpicture}%
}

\newkeycommand{\sandwichtwo}[style1=point,style2=map,style3=map,style4=copoint][4]{\,%
\begin{tikzpicture}
  \begin{pgfonlayer}{nodelayer}
    \node [style=\commandkey{style2}] (0) at (0, -1) {$#2$};
    \node [style=\commandkey{style3}] (1) at (0, 1) {$#3$};
    \node [style=\commandkey{style1}] (2) at (0, -2.6) {$#1$};
    \node [style=\commandkey{style4}] (3) at (0, 2.6) {$#4$};
  \end{pgfonlayer}
  \begin{pgfonlayer}{edgelayer}
    \draw (2) to (0);
    \draw (0) to (1);
    \draw (1) to (3);
  \end{pgfonlayer}
\end{tikzpicture}\,}

\newkeycommand{\longbraket}[style1=point,style2=copoint][2]{\,%
\begin{tikzpicture}
  \begin{pgfonlayer}{nodelayer}
    \node [style=\commandkey{style1}] (0) at (0, -2.6) {$#1$};
    \node [style=\commandkey{style2}] (1) at (0, 2.6) {$#2$};
  \end{pgfonlayer}
  \begin{pgfonlayer}{edgelayer}
    \draw (0) to (1);
  \end{pgfonlayer}
\end{tikzpicture}\,}

\newkeycommand{\onb}[style=point]{\ensuremath{\{\,\tikz{\node[style=\commandkey{style}] (x) at (0,-0.3) {$j$};\draw (x)--(0,0.7);}\,\}}\xspace}

\newcommand{\whiteonb}{\onb[style=point]}

\newcommand{\grayonb}{\onb[style=gray point]}

\newcommand{\redonb}{\onb[style=red point]}

\newcommand{\greenonb}{\onb[style=green point]}

\newcommand{\whiteonbi}{\ensuremath{\left\{\raisebox{1mm}{\pointmap{i}}\right\}_i}\xspace}
\newcommand{\grayonbi}{\ensuremath{\left\{\raisebox{1mm}{\graypointmap{i}}\right\}_i}\xspace}
\newcommand{\greyonbi}{\ensuremath{\left\{\raisebox{1mm}{\graypointmap{i}}\right\}_i}\xspace}

\newcommand{\redpointmap}[1]{\,\tikz{\node[style=red point] (x) at (0,-0.3) {$#1$};\draw (x)--(0,0.7);}\,}

\newcommand{\redcopointmap}[1]{\,\tikz{\node[style=red copoint] (x) at (0,0.3) {$#1$};\draw (x)--(0,-0.7);}\,}

\newcommand{\greenpointmap}[1]{\,\tikz{\node[style=green point] (x) at (0,-0.3) {$#1$};\draw (x)--(0,0.7);}\,}

\newcommand{\greencopointmap}[1]{\,\tikz{\node[style=green copoint] (x) at (0,0.3) {$#1$};\draw (x)--(0,-0.7);}\,}

\newcommand{\meas}{\ensuremath{\,\begin{tikzpicture}
  \begin{pgfonlayer}{nodelayer}
    \node [style=white dot] (0) at (0, 0) {};
    \node [style=none] (1) at (0, 0.75) {};
    \node [style=none] (2) at (0, -0.75) {};
  \end{pgfonlayer}
  \begin{pgfonlayer}{edgelayer}
    \draw [style=swap] (1.center) to (0);
    \draw [style=boldedge] (0) to (2.center);
  \end{pgfonlayer}
\end{tikzpicture}\,}\xspace}

\newcommand{\encode}{\ensuremath{\,\begin{tikzpicture}
  \begin{pgfonlayer}{nodelayer}
    \node [style=none] (0) at (0, -0.75) {};
    \node [style=none] (1) at (0, 0.75) {};
    \node [style=white dot] (2) at (0, 0) {};
  \end{pgfonlayer}
  \begin{pgfonlayer}{edgelayer}
    \draw [style=swap] (2.center) to (0);
    \draw [style=boldedge] (1) to (2.center);
  \end{pgfonlayer}
\end{tikzpicture}\,}\xspace}



\newcommand{\grayphasemult}[2]{\,\begin{tikzpicture}
	\begin{pgfonlayer}{nodelayer}
		\node [style=gray dot] (0) at (0, 0.5) {};
		\node [style=gray phase dot] (1) at (0.75, -0.75) {$#1$};
		\node [style=gray phase dot] (2) at (-0.75, -0.75) {$#2$};
		\node [style=none] (3) at (0, 1.5) {};
	\end{pgfonlayer}
	\begin{pgfonlayer}{edgelayer}
		\draw [in=-165, out=90, looseness=1.00] (2) to (0);
		\draw [in=-15, out=90, looseness=1.00] (1) to (0);
		\draw (0) to (3.center);
	\end{pgfonlayer}
\end{tikzpicture}\,}

\newcommand{\grayphasemultbis}[2]{\,\begin{tikzpicture}
	\begin{pgfonlayer}{nodelayer}
		\node [style=gray dot] (0) at (0, 0.5) {};
		\node [style=point] (1) at (0.75, -0.75) {$#1$};
		\node [style=point] (2) at (-0.75, -0.75) {$#2$};
		\node [style=none] (3) at (0, 1.5) {};
	\end{pgfonlayer}
	\begin{pgfonlayer}{edgelayer}
		\draw [in=-165, out=90, looseness=1.00] (2) to (0);
		\draw [in=-15, out=90, looseness=1.00] (1) to (0);
		\draw (0) to (3.center);
	\end{pgfonlayer}
\end{tikzpicture}\,}

\newcommand{\phasepointmult}[2]{\,\begin{tikzpicture}
  \begin{pgfonlayer}{nodelayer}
    \node [style=white dot] (0) at (0, 0.5) {};
    \node [style=white dot] (1) at (-0.75, -0.75) {$#1$};
    \node [style=none] (2) at (0, 1.5) {};
    \node [style=white dot] (3) at (0.75, -0.75) {$#2$};
  \end{pgfonlayer}
  \begin{pgfonlayer}{edgelayer}
    \draw [in=-165, out=90, looseness=1.00] (1) to (0);
    \draw [in=-15, out=90, looseness=1.00] (3) to (0);
    \draw (0) to (2.center);
  \end{pgfonlayer}
\end{tikzpicture}\,}

\newcommand{\grayphasepoint}[1]{\phasepoint[style=gray phase dot]{#1}}
\newcommand{\grayphasecopoint}[1]{\phasecopoint[style=gray phase dot]{#1}}

\newcommand{\graysquarepoint}[1]{\begin{tikzpicture}
    \begin{pgfonlayer}{nodelayer}
        \node [style=gray square point] (0) at (0, -0.5) {};
        \node [style=none] (1) at (0, 0.75) {};
        \node [style=none] (2) at (0.75, -0.5) {$#1$};
    \end{pgfonlayer}
    \begin{pgfonlayer}{edgelayer}
        \draw (0) to (1.center);
    \end{pgfonlayer}
\end{tikzpicture}}

\newkeycommand{\phase}[style=white phase dot][1]{\,\begin{tikzpicture}
    \begin{pgfonlayer}{nodelayer}
        \node [style=none] (0) at (0, 1) {};
        \node [style=\commandkey{style}] (2) at (0, -0) {$#1$}; 
        \node [style=none] (3) at (0, -1) {};
    \end{pgfonlayer}
    \begin{pgfonlayer}{edgelayer}
        \draw (2) to (0.center);
        \draw (3.center) to (2);
    \end{pgfonlayer}
\end{tikzpicture}\,}

\newkeycommand{\dphase}[style=white phase ddot][1]{\,\begin{tikzpicture}
    \begin{pgfonlayer}{nodelayer}
        \node [style=none] (0) at (0, 1.25) {};
        \node [style=\commandkey{style}] (2) at (0, -0) {$#1$};
        \node [style=none] (3) at (0, -1.25) {};
    \end{pgfonlayer}
    \begin{pgfonlayer}{edgelayer}
        \draw [boldedge] (2) to (0.center);
        \draw [boldedge] (3.center) to (2);
    \end{pgfonlayer}
\end{tikzpicture}\,}

\newcommand{\dphasegray}[1]{\dphase[style=gray phase ddot]{#1}}

\newkeycommand{\dphasepoint}[style=white phase ddot][1]{\,\begin{tikzpicture}[yshift=-3mm]
  \begin{pgfonlayer}{nodelayer}
    \node [style=none] (0) at (0, 1) {};
    \node [style=\commandkey{style}] (1) at (0, 0) {$#1$};
  \end{pgfonlayer}
  \begin{pgfonlayer}{edgelayer}
    \draw [style=boldedge] (0.center) to (1);
  \end{pgfonlayer}
\end{tikzpicture}\,}

\newkeycommand{\dphasepointgray}[style=gray phase ddot][1]{\,\begin{tikzpicture}[yshift=-3mm]
  \begin{pgfonlayer}{nodelayer}
    \node [style=none] (0) at (0, 1) {};
    \node [style=\commandkey{style}] (1) at (0, 0) {$#1$};
  \end{pgfonlayer}
  \begin{pgfonlayer}{edgelayer}
    \draw [style=boldedge] (0.center) to (1);
  \end{pgfonlayer}
\end{tikzpicture}\,}

\newkeycommand{\dphasecopoint}[style=white phase ddot][1]{\,\begin{tikzpicture}[yshift=3mm,yscale=-1]
  \begin{pgfonlayer}{nodelayer}
    \node [style=none] (0) at (0, 1) {};
    \node [style=\commandkey{style}] (1) at (0, 0) {$#1$};
  \end{pgfonlayer}
  \begin{pgfonlayer}{edgelayer}
    \draw [style=boldedge] (0.center) to (1);
  \end{pgfonlayer}
\end{tikzpicture}\,}

\newkeycommand{\dphasepointsm}[style=white phase ddot][1]{\,\begin{tikzpicture}[yshift=-3mm]
  \begin{pgfonlayer}{nodelayer}
    \node [style=none] (0) at (0, 1) {};
    \node [style=\commandkey{style}] (1) at (0, 0) {\footnotesize\!$#1$\!};
  \end{pgfonlayer}
  \begin{pgfonlayer}{edgelayer}
    \draw [style=boldedge] (0.center) to (1);
  \end{pgfonlayer}
\end{tikzpicture}\,}

\newcommand{\measure}[1]{\,\begin{tikzpicture}
    \begin{pgfonlayer}{nodelayer}
        \node [style=none] (0) at (0, 0.75) {};
        \node [style=#1] (2) at (0, -0) {};
        \node [style=none] (3) at (0, -0.75) {};
    \end{pgfonlayer}
    \begin{pgfonlayer}{edgelayer}
        \draw (2) to (0.center);
        \draw[doubled] (3.center) to (2);
    \end{pgfonlayer}
\end{tikzpicture}\,}

\newcommand{\nondemmeasure}[1]{\,\begin{tikzpicture}
	\begin{pgfonlayer}{nodelayer}
		\node [style=none] (0) at (1, 1) {};
		\node [style=none] (1) at (1, 1.25) {};
		\node [style=none] (2) at (0.75, 0.5) {};
		\node [style=#1] (3) at (0, -0.25) {};
		\node [style=none] (4) at (0, 1.25) {};
		\node [style=none] (5) at (0, -1.25) {};
	\end{pgfonlayer}
	\begin{pgfonlayer}{edgelayer}
		\draw (0.center) to (1.center);
		\draw (3) to (2.center);
		\draw [in=-90, out=45, looseness=1.00] (2.center) to (0.center);
		\draw [style=boldedge] (5.center) to (3);
		\draw [style=boldedge] (3) to (4.center);
	\end{pgfonlayer}
\end{tikzpicture}\,}

\newcommand{\prepare}[1]{\,\begin{tikzpicture}
    \begin{pgfonlayer}{nodelayer}
        \node [style=none] (0) at (0, 0.75) {};
        \node [style=#1] (2) at (0, -0) {};
        \node [style=none] (3) at (0, -0.75) {};
    \end{pgfonlayer}
    \begin{pgfonlayer}{edgelayer}
        \draw[doubled] (2) to (0.center);
        \draw (3.center) to (2);
    \end{pgfonlayer}
\end{tikzpicture}\,}

\newcommand{\whitephase}[1]{\phase[style=white phase dot]{#1}}
\newcommand{\whitedphase}[1]{\dphase[style=white phase ddot]{#1}}
\newcommand{\greyphase}[1]{\phase[style=grey phase dot]{#1}}
\newcommand{\greydphase}[1]{\dphase[style=grey phase ddot]{#1}}

\newkeycommand{\phasepoint}[style=white phase dot][1]{\,\begin{tikzpicture}[yshift=-3mm]
  \begin{pgfonlayer}{nodelayer}
    \node [style=none] (0) at (0, 1) {};
    \node [style=\commandkey{style}] (1) at (0, 0) {$#1$};
  \end{pgfonlayer}
  \begin{pgfonlayer}{edgelayer}
    \draw (0.center) to (1);
  \end{pgfonlayer}
\end{tikzpicture}\,}

\newkeycommand{\phasecopoint}[style=white phase dot][1]{\,\begin{tikzpicture}[yshift=3mm]
  \begin{pgfonlayer}{nodelayer}
    \node [style=none] (0) at (0, -1) {};
    \node [style=\commandkey{style}] (1) at (0, 0) {$#1$};
  \end{pgfonlayer}
  \begin{pgfonlayer}{edgelayer}
    \draw (0.center) to (1);
  \end{pgfonlayer}
\end{tikzpicture}\,}

\newcommand{\pointcopointmap}[2]{\,\tikz{
  \node[style=copoint] (x) at (0,-0.7) {$#2$};\draw (x)--(0,-1.5);
  \node[style=point] (y) at (0,0.7) {$#1$};\draw (0,1.5)--(y);
}\,}

\newcommand{\innerprodmap}[2]{\,\tikz{
\node[style=copoint] (y) at (0,0.625) {$#1$};
\node[style=point] (x) at (0,-0.625) {$#2$};
\draw (x)--(y);}\,}

\newkeycommand{\kpointbraket}[style1=kpoint,style2=kpoint adjoint,estyle=][2]{\,%
\begin{tikzpicture}
  \begin{pgfonlayer}{nodelayer}
    \node [style=\commandkey{style1}] (0) at (0, -0.735) {$#1$};
    \node [style=\commandkey{style2}] (1) at (0, 0.735) {$#2$};
  \end{pgfonlayer}
  \begin{pgfonlayer}{edgelayer}
    \draw [\commandkey{estyle}] (0) to (1);
  \end{pgfonlayer}
\end{tikzpicture}\,}

\newkeycommand{\kkpointbraket}[style1=kpoint,style2=kpoint adjoint,estyle=][2]{\,%
\begin{tikzpicture}
  \begin{pgfonlayer}{nodelayer}
    \node [style=\commandkey{style1}] (0) at (0, -0.685) {$#1$};
    \node [style=\commandkey{style2}] (1) at (0, 0.685) {$#2$};
  \end{pgfonlayer}
  \begin{pgfonlayer}{edgelayer}
    \draw [\commandkey{estyle}] (0) to (1);
  \end{pgfonlayer}
\end{tikzpicture}\,}

\newkeycommand{\kpointbraketx}[style1=kpoint,style2=kpoint adjoint,estyle=][2]{\,%
\begin{tikzpicture}
  \begin{pgfonlayer}{nodelayer}
    \node [style=\commandkey{style1}] (0) at (0, -0.885) {$#1$};
    \node [style=\commandkey{style2}] (1) at (0, 0.885) {$#2$};
  \end{pgfonlayer}
  \begin{pgfonlayer}{edgelayer}
    \draw [\commandkey{estyle}] (0) to (1);
  \end{pgfonlayer}
\end{tikzpicture}\,}

\newcommand{\kpointbraketconj}[2]{\,%
\begin{tikzpicture}
  \begin{pgfonlayer}{nodelayer}
    \node [style=kpoint conjugate] (0) at (0, -0.735) {$#1$};
    \node [style=kpoint transpose] (1) at (0, 0.735) {$#2$};
  \end{pgfonlayer}
  \begin{pgfonlayer}{edgelayer}
    \draw (0) to (1);
  \end{pgfonlayer}
\end{tikzpicture}\,}

\newcommand{\wginnerprodmap}[2]{\,\begin{tikzpicture}
    \begin{pgfonlayer}{nodelayer}
        \node [style=point] (0) at (0, -0.75) {$#2$};
        \node [style=gray copoint] (1) at (0, 0.75) {$#1$};
    \end{pgfonlayer}
    \begin{pgfonlayer}{edgelayer}
        \draw (0) to (1);
    \end{pgfonlayer}
\end{tikzpicture}\,}

\def\alpvec{[\vec{\alpha}]}
\def\alpprevec{\vec{\alpha}}
\def\betvec{[\vec{\beta}_j]}
\def\betprevec{\vec{\beta}_j}

\def\blackmu{\mu_{\smallblackdot}} 
\def\graymu {\mu_{\smallgraydot}}
\def\whitemu{\mu_{\smallwhitedot}}

\def\blacketa{\eta_{\smallblackdot}} 
\def\grayeta {\eta_{\smallgraydot}}
\def\whiteeta{\eta_{\smallwhitedot}}

\def\blackdelta{\delta_{\smallblackdot}} 
\def\graydelta {\delta_{\smallgraydot}}
\def\whitedelta{\delta_{\smallwhitedot}}

\def\blackepsilon{\epsilon_{\smallblackdot}} 
\def\grayepsilon {\epsilon_{\smallgraydot}}
\def\whiteepsilon{\epsilon_{\smallwhitedot}}

\def\whitePhi{\Phi_{\!\smallwhitedot}}
\def\graymeas{m_{\!\smallgraydot}}

\newcommand{\Owg}{\ensuremath{{\cal O}_{\!\smallwhitedot\!,\!\smallgraydot}}}
\newcommand{\whiteK}{\ensuremath{{\cal  K}_{\!\smallwhitedot}}}
\newcommand{\grayK}{\ensuremath{{\cal  K}_{\!\smallgraydot}}}

\newcommand{\roundket}[1]{\ensuremath{\left|  #1 \right)}}
\newcommand{\ketbra}[2]{\ensuremath{\ket{#1}\!\bra{#2}}}

\newcommand{\ketGHZ} {\ket{\textit{GHZ}\,}}
\newcommand{\ketW}   {\ket{\textit{W\,}}}
\newcommand{\ketBell}{\ket{\textit{Bell\,}}}
\newcommand{\ketEPR} {\ket{\textit{EPR\,}}}
\newcommand{\ketGHZD}{\ket{\textrm{GHZ}^{(D)}}}
\newcommand{\ketWD}  {\ket{\textrm{W}^{(D)}}}

\newcommand{\braGHZ} {\bra{\textit{GHZ}\,}}
\newcommand{\braW}   {\bra{\textit{W\,}}}
\newcommand{\braBell}{\bra{\textit{Bell\,}}}
\newcommand{\braEPR} {\bra{\textit{EPR\,}}}

\newcommand{\catC}{\ensuremath{\mathcal{C}}\xspace}
\newcommand{\catCop}{\ensuremath{\mathcal{C}^{\mathrm{op}}}\xspace}
\newcommand{\catD}{\ensuremath{\mathcal{D}}\xspace}
\newcommand{\catDop}{\ensuremath{\mathcal{D}^{\mathrm{op}}}\xspace}

\newcommand{\catSet}{\ensuremath{\textrm{\bf Set}}\xspace}
\newcommand{\catRel}{\ensuremath{\textrm{\bf Rel}}\xspace}
\newcommand{\catFRel}{\ensuremath{\textrm{\bf FRel}}\xspace}
\newcommand{\catVect}{\ensuremath{\textrm{\bf Vect}}\xspace}
\newcommand{\catFVect}{\ensuremath{\textrm{\bf FVect}}\xspace}
\newcommand{\catFHilb}{\ensuremath{\textrm{\bf FHilb}}\xspace}
\newcommand{\catHilb}{\ensuremath{\textrm{\bf Hilb}}\xspace}
\newcommand{\catSuperHilb}{\ensuremath{\textrm{\bf SuperHilb}}\xspace}
\newcommand{\catAb}{\ensuremath{\textrm{\bf Ab}}\xspace}
\newcommand{\catTop}{\ensuremath{\textrm{\bf Top}}\xspace}
\newcommand{\catCHaus}{\ensuremath{\textrm{\bf CHaus}}\xspace}
\newcommand{\catHaus}{\ensuremath{\textrm{\bf Haus}}\xspace}
\newcommand{\catGraph}{\ensuremath{\textrm{\bf Graph}}\xspace}
\newcommand{\catMat}{\ensuremath{\textrm{\bf Mat}}\xspace}
\newcommand{\catGr}{\ensuremath{\textrm{\bf Gr}}\xspace}
\newcommand{\catSpek}{\ensuremath{\textrm{\bf Spek}}\xspace}


\tikzstyle{cdiag}=[matrix of math nodes, row sep=3em, column sep=3em, text height=1.5ex, text depth=0.25ex,inner sep=0.5em]
\tikzstyle{arrow above}=[transform canvas={yshift=0.5ex}]
\tikzstyle{arrow below}=[transform canvas={yshift=-0.5ex}]

\newcommand{\csquare}[8]{
\begin{tikzpicture}
    \matrix(m)[cdiag,ampersand replacement=\&]{
    #1 \& #2 \\
    #3 \& #4  \\};
    \path [arrs] (m-1-1) edge node {$#5$} (m-1-2)
                 (m-2-1) edge node {$#6$} (m-2-2)
                 (m-1-1) edge node [swap] {$#7$} (m-2-1)
                 (m-1-2) edge node {$#8$} (m-2-2);
\end{tikzpicture}
}

\newcommand{\NWbracket}[1]{%
\draw #1 +(-0.25,0.5) -- +(-0.5,0.5) -- +(-0.5,0.25);}
\newcommand{\NEbracket}[1]{%
\draw #1 +(0.25,0.5) -- +(0.5,0.5) -- +(0.5,0.25);}
\newcommand{\SWbracket}[1]{%
\draw #1 +(-0.25,-0.5) -- +(-0.5,-0.5) -- +(-0.5,-0.25);}
\newcommand{\SEbracket}[1]{%
\draw #1 +(0.25,-0.5) -- +(0.5,-0.5) -- +(0.5,-0.25);}
\newcommand{\THETAbracket}[2]{%
\draw [rotate=#1] #2 +(0.25,0.5) -- +(0.5,0.5) -- +(0.5,0.25);}

\newcommand{\posquare}[8]{
\begin{tikzpicture}
    \matrix(m)[cdiag,ampersand replacement=\&]{
    #1 \& #2 \\
    #3 \& #4  \\};
    \path [arrs] (m-1-1) edge node {$#5$} (m-1-2)
                 (m-2-1) edge node [swap] {$#6$} (m-2-2)
                 (m-1-1) edge node [swap] {$#7$} (m-2-1)
                 (m-1-2) edge node {$#8$} (m-2-2);
    \NWbracket{(m-2-2)}
\end{tikzpicture}
}

\newcommand{\pbsquare}[8]{
\begin{tikzpicture}
    \matrix(m)[cdiag,ampersand replacement=\&]{
    #1 \& #2 \\
    #3 \& #4  \\};
    \path [arrs] (m-1-1) edge node {$#5$} (m-1-2)
                 (m-2-1) edge node [swap] {$#6$} (m-2-2)
                 (m-1-1) edge node [swap] {$#7$} (m-2-1)
                 (m-1-2) edge node {$#8$} (m-2-2);
  \SEbracket{(m-1-1)}
\end{tikzpicture}
}

\newcommand{\ctri}[6]{
    \begin{tikzpicture}[-latex]
        \matrix (m) [cdiag,ampersand replacement=\&] { #1 \& #2 \\ #3 \& \\ };
        \path [arrs] (m-1-1) edge node {$#4$} (m-1-2)
              (m-1-1) edge node [swap] {$#5$} (m-2-1)
              (m-2-1) edge node [swap] {$#6$} (m-1-2);
    \end{tikzpicture}
}

\newcommand{\carr}[3]{
\begin{tikzpicture}
    \matrix(m)[cdiag,ampersand replacement=\&]{
    #1 \& #3 \\};
    \path [arrs] (m-1-1) edge node {$#2$} (m-1-2);
\end{tikzpicture}
}

\newcommand{\crun}[5]{
\begin{tikzpicture}
    \matrix(m)[cdiag,ampersand replacement=\&]{
    #1 \& #3 \& #5 \\};
    \path [arrs] (m-1-1) edge node {$#2$} (m-1-2)
                 (m-1-2) edge node {$#4$} (m-1-3);
\end{tikzpicture}
}

\newcommand{\cspan}[5]{
\ensuremath{#1 \overset{#2}{\longleftarrow} #3
               \overset{#4}{\longrightarrow} #5}
}

\newcommand{\ccospan}[5]{
\ensuremath{#1 \overset{#2}{\longrightarrow} #3
               \overset{#4}{\longleftarrow} #5}
}

\newcommand{\cpair}[4]{
\begin{tikzpicture}
    \matrix(m)[cdiag,ampersand replacement=\&]{
    #1 \& #2 \\};
    \path [arrs] (m-1-1.20) edge node {$#3$} (m-1-2.160)
                 (m-1-1.-20) edge node [swap] {$#4$} (m-1-2.-160);
\end{tikzpicture}
}

\newcommand{\csquareslant}[9]{
\begin{tikzpicture}[-latex]
    \matrix(m)[cdiag,ampersand replacement=\&]{
    #1 \& #2 \\
    #3 \& #4  \\};
    \path [arrs] (m-1-1) edge node {$#5$} (m-1-2)
                 (m-2-1) edge node {$#6$} (m-2-2)
                 (m-1-1) edge node [swap] {$#7$} (m-2-1)
                 (m-1-2) edge node {$#8$} (m-2-2)
                 (m-1-2) edge node [swap] {$#9$} (m-2-1);
\end{tikzpicture}
}


\usetikzlibrary{decorations.markings}
\usetikzlibrary{shapes.geometric}

\pgfdeclarelayer{edgelayer}
\pgfdeclarelayer{nodelayer}
\pgfsetlayers{edgelayer,nodelayer,main}

\tikzstyle{none}=[inner sep=0pt]
\definecolor{hexcolor0xff0000}{rgb}{1.000,0.000,0.000}
\definecolor{hexcolor0x000000}{rgb}{0.000,0.000,0.000}
\definecolor{hexcolor0x00ff00}{rgb}{0.000,1.000,0.000}
\definecolor{hexcolor0x000000}{rgb}{0.000,0.000,0.000}
\definecolor{hexcolor0xffff00}{rgb}{1.000,1.000,0.000}
\definecolor{hexcolor0xffffff}{rgb}{1.000,1.000,1.000}

\tikzstyle{rn}=[circle,fill=hexcolor0xff0000,draw=hexcolor0x000000,line width=0.8 pt]
\tikzstyle{gn}=[circle,fill=hexcolor0x00ff00,draw=hexcolor0x000000,line width=0.8 pt]
\tikzstyle{yn}=[circle,fill=hexcolor0xffff00,draw=hexcolor0x000000,line width=0.8 pt]
\tikzstyle{wn}=[circle,fill=hexcolor0xffffff,draw=hexcolor0x000000,line width=0.8 pt]
\tikzstyle{wnthick}=[circle,fill=hexcolor0xffffff,draw=hexcolor0x000000,line width=2.500]

\tikzstyle{simple}=[-,draw=hexcolor0x000000,line width=2.000]
\tikzstyle{arrow}=[-,draw=hexcolor0x000000,postaction={decorate},decoration={markings,mark=at position .5 with {\arrow{>}}},line width=2.000]
\tikzstyle{tick}=[-,draw=hexcolor0x000000,postaction={decorate},decoration={markings,mark=at position .5 with {\draw (0,-0.1) -- (0,0.1);}},line width=2.000]
\tikzstyle{halfthickness}=[-,draw=hexcolor0x000000,line width=0.500]
\tikzstyle{thick}=[-,draw=hexcolor0x000000,line width=2.500]
\tikzstyle{thicker}=[-,draw=hexcolor0x000000,line width=4.000]

\tikzstyle{env}=[copoint,regular polygon rotate=0,minimum width=0.2cm, fill=black]

\tikzstyle{probs}=[shape=semicircle,fill=white,draw=black,shape border rotate=180,minimum width=1.2cm]

%
%


\tikzstyle{every picture}=[baseline=-0.25em,scale=0.5]
\tikzstyle{dotpic}=[] 
\tikzstyle{diredges}=[every to/.style={diredge}]
\tikzstyle{math matrix}=[matrix of math nodes,left delimiter=(,right delimiter=),inner sep=2pt,column sep=1em,row sep=0.5em,nodes={inner sep=0pt},text height=1.5ex, text depth=0.25ex]


\tikzstyle{inline text}=[text height=1.5ex, text depth=0.25ex,yshift=0.5mm]
\tikzstyle{label}=[font=\footnotesize,text height=1.5ex, text depth=0.25ex,yshift=0.5mm]
\tikzstyle{left label}=[label,anchor=east,xshift=1.5mm]
\tikzstyle{right label}=[label,anchor=west,xshift=-1.5mm]

\tikzstyle{braceedge}=[decorate,decoration={brace,amplitude=2mm,raise=-1mm}]
\tikzstyle{small braceedge}=[decorate,decoration={brace,amplitude=1mm,raise=-1mm}]
\tikzstyle{singleedgegray}=[gray]

\tikzstyle{doubled}=[line width=1.6pt] 
\tikzstyle{boldedge}=[doubled,shorten <=-0.17mm,shorten >=-0.17mm]
\tikzstyle{boldedgegray}=[doubled,gray,shorten <=-0.17mm,shorten >=-0.17mm]

\tikzstyle{semidoubled}=[line width=1.4pt] 
\tikzstyle{semiboldedgegray}=[semidoubled,gray,shorten <=-0.17mm,shorten >=-0.17mm]

\tikzstyle{boldedgedashed}=[very thick,dashed,shorten <=-0.17mm,shorten >=-0.17mm]
\tikzstyle{vboldedgedashed}=[doubled,dashed,shorten <=-0.17mm,shorten >=-0.17mm]
\tikzstyle{left hook arrow}=[left hook-latex]
\tikzstyle{right hook arrow}=[right hook-latex]
\tikzstyle{sembracket}=[line width=0.5pt,shorten <=-0.07mm,shorten >=-0.07mm]

\tikzstyle{causal edge}=[->,thick,gray]
\tikzstyle{causal nondir}=[thick,gray]
\tikzstyle{timeline}=[thick,gray, dashed]

\tikzstyle{cedge}=[<->,thick,gray!70!white]

\tikzstyle{empty diagram}=[draw=gray!40!white,dashed,shape=rectangle,minimum width=1cm,minimum height=1cm]
\tikzstyle{empty diagram small}=[draw=gray!50!white,dashed,shape=rectangle,minimum width=0.6cm,minimum height=0.5cm]




\tikzstyle{dot}=[inner sep=0mm,minimum width=2mm,minimum height=2mm,draw,shape=circle]
\tikzstyle{ddot}=[inner sep=0mm, doubled, minimum width=2.5mm,minimum height=2.5mm,draw,shape=circle]

\tikzstyle{black dot}=[dot,fill=black]
\tikzstyle{white dot}=[dot,fill=white,,text depth=-0.2mm]
\tikzstyle{green dot}=[white dot] 
\tikzstyle{gray dot}=[dot,fill=gray!40!white,,text depth=-0.2mm]
\tikzstyle{red dot}=[gray dot] 


\tikzstyle{black ddot}=[ddot,fill=black]
\tikzstyle{white ddot}=[ddot,fill=white]
\tikzstyle{gray ddot}=[ddot,fill=gray!40!white]

\tikzstyle{gray edge}=[gray!40!white]

\tikzstyle{small dot}=[inner sep=0.5mm,minimum width=0pt,minimum height=0pt,draw,shape=circle]

\tikzstyle{small black dot}=[small dot,fill=black]
\tikzstyle{small white dot}=[small dot,fill=white]
\tikzstyle{small gray dot}=[small dot,fill=gray!40!white]

\tikzstyle{causal dot}=[inner sep=0.4mm,minimum width=0pt,minimum height=0pt,draw=white,shape=circle,fill=gray!40!white]


\tikzstyle{phase dimensions}=[minimum size=5mm,font=\footnotesize,rectangle,rounded corners=2.5mm,inner sep=0.2mm,outer sep=-2mm]
\tikzstyle{dphase dimensions}=[minimum size=5mm,font=\footnotesize,rectangle,rounded corners=2.5mm,inner sep=0.2mm,outer sep=-2mm]
\tikzstyle{phase dimensions small}=[minimum size=3.0mm,font=\footnotesize,rectangle,rounded corners=1.5mm,inner sep=0.2mm,outer sep=-1.2mm]

\tikzstyle{white phase dot}=[dot,fill=white,phase dimensions]
\tikzstyle{white phase ddot}=[ddot,fill=white,dphase dimensions]
\tikzstyle{green phase ddot}=[ddot,fill=green,dphase dimensions]
\tikzstyle{white phase dot small}=[dot,fill=white,phase dimensions small]
\tikzstyle{gray phase dot small}=[dot,fill=gray!40!white,phase dimensions small]

\tikzstyle{white rect ddot}=[draw=black,fill=white,doubled,minimum size=5mm,font=\footnotesize,rectangle,rounded corners=2.5mm,inner sep=0.2mm]
\tikzstyle{gray rect ddot}=[draw=black,fill=gray!40!white,doubled,minimum size=6mm,font=\footnotesize,rectangle,rounded corners=3mm]

\tikzstyle{gray phase dot}=[dot,fill=gray!40!white,phase dimensions]
\tikzstyle{gray phase ddot}=[ddot,fill=gray!40!white,dphase dimensions]
\tikzstyle{red phase ddot}=[ddot,fill=red,dphase dimensions]

\tikzstyle{grey phase dot}=[gray phase dot]
\tikzstyle{grey phase ddot}=[gray phase ddot]

\tikzstyle{small phase dimensions}=[minimum size=4mm,font=\tiny,rectangle,rounded corners=2mm,inner sep=0.2mm,outer sep=-2mm]
\tikzstyle{small dphase dimensions}=[minimum size=4mm,font=\tiny,rectangle,rounded corners=2mm,inner sep=0.2mm,outer sep=-2mm]

\tikzstyle{small gray phase dot}=[dot,fill=gray!40!white,small phase dimensions]
\tikzstyle{small gray phase ddot}=[ddot,fill=gray!40!white,small dphase dimensions]


\tikzstyle{small map}=[draw,shape=rectangle,minimum height=4mm,minimum width=4mm,fill=white]

\tikzstyle{cnot}=[fill=white,shape=circle,inner sep=-1.4pt]

\tikzstyle{asym hadamard}=[fill=white,draw,shape=NEbox,inner sep=0.6mm,font=\footnotesize,minimum height=4mm]
\tikzstyle{asym hadamard conj}=[fill=white,draw,shape=NWbox,inner sep=0.6mm,font=\footnotesize,minimum height=4mm]
\tikzstyle{asym hadamard dag}=[fill=white,draw,shape=SEbox,inner sep=0.6mm,font=\footnotesize,minimum height=4mm]

\tikzstyle{hadamard}=[fill=white,draw,inner sep=0.6mm,font=\footnotesize,minimum height=4mm,minimum width=4mm]
\tikzstyle{small hadamard}=[fill=white,draw,inner sep=0.6mm,minimum height=1.5mm,minimum width=1.5mm]
\tikzstyle{dhadamard}=[hadamard,doubled]
\tikzstyle{small dhadamard}=[small hadamard,doubled]
\tikzstyle{small dhadamard rotate}=[small hadamard,doubled,rotate=45]
\tikzstyle{antipode}=[white dot,inner sep=0.3mm,font=\footnotesize]

\tikzstyle{scalar}=[diamond,draw,inner sep=0.5pt,font=\small]
\tikzstyle{dscalar}=[diamond,doubled, draw,inner sep=0.5pt,font=\small]

\tikzstyle{small box}=[rectangle,inline text,fill=white,draw,minimum height=5mm,yshift=-0.5mm,minimum width=5mm,font=\small]
\tikzstyle{small gray box}=[small box,fill=gray!30]
\tikzstyle{medium box}=[rectangle,inline text,fill=white,draw,minimum height=5mm,yshift=-0.5mm,minimum width=10mm,font=\small]
\tikzstyle{square box}=[small box] 
\tikzstyle{medium gray box}=[small box,fill=gray!30]
\tikzstyle{semilarge box}=[rectangle,inline text,fill=white,draw,minimum height=5mm,yshift=-0.5mm,minimum width=12.5mm,font=\small]
\tikzstyle{large box}=[rectangle,inline text,fill=white,draw,minimum height=5mm,yshift=-0.5mm,minimum width=15mm,font=\small]
\tikzstyle{large gray box}=[small box,fill=gray!30]

\tikzstyle{Bayes box}=[rectangle,fill=black,draw, minimum height=3mm, minimum width=3mm]

\tikzstyle{gray square point}=[small box,fill=gray!50]

\tikzstyle{dphase box white}=[dhadamard]
\tikzstyle{dphase box gray}=[dhadamard,fill=gray!50!white]

\tikzstyle{point}=[regular polygon,regular polygon sides=3,draw,scale=0.75,inner sep=-0.5pt,minimum width=9mm,fill=white,regular polygon rotate=180]
\tikzstyle{copoint}=[regular polygon,regular polygon sides=3,draw,scale=0.75,inner sep=-0.5pt,minimum width=9mm,fill=white]
\tikzstyle{dpoint}=[point,doubled]
\tikzstyle{dcopoint}=[copoint,doubled]

\tikzstyle{wide copoint}=[fill=white,draw,shape=isosceles triangle,shape border rotate=90,isosceles triangle stretches=true,inner sep=0pt,minimum width=1.5cm,minimum height=6.12mm]
\tikzstyle{wide point}=[fill=white,draw,shape=isosceles triangle,shape border rotate=-90,isosceles triangle stretches=true,inner sep=0pt,minimum width=1.5cm,minimum height=6.12mm,yshift=-0.0mm]
\tikzstyle{wide point plus}=[fill=white,draw,shape=isosceles triangle,shape border rotate=-90,isosceles triangle stretches=true,inner sep=0pt,minimum width=1.74cm,minimum height=7mm,yshift=-0.0mm]

\tikzstyle{wide dpoint}=[fill=white,doubled,draw,shape=isosceles triangle,shape border rotate=-90,isosceles triangle stretches=true,inner sep=0pt,minimum width=1.5cm,minimum height=6.12mm,yshift=-0.0mm]
\tikzstyle{wide dcopoint}=[fill=white,doubled,draw,shape=isosceles triangle,shape border rotate=90,isosceles triangle stretches=true,inner sep=0pt,minimum width=1.5cm,minimum height=6.12mm,yshift=-0.0mm]

\tikzstyle{tinypoint}=[regular polygon,regular polygon sides=3,draw,scale=0.55,inner sep=-0.15pt,minimum width=6mm,fill=white,regular polygon rotate=180]

\tikzstyle{white point}=[point]
\tikzstyle{white dpoint}=[dpoint]
\tikzstyle{green point}=[white point] 
\tikzstyle{white copoint}=[copoint]
\tikzstyle{gray point}=[point,fill=gray!40!white]
\tikzstyle{gray dpoint}=[gray point,doubled]
\tikzstyle{red point}=[gray point] 
\tikzstyle{gray copoint}=[copoint,fill=gray!40!white]
\tikzstyle{gray dcopoint}=[gray copoint,doubled]

\tikzstyle{white point guide}=[regular polygon,regular polygon sides=3,font=\scriptsize,draw,scale=0.65,inner sep=-0.5pt,minimum width=9mm,fill=white,regular polygon rotate=180]

\tikzstyle{black point}=[point,fill=black,font=\color{white}]
\tikzstyle{black copoint}=[copoint,fill=black,font=\color{white}]

\tikzstyle{tiny gray point}=[tinypoint,fill=gray!40!white]

\tikzstyle{diredge}=[->]
\tikzstyle{ddiredge}=[<->]
\tikzstyle{rdiredge}=[<-]
\tikzstyle{thickdiredge}=[->, very thick]
\tikzstyle{pointer edge}=[->,very thick,gray]
\tikzstyle{pointer edge part}=[very thick,gray]
\tikzstyle{dashed edge}=[dashed]
\tikzstyle{thick dashed edge}=[very thick,dashed]
\tikzstyle{thick gray dashed edge}=[thick dashed edge,gray!40]
\tikzstyle{thick map edge}=[very thick,|->]


\makeatletter

\providecommand{\boxshape}[3]{%
	\pgfdeclareshape{#1}{
		\inheritsavedanchors[from=rectangle]
		\inheritanchorborder[from=rectangle]
		\inheritanchor[from=rectangle]{center}
		\inheritanchor[from=rectangle]{north}
		\inheritanchor[from=rectangle]{south}
		\inheritanchor[from=rectangle]{west}
		\inheritanchor[from=rectangle]{east}
		
		\backgroundpath{
			\southwest \pgf@xa=\pgf@x \pgf@ya=\pgf@y
			\northeast \pgf@xb=\pgf@x \pgf@yb=\pgf@y
			
			\@tempdima=#2
			\@tempdimb=#3
			
			\pgfpathmoveto{\pgfpoint{\pgf@xa - 5pt + \@tempdima}{\pgf@ya}}
			\pgfpathlineto{\pgfpoint{\pgf@xa - 5pt - \@tempdima}{\pgf@yb}}
			\pgfpathlineto{\pgfpoint{\pgf@xb + 5pt + \@tempdimb}{\pgf@yb}}
			\pgfpathlineto{\pgfpoint{\pgf@xb + 5pt - \@tempdimb}{\pgf@ya}}
			\pgfpathclose
		}
	}
}

\boxshape{NEbox}{0pt}{5pt}
\boxshape{SEbox}{0pt}{-5pt}
\boxshape{NWbox}{5pt}{0pt}
\boxshape{SWbox}{-5pt}{0pt}
\boxshape{EBox}{-3pt}{3pt}
\boxshape{WBox}{3pt}{-3pt}
\makeatother

\tikzstyle{cloud}=[shape=cloud,draw,minimum width=1.5cm,minimum height=1.5cm]

\tikzstyle{map}=[draw,shape=NEbox,inner sep=2pt,minimum height=6mm,fill=white]
\tikzstyle{dashedmap}=[draw,dashed,shape=NEbox,inner sep=2pt,minimum height=6mm,fill=white]
\tikzstyle{mapdag}=[draw,shape=SEbox,inner sep=2pt,minimum height=6mm,fill=white]
\tikzstyle{mapadj}=[draw,shape=SEbox,inner sep=2pt,minimum height=6mm,fill=white]
\tikzstyle{maptrans}=[draw,shape=SWbox,inner sep=2pt,minimum height=6mm,fill=white]
\tikzstyle{mapconj}=[draw,shape=NWbox,inner sep=2pt,minimum height=6mm,fill=white]

\tikzstyle{langmap}=[draw,shape=NEbox,inner sep=2pt,minimum height=2.4mm,minimum width=3.2mm,fill=white]
\tikzstyle{langmaptrans}=[draw,shape=SWbox,inner sep=2pt,minimum height=2.4mm,minimum width=3.2mm,fill=white]

\tikzstyle{medium map}=[draw,shape=NEbox,inner sep=2pt,minimum height=6mm,fill=white,minimum width=7mm]
\tikzstyle{medium map dag}=[draw,shape=SEbox,inner sep=2pt,minimum height=6mm,fill=white,minimum width=7mm]
\tikzstyle{medium map adj}=[draw,shape=SEbox,inner sep=2pt,minimum height=6mm,fill=white,minimum width=7mm]
\tikzstyle{medium map trans}=[draw,shape=SWbox,inner sep=2pt,minimum height=6mm,fill=white,minimum width=7mm]
\tikzstyle{medium map conj}=[draw,shape=NWbox,inner sep=2pt,minimum height=6mm,fill=white,minimum width=7mm]
\tikzstyle{semilarge map}=[draw,shape=NEbox,inner sep=2pt,minimum height=6mm,fill=white,minimum width=9.5mm]
\tikzstyle{semilarge map trans}=[draw,shape=SWbox,inner sep=2pt,minimum height=6mm,fill=white,minimum width=9.5mm]
\tikzstyle{semilarge map adj}=[draw,shape=SEbox,inner sep=2pt,minimum height=6mm,fill=white,minimum width=9.5mm]
\tikzstyle{semilarge map dag}=[draw,shape=SEbox,inner sep=2pt,minimum height=6mm,fill=white,minimum width=9.5mm]
\tikzstyle{semilarge map conj}=[draw,shape=NWbox,inner sep=2pt,minimum height=6mm,fill=white,minimum width=9.5mm]
\tikzstyle{large map}=[draw,shape=NEbox,inner sep=2pt,minimum height=6mm,fill=white,minimum width=12mm]
\tikzstyle{large map conj}=[draw,shape=NWbox,inner sep=2pt,minimum height=6mm,fill=white,minimum width=12mm]
\tikzstyle{very large map}=[draw,shape=NEbox,inner sep=2pt,minimum height=6mm,fill=white,minimum width=17mm]

\tikzstyle{medium dmap}=[draw,doubled,shape=NEbox,inner sep=2pt,minimum height=6mm,fill=white,minimum width=7mm]
\tikzstyle{medium dmap dag}=[draw,doubled,shape=SEbox,inner sep=2pt,minimum height=6mm,fill=white,minimum width=7mm]
\tikzstyle{medium dmap adj}=[draw,doubled,shape=SEbox,inner sep=2pt,minimum height=6mm,fill=white,minimum width=7mm]
\tikzstyle{medium dmap trans}=[draw,doubled,shape=SWbox,inner sep=2pt,minimum height=6mm,fill=white,minimum width=7mm]
\tikzstyle{medium dmap conj}=[draw,doubled,shape=NWbox,inner sep=2pt,minimum height=6mm,fill=white,minimum width=7mm]
\tikzstyle{semilarge dmap}=[draw,doubled,shape=NEbox,inner sep=2pt,minimum height=6mm,fill=white,minimum width=9.5mm]
\tikzstyle{semilarge dmap trans}=[draw,doubled,shape=SWbox,inner sep=2pt,minimum height=6mm,fill=white,minimum width=9.5mm]
\tikzstyle{semilarge dmap adj}=[draw,doubled,shape=SEbox,inner sep=2pt,minimum height=6mm,fill=white,minimum width=9.5mm]
\tikzstyle{semilarge dmap dag}=[draw,doubled,shape=SEbox,inner sep=2pt,minimum height=6mm,fill=white,minimum width=9.5mm]
\tikzstyle{semilarge dmap conj}=[draw,doubled,shape=NWbox,inner sep=2pt,minimum height=6mm,fill=white,minimum width=9.5mm]
\tikzstyle{large dmap}=[draw,doubled,shape=NEbox,inner sep=2pt,minimum height=6mm,fill=white,minimum width=12mm]
\tikzstyle{large dmap conj}=[draw,doubled,shape=NWbox,inner sep=2pt,minimum height=6mm,fill=white,minimum width=12mm]
\tikzstyle{large dmap trans}=[draw,doubled,shape=SWbox,inner sep=2pt,minimum height=6mm,fill=white,minimum width=12mm]
\tikzstyle{large dmap adj}=[draw,doubled,shape=SEbox,inner sep=2pt,minimum height=6mm,fill=white,minimum width=12mm]
\tikzstyle{large dmap dag}=[draw,doubled,shape=SEbox,inner sep=2pt,minimum height=6mm,fill=white,minimum width=12mm]
\tikzstyle{very large dmap}=[draw,doubled,shape=NEbox,inner sep=2pt,minimum height=6mm,fill=white,minimum width=19.5mm]

\tikzstyle{muxbox}=[draw,shape=rectangle,minimum height=3mm,minimum width=3mm,fill=white]
\tikzstyle{dmuxbox}=[muxbox,doubled]

\tikzstyle{box}=[draw,shape=rectangle,inner sep=2pt,minimum height=6mm,minimum width=6mm,fill=white]
\tikzstyle{dbox}=[draw,doubled,shape=rectangle,inner sep=2pt,minimum height=6mm,minimum width=6mm,fill=white]
\tikzstyle{dmap}=[draw,doubled,shape=NEbox,inner sep=2pt,minimum height=6mm,fill=white]
\tikzstyle{dmapdag}=[draw,doubled,shape=SEbox,inner sep=2pt,minimum height=6mm,fill=white]
\tikzstyle{dmapadj}=[draw,doubled,shape=SEbox,inner sep=2pt,minimum height=6mm,fill=white]
\tikzstyle{dmaptrans}=[draw,doubled,shape=SWbox,inner sep=2pt,minimum height=6mm,fill=white]
\tikzstyle{dmapconj}=[draw,doubled,shape=NWbox,inner sep=2pt,minimum height=6mm,fill=white]

\tikzstyle{ddmap}=[draw,doubled,dashed,shape=NEbox,inner sep=2pt,minimum height=6mm,fill=white]
\tikzstyle{ddmapdag}=[draw,doubled,dashed,shape=SEbox,inner sep=2pt,minimum height=6mm,fill=white]
\tikzstyle{ddmapadj}=[draw,doubled,dashed,shape=SEbox,inner sep=2pt,minimum height=6mm,fill=white]
\tikzstyle{ddmaptrans}=[draw,doubled,dashed,shape=SWbox,inner sep=2pt,minimum height=6mm,fill=white]
\tikzstyle{ddmapconj}=[draw,doubled,dashed,shape=NWbox,inner sep=2pt,minimum height=6mm,fill=white]

\boxshape{sNEbox}{0pt}{3pt}
\boxshape{sSEbox}{0pt}{-3pt}
\boxshape{sNWbox}{3pt}{0pt}
\boxshape{sSWbox}{-3pt}{0pt}
\tikzstyle{smap}=[draw,shape=sNEbox,fill=white]
\tikzstyle{smapdag}=[draw,shape=sSEbox,fill=white]
\tikzstyle{smapadj}=[draw,shape=sSEbox,fill=white]
\tikzstyle{smaptrans}=[draw,shape=sSWbox,fill=white]
\tikzstyle{smapconj}=[draw,shape=sNWbox,fill=white]

\tikzstyle{dsmap}=[draw,dashed,shape=sNEbox,fill=white]
\tikzstyle{dsmapdag}=[draw,dashed,shape=sSEbox,fill=white]
\tikzstyle{dsmaptrans}=[draw,dashed,shape=sSWbox,fill=white]
\tikzstyle{dsmapconj}=[draw,dashed,shape=sNWbox,fill=white]

\boxshape{mNEbox}{0pt}{10pt}
\boxshape{mSEbox}{0pt}{-10pt}
\boxshape{mNWbox}{10pt}{0pt}
\boxshape{mSWbox}{-10pt}{0pt}
\tikzstyle{mmap}=[draw,shape=mNEbox]
\tikzstyle{mmapdag}=[draw,shape=mSEbox]
\tikzstyle{mmaptrans}=[draw,shape=mSWbox]
\tikzstyle{mmapconj}=[draw,shape=mNWbox]

\tikzstyle{mmapgray}=[draw,fill=gray!40!white,shape=mNEbox]
\tikzstyle{smapgray}=[draw,fill=gray!40!white,shape=sNEbox]

\makeatletter
\pgfdeclareshape{cornerpoint}{
\inheritsavedanchors[from=rectangle] 
\inheritanchorborder[from=rectangle]
\inheritanchor[from=rectangle]{center}
\inheritanchor[from=rectangle]{north}
\inheritanchor[from=rectangle]{south}
\inheritanchor[from=rectangle]{west}
\inheritanchor[from=rectangle]{east}
\backgroundpath{
\southwest \pgf@xa=\pgf@x \pgf@ya=\pgf@y
\northeast \pgf@xb=\pgf@x \pgf@yb=\pgf@y

\pgfmathsetmacro{\pgf@shorten@left}{\pgfkeysvalueof{/tikz/shorten left}}
\pgfmathsetmacro{\pgf@shorten@right}{\pgfkeysvalueof{/tikz/shorten right}}

\pgfpathmoveto{\pgfpoint{0.5 * (\pgf@xa + \pgf@xb)}{\pgf@ya - 5pt}}
\pgfpathlineto{\pgfpoint{\pgf@xa - 8pt + \pgf@shorten@left}{\pgf@yb - 1.5 * \pgf@shorten@left}}
\pgfpathlineto{\pgfpoint{\pgf@xa - 8pt + \pgf@shorten@left}{\pgf@yb}}
\pgfpathlineto{\pgfpoint{\pgf@xb + 8pt - \pgf@shorten@right}{\pgf@yb}}
\pgfpathlineto{\pgfpoint{\pgf@xb + 8pt - \pgf@shorten@right}{\pgf@yb - 1.5 * \pgf@shorten@right}}
\pgfpathclose
}
}

\pgfdeclareshape{cornercopoint}{
\inheritsavedanchors[from=rectangle] 
\inheritanchorborder[from=rectangle]
\inheritanchor[from=rectangle]{center}
\inheritanchor[from=rectangle]{north}
\inheritanchor[from=rectangle]{south}
\inheritanchor[from=rectangle]{west}
\inheritanchor[from=rectangle]{east}
\backgroundpath{
\southwest \pgf@xa=\pgf@x \pgf@ya=\pgf@y
\northeast \pgf@xb=\pgf@x \pgf@yb=\pgf@y

\pgfmathsetmacro{\pgf@shorten@left}{\pgfkeysvalueof{/tikz/shorten left}}
\pgfmathsetmacro{\pgf@shorten@right}{\pgfkeysvalueof{/tikz/shorten right}}

\pgfpathmoveto{\pgfpoint{0.5 * (\pgf@xa + \pgf@xb)}{\pgf@yb + 5pt}}
\pgfpathlineto{\pgfpoint{\pgf@xa - 8pt + \pgf@shorten@left}{\pgf@ya + 1.5 * \pgf@shorten@left}}
\pgfpathlineto{\pgfpoint{\pgf@xa - 8pt + \pgf@shorten@left}{\pgf@ya}}
\pgfpathlineto{\pgfpoint{\pgf@xb + 8pt - \pgf@shorten@right}{\pgf@ya}}
\pgfpathlineto{\pgfpoint{\pgf@xb + 8pt - \pgf@shorten@right}{\pgf@ya + 1.5 * \pgf@shorten@right}}
\pgfpathclose
}
}

\pgfdeclareshape{langpoint}{
\inheritsavedanchors[from=rectangle] 
\inheritanchorborder[from=rectangle]
\inheritanchor[from=rectangle]{center}
\inheritanchor[from=rectangle]{north}
\inheritanchor[from=rectangle]{south}
\inheritanchor[from=rectangle]{west}
\inheritanchor[from=rectangle]{east}
\backgroundpath{
\southwest \pgf@xa=\pgf@x \pgf@ya=\pgf@y
\northeast \pgf@xb=\pgf@x \pgf@yb=\pgf@y

\pgfmathsetmacro{\pgf@shorten@left}{\pgfkeysvalueof{/tikz/shorten left}}
\pgfmathsetmacro{\pgf@shorten@right}{\pgfkeysvalueof{/tikz/shorten right}}

\pgfpathmoveto{\pgfpoint{0.5 * (\pgf@xa + \pgf@xb)}{\pgf@ya - 2pt}}
\pgfpathlineto{\pgfpoint{\pgf@xa - 8pt}{\pgf@yb - 3 * \pgf@shorten@left + 5pt}} 
\pgfpathlineto{\pgfpoint{\pgf@xa - 8pt}{\pgf@yb -1pt}}
\pgfpathlineto{\pgfpoint{\pgf@xb + 8pt}{\pgf@yb -1pt}}
\pgfpathlineto{\pgfpoint{\pgf@xb + 8pt}{\pgf@yb - 3 * \pgf@shorten@left + 5pt}}
\pgfpathclose
}
}

\pgfdeclareshape{langcopoint}{
\inheritsavedanchors[from=rectangle] 
\inheritanchorborder[from=rectangle]
\inheritanchor[from=rectangle]{center}
\inheritanchor[from=rectangle]{north}
\inheritanchor[from=rectangle]{south}
\inheritanchor[from=rectangle]{west}
\inheritanchor[from=rectangle]{east}
\backgroundpath{
\southwest \pgf@xa=\pgf@x \pgf@ya=\pgf@y
\northeast \pgf@xb=\pgf@x \pgf@yb=\pgf@y

\pgfmathsetmacro{\pgf@shorten@left}{\pgfkeysvalueof{/tikz/shorten left}}
\pgfmathsetmacro{\pgf@shorten@right}{\pgfkeysvalueof{/tikz/shorten right}}

\pgfpathmoveto{\pgfpoint{0.5 * (\pgf@xa + \pgf@xb)}{\pgf@yb +0pt}}
\pgfpathlineto{\pgfpoint{\pgf@xa - 8pt}{\pgf@ya + 3 * \pgf@shorten@left - 5pt}} 
\pgfpathlineto{\pgfpoint{\pgf@xa - 8pt}{\pgf@ya + 1pt}}
\pgfpathlineto{\pgfpoint{\pgf@xb + 8pt}{\pgf@ya + 1pt}}
\pgfpathlineto{\pgfpoint{\pgf@xb + 8pt}{\pgf@ya + 3 * \pgf@shorten@left - 5pt}}
\pgfpathclose
}
}

\pgfdeclareshape{langrect}{
\inheritsavedanchors[from=rectangle] 
\inheritanchorborder[from=rectangle]
\inheritanchor[from=rectangle]{center}
\inheritanchor[from=rectangle]{north}
\inheritanchor[from=rectangle]{south}
\inheritanchor[from=rectangle]{west}
\inheritanchor[from=rectangle]{east}
\backgroundpath{
\southwest \pgf@xa=\pgf@x \pgf@ya=\pgf@y
\northeast \pgf@xb=\pgf@x \pgf@yb=\pgf@y

\pgfmathsetmacro{\pgf@shorten@left}{\pgfkeysvalueof{/tikz/shorten left}}
\pgfmathsetmacro{\pgf@shorten@right}{\pgfkeysvalueof{/tikz/shorten right}}

\pgfpathmoveto{\pgfpoint{\pgf@xa - 8pt}{\pgf@ya + 3 * \pgf@shorten@left - 5pt}} 
\pgfpathlineto{\pgfpoint{\pgf@xa - 8pt}{\pgf@ya + 1pt}}
\pgfpathlineto{\pgfpoint{\pgf@xb + 8pt}{\pgf@ya + 1pt}}
\pgfpathlineto{\pgfpoint{\pgf@xb + 8pt}{\pgf@ya + 3 * \pgf@shorten@left - 5pt}}
\pgfpathclose
}
}

\makeatother

\pgfkeyssetvalue{/tikz/shorten left}{0pt}
\pgfkeyssetvalue{/tikz/shorten right}{0pt}

\tikzstyle{kpoint common}=[draw,fill=white,inner sep=1pt,minimum height=4mm]

\tikzstyle{langstate}=[shape=langcopoint,shorten left=5pt,kpoint common,font=\footnotesize]
\tikzstyle{langeffect}=[shape=langpoint,shorten left=5pt,kpoint common,font=\footnotesize]
\tikzstyle{langstatedash}=[shape=langcopoint,dashed, shorten left=5pt,kpoint common,font=\footnotesize]
\tikzstyle{langeffectdash}=[shape=langpoint,dashed, shorten left=5pt,kpoint common,font=\footnotesize]
\tikzstyle{langbox}=[shape=langrect,shorten left=5pt,kpoint common,font=\footnotesize] 

\tikzstyle{kpoint}=[shape=cornerpoint,shorten left=5pt,kpoint common]
\tikzstyle{kpoint adjoint}=[shape=cornercopoint,shorten left=5pt,kpoint common]

\tikzstyle{kpoint conjugate}=[shape=cornerpoint,shorten right=5pt,kpoint common]
\tikzstyle{kpoint transpose}=[shape=cornercopoint,shorten right=5pt,kpoint common]
\tikzstyle{kpoint symm}=[shape=cornerpoint,shorten left=5pt,shorten right=5pt,kpoint common]

\tikzstyle{black kpoint}=[shape=cornerpoint,shorten left=5pt,kpoint common,fill=black,font=\color{white}]
\tikzstyle{black kpoint adjoint}=[shape=cornercopoint,shorten left=5pt,kpoint common,fill=black,font=\color{white}]
\tikzstyle{black kpointadj}=[shape=cornercopoint,shorten left=5pt,kpoint common,fill=black,font=\color{white}]

\tikzstyle{black dkpoint}=[shape=cornerpoint,shorten left=5pt,kpoint common,fill=black, doubled,font=\color{white}]
\tikzstyle{black dkpoint adjoint}=[shape=cornercopoint,shorten left=5pt,kpoint common,fill=black, doubled,font=\color{white}]
\tikzstyle{black dkpointadj}=[shape=cornercopoint,shorten left=5pt,kpoint common,fill=black, doubled,font=\color{white}]

\tikzstyle{kpointdag}=[kpoint adjoint]
\tikzstyle{kpointadj}=[kpoint adjoint]
\tikzstyle{kpointconj}=[kpoint conjugate]
\tikzstyle{kpointtrans}=[kpoint transpose]

\tikzstyle{big kpoint}=[kpoint, minimum width=1.2 cm, minimum height=8mm, inner sep=4pt, text depth=3mm]

\tikzstyle{wide kpoint}=[kpoint, minimum width=1 cm, inner sep=2pt]
\tikzstyle{wide kpointdag}=[kpointdag, minimum width=1 cm, inner sep=2pt]
\tikzstyle{wide kpointconj}=[kpointconj, minimum width=1 cm, inner sep=2pt]
\tikzstyle{wide kpointtrans}=[kpointtrans, minimum width=1 cm, inner sep=2pt]

\tikzstyle{gray kpoint}=[kpoint,fill=gray!50!white]
\tikzstyle{gray kpointdag}=[kpointdag,fill=gray!50!white]
\tikzstyle{gray kpointadj}=[kpointadj,fill=gray!50!white]
\tikzstyle{gray kpointconj}=[kpointconj,fill=gray!50!white]
\tikzstyle{gray kpointtrans}=[kpointtrans,fill=gray!50!white]

\tikzstyle{gray dkpoint}=[kpoint,fill=gray!50!white,doubled]
\tikzstyle{gray dkpointdag}=[kpointdag,fill=gray!50!white,doubled]
\tikzstyle{gray dkpointadj}=[kpointadj,fill=gray!50!white,doubled]
\tikzstyle{gray dkpointconj}=[kpointconj,fill=gray!50!white,doubled]
\tikzstyle{gray dkpointtrans}=[kpointtrans,fill=gray!50!white,doubled]

\tikzstyle{white label}=[draw,fill=white,rectangle,inner sep=0.7 mm]
\tikzstyle{gray label}=[draw,fill=gray!50!white,rectangle,inner sep=0.7 mm]
\tikzstyle{black label}=[draw,fill=black,rectangle,inner sep=0.7 mm]

\tikzstyle{dkpoint}=[kpoint,doubled]
\tikzstyle{wide dkpoint}=[wide kpoint,doubled]
\tikzstyle{dkpointdag}=[kpoint adjoint,doubled]
\tikzstyle{wide dkpointdag}=[wide kpointdag,doubled]
\tikzstyle{dkcopoint}=[kpoint adjoint,doubled]
\tikzstyle{dkpointadj}=[kpoint adjoint,doubled]
\tikzstyle{dkpointconj}=[kpoint conjugate,doubled]
\tikzstyle{dkpointtrans}=[kpoint transpose,doubled]

\tikzstyle{kscalar}=[kpoint common, shape=EBox, inner xsep=-1pt, inner ysep=3pt,font=\small]
\tikzstyle{kscalarconj}=[kpoint common, shape=WBox, inner xsep=-1pt, inner ysep=3pt,font=\small]


 \tikzstyle{upground}=[circuit ee IEC,ground,rotate=90,scale=2.5]
 \tikzstyle{downground}=[circuit ee IEC,ground,rotate=-90,scale=2.5]
 \tikzstyle{bigground}=[regular polygon,regular polygon sides=3,draw=gray,scale=0.50,inner sep=-0.5pt,minimum width=10mm,fill=gray]


\tikzstyle{arrs}=[-latex,font=\small,auto]
\tikzstyle{arrow plain}=[arrs]
\tikzstyle{arrow dashed}=[dashed,arrs]
\tikzstyle{arrow bold}=[very thick,arrs]
\tikzstyle{arrow hide}=[draw=white!0,-]
\tikzstyle{arrow reverse}=[latex-]
\tikzstyle{cdnode}=[]



\newcommand{\bigcounit}[1]{%
\,\begin{tikzpicture}[dotpic,scale=2,yshift=-1mm]
\node [#1] (a) at (0,0.25) {}; 
\draw (0,-0.2)--(a);
\end{tikzpicture}\,}
\newcommand{\bigunit}[1]{%
\,\begin{tikzpicture}[dotpic,scale=2,yshift=1.5mm]
\node [#1] (a) at (0,-0.25) {}; 
\draw (a)--(0,0.2);
\end{tikzpicture}\,}
\newcommand{\bigcomult}[1]{%
\,\begin{tikzpicture}[dotpic,scale=2,yshift=0.5mm]
	\node [#1] (a) {};
	\draw (-90:0.55)--(a);
	\draw (a) -- (45:0.6);
	\draw (a) -- (135:0.6);
\end{tikzpicture}\,}
\newcommand{\bigmult}[1]{%
\,\begin{tikzpicture}[dotpic,scale=2]
	\node [#1] (a) {};
	\draw (a) -- (90:0.55);
	\draw (a) (-45:0.6) -- (a);
	\draw (a) (-135:0.6) -- (a);
\end{tikzpicture}\,}

\newcommand{\dotcounit}[1]{%
\,\begin{tikzpicture}[dotpic,yshift=-1mm]
\node [#1] (a) at (0,0.35) {}; 
\draw (0,-0.3)--(a);
\end{tikzpicture}\,}
\newcommand{\dotunit}[1]{%
\,\begin{tikzpicture}[dotpic,yshift=1.5mm]
\node [#1] (a) at (0,-0.35) {}; 
\draw (a)--(0,0.3);
\end{tikzpicture}\,}
\newcommand{\dotcomult}[1]{%
\,\begin{tikzpicture}[dotpic,yshift=0.5mm]
	\node [#1] (a) {};
	\draw (-90:0.55)--(a);
	\draw (a) -- (45:0.6);
	\draw (a) -- (135:0.6);
\end{tikzpicture}\,}
\newcommand{\dotmult}[1]{%
\,\begin{tikzpicture}[dotpic]
	\node [#1] (a) {};
	\draw (a) -- (90:0.55);
	\draw (a) (-45:0.6) -- (a);
	\draw (a) (-135:0.6) -- (a);
\end{tikzpicture}\,}

\newcommand{\ddotmult}[1]{%
\,\begin{tikzpicture}[dotpic]
	\node [#1] (a) {};
	\draw [boldedge] (a) -- (90:0.55);
	\draw [boldedge] (a) (-45:0.6) -- (a);
	\draw [boldedge] (a) (-135:0.6) -- (a);
\end{tikzpicture}\,}



\newcommand{\dotidualiser}[1]{%
\begin{tikzpicture}[dotpic,yshift=1.5mm]
	\node [#1] (a) {};
	\draw [medium diredge] (a) to (-90:0.35);
	\draw [medium diredge] (a) to (90:0.35);
\end{tikzpicture}}
\newcommand{\dotdualiser}[1]{%
\begin{tikzpicture}[dotpic,yshift=1.5mm]
	\node [#1] (a) {};
	\draw [medium diredge] (-90:0.35) to (a);
	\draw [medium diredge] (90:0.35) to (a);
\end{tikzpicture}}
\newcommand{\dottickunit}[1]{%
\begin{tikzpicture}[dotpic,yshift=-1mm]
\node [#1] (a) at (0,0.35) {}; 
\draw [postaction=decorate,
       decoration={markings, mark=at position 0.3 with \edgetick},
       decoration={markings, mark=at position 0.85 with \edgearrow}] (a)--(0,-0.25);
\end{tikzpicture}}
\newcommand{\dottickcounit}[1]{%
\begin{tikzpicture}[dotpic,yshift=1mm]
\node [#1] (a) at (0,-0.35) {}; 
\draw [postaction=decorate,
       decoration={markings, mark=at position 0.8 with \edgetick},
       decoration={markings, mark=at position 0.45 with \edgearrow}] (0,0.25) -- (a);
\end{tikzpicture}}
\newcommand{\dotonly}[1]{%
\,\begin{tikzpicture}[dotpic]
\node [#1] (a) at (0,0) {};
\end{tikzpicture}\,}
\newcommand{\smalldotonly}[1]{%
\,\begin{tikzpicture}[dotpic,yshift=-0.15mm]
\node [#1] (a) at (0,0) {};
\end{tikzpicture}\,}
\newcommand{\dotthreestate}[1]{%
\,\begin{tikzpicture}[dotpic,yshift=2.5mm]
	\node [#1] (a) at (0,0) {};
	\draw (a) -- (0,-0.6);
	\draw [bend right] (a) to (-0.4,-0.6) (0.4,-0.6) to (a);
\end{tikzpicture}\,}
\newcommand{\dotcap}[1]{%
\,\begin{tikzpicture}[dotpic,yshift=2.5mm]
	\node [#1] (a) at (0,0) {};
	\draw [bend right,medium diredge] (a) to (-0.4,-0.6);
	\draw [bend left,medium diredge] (a) to (0.4,-0.6);
\end{tikzpicture}\,}
\newcommand{\dotcup}[1]{%
\,\begin{tikzpicture}[dotpic,yshift=4mm]
	\node [#1] (a) at (0,-0.6) {};
	\draw [bend right,medium diredge] (-0.4,0) to (a);
	\draw [bend left,medium diredge] (0.4,0) to (a);
\end{tikzpicture}\,}

\newcommand{\tick}{%
\,\,\begin{tikzpicture}[dotpic]
	\node [style=none] (a) at (0,0.35) {};
	\node [style=none] (b) at (0,-0.35) {};
	\draw [dirtickedge] (a) -- (b);
\end{tikzpicture}\,\,}

\newcommand{\lolli}{%
\,\begin{tikzpicture}[dotpic,yshift=-1mm]
	\path [use as bounding box] (-0.25,-0.25) rectangle (0.25,0.5);
	\node [style=dot] (a) at (0, 0.15) {};
	\node [style=none] (b) at (0, -0.25) {};
	\draw [medium diredge] (a) to (b.center);
	\draw [diredge, out=45, looseness=1.00, in=135, loop] (a) to ();
\end{tikzpicture}\,}

\newcommand{\cololli}{%
\,\begin{tikzpicture}[dotpic]
	\path [use as bounding box] (-0.25,-0.5) rectangle (0.25,0.5);
	\node [style=none] (a) at (0, 0.5) {};
	\node [style=dot] (b) at (0, 0) {};
	\draw [diredge, in=-45, looseness=2.00, out=-135, loop] (b) to ();
	\draw [medium diredge] (a.center) to (b);
\end{tikzpicture}\,}

\newcommand{\unit}{\dotunit{dot}}
\newcommand{\counit}{\dotcounit{dot}}
\newcommand{\mult}{\dotmult{dot}}
\newcommand{\comult}{\dotcomult{dot}}

\newcommand{\blackdot}{\dotonly{black dot}\xspace}
\newcommand{\smallblackdot}{\smalldotonly{smalldot}\xspace}
\newcommand{\blackunit}{\dotunit{black dot}\xspace}
\newcommand{\blackcap}{\dotcap{black dot}\xspace}
\newcommand{\blackcup}{\dotcup{black dot}\xspace}

\newcommand{\tickunit}{\dottickunit{dot}}
\newcommand{\tickcounit}{\dottickcounit{dot}}
\newcommand{\dualiser}{\dotdualiser{dot}}
\newcommand{\idualiser}{\dotidualiser{dot}}
\newcommand{\threestate}{\dotthreestate{dot}}

\newcommand{\blackobs}{\ensuremath{\mathcal O_{\!\smallblackdot}}\xspace}

\newcommand{\whitedot}{\dotonly{white dot}\xspace}
\newcommand{\smallwhitedot}{\smalldotonly{small white dot}\xspace}
\newcommand{\whiteunit}{\dotunit{white dot}\xspace}
\newcommand{\whitecounit}{\dotcounit{white dot}\xspace}
\newcommand{\whitemult}{\dotmult{white dot}\xspace}
\newcommand{\whitecomult}{\dotcomult{white dot}\xspace}
\newcommand{\whitetickunit}{\dottickunit{white dot}\xspace}
\newcommand{\whitetickcounit}{\dottickcounit{white dot}\xspace}
\newcommand{\whitecap}{\dotcap{white dot}\xspace}
\newcommand{\whitecup}{\dotcup{white dot}\xspace}

\newcommand{\greendot}{\dotonly{green dot}}
\newcommand{\greenunit}{\dotunit{green dot}}
\newcommand{\greencounit}{\dotcounit{green dot}}
\newcommand{\greenmult}{\dotmult{green dot}}
\newcommand{\greencomult}{\dotcomult{green dot}}
\newcommand{\greentickunit}{\dottickunit{green dot}}
\newcommand{\greentickcounit}{\dottickcounit{green dot}}
\newcommand{\greencap}{\dotcap{green dot}}
\newcommand{\greencup}{\dotcup{green dot}}

\newcommand{\whiteobs}{\ensuremath{\mathcal O_{\!\smallwhitedot}}\xspace}

\newcommand{\altwhitedot}{\dotonly{alt white dot}}
\newcommand{\altwhiteunit}{\dotunit{alt white dot}}
\newcommand{\altwhitecounit}{\dotcounit{alt white dot}}
\newcommand{\altwhitemult}{\dotmult{alt white dot}}
\newcommand{\altwhitecomult}{\dotcomult{alt white dot}}
\newcommand{\altwhitetickunit}{\dottickunit{alt white dot}}
\newcommand{\altwhitetickcounit}{\dottickcounit{alt white dot}}
\newcommand{\altwhitecap}{\dotcap{alt white dot}}
\newcommand{\altwhitecup}{\dotcup{alt white dot}}

\newcommand{\graydot}{\dotonly{gray dot}\xspace}
\newcommand{\smallgraydot}{\smalldotonly{small gray dot}\xspace}
\newcommand{\grayunit}{\dotunit{gray dot}\xspace}
\newcommand{\graycounit}{\dotcounit{gray dot}\xspace}
\newcommand{\graymult}{\dotmult{gray dot}\xspace}
\newcommand{\dgraymult}{\ddotmult{gray ddot}\xspace}
\newcommand{\graycomult}{\dotcomult{gray dot}\xspace}
\newcommand{\graytickunit}{\dottickunit{gray dot}\xspace}
\newcommand{\graytickcounit}{\dottickcounit{gray dot}\xspace}
\newcommand{\graycap}{\dotcap{gray dot}\xspace}
\newcommand{\graycup}{\dotcup{gray dot}\xspace}

\newcommand{\grayobs}{\ensuremath{\mathcal O_{\!\smallgraydot}}\xspace}

\newcommand{\blacktranspose}{\ensuremath{{\,\blackdot\!\textrm{\rm\,T}}}}
\newcommand{\whitetranspose}{\ensuremath{{\!\!\altwhitedot\!\!}}}
\newcommand{\graytranspose}{\ensuremath{{\,\graydot\!\textrm{\rm\,T}}}}

\newcommand{\whiteconjugate}{\ensuremath{{\!\!\altwhitedot\!\!}}}

\newcommand{\circl}{\begin{tikzpicture}[dotpic]
		\node [style=none] (a) at (-0.25, 0.25) {};
		\node [style=none] (b) at (0.25, 0.25) {};
		\node [style=none] (c) at (-0.25, -0.25) {};
		\node [style=none] (d) at (0.25, -0.25) {};
		\draw [in=45, out=135] (b.center) to (a.center);
		\draw [in=135, out=225] (a.center) to (c.center);
		\draw [in=225, out=-45] (c.center) to (d.center);
		\draw [style=diredge, in=-45, out=45] (d.center) to (b.center);
\end{tikzpicture}}

\definecolor{hexcolor0xa9a9a9}{rgb}{0.663,0.663,0.663} 
\tikzstyle{GrayLine}=[dashed,draw=hexcolor0xa9a9a9] 
\tikzstyle{gray}=[dashed,draw=hexcolor0xa9a9a9]

\def\bR{\begin{color}{red}}  
\def\bB{\begin{color}{blue}}
\def\bM{\begin{color}{magenta}}  
\def\bC{\begin{color}{cyan}}
\def\bW{\begin{color}{white}}
\def\bBl{\begin{color}{black}}
\def\bG{\begin{color}{green}}
\def\bY{\begin{color}{yellow}}
\def\e{\end{color}\xspace}
\newcommand{\bit}{\begin{itemize}}
\newcommand{\eit}{\end{itemize}\par\noindent}
\newcommand{\ben}{\begin{enumerate}}
\newcommand{\een}{\end{enumerate}\par\noindent}
\newcommand{\beq}{\begin{equation}}
\newcommand{\eeq}{\end{equation}\par\noindent}
\newcommand{\beqa}{\begin{eqnarray*}}
\newcommand{\eeqa}{\end{eqnarray*}\par\noindent}
\newcommand{\beqn}{\begin{eqnarray}}
\newcommand{\eeqn}{\end{eqnarray}\par\noindent}

\newcommand{\TODOb}[1]{\marginpar{\scriptsize\bR \textbf{TODO:} #1\e}}
\newcommand{\TODOj}[1]{\marginpar{\scriptsize\bG \textbf{TODO:} #1\e}}
\newcommand{\TODOv}[1]{\marginpar{\scriptsize\bB \textbf{TODO:} #1\e}}

\newcommand{\COMMb}[1]{\marginpar{\scriptsize\bM \textbf{COMM:} #1\e}}
\newcommand{\COMMj}[1]{\marginpar{\scriptsize\bG \textbf{COMM:} #1\e}}
\newcommand{\COMMv}[1]{\marginpar{\scriptsize\bB \textbf{COMM:} #1\e}}

\newcommand{\TODO}[1]{{\color{red}\noindent\textbf{#1}}}

\newcommand{\gendiagram}[1]{
	\begin{tikzpicture}
		\begin{pgfonlayer}{nodelayer}
			\node [style=none] (0) at (-0.75, 1) {};
			\node [style=none] (1) at (0.75, 1) {};
			\node [style=none] (2) at (-0.75, -1) {};
			\node [style=none] (3) at (0.75, -1) {};
			\node [style=none] (4) at (0, 0.75) {$\ldots$};
			\node [style=semilarge box] (5) at (0, -0) { #1 };
			\node [style=none] (6) at (0, -0.75) {$\ldots$};
		\end{pgfonlayer}
		\begin{pgfonlayer}{edgelayer}
			\draw (0.center) to (2.center);
			\draw (1.center) to (3.center);
		\end{pgfonlayer}
	\end{tikzpicture}
}

\newcommand{\hadastate}[1]{\,\tikz{\node[style=hadamard] (x) {$#1$};\draw(x)--(0,0.75);}\,}
\newcommand{\hadadot}{\dotonly{small hadamard}\xspace}
\newcommand{\hadaunit}{\dotunit{small hadamard}}
\newcommand{\hadacounit}{\dotcounit{small hadamard}}
\newcommand{\hadamult}{\dotmult{small hadamard}}
\newcommand{\hadacomult}{\dotcomult{small hadamard}}
\newcommand{\hadacap}{\dotcap{small hadamard}}
\newcommand{\hadacup}{\dotcup{small hadamard}}
\newcommand{\grayphase}[1]{\phase[style=gray phase dot]{#1}} 

\chapter{\label{chapintro}Introduction}

\epigraph{`... the aim of science is not things themselves... but the relations between things; outside those relations there is no reality knowable.'}  
{Henri Poincar{\'e},\\ Science and Hypothesis~\cite[p.~xxiv]{poincare1952science}}

\subsection*{Relationalism and process ontology}

Mainstream Western science subscribes to the metaphysical doctrine that reality is fundamentally a collection of static objects whose dynamic aspects are ontologically derivative~\cite{sep-process-philosophy}. The origins of this worldview can be traced back to the pre-Socratics, especially Democritus and Parmenides. 

According to Democritus, things are \textit{essentially} clusters of indivisible, eternal particles called `atoms'~\cite{sep-democritus}. Democritus's doctrine is still reflected today in different forms of reductionism. Some of its examples are reducing physical phenomena to elementary particles; dividing biological systems into organs, tissues, and cells; and describing mathematical structures using set-theoretic foundations.

Democritus's atomism was a response to---or, more precisely, a refinement of---Parmenides's teaching~\cite{sep-democritus} according to which reality is static; any change or motion in it is illusory~\cite{sep-parmenides}. Democritus taught that the properties of things are determined by how their constituents (\emph{i.e.} atoms) are arranged. While the atoms are unchanging, their spatial rearrangements result in changing the appearance and properties of things. In this sense, all change is essentially the rearrangement of atoms in space. At any moment, a particular atomic arrangement constitutes the \textit{now}.
Stacking these \textit{now}s in a sequence yields (the illusion of) temporal change. 

It is worth noting that theories in physics more or less follow the aforementioned conceptual paradigm. First, they describe the state of a physical system at a particular instant; then, they track its time-evolution. The former is \textit{kinematics}, and the latter is \textit{dynamics}. A \textit{state} is a particular arrangement of all the constituent entities that make the system. A \textit{dynamical law}\footnote{Quantum mechanics does not strictly fit into this paradigm precisely because it has two dynamical laws. This is usually called the \textit{measurement problem} of quantum mechanics.} characterises how the states are stacked up in time.

According to Heraclitus, a contemporary of Parmenides, everything in nature undergoes a constant flux. Reality is essentially composed of processes instead of static things~\cite{sep-heraclitus}. This is in stark contrast to the stasis of Parmenides. Heraclitus is considered the founding father of process ontology in the Western philosophical tradition.

In modern philosophy, Leibniz's relational approach presents a perspective in which the relations between things in space and time are considered more fundamental than the spatio-temporal properties of things~\cite{sep-leibniz-physics}. Heraclitus's process ontology and Leibniz's relationalism inspired Whitehead's research programme~\cite{rescher1996process}, the aim of which was to describe all natural phenomena in terms of processes and their relations, rather than objects or entities that are independent of one another~\cite{sep-whitehead}. Schr\"{o}dinger refined this approach by pointing out that some relationships give rise to novel phenomena that cannot be strictly reduced to the properties of relata~\cite{schrodinger1935discussion}. 

For over 2000 years, developments in science have been underpinned by a worldview that is Parmenidean and Democritean in spirit. Even though this approach has been remarkably successful, quantum physics has exposed its limitations. In particular, it has long been well known that it is hard to reconcile the following two aspects of quantum theory with the prevailing outlook: (1) the existence of two distinct dynamical laws that depend on the presence or absence of interaction (\emph{i.e.} the measurement problem); and (2) the impossibility to reduce all phenomena to the properties of individual entities (\emph{i.e.} entanglement). Merely philosophical problems for around a century, the advent of quantum computing and quantum technologies has breathed new life into these issues. Therefore, the time is ripe to give a fair chance to a process-relational view of reality as advocated by Whitehead and Schr\"{o}dinger; and see how far it can take us. The mathematical foundation of this research area is provided by a relatively new branch of mathematics called applied category theory.

Applied category theory provides effective mathematical tools to model processes and how they compose~\cite{fong2019invitation}. Symmetric monoidal categories, which involve series and parallel composition, are especially relevant in the context of process-relational science~\cite{yanofsky2024monoidal}, being particularly well-suited for describing the composition of processes in space and time. Very aptly called process theories, they admit string diagrams~\cite{penrose1971applications, joyal1991geometry, SelingerSurvey}, a syntax that is not only visually intuitive but also mathematically rigorous.

\subsection*{What are string diagrams?}

String diagrams are two-dimensional graphical representations of morphisms (processes) in symmetric monoidal categories~\cite{SelingerSurvey, piedeleu2023introduction}. They are a rigorous, intuitive and completely diagrammatic alternative to symbolic algebra that capture the compositionality of processes (denoted by boxes) between systems (denoted by wires). They are a very expressive language---flexible and versatile enough to be used at different levels of abstraction. 

As a syntax, two-dimensional string diagrams incorporate the topology of one-dimensional symbolic expressions, thereby turning abstract and usually complex algebraic transformations into simple, fun, and intuitive diagram rewrite rules. An example diagrammatic calculation of `sliding boxes' is given below.
\[
\input{interpretations/boxb10.tikz}
\] 
The diagrams in this chapter must be read from bottom to top and left to right. The algebraic expression corresponding to each diagram is written below it. 

When moving from a symbolic algebraic syntax to string diagrams, some essential and non-trivial equations between algebraic expressions become redundant. In other words, these equations come for free in the string-diagrammatic universe and, therefore, there is no need to explicitly impose them. For instance, associativity, $(h \circ g) \circ f = h \circ (g \circ f)$ and $(f \otimes g) \otimes h = f \otimes (g \otimes h)$, is implicit in string diagrams. The algebraic expressions on both sides of each equation map to the same diagram, resulting in a tautology. 
\[
\input{interpretations/boxb5.tikz} \quad \quad \quad \input{interpretations/boxb4.tikz}
\]
Another equation that becomes redundant in string diagrams is the \textit{(middle-two-)interchange law}: $(h \otimes i) \circ (f \otimes g) = (h \circ f) \otimes (i \circ g)$. Both left- and right-hand sides map to the following diagram.
\[
\input{interpretations/boxb6.tikz}
\]

In short, string diagrams not only provide simpler and more intuitive counterparts to algebraic calculations, but also phase out some algebraic equations altogether. 

\subsection*{String diagrams for the sciences}

A precursor to string diagrams appeared in the work of Richard Feynman in 1949~\cite{feynman1949space, feynman1949theory}. String-diagrammatic notation was invented by Roger Penrose~\cite{penrose1971applications} in order to keep track of indices of tensors. G\"{u}nter Hotz independently invented string diagrams in a categorical setting to model series and parallel composition of digital circuits~\cite{hotz1965algebraisierung}. Hotz's thesis~\cite{hotz1965algebraisierung} appeared earlier than Penrose's paper~\cite{penrose1971applications}, but Penrose was reportedly already using string diagrams during his graduate studies in the late 1950s~\cite{pavlovic2023programs}. The connection between categories and string diagrams was formalised by Andr{\'e} Joyal and Ross Street in the 1990s~\cite{joyal1991geometry}. Since then, they have been applied in a wide range of scientific domains.

Particularly in the last two decades, string diagrams have been used as a syntactic foundation for wide-ranging research domains of science, engineering, and technology~\cite{piedeleu2023introduction}. For example: quantum physics~\cite{coecke2006kindergarten, coecke2022kindergarden}, quantum computing~\cite{CKbook, CGbook, coecke2023basic}, electrical circuit theory~\cite{boisseau2021string}, digital circuits~\cite{ghica2016categorical, ghica2022compositional}, photonics~\cite{de2022quantum, clement2022lov}, linguistics~\cite{clark2008compositional, coecke2021mathematics, wang2023distilling}, control systems~\cite{baez2014categories, erbele2016categories}, causal models~\cite{lorenz2023causal}, probability theory~\cite{cho2019disintegration, fritz2021finetti}, machine learning~\cite{shiebler2021category, koziell2024hybrid}, game theory~\cite{hedges2015string, hedges2016compositionality, ghani2018compositional} etc. Moreover, the string diagrammatic approach to quantum theory, dubbed \textit{Quantum Picturalism}~\cite{ContPhys}, has found success in education both at university~\cite{CKbook} and high school~\cite{CGbook} levels. In fact, the results of a pedagogical study~\cite{dundar2023quantum} demonstrate that the Quantum Picturalism approach is particularly effective in teaching university-level quantum theory to high school students~\cite{dundarcoecke2025makingquantumworldaccessible}.   

To sum up, a process-relational outlook goes hand in hand with a mathematical language that is applicable to wide-ranging scientific areas. Additionally, by replacing symbolic reasoning with diagrammatic reasoning, this language makes science much more intuitive and accessible. 

\subsection*{This thesis}

This thesis is an exercise in \textit{applied} process-relational philosophy with string diagrams as the tool of choice. It aims to add to the growing body of evidence supporting the view that there is a great deal of mileage in using string diagrams as the standard mathematical formalism in the sciences. Reported here are three strands of research, revolving around the following questions.

\begin{itemize}

	\item The relatively new research area of constructor theory~\cite{deutsch2013constructor} bears family resemblance with the now well-established area of process theories. How are these two programmes related? Are the principles of constructor theory consistent with one another? Do these principles agree with the physical theories that they aim to characterise? Can quantum physics be formulated as a constructor theory? Can categorical quantum mechanics (CQM)---a full-blown process-theoretic formulation of quantum theory---be conceived as a constructor theory?
	
	\item Wave-based computation involves encoding information in some property of waves and performing computation via wave interference~\cite{mahmoud2020introduction}. How can string diagrams be used to develop a formalism for this computing paradigm? How can this formalism be used for design, analysis and optimisation of logic circuits in wave-based technologies?
	
	\item Distributional compositional circuits (DisCoCirc)	is a framework for modelling the grammar and meaning of natural languages in process-theoretic terms~\cite{coecke2021mathematics}. For a particular language, how can an arbitrary piece of text be mapped to DisCoCirc text circuits? Does DisCoCirc eliminate grammatical differences between different natural languages?
\end{itemize}

It is hoped that this thesis makes some progress towards answering---or at least elucidating the nuances of---the above questions. 

\section{Main Results}

The main contributions to the field described in this thesis are as follows. Except where explicitly mentioned, all the work is solely that of the author.

\begin{itemize}
	
	\item Section~\ref{secconsproc} presents constructor theory as a process theory. It takes the theory of conceivable tasks to be $\Rel$, and then shows that for a given choice of substrates, the set of possible tasks forms a sub-SMC of $\Rel$.\footnote{This work is in collaboration with Stefano Gogioso, Vincent Wang-Ma{\'s}cianica, Carlo Maria Scandolo, and Bob Coecke. It was originally published as a part of the article \textit{Constructor Theory as Process Theory}~\cite{Gogioso2023Constructors}. The author of this thesis is the third author of that publication. The specific contributions made by the author to this work are detailed in this thesis.}
	
    \item Section~\ref{seclocquant} gives an overview of the principle of locality and how it is instantiated in the Deutsch-Hayden formulation of quantum theory~\cite{deutsch2000information}. Then it uses the Deutsch-Hayden approach to demonstrate the conflict between the principles of locality and composition.
    
    \item Section~\ref{secconstCQM} discusses the overlap between constructor theory and process theories with reference to quantum physics. It argues that if the principle of locality is rejected, CQM can be conceived as a constructor theory of quantum physics. This argument is supported by multiple examples of constructor-theoretic ideas in quantum physics and computation.
	
	\item Section~\ref{secwavestring} develops a string-diagrammatic formalism for representing and reasoning with wave-based logic circuits that use phase encoding. Using spin-wave systems as an example of wave-based computation platform, Section~\ref{secappwave} shows how to apply this formalism for design, analysis, and optimisation of Boolean logic circuits.\footnote{This work, in collaboration with Alexy Karenowska, was originally published in the article \textit{Wave-based Computation with String Diagrams}~\cite{waseem2023stringwave}. The author of this thesis is the first author of the publication and played a leading role in all aspects of the work.}	
	
	\item Section~\ref{urdugram} develops a hybrid grammar for a restricted fragment of Urdu language.\footnote{This work is in collaboration with Vincent Wang-Ma{\'s}cianica, Jonathon Liu, and Bob Coecke. It first appeared in the article \textit{Language-independence of DisCoCirc's Text Circuits: English and Urdu} ~\cite{DiscocircUrdu}. The author of this thesis is the first author of the publication and played a leading role in all aspects of the work.} 
	
	\item Section~\ref{urducircuits} describes text diagrams and text circuits for Urdu, and shows that there exists a surjective map from the set of all Urdu text endowed with hybrid grammar to the set of all Urdu text circuits. Moreover, for the restricted language fragment considered in this thesis, a clear isomorphism between the hybrid grammars for English (developed in~\cite{wang2023distilling}) and Urdu is demonstrated. Using this isomorpism, it is shown that the text circuits for English and Urdu become the same, up to translating the labels on the gates.\footnote{This work is in collaboration with Vincent Wang-Ma{\'s}cianica, Jonathon Liu, and Bob Coecke. It first appeared in the article \textit{Language-independence of DisCoCirc's Text Circuits: English and Urdu} ~\cite{DiscocircUrdu}. The author of this thesis is the first author of the publication and played a leading role in all aspects of the work.} 
\end{itemize}

Apart from the aforementioned contributions and publications, the author of this thesis co-authored the following two papers and a book as well. These publications are:
\begin{itemize}
	\item Quantum Picturalism: Learning Quantum Theory in High School~\cite{dundar2023quantum}
	
	\item Making the Quantum World Accessible to Young Learners through Quantum Picturalism: An Experimental Study~\cite{dundarcoecke2025makingquantumworldaccessible}
	
	\item Quantum Mechanics in the Single-Photon Laboratory~\cite{waseem2020quantum} (and its second edition~\cite{anwar2024quantum})
\end{itemize}

\chapter{\label{chapinterpretations1}Quantum Theory from Processes and Composition}

\epigraph{‘I would like to make a confession which may seem immoral: I do not believe absolutely in Hilbert space any more.’}  
{John von Neumann,\\ letter to Garrett Birkhoff, 1935~\cite[p.~1]{redei1996john}}

\textit{This chapter provides the necessary background for the thesis, beginning with process theories and string diagrams, and leading into categorical quantum mechanics.}

\section{Introduction}

Categorical quantum mechanics (CQM) is a process-theoretic framework to represent, characterise and reason about quantum processes, their composition and their interaction~\cite{abramsky2004categorical, CKbook}. It was introduced in 2004~\cite{abramsky2004categorical}. There are a number of ways to think of CQM. Initially, it was conceived as a high-level language~\cite{CKbook} on top of Hilbert space~\cite{vonNeumann1932quantenmechanik}, analogous to a high-level programming language, with the matrices of Hilbert space being the analogue to machine code. Alternatively, it can be seen as a new quantum formalism~\cite{selby2021reconstructing}, as desired by von Neumann~\cite{redei1996john, coecke2000operational}, but one that takes compositionality as its first-class citizen---something that had been advocated for by Schr\"{o}dinger in 1935~\cite{schrodinger1935discussion, HardyComposition}. There is also a reconstruction theorem supporting this view~\cite{selby2021reconstructing}. Finally, thanks to an equivalence between diagrams and categories~\cite{penrose1971applications, joyal1991geometry, SelingerSurvey}, CQM can be seen as an intuitive graphical substitute for the Hilbert space formalism~\cite{Kindergarten, ContPhys, coecke2023basic} that transparently reveals new aspects of quantum theory that are deeply buried in the traditional formalism. Since diagrams, foregrounding the composition and interaction of quantum processes are at the heart of CQM, it is also called Quantum Picturalism (QP)~\cite{ContPhys, dundar2023quantum}. 

This \textit{diagrammatic turn} in quantum physics has become very successful in quantum technologies. For example, it has bridged research disciplines like quantum computation and computational linguistics~\cite{heunen2013quantum, coecke2020foundations,meichanetzidis2020quantum, lorenz2023qnlp}, and led to advances in quantum circuit optimisation~\cite{kissinger2020reducing, duncan2020graph} and compilation~\cite{cowtan2019phase, khesin2023graphical}, quantum error correction~\cite{de2020zx, gidney2019efficient}, quantum natural language processing~\cite{coecke2020foundations}, quantum machine learning~\cite{zhao2021analyzing,wang2022differentiating}, measurement-based quantum computing~\cite{kissinger2019universal, mcelvanney2022complete} and optical quantum computing~\cite{de2022quantum, clement2022lov}. 

Within the education sector, the QP approach has been used to teach a university level postgraduate course at the University of Oxford for the last 10+ years~\cite{CKbook}. More recently, with the publication of the book \textit{Quantum in Pictures}~\cite{CGbook}, the QP approach has made the teaching and learning of quantum theory accessible to a wider audience with very few mathematical prerequisites. A recent educational experiment~\cite{dundar2023quantum} showed that quantum theory can be successfully taught to high schoolers (\emph{i.e.} students aged 16-18) using the QP approach.\footnote{54 randomly selected students from the UK attended two-hour online classes, along with tutorials, weekly for eight weeks, and were then assessed using Oxford postgraduate quantum physics exam questions. The results showed that over 80\% students passed, with about half earning a distinction. See Ref.~\cite{dundarcoecke2025makingquantumworldaccessible} for more details.}

The aim of this chapter is to provide background material for the remainder of the thesis. It serves as a self-contained introduction to CQM.\footnote{For a more detailed exposition, the reader is referred to the textbook~\cite{CKbook}; most of the definitions and explanations in this chapter are based on that text.} Since CQM is a process theory, we begin by introducing process theories and string diagrams before presenting the key components of CQM. This material lays the foundation for the chapters that follow. This chapter is organised as follows. Section~\ref{secprocth} introduces process theories from both category-theoretic and string-diagrammatic perspectives. Sections~\ref{sectoqth},~\ref{secCQM} and~\ref{secZX} provide a gentle introduction to CQM. 

\section{Process Theory}\label{secprocth}

\begin{definition}[Definition~3.1 in~\cite{CKbook}]\label{def:ProcTheor}
A \em process theory \em is defined to consist of:
\bit
\item a collection of systems (or system-types, or interfaces) $S$  represented by wires,  
\item a collection of processes $P$ represented by  boxes, and where each process in $P$ has a number of input wires and a number of output wires taken from $S$, and  
\item a means of composing processes, represented by wiring boxes together, and the result of doing so should again be a process in $P$.
\eit  
\end{definition}

In other words, a process theory is a collection of processes that can be composed in a meaningful sense; \emph{i.e.} the composition of processes in the process theory should be a process in the same theory.

From a category-theoretic perspective, a process theory is a \textit{strict symmetric monoidal category}. Essentially, this is a symbolic way to axiomatise series composition $\circ$ and parallel composition $\otimes$. First we define a strict monoidal category.

\begin{definition}[Definition~3.45 in~\cite{CKbook}]\label{moncat}
	A \textit{strict monoidal category} $\mathcal{C}$ consists of:	
	\begin{itemize}
		\item a collection ob($\mathcal{C}$) of \textit{objects}
		\item for every pair of objects $A$, $B$, a set $\mathcal{C}$($A$, $B$) of \textit{morphisms}
		\item for every object $A$, a special identity morphism: $1_A \in \mathcal{C}(A, A)$
		\item a sequential composition operation for morphisms:  $\circ : \mathcal{C}(B, C) \times \mathcal{C}(A, B) \rightarrow \mathcal{C}(A, C)$
		\item a parallel composition operation for objects: $\otimes : \text{ob}(\mathcal{C}) \times \text{ob}(\mathcal{C}) \rightarrow \text{ob}(\mathcal{C})$ 
		\item a unit object: $I \in \text{ob}(\mathcal{C})$
		\item and a parallel composition operation for morphisms: \[\otimes : \mathcal{C}(A, B)  \times \mathcal{C}(C, D)  \rightarrow \mathcal{C} (A \otimes C , B \otimes D)\]
	\end{itemize}
	satisfying the following conditions:
	\begin{itemize}
		\item $\otimes$ is associative and unital on objects: $(A \otimes B) \otimes C = A \otimes (B \otimes C); A\otimes I = A = I \otimes A$
		\item $\otimes$ is associative and unital on morphisms: $(f \otimes g) \otimes h = f \otimes (g \otimes h); f\otimes 1_I = f = 1_I \otimes f$
		\item $\circ$ is associative and unital on morphisms: $(h \circ g) \circ f = h \circ (g \circ f); 1_B \circ f = f = f \circ 1_A$
		\item $\otimes$ and $\circ$ satisfy the \textit{interchange law}: $
		(g_1 \otimes g_2) \circ (f_1 \otimes f_2) = (g_1 \circ f_1) \otimes (g_2 \circ f_2)$
	\end{itemize}
\end{definition}

Compare the above definition with the definition of process theory (Definition~\ref{def:ProcTheor}). \textit{Types} or \textit{system-types} in process theory correspond to \text{objects} in category theory; \textit{processes} correspond to \textit{morphisms}. The set $\mathcal{C}$($A$, $B$) therefore should be considered the set of all processes with input type $A$ and output type $B$. A member of this set can be denoted by either $f \in \mathcal{C}(A, B)$ or $f: A \rightarrow B$. A collection of such sets with sequential composition gives a \textit{category}. A \textit{monoidal category} is a category with both sequential and parallel composition. The `strict' part of the above definition will be explained shortly. Adding a swap morphism to a monoidal category makes it symmetric.

\begin{definition}[Definition~3.46 in~\cite{CKbook}]\label{symmoncat}
	A \textit{strict symmetric monoidal category} is a strict monoidal category with a \textit{swap morphism}: $\sigma_{A, B} : A \otimes B \rightarrow B \otimes A$
	defined for all objects $A$, $B$, satisfying:
	\[
	\sigma_{B, A} \circ \sigma_{A, B} = 1_{A \otimes B} \quad \quad \quad \sigma_{A, I} = 1_A 
	\]
	\[
	(f \otimes g) \circ \sigma_{A, B} = \sigma_{B', A'} \circ (g \otimes f) 
	\]
	\[
	(1_B \otimes \sigma_{A, C}) \circ (\sigma_{A, B} \otimes 1_C) = \sigma_{A, B \otimes C}
	\]
\end{definition}
A monoidal category is called strict when $\otimes$ is associative and unital exactly. On the other hand, in the non-strict monoidal category, associativity and unitality of $\otimes$ are isomorphisms rather than identities.

\begin{definition}[Definition~3.47 in~\cite{CKbook}]
	An object $A$ is \textit{isomorphic} to an object $B$, denoted $A \cong B$
	if and only if there exists a pair of morphisms:
	\[
	f : A \rightarrow B \quad \quad \quad  f^{-1} : B \rightarrow A
	\]
	such that:
	\[
	f^{-1} \circ f = 1_A \quad \quad \quad f \circ f^{-1} = 1_B
	\]
	The morphism $f$ is then called an \textit{isomorphism}.
\end{definition}

Non-strict monoidal categories are the more common species of monoidal categories. Their unitality and associativity of parallel composition are isomorphisms rather than identities:
\[
(A \otimes B) \otimes C \cong A \otimes (B \otimes C) \quad \quad \quad A \otimes I \cong A \cong I \otimes A
\]
These are called \textit{structural isomorphisms} and must be included in the definition of a category in order to define non-strict monoidal category. Additionally, a number of equations called \textit{coherence equations} must be added to the definition to ensure that the behaviour of structural isomorphisms is sensible when they are composed. The following theorem gives the relation between strict and non-strict (symmetric) monoidal categories.

\begin{theorem}[Coherence theorem; Theorem~3.48 in~\cite{CKbook}]
	Every (symmetric) monoidal category $\mathcal{C}$  is equivalent to a strict (symmetric) monoidal category $\mathcal{C'}$.
\end{theorem}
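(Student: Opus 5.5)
The plan is to strictify $\mathcal{C}$ by replacing objects with \emph{lists} of objects, so that associativity and unitality of $\otimes$ hold on the nose. Concretely, $\mathcal{C}'$ has as objects finite words $(A_1,\dots,A_n)$ of objects of $\mathcal{C}$, including the empty word. We fix an interpretation that sends each word to a tensor product in $\mathcal{C}$, bracketed by a chosen convention:
\[
[\![(A_1,\dots,A_n)]\!] := (\cdots((A_1\otimes A_2)\otimes A_3)\cdots)\otimes A_n, \qquad [\![()]\!] := I.
\]
Morphisms are then defined through this interpretation, $\mathcal{C}'(w,v) := \mathcal{C}([\![w]\!],[\![v]\!])$, with sequential composition and identities inherited directly from $\mathcal{C}$. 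On objects, the tensor is concatenation of words. This is strictly associative, and the empty word is a strict unit.

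The tensor of morphisms needs more care. Given $f:[\![w]\!]\to[\![v]\!]$ and $g:[\![w']\!]\to[\![v']\!]$, I would set
\[
f\otimes' g := \phi_{v,v'}^{-1}\circ(f\otimes g)\circ\phi_{w,w'},
\]
where $\phi_{w,w'}:[\![w\,w']\!]\to[\![w]\!]\otimes[\![w']\!]$ is the canonical isomorphism built from associators and unitors. For the symmetric case, the swap $\sigma'_{w,v}$ is the conjugate of $\sigma_{[\![w]\!],[\![v]\!]}$ by these same canonical isomorphisms.

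It then remains to check three things: the strict axioms of Definition~\ref{moncat}, the symmetry axioms of Definition~\ref{symmoncat}, and the equivalence. For the equivalence, $[\![-]\!]$ extends to a functor $\mathcal{C}'\to\mathcal{C}$ that is fully faithful by construction. It is essentially surjective, since every $A$ is the interpretation of the one-letter word $(A)$. The $\phi$'s make it a strong monoidal functor, and it preserves symmetry. It is therefore a monoidal equivalence.

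The main obstacle is showing that $\otimes'$ is strictly associative and unital on morphisms and obeys the interchange law. Each of these reduces to the claim that any two composites of associators and unitors with the same source and target in $\mathcal{C}$ are equal. That claim is Mac Lane's coherence theorem, which follows from the pentagon and triangle equations. I would either invoke it directly or prove it by normalising every bracketing to the left-nested form and using the pentagon to show the result is independent of the rewriting path. In the symmetric case, the swap axioms additionally need the hexagon equations and symmetric coherence. Once coherence is available, every verification is a routine diagram chase via naturality of the structural isomorphisms.
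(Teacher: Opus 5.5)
The thesis gives no proof of this statement. It quotes it from \cite{CKbook} as a black box, so there is no internal argument to compare yours with. Your proposal is the standard strictification-by-words argument (essentially Mac Lane's). The construction of $\mathcal{C}'$, the definitions of $\otimes'$ and $\sigma'$, and the equivalence argument via $[\![-]\!]$ are all sound.

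One thing must be corrected: the lemma you reduce everything to. ``Any two composites of associators and unitors with the same source and target in $\mathcal{C}$ are equal'' is false as stated. In skeletal $\mathbf{Set}$ with $X$ countably infinite, $(X\times X)\times X = X\times(X\times X) = X$ on the nose, yet $\alpha_{X,X,X}\neq 1_X$ (Isbell's argument). Mac Lane's theorem only identifies \emph{formal} composites, i.e.\ instances of parallel morphisms of the free monoidal category on the letters $A_1,\dots,A_n$ regarded as variables.

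So do not define $\phi_{w,w'}$ as ``the'' isomorphism between the objects $[\![ww']\!]$ and $[\![w]\!]\otimes[\![w']\!]$ of $\mathcal{C}$. Fix a specific formal composite, e.g.\ by recursion on $w'$, writing $\alpha,\lambda,\rho$ for the associator and unitors: $\phi_{w,()}=\rho^{-1}$ and $\phi_{(),w'}=\lambda^{-1}$; for non-empty $w,w'$, $\phi_{w,(A)}=1$ and $\phi_{w,w'A}=\alpha\circ(\phi_{w,w'}\otimes 1_A)$. With this choice every equation you need is between formal composites, and the argument goes through.

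Your claim that interchange ``reduces to'' coherence is an overstatement. Functoriality of $\otimes'$, the interchange law, involutivity of $\sigma'$ and naturality of $\sigma'$ need no coherence at all, because the conjugating $\phi$'s cancel. Coherence is genuinely needed in three places:
\begin{itemize}
\item strict associativity and unitality of $\otimes'$ on morphisms;
\item the strict hexagon and $\sigma'_{w,()}=1_w$, together with the hexagon and $\lambda\circ\sigma_{X,I}=\rho$ in $\mathcal{C}$;
\item the strong-monoidal axioms for $[\![-]\!]$.
\end{itemize}

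If you invoke coherence rather than prove it, take it from a source that proves it directly by the normal-form argument, as Mac Lane does. Do not take it from one that deduces it from strictification, or the argument becomes circular. Etingof--Gelaki--Nikshych--Ostrik, for example, strictify instead by sending $X\mapsto X\otimes-$ into the strict monoidal category of endofunctors $F$ equipped with coherent isomorphisms $F(X)\otimes Y\cong F(X\otimes Y)$. That route uses only the pentagon and triangle and yields coherence as a corollary, so it is self-contained but less explicit. Yours gives a very concrete $\mathcal{C}'$ but presupposes coherence.
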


Equivalence between two categories mean that they are the same for all practical purposes. The above theorem justifies considering $(A \otimes B) \otimes C$, $A \otimes (B \otimes C)$ and $A \otimes B \otimes C$ the same. In other words, we do not need to worry about parentheses in associativity of $\otimes$. Likewise, the theorem justifies considering $A \otimes I$, $A$ and $I \otimes A$ the same.

\begin{definition}
	String diagrams are two-dimensional graphical representations of morphisms (processes) in symmetric monoidal categories.\footnote{In the literature, the term \textit{string diagrams} is used for various kinds of diagrams corresponding to different notions of monoidal categories with some additional structure~\cite{SelingerSurvey, piedeleu2023introduction}. For instance, in the book~\cite{CKbook}, the term \textit{circuit diagrams} is used for symmetric monoidal categories to disambiguate from the term \textit{string diagrams}, reserved for symmetric monoidal categories with some additional structure. In this thesis, to avoid potential confusion and to maintain consistency, we use the term \textit{string diagrams} for symmetric monoidal categories. Specifically, we adopt the following correspondence:
		\[
		\text{string diagrams} \leftrightarrow \text{process theory} \leftrightarrow \text{symmetric monoidal category}
		\]
	When needed, we shall introduce additional structure to the string diagrams.}
\end{definition}

The following theorem justifies using string diagrams for symmetric monoidal categories. 

\begin{theorem}[Theorem~3.49 in~\cite{CKbook}]
	[String] diagrams are sound and complete for symmetric monoidal categories. That is, two morphisms $f$ and $g$ are provably equal using the equations of a symmetric monoidal category if and only if they can be expressed as the same [string] diagram.
\end{theorem}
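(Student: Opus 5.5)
The plan follows the Joyal--Street strategy: make ``string diagram'' precise as a combinatorial object, then show that the free symmetric monoidal category on a set of generators is equivalent to the category whose morphisms are these diagrams up to isotopy. By the coherence theorem stated above, we may work with strict symmetric monoidal categories throughout.

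\emph{Formalising diagrams.} Fix the generating data: object labels and boxes with typed input and output lists. A diagram from $A_1\otimes\cdots\otimes A_n$ to $B_1\otimes\cdots\otimes B_m$ is a finite set of labelled boxes together with a wiring. The wiring is a bijection from the outputs (of boxes and of the diagram's input boundary) to the inputs (of boxes and of the diagram's output boundary). It must respect types and be acyclic, meaning there are no feedback loops. Two diagrams are identified if a label-preserving isomorphism of this data exists. In the symmetric case planar embedding is irrelevant, because crossings are just swaps. This is exactly why ``the same diagram'' can be stated combinatorially rather than up to planar isotopy.

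\emph{Interpretation.} Define $\circ$ on diagrams by gluing output to input boundaries, and $\otimes$ by disjoint union. Take identities to be bare wires and $\sigma$ to be a crossing. This gives a strict symmetric monoidal category $\mathbf{D}$. I would then check each axiom of Definitions~\ref{moncat} and \ref{symmoncat}. Associativity, unitality, interchange and all swap laws hold because both sides yield isomorphic wiring data, and verifying this is routine. Each term built from generators, identities and swaps using $\circ$ and $\otimes$ then has a well-defined diagram $\llbracket t\rrbracket$. Soundness follows: if $t=s$ is provable from the axioms, then $\llbracket t\rrbracket=\llbracket s\rrbracket$. The argument is an induction on the derivation, since every axiom holds in $\mathbf{D}$ and equality of diagrams is a congruence for $\circ$ and $\otimes$.

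\emph{Completeness, the main obstacle.} We must show that terms with the same diagram are provably equal. I would prove this by a normal form argument. Every diagram admits a topological sort of its boxes, which are finite in number and acyclic. This yields a term of the shape $\pi_k\circ(1\otimes f_k\otimes 1)\circ\cdots\circ\pi_1\circ(1\otimes f_1\otimes 1)\circ\pi_0$, where each $\pi_i$ is a composite of swaps realising a permutation. The key steps are as follows.
\begin{enumerate}
\item Any term can be rewritten, using interchange, naturality of $\sigma$, and associativity and unitality, into such a layered form whose diagram is unchanged.
\item Two layered forms with the same diagram differ in two ways. First, they may use different topological sorts; any two such sorts are connected by adjacent transpositions of independent boxes, and each transposition is provable via interchange and naturality of swaps. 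Second, they may use different swap words for the same permutation. These are equal by the coherence of symmetry, meaning that in a strict symmetric monoidal category any two swap composites realising the same permutation are equal. That fact follows from the Coxeter presentation of the symmetric group, using $\sigma\circ\sigma=1$, the hexagon-type law of Definition~\ref{symmoncat}, and naturality of $\sigma$ to obtain the braid relation.
\end{enumerate}
Step~2 is where most of the work lies, particularly the symmetric-group coherence and the bookkeeping needed to slide swaps past boxes. I expect to handle it by induction on the number of boxes, peeling off a box that is minimal in the sort.
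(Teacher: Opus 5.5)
The paper gives no proof of this statement. It quotes the theorem from Coecke--Kissinger, and the underlying result is Joyal--Street's (see also Selinger's survey). So there is no in-paper route to compare yours against. What you sketch is the standard argument: diagrams as acyclic, port-labelled wiring graphs up to boundary-preserving isomorphism; soundness by induction on derivations; completeness via layered normal forms, connectivity of linear extensions under adjacent transpositions, and symmetric coherence via the Coxeter presentation. That overall structure is right. Your derivation of the braid relation from naturality plus the hexagon law is the correct ingredient. You also need the mirror hexagon, $\sigma_{A\otimes B,C}=(\sigma_{A,C}\otimes 1_B)\circ(1_A\otimes\sigma_{B,C})$, obtained by inverting the given one with $\sigma_{B,A}\circ\sigma_{A,B}=1$. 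It is what reduces swaps of composite objects to adjacent transpositions of atomic ones.

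The step that fails as written is the second half of item~2. Fixing the topological sort does \emph{not} reduce the difference between two layered forms to ``different swap words for the same permutation''. The diagram determines neither where $f_i$ sits inside its layer (the $m$ in $1_m\otimes f_i\otimes 1_n$) nor the order in which the wires bypassing $f_i$ are carried across it. So the permutations $\pi_i$ themselves can differ, not just their presentations as words. You need a normalisation first:
\begin{itemize}
\item use naturality of $\sigma$ to rewrite each layer as $f_i\otimes 1$ conjugated by permutations, e.g.\ $1_X\otimes f=\sigma_{B,X}\circ(f\otimes 1_X)\circ\sigma_{X,A}$ for $f:A\to B$;
\item fix, for each cut between consecutive layers, a canonical ordering of the wires crossing it;
\item use interchange, $(1\otimes\rho)\circ(f\otimes 1)=f\otimes\rho=(f\otimes 1)\circ(1\otimes\rho)$, to move permutations acting only on idle wires through layers.
\end{itemize}
Only then are the connecting maps literally the same permutations, so that symmetric coherence applies. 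The same manoeuvre is needed for the adjacent-transposition step. Interchange only applies once $f_i$'s outputs and $f_{i+1}$'s inputs sit in disjoint contiguous blocks, which requires factoring $\pi_i$ accordingly. Your induction on the number of boxes, peeling off a minimal box, is the natural way to organise this, with the zero-box case being exactly symmetric coherence. Item~2 should be restated accordingly.

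A minor point: invoking the coherence theorem to pass to the strict setting is unnecessary, since Definitions~\ref{moncat} and~\ref{symmoncat} are already strict. Also, the bare statement that every SMC is equivalent to a strict one would not by itself transfer provable equality from a non-strict term language. That needs the stronger form of coherence for the structural isomorphisms.
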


\subsection{Processes and Composition}

\begin{defn}
A \textit{process} can be defined as anything that has zero or more inputs and zero or more outputs. A process is diagrammatically represented as a box with wires.
\[
\input{interpretations/boxa1.tikz}
\]     
\end{defn}
The wires represent the input and output systems or system-types. The input types are below the box while the output types are above the box.

\begin{convention}
	Throughout this thesis---except in Chapter~\ref{chapdiscocirc}---all diagrams should be read from bottom to top and left to right.
\end{convention}

\begin{example}
A single-input, single-output process can be represented as  
\[
\input{interpretations/boxa2.tikz}
\]     	
\end{example}

Composition is an important part of process theory. Once there are two or more processes, how are they to be composed? Diagrammatically, the series composition of $n$ single-input, single-output processes is given by
\[
\input{interpretations/boxa3.tikz}
\]
This kind of composition usually corresponds to composition in time. 

For a process theory to be non-trivial, there must be at least one process $f$ that is non-{\tiny }separable (in time):
\[
\input{interpretations/separatea2.tikz}
\]
This equation means that the single-input, single-output process $f$ is not equal to a series composition of single-input, zero-output ($\psi_A$) and zero-input, single-output ($\psi_B$) processes. In other words, the whole $f$ is not completely describable in terms of its parts composed in series. This is called a \textit{non-trivial series composition}.

If this condition is not satisfied by a process theory, there can be no interesting behaviour in it since each process gives the same output regardless of the input.  

The parallel composition of $n$ single-input, single-output processes is given by   
\[
\input{interpretations/boxa5.tikz}
\]
This kind of composition usually corresponds to composition in space. 
A simpler example of parallel composition would be that of two zero-input, single-output processes
\[
\input{interpretations/boxa4.tikz}
\]
These two processes could, for example, be wavefunctions of two quantum systems. 
Following Schr\"{o}dinger's observation that entanglement is `\textit{the} characteristic trait of quantum mechanics'~\cite{schrodinger1935discussion},
    \[
    \ket{\psi_{AB}} \neq \ket{\psi_{A}} \otimes \ket{\psi_{B}},
    \]
we get a \textit{non-trivial parallel composition}: a situation in which the whole cannot be completely described in terms of the parts composed in parallel. 
\[
\input{interpretations/separatea3.tikz}
\]

\subsection{Diagrams}

As described earlier, a process is diagrammatically represented by a box with wires. Composition of these boxes results in bigger diagrams. To put it differently, diagrams comprise boxes and wires, labelled by processes and system-types respectively~\cite{CKbook}. For instance, the three processes
\[ \input{interpretations/box4.tikz} \ \ , \ \ \  \input{interpretations/box5.tikz} \ \ , \ \text{and} \ \ \  \input{interpretations/box6.tikz}\]
can be composed to form the diagram
\[
\input{interpretations/box7.tikz}
\]

When composing processes, system-types should match. Otherwise, the diagrams are not defined. For example, for the processes
$ \ \ \input{interpretations/box4.tikz} \ \ \text{and} \ \ \  \input{interpretations/box10.tikz} \ $, the following diagrams are defined: 
\[
\input{interpretations/box11.tikz} \ \ , \ \ \ \input{interpretations/box12.tikz}
\]
whereas the following are undefined:
\[
\input{interpretations/box13.tikz} \ \ , \ \ \ \input{interpretations/box14.tikz}
\]
Generally, we shall skip wire labels unless there is an ambiguity.

There are two ways to get new diagram equations from old ones. First, two diagrams are equal as long as they contain the same processes that are connected in the same way, and have the same order of input and output wires. In other words, deforming a diagram while retaining its connectivity and the order of its wires results in the same diagram. For example,
\[
\input{interpretations/box9.tikz} \ \ \ = \ \ \ \input{interpretations/box8.tikz}
\]

\begin{convention}
This property of diagrams---only connectivity matters (OCM)---will be frequently used in this thesis. In this context, the retention of the order of input and output wires is considered part of OCM.
\end{convention}

Second, diagrams can be substituted like algebra. For example,
\[
\input{interpretations/box8b.tikz} \ \ \ = \ \ \ \input{interpretations/box8c.tikz} \ \ \ \ \ \  \implies \ \ \ \ \  \input{interpretations/box8.tikz} \ \ \ = \ \ \ \input{interpretations/box8d.tikz}
\]

\subsection{Special Processes}

Some special kinds of processes appear a lot in process theories; these include states, effects and numbers~\cite{CKbook}. 

\begin{defn}
A \textit{state} is a process with no input. For example, the state
    \[\pointmap{\psi}\] 
     has no input and a single output. This diagram is interpreted as the preparation of a system in a particular state, in a context in which we are not interested in what came before the preparation. 
\end{defn}

\begin{defn}
An \textit{effect} (or a \textit{test}) is a process with no output. For example, the effect
    \[
\copointmap{\phi} 
\]    
has a single input and no output. This diagram is interpreted as a test of whether the system is in a particular state, in a context in which we are not interested in what happens after the test. 
\end{defn}

\begin{defn}
	A number (or a scalar) is a process with no input and no output.
    \[
    \scalar{\lambda}
    \]    
\end{defn}
A number can be obtained by composing a state $\psi$ and an effect $\phi$. 
    \[
    \input{interpretations/statetestpaper.tikz}
    \]
This is known as the \textit{generalised Born rule}.

In the context of quantum theory, diagrams of states, tests and numbers have the following correspondence with the \textit{Dirac notation}:

\[\pointmap{\psi} \xleftrightarrow \ \ket{\psi } \ \ \ , \ \ \  \copointmap{\phi} \xleftrightarrow \ \bra{\phi} \ \ \ , \ \ \ \begin{tikzpicture}
	\begin{pgfonlayer}{nodelayer}
		\node [style=point] (0) at (0, -0.75) {$\psi$};
		\node [style=copoint] (1) at (0, 0.75) {$\phi$};
	\end{pgfonlayer}
	\begin{pgfonlayer}{edgelayer}
		\draw (0) to (1);
	\end{pgfonlayer}
\end{tikzpicture}
   \xleftrightarrow \  \braket{\phi | \psi}\]

\subsection{Separability}

\begin{defn}[Definition~4.5 in~\cite{CKbook}] 
A process $f$ is \textit{$\circ$-separable} if it can be expressed as the series composition of a state $\phi$ and an effect $\pi$:
\[
\boxmap{\, f \,}\ \  = \ \ 
\begin{array}{c}
\raisebox{1.9mm}{\pointmap{\,\phi}}\vspace{0.5mm}\\
\raisebox{-1.9mm}{\copointmap{\,\pi\,}}
\end{array}
\]    
When this process is applied to any state $\psi$, the same state $\phi$ (up to a scalar) is obtained as the output:
\[
\begin{array}{c}
	\raisebox{0mm}{\boxmap{\, f \,}}	\vspace{-4mm}\\
	\raisebox{0.9mm}{{\pointmap{\,\psi}}}
\end{array}\ \  = \ \ 
\begin{array}{c}
	\raisebox{9mm}{\pointmap{\,\phi}}\vspace{-5mm}\\
	\raisebox{0mm}{\copointmap{\,\pi\,}}\vspace{-4mm}\\
	\raisebox{0mm}{{\pointmap{\,\psi}}}
\end{array}\ \  = \ \ 
\begin{array}{c}
	\raisebox{0mm}{\copointmap{\,\pi\,}}\vspace{-4mm}\\
	\raisebox{0mm}{{\pointmap{\,\psi}}}
\end{array} \pointmap{\,\phi}
\]    
In Dirac notation, the above calculation is $f \ket{\psi} = \ket{\phi} \braket{\pi| \psi} = \braket{\pi| \psi} \ket{\phi}$.
\end{defn}

States can be in one or more than one system, denoted by the number of wires:
\[
\begin{tikzpicture}
	\begin{pgfonlayer}{nodelayer}
		\node [style=none, yshift=0.8mm] (4) at (3.5, 0.75) {};
		\node [style=point, yshift=0.8mm] (5) at (3.5, -0.25) {$\pi$};
		\node [style=none] (6) at (3.5, 0) {};
	\end{pgfonlayer}
	\begin{pgfonlayer}{edgelayer}
		\draw [style=swap] (4.center) to (6.center);
	\end{pgfonlayer}
\end{tikzpicture}
 \quad, \quad \input{interpretations/state2.tikz} \quad, \quad \input{interpretations/statemult.tikz}
\]    
 
\begin{defn}[Definition~4.2 in~\cite{CKbook}] 
A two-system or bipartite state $\psi$ is called \textit{$\otimes$-separable} if it can be expressed as the parallel composition of two one-system states $\psi_1$ and $\psi_2$:
    \[\input{interpretations/State2split.tikz}\]
    In Dirac notation, this equation corresponds to $\ket{\psi} = \ket{\psi_{1}} \otimes \ket{\psi_{2}}$. This is the case if $\psi_1$ and $\psi_2$ are completely independent of each other, which is always the case in classical physics. 
   
\noindent States that are not $\otimes$-separable are called \textit{non-separable states}.	
\end{defn}

\begin{definition}
   A two-system state $\psi$ is called a \textit{cup state} if there is a two-system effect $\phi$, called a \textit{cap effect}, such that the following equations hold.
\begin{equation}\label{snaky}
\input{interpretations/CupCapNew.tikz}
\end{equation}
    \end{definition}

\begin{convention}
 For each system-type, we fix a specific pair of cup state $\psi$ and cap effect $\phi$ satisfying Eq.~(\ref{snaky}). Using the notation $\ \raisebox{-1mm}{\begin{tikzpicture}
	\begin{pgfonlayer}{nodelayer}
		\node [style=none] (0) at (-0.75, 0.5) {};
		\node [style=none] (1) at (0.75, 0.5) {};
		\node [style=none] (2) at (0.75, 0.5) {};
		\node [style=none] (3) at (-0.75, 0.5) {};
	\end{pgfonlayer}
	\begin{pgfonlayer}{edgelayer}
		\draw [in=-90, out=-90, looseness=1.75] (0.center) to (1.center);
	\end{pgfonlayer}
\end{tikzpicture}} \ \ := \ \ \input{interpretations/CupStateType.tikz} \ ,\ \  \raisebox{1mm} {\begin{tikzpicture}
	\begin{pgfonlayer}{nodelayer}
		\node [style=none] (0) at (-0.75, -0.5) {};
		\node [style=none] (1) at (0.75, -0.5) {};
		\node [style=none] (2) at (-0.75, -0.5) {}; 
		\node [style=none] (4) at (0.75, -0.5) {};
	\end{pgfonlayer}
	\begin{pgfonlayer}{edgelayer}
		\draw [in=90, out=90, looseness=1.75] (0.center) to (1.center);
	\end{pgfonlayer}
\end{tikzpicture}}\ \  :=\ \  \input{interpretations/CapEffectType.tikz}\ $, Eq.~(\ref{snaky}) becomes:
 \[
 \input{interpretations/CupCapyank.tikz}
 \]
\noindent These are sometimes called the \textit{snake equations}.

In diagrams, the symbols $\ \raisebox{-1mm}{}\ $ and $\ \raisebox{1mm} {} \ $ always denote the chosen cup and cap for the type of the wires they are attached to; they are not intended to range over all possible cup and cap morphisms.

We additionally assume that every system-type is self-dual, so both legs of each cup or cap carry the same type. Under this assumption, we impose the further structural choice that cups and caps are symmetric:
 \[
 \input{interpretations/cupType2.tikz} \ \ = \ \ \raisebox{-1mm}{} \ \ \ , \ \ \  \  \input{interpretations/capType2.tikz} \ \ = \ \ \raisebox{1mm}{}
 \]

\end{convention}

\begin{remark}
	Process theories that have cup states and cap effects may have diagrams in which wires connect outputs to inputs of the same process. 
	\[ \input{interpretations/arbitrary_kleinx.tikz}\]
\end{remark}

\noindent The equations characterising cup states and cap effects 
\[
\input{interpretations/CupCapyank2.tikz} \ \ \ , \ \ \  \ \input{interpretations/cupType2.tikz} \ \ = \ \ \raisebox{-1mm}{} \ \ \ , \ \ \  \  \input{interpretations/capType2.tikz} \ \ = \ \ \raisebox{1mm}{}
\]
are called the \textit{yanking equations}.

\begin{defn}
In string diagrams with cups and caps satisfying yanking equations, there exist states that are duals of processes and vice versa. This bijective correspondence between processes and states is called the \textit{process-state duality}. For example, for single-input, single-output processes and bipartite states:
\[\input{interpretations/smap.tikz} \ \  \mapsto \ \  \input{interpretations/smapstate.tikz} \ \ \ , \ \ \ \ \input{interpretations/state2b.tikz} \ \  \mapsto \ \  \input{interpretations/state2process.tikz} 
\]
\end{defn}
Applying the process-state duality twice, we get back the original states and processes, thanks to the snake equations:
\[\input{interpretations/smap.tikz} \ \  \mapsto \ \  \input{interpretations/smapstate.tikz} \ \  \mapsto \ \  \input{interpretations/smapstate2.tikz} \ \ = \ \ \input{interpretations/smap.tikz} \]
 \[ \input{interpretations/state2b.tikz} \ \  \mapsto \ \  \input{interpretations/state2process.tikz} \ \  \mapsto \ \  \input{interpretations/state2process2.tikz} \ \ = \ \ \input{interpretations/state2b.tikz} 
\]
\begin{convention}
In the above diagrams, we have used asymmetrical shapes---not rectangles and triangles anymore---for the process and state diagrams. This convention will help visually distinguish between a process and its transpose, adjoint and conjugate processes, described below. 
\end{convention}

\begin{defn}[Definition~4.23 in~\cite{CKbook}] 
	The \textit{transpose} $f^T$ of a process $f$ 
	\[
	\input{interpretations/smap.tikz}
	\]
	is defined by bending the input and output wires in the opposite directions. 
	\begin{equation}\label{eq:transpose}
	\begin{tikzpicture}
	\begin{pgfonlayer}{nodelayer}
		\node [style=maptrans] (0) at (0, 0) {$f$};
		\node [style=none] (1) at (0, -1.25) {};
		\node [style=none] (2) at (0, 1.25) {};
	\end{pgfonlayer}
	\begin{pgfonlayer}{edgelayer}
		\draw [style=swap] (0) to (1.center); 
		\draw [style=swap] (2.center) to (0);
	\end{pgfonlayer}
\end{tikzpicture}\ \  :=\ \ \input{interpretations/transmapyes.tikz}
	\end{equation}
    
\end{defn}

In other words, a cup-and-cap pair are needed to obtain the transpose of the process $f$. Another way to think of the transpose is that it is a process rotated by $180^{\circ}$, as is also implied by the notation on the left-hand side above. 
\[
\input{interpretations/smapAB.tikz} \ \ \overset{T}{\mapsto}  \ \ \input{interpretations/smaptransAB.tikz}  
\]

\begin{examples}
	The transpose of a state is obtained by bending its output wire to make it an input wire.
		\[	\begin{tikzpicture}
	\begin{pgfonlayer}{nodelayer}
		\node [style=none, yshift=0.8mm] (2) at (0, 0.5) {};
		\node [style=none] (3) at (0, -0.25) {};
		\node [style=kpoint] (7) at (0, -0.5) {$\psi$};
	\end{pgfonlayer}
	\begin{pgfonlayer}{edgelayer}
		\draw [style=swap] (2.center) to (3.center);
	\end{pgfonlayer}
\end{tikzpicture}
 \ \  \overset{T}{\mapsto} \ \ 	\input{interpretations/transpose2.tikz} \]
	The transpose of an effect is obtained by bending its input into an output.
		\[	\begin{tikzpicture}
	\begin{pgfonlayer}{nodelayer}
		\node [style=none, yshift=0.8mm] (2) at (0, -1) {};
		\node [style=none] (3) at (0, -0.25) {};
		\node [style=kpointadj] (8) at (0, 0) {$\phi$};
	\end{pgfonlayer}
	\begin{pgfonlayer}{edgelayer}
		\draw [style=swap] (2.center) to (3.center);
	\end{pgfonlayer}
\end{tikzpicture}
 \ \  \overset{T}{\mapsto} \ \ 	\input{interpretations/transpose4.tikz} \]
\end{examples}
Boxes can be slid along wires and along cups and caps~\cite{CKbook}.

	\begin{equation}\label{eq:translide1}
\scalebox{0.8}{\input{interpretations/sliding1.tikz}} \ \ = \ \ 
\scalebox{0.8}{\input{interpretations/sliding3.tikz}} 
\end{equation}
	\begin{equation}\label{eq:translide2}
  \scalebox{0.8}{\input{interpretations/sliding4.tikz}} \ \ = \ \ 
   \scalebox{0.8}{\input{interpretations/sliding6.tikz}}  
\end{equation}

\begin{remark}
	There is an operational way to think of a process and its transpose. Consider a cup state shared between two parties Alice and Bob that are at different spatial locations.
	\[
		\input{interpretations/transpose5.tikz}
	\]
	Suppose Alice applies the process $f$ to her system. The diagram becomes
		\[
	\input{interpretations/transpose6.tikz}
	\]
Using Eq.~(\ref{eq:translide1}), we have
		\[
\input{interpretations/transpose6.tikz} \ \ = \ \ \input{interpretations/transpose7.tikz}
\]
This equation can be interpreted as follows: Alice applying the process $f$ to her system is operationally the same as Bob applying the process $f^T$ to his system. 

As another example, suppose Alice applies an effect to test whether her system is in the state $\input{interpretations/transpose8b.tikz}$
		\[
\input{interpretations/transpose8.tikz}
\]
Using Eq.~(\ref{eq:translide1}) again, we have
		\[
\input{interpretations/transpose8.tikz} \ \ = \ \  \input{interpretations/transpose9.tikz} 
\]
This equation implies that if Alice and Bob share a cup state, Alice successfully testing her system to be in the state $\input{interpretations/transpose8b.tikz}$ is operationally the same as Alice having no system and Bob's system being in the state $\input{interpretations/transpose9b.tikz}$. 

\end{remark}

\begin{defn}[Definition~4.34 in~\cite{CKbook}] 
 For a process $f$ 
 	\[
 \input{interpretations/smap.tikz}
 \]
 with the same input and output type, the \textit{trace} is defined as 
  	\[
 \input{interpretations/smaptrace.tikz}
 \]
 The trace maps a process to a number. 
\end{defn}

\begin{defn}[Definition~4.34 in~\cite{CKbook}] 
For a process $g$ 
	\[
	\input{interpretations/smap2.tikz}
	\]
	with one of the inputs (say, the first one) having the same type as one of the outputs (say, the first one), the \textit{partial trace} is defined as 
	\[
	\input{interpretations/smap2trace.tikz}
	\]
\end{defn}

\begin{defn}\cite[p.~104]{CKbook} 
	The \textit{adjoint} $f^{\dagger}$ of $f$ is given by its vertical reflection
	\[
	\input{interpretations/smapAB.tikz} \ \ \mapsto \ \ \input{interpretations/smapdagAB.tikz}  
	\]
\end{defn}

The adjoint of a state is the effect that tests for it.

\begin{example}
The adjoint of the state $\psi$ yields the effect that tests for $\psi$.
			\[	 \ \  \overset{\dagger}{\mapsto} \ \ 	\begin{tikzpicture}
	\begin{pgfonlayer}{nodelayer}
		\node [style=none, yshift=0.8mm] (2) at (0, -1.25) {};
		\node [style=none] (3) at (0, -0.25) {};
		\node [style=kpointadj] (8) at (0, 0) {$\psi$};
	\end{pgfonlayer}
	\begin{pgfonlayer}{edgelayer}
		\draw [style=swap] (2.center) to (3.center);
	\end{pgfonlayer}
\end{tikzpicture}
 \]
\end{example}
\noindent In Dirac notation, it is $(\ket{\psi})^{\dagger} = \bra{\psi}$.

\begin{defn}\cite[p.~143]{CKbook} 
   The \textit{conjugate} $\bar{f}$ of $f$ is given by its horizontal reflection
	\[
	\input{interpretations/smapAB.tikz} \ \ \mapsto \ \ \input{interpretations/smapconj.tikz}  
	\]
\end{defn}

\begin{remark}
The conjugate can be obtained by taking the transpose of the adjoint of $f$,
\[
\input{interpretations/smap.tikz}\ \  \overset{\dagger}{\mapsto} \ \  \begin{tikzpicture}
	\begin{pgfonlayer}{nodelayer}
		\node [style=mapdag] (0) at (0, 0) {$f$};
		\node [style=none] (1) at (0, -1.5) {};
		\node [style=none] (2) at (0, 1.5) {};
	\end{pgfonlayer}
	\begin{pgfonlayer}{edgelayer}
		\draw [style=swap] (0) to (1.center);
		\draw [style=swap] (2.center) to (0);
	\end{pgfonlayer}
\end{tikzpicture}
\ \  \overset{T}{\mapsto} \ \ \input{interpretations/smapdagtrans.tikz} 
\]
where 
\[ \begin{tikzpicture}
	\begin{pgfonlayer}{nodelayer}
		\node [style=none] (3) at (0, 1.25) {};
		\node [style=none] (4) at (0, -1.25) {};
		\node [style=mapconj] (5) at (0, 0) {$f$};
	\end{pgfonlayer}
	\begin{pgfonlayer}{edgelayer}
		\draw [style=swap] (5) to (4.center);
		\draw [style=swap] (3.center) to (5);
	\end{pgfonlayer}
\end{tikzpicture}
 \ \  := \input{interpretations/smapdagtrans.tikz} \     
\]
or by taking the adjoint of the transpose of $f$; \textit{i.e.} $\bar{f} = (f^{\dagger})^T = (f^T)^{\dagger}$.
\end{remark}

Using adjoints of processes, the inner product can be defined.

\begin{defn}[Definition~4.48 in~\cite{CKbook}] 
	If two states $\psi$ and $\phi$ have the same type, their \textit{inner product} is given by $\ \begin{tikzpicture}
	\begin{pgfonlayer}{nodelayer}
		\node [style=kpoint] (0) at (-7.5, -0.5) {$\psi$};
		\node [style=kpointadj] (1) at (-7.5, 1) {$\phi$};
	\end{pgfonlayer}
	\begin{pgfonlayer}{edgelayer}
		\draw (0) to (1);
	\end{pgfonlayer}
\end{tikzpicture}
\ $.
	The states are \textit{orthogonal} if $\  \ \ = \ \ 0\ $.
	A state $\psi$ is \textit{normalised} if $\ \begin{tikzpicture}
	\begin{pgfonlayer}{nodelayer}
		\node [style=kpoint] (0) at (-7.5, -0.5) {$\psi$};
		\node [style=kpointadj] (1) at (-7.5, 1) {$\psi$};
	\end{pgfonlayer}
	\begin{pgfonlayer}{edgelayer}
		\draw (0) to (1);
	\end{pgfonlayer}
\end{tikzpicture}
 \ \ = \ \ \emptydiag\ $.
\end{defn}
\noindent The empty diagram, denoted by a dashed box, represents the number 1. 

In Dirac notation, the inner product of states $\ket{\psi}$ and $\ket{\phi}$ is given by the `Dirac braket' $\braket{\phi|\psi}$. For orthogonal states, $\braket{\phi|\psi} = 0$ and for a normalised state $\ket{\psi}$, $\braket{\psi|\psi} = 1$.

The diagram $\ \ $ represents testing `whether the state $\psi$ is the state $\phi$'. In other words, the inner product gives the overlap between the states $\psi$ and $\phi$ in the form of a number. If there is no overlap, the inner product is $0$ and the states are orthogonal. For a normalised state, the inner product with itself is $1$.

\begin{defn}
	An \textit{orthonormal basis (ONB)} is a set of states 
	\[\mathcal{A} \ \ := \ \ \left\{ \begin{tikzpicture}
	\begin{pgfonlayer}{nodelayer}
		\node [style=none, yshift=0.8mm] (1) at (0, 0.75) {};
		\node [style=none] (2) at (0, 0) {};
		\node [style=kpoint] (3) at (0, -0.25) {$i$};
	\end{pgfonlayer}
	\begin{pgfonlayer}{edgelayer}
		\draw [style=swap] (1.center) to (2.center);
	\end{pgfonlayer}
\end{tikzpicture}
 \right\}_i \]
	that is orthonormal, \textit{i.e.,} the states obey the equation
	\[
	\begin{tikzpicture}
	\begin{pgfonlayer}{nodelayer}
		\node [style=kpoint] (0) at (-7.5, -0.5) {$i$};
		\node [style=kpointadj] (1) at (-7.5, 1) {$j$};
	\end{pgfonlayer}
	\begin{pgfonlayer}{edgelayer}
		\draw (0) to (1);
	\end{pgfonlayer}
\end{tikzpicture}
 \ \ = \ \ \delta_i^j\ 
	\]
	and form a basis---the minimal set of states such that for all processes $f$ and $g$:
	\[
	\left(\forall \ \  \ \ : \ \ \input{interpretations/ONB1.tikz} \ \ = \ \ \input{interpretations/ONB2.tikz} \right) \ \ \implies \ \ \input{interpretations/smap.tikz} \ \ = \ \  \input{interpretations/smapg.tikz}
	\]
\end{defn}

\noindent In terms of any ONB $\left\{  \right\}_i $, the identity wire decomposes as follows:
\[
\idwire \ \ = \ \ \sum_i  \input{interpretations/ONB3.tikz}
\]
and from this decomposition, we can obtain the decomposition of cups and caps, respectively, as
\[
\ \ = \ \ \sum_i  \input{interpretations/ONB4.tikz} \ \ \text{and} \ \ \ \ = \ \ \sum_i  \input{interpretations/ONB5.tikz} \ \ .
\]

\begin{defn}[Definitions~4.54 and~4.56 in~\cite{CKbook}] 
A process $U$ is  an \textit{isometry} if it obeys the left equation only, and is
a \textit{unitary} if it also obeys the right equation:
\begin{equation}\label{eq:isometry}
\input{interpretations/isometry.tikz} \ \ \ \ \ \ \ \ \ , \ \ \ \ \ \ \ \ \ \  \input{interpretations/unitary.tikz}
\end{equation}
\end{defn}
\noindent Algebraically, the left equation is $U^{\dagger} U = I$ whereas the right equation is $U U^{\dagger} = I$.

The \textit{inverse} of a unitary process $\begin{tikzpicture}
	\begin{pgfonlayer}{nodelayer}
		\node [style=map] (2) at (0, 0) {$U$};
		\node [style=none] (5) at (0, -1.25) {};
		\node [style=none] (6) at (0, 1.25) {};
		\node [style=none] (7) at (0, 0) {};
	\end{pgfonlayer}
	\begin{pgfonlayer}{edgelayer}
		\draw [style=swap] (2) to (5.center);
		\draw (7.center) to (6.center);
	\end{pgfonlayer}
\end{tikzpicture}
$ is $\begin{tikzpicture}
	\begin{pgfonlayer}{nodelayer}
		\node [style=none] (6) at (0, -1.25) {};
		\node [style=none] (7) at (0, 0) {};
		\node [style=none] (9) at (0, 1.25) {};
		\node [style=mapdag] (10) at (0, 0) {$U$};
	\end{pgfonlayer}
	\begin{pgfonlayer}{edgelayer}
		\draw (7.center) to (6.center);
		\draw [style=swap] (9.center) to (10);
	\end{pgfonlayer}
\end{tikzpicture}
$.

\section{Towards a Quantum Process Theory}\label{sectoqth}

Having discussed process theories and the various string-diagrammatic definitions of `quantum-like' concepts, one may be tempted to formulate quantum theory as a process theory of \textbf{linear maps}.

\subsection{Linear Maps}
	Consider the process theory of \textbf{linear maps} in which system-types are finite-dimensional Hilbert spaces and processes are complex linear maps.
	
	In this process theory, the numbers are complex. This implies that the generalised Born rule in this theory yields complex numbers. 

\[     \in 	\mathbb{C}
\]

\noindent Therefore, the `Born rule' for linear maps does not give a probability, since a probability must be a real number. This problem can be resolved by multiplying the number with its conjugate.

\[
 \quad \leadsto \quad \begin{tikzpicture}
	\begin{pgfonlayer}{nodelayer}
		\node [style=kpointtrans] (5) at (2, 1) {$\phi$};
		\node [style=kpointconj] (6) at (2, -0.5) {$\psi$};
		\node [style=kpointadj] (7) at (3.5, 1) {$\phi$};
		\node [style=kpoint] (8) at (3.5, -0.5) {$\psi$};
	\end{pgfonlayer}
	\begin{pgfonlayer}{edgelayer}
		\draw (6) to (5);
		\draw (8) to (7);
	\end{pgfonlayer}
\end{tikzpicture}

\]
In Dirac notation, this is $\braket{\phi|\psi}  \leadsto  \overline{\braket{\phi|\psi}}  \braket{\phi|\psi} = |\braket{\phi|\psi}|^2$.

\noindent For normalised states $\psi$ and $\phi$,
\[
0 \leq \ \  \ \ \leq \ \ 1
\]
thereby giving the standard Born rule of quantum theory.

Fixing the numbers by `doubling' leads to a discrepancy since the states and effects were described with respect to `un-doubled' numbers. 
\[
\input{interpretations/singleprod6.tikz}
\]
We need a new process theory in which doubling is consistently incorporated rather than introduced at the end in an ad hoc manner.

\subsection{Doubling}

Doubling all the processes in the theory of \textbf{linear maps},
\[
\input{interpretations/doubleprocess1to1z.tikz}
\]
we get the process theory of \textbf{pure quantum maps.} In this process theory, states and effects are respectively given by 
\[
    \input{interpretations/doubledstate.tikz} \quad  , \quad \quad \input{interpretations/doubledtest.tikz}
\]
and they are related to probability by the Born rule as follows:
    \[
    \input{interpretations/doubledstatetest.tikz}
    \]

\begin{defn}[Definition~6.8 in~\cite{CKbook}]
	The process theory of \textbf{pure quantum maps} has doubled finite-dimensional Hilbert spaces $\hat{H} = \mathcal{H} \otimes \mathcal{H}$ as system-types
	\[
	\input{interpretations/doubledwire.tikz}
	\]
	and doubled linear maps (called \textit{pure quantum maps}) as processes
	\[
	\input{interpretations/doubleprocess1to1z.tikz}
	\]
\end{defn}
\noindent In this process theory, there are doubled cups and caps, related to single cups and caps as follows:
\[
\begin{tikzpicture}
	\begin{pgfonlayer}{nodelayer}
		\node [style=none] (0) at (-1, 0.5) {}; 
		\node [style=none] (1) at (1, 0.5) {};  
	\end{pgfonlayer}
	\begin{pgfonlayer}{edgelayer}
		\draw [style=boldedge, in=-90, out=-90, looseness=1.75] (0.center) to (1.center);    
	\end{pgfonlayer}
\end{tikzpicture}\ \ :=\ \input{interpretations/double_cup_derivation1.tikz}\vspace{-2mm} \ \ \ , \ \ \ \ \ \ 
\begin{tikzpicture}
	\begin{pgfonlayer}{nodelayer}
		\node [style=none] (0) at (-1, -0.5) {};
		\node [style=none] (1) at (1, -0.5) {};
	\end{pgfonlayer}
	\begin{pgfonlayer}{edgelayer}
		\draw [style=boldedge, in=90, out=90, looseness=1.75] (0.center) to (1.center);
	\end{pgfonlayer}
\end{tikzpicture}
\ \ :=\ \input{interpretations/double_cap_derivation1.tikz}\vspace{-2mm}
\]

\begin{defn}[Definition~6.15 in~\cite{CKbook}]
	 An \textit{entangled state} is a pure quantum state that is not $\otimes$-separable. A bipartite entangled state is given by 
			\[\input{interpretations/perspective10.tikz} \ \ \neq \ \ \begin{tikzpicture}
	\begin{pgfonlayer}{nodelayer}
		\node [style=dkpoint] (35) at (0, -0.75) {$\widehat \psi_1$};
		\node [style=none] (37) at (0, -0.75) {};
		\node [style=none] (38) at (0, 1) {};
	\end{pgfonlayer}
	\begin{pgfonlayer}{edgelayer}
		\draw [style=boldedge] (37.center) to (38.center);
	\end{pgfonlayer}
\end{tikzpicture}
  \ \ \begin{tikzpicture}
	\begin{pgfonlayer}{nodelayer}
		\node [style=dkpoint] (35) at (0, -0.75) {$\widehat \psi_2$};
		\node [style=none] (37) at (0, -0.75) {};
		\node [style=none] (38) at (0, 1) {};
	\end{pgfonlayer}
	\begin{pgfonlayer}{edgelayer}
		\draw [style=boldedge] (37.center) to (38.center);
	\end{pgfonlayer}
\end{tikzpicture}
  \]
\end{defn}

\begin{example}
A well-known example of an entangled state is the Bell state $\frac{\ket{00} + \ket{11}}{\sqrt{2}}$, diagrammatically given by a doubled cup multiplied by a scalar.
\[
\begin{tikzpicture}
	\begin{pgfonlayer}{nodelayer}
		\node [style=none] (0) at (-1, 0.5) {};
		\node [style=none] (1) at (1, 0.5) {};
		\node [style=none] (2) at (-1.75, 0) {$\frac{1}{2}$};
	\end{pgfonlayer}
	\begin{pgfonlayer}{edgelayer}
		\draw [style=boldedge, in=-90, out=-90, looseness=1.75] (0.center) to (1.center);
	\end{pgfonlayer}
\end{tikzpicture}
 
\]
The corresponding Bell test is given in terms of a doubled cap.
\end{example}

\begin{remark}\cite[p.~630]{CKbook} 
Doubling in a process theory preserves diagrams and diagrammatic equations
\[
\input{interpretations/arbitrary_kleinx.tikz} \ \ = \ \  \input{interpretations/arbitrary_kleinx2.tikz}  \ \ \ \implies \ \ \ \input{interpretations/double_arbitrary_klein.tikz}  \ \ = \ \ \input{interpretations/double_arbitrary_klein2.tikz}    
\]
\noindent but it does not preserve global phases
\[
\text{double} \left(\scalar{\lambda} \input{interpretations/smap.tikz} \right) \ \ = \ \ 
\scalar{\overline{\lambda}} \scalar{\lambda} \input{interpretations/smap.tikz} \ \ = \ \  \ \input{interpretations/smap.tikz}  \ \ = \ \ \begin{tikzpicture}
	\begin{pgfonlayer}{nodelayer}
		\node [style=dmap] (1) at (0, 0) {$\widehat f$};
		\node [style=none] (2) at (0, 1.25) {};
		\node [style=none] (25) at (0, -1.25) {};
	\end{pgfonlayer}
	\begin{pgfonlayer}{edgelayer}
		\draw [style=boldedge] (2.center) to (1);
		\draw [style=boldedge] (1) to (25.center);
	\end{pgfonlayer}
\end{tikzpicture}
 
\]
since \[
\scalar{\lambda} \scalar{\overline{\lambda}} =  \emptydiag \] where $\lambda = e^{i\theta}$, the global phase.
\end{remark}

\begin{defn}[Definition~6.29 in~\cite{CKbook}] 
	Using doubling, we can define a process called \textit{discarding}, by connecting the two single wires in a doubled wire. 
	\[
	\begin{tikzpicture}
	\begin{pgfonlayer}{nodelayer}
		\node [style=upground] (2) at (0, 0.75) {};
		\node [style=none] (3) at (0, 0.5) {};
		\node [style=none] (4) at (0, -0.5) {};
	\end{pgfonlayer}
	\begin{pgfonlayer}{edgelayer}
		\draw [style=boldedge] (3.center) to (4.center);
	\end{pgfonlayer}
\end{tikzpicture}
 \ \  := \ \ \input{interpretations/trace_def.tikz}
	\] 
\end{defn}
\noindent Notice that the discarding map is actually a cap in disguise, which means it has the same decomposition in terms of an ONB:
\begin{equation}\label{discarddecomp}
		 \ \ = \ \ \sum_i \ \input{interpretations/ONB5.tikz} \ \ = \ \ \sum_i \ \begin{tikzpicture}
	\begin{pgfonlayer}{nodelayer}
		\node [style=dkpointadj] (36) at (0, 0.5) {$i$};
		\node [style=none] (58) at (0, -1) {};
		\node [style=none] (59) at (0, 0.5) {};
	\end{pgfonlayer}
	\begin{pgfonlayer}{edgelayer}
		\draw [style=boldedge] (58.center) to (59.center);
	\end{pgfonlayer}
\end{tikzpicture}

\end{equation}
\noindent The discarding operation gets rid of a system. For instance, for a normalised state $\widehat{\psi}$, it gives
\[
\begin{tikzpicture}
	\begin{pgfonlayer}{nodelayer}
		\node [style=dkpoint] (0) at (0, -0.5) {$\widehat \psi$};
		\node [style=upground] (1) at (0, 0.75) {};
	\end{pgfonlayer}
	\begin{pgfonlayer}{edgelayer}
		\draw [style=boldedge] (0) to (1);
	\end{pgfonlayer}
\end{tikzpicture}
 \ \  =\ \   \input{interpretations/discard_prop_proof_paper.tikz}\ \  = \ 
\kpointbraket{\psi}{\psi}\ =\ \ \emptydiag
\]
The discarding operation is a linear map but it is not a pure quantum map, since it cannot be obtained by doubling a linear map. It is a useful process to have in a quantum process theory. It is particularly important when we want to ignore or destroy (a part of) a system. 

\subsection{Quantum Maps}

    \begin{definition}[Definition~6.44 in~\cite{CKbook}] 
    The process theory of \textbf{quantum maps} includes doubled finite-dimensional Hilbert spaces as system-types, and diagrams of pure quantum maps and discarding as processes.
     \[
\input{interpretations/qmaps.tikz}
\]
    \end{definition}
\noindent In this theory, we get the maximally mixed state by taking the transpose of the  discarding process.
\[
\begin{tikzpicture}
	\begin{pgfonlayer}{nodelayer}
		\node [style=none] (9) at (0, 1.25) {};
		\node [style=none] (10) at (0, 0.25) {};
		\node [style=downground] (11) at (0, -0.5) {};
		\node [style=none] (12) at (0, 1.25) {};
	\end{pgfonlayer}
	\begin{pgfonlayer}{edgelayer}
		\draw [style=boldedge] (12.center) to (11);
	\end{pgfonlayer}
\end{tikzpicture}
 \ \ := \input{interpretations/discardtrans2.tikz}
\]
\begin{defn}[Definition~6.34 in~\cite{CKbook}] 
	The \textit{maximally mixed state} $\frac{\ket{0}\bra{0} + \ket{1}\bra{1}}{2}$ is given by 
     \[
\input{interpretations/purification3.tikz}
\]	
\end{defn}
Any impure quantum map can be described in terms of a bigger pure map and discarding. This is achieved through a technique called purification. 
\begin{defn}[Definition~6.47 in~\cite{CKbook}] 
Any quantum map $\Phi$, which may or may not be impure, can be expressed as a combination of a pure map $\widehat f$ and discarding :
\begin{equation}\label{purification}
\dmap{\Phi} \ =\ \ \input{interpretations/purificationI.tikz}	
\end{equation}
If such a relation exists between the processes $\widehat f$ and $\Phi$, $\widehat f$ is called a \textit{purification} of $\Phi$.
\end{defn}

\begin{example}
	The Bell state is a purification of the maximally mixed state. 
	     \[
	\input{interpretations/purification3.tikz}  \ \ = \ \  	\input{interpretations/purification2.tikz}
	\]	
This equation implies that if one of the systems of the Bell state is discarded, the other system becomes a maximally mixed state. 
\end{example}

\subsection{Causality}     

Causality is defined in terms of the discarding operation. 

\begin{defn}[Definition~6.52 in~\cite{CKbook}] 
A process $\phi$ is \textit{causal} if it obeys the equation
\[
\begin{tikzpicture}
	\begin{pgfonlayer}{nodelayer}
		\node [style=none] (0) at (0, -1.25) {};
		\node [style=dmap] (1) at (0, 0) {${\Phi}$};
		\node [style=upground] (2) at (0, 1.5) {};
	\end{pgfonlayer}
	\begin{pgfonlayer}{edgelayer}
		\draw [style=boldedge] (1) to (0.center);
		\draw [style=boldedge] (1) to (2);
	\end{pgfonlayer}
\end{tikzpicture}
\ \ = \ \,
\]
\textit{i.e., } performing a process and then discarding it is essentially the same as plain discarding. That is, if the output of a process is discarded, it is as if the process never happened.
\end{defn}

Here, a causal process means one that can be deterministically realised in the physical world. Discarding is the only causal quantum effect~\cite{CKbook}, others are non-deterministic. 

\begin{theorem}[Theorem~6.61 in~\cite{CKbook}] 
For every causal quantum map $\Phi$, there exists a purification by an isometry $\widehat U$:
\[
\dmap{\Phi} \ =\ \ \input{interpretations/quantummapcaus.tikz}   
\]
This is known as \textit{Stinespring dilation}.
\end{theorem}

\begin{remark}\label{stinespringrem}
Another version of Stinespring dilation is as follows.	For any causal quantum map $\phi$, there exists a unitary $\widehat V$ that has a pure causal quantum state $\hat \psi$ as one of its inputs and one of its outputs is discarded~\cite{CKbook}.
\begin{equation}\label{stinespring2}
\dmap{\Phi} \ =\ \ \input{interpretations/quantummapcaus2.tikz}   
\end{equation}
This version of Stinespring dilation is sometimes used to argue that the only physical processes in the real world are the pure, unitary processes, and what we observe are impure processes because we can access only a small part of the complete process.
\end{remark}

\begin{defn}\label{qprocessdef}
A \textit{quantum instrument} is essentially a collection or list of quantum maps
$\dmap{\Phi^1}\ , \ \  \dmap{\Phi^2}\ ,\ ... \ , \ \ \dmap{\Phi^n}$ also denoted as
\begin{equation}\label{eq:qprocesslist}
	\left( \dmap{\Phi^i} \right)^i 
\end{equation}
for which the following (generalised) causality condition holds:
\begin{equation}\label{gencausal}
	\sum_i \  \begin{tikzpicture}
	\begin{pgfonlayer}{nodelayer}
		\node [style=none] (0) at (0, -1.25) {};
		\node [style=dmap] (1) at (0, 0) {${\Phi^i}$};
		\node [style=upground] (2) at (0, 1.5) {};
	\end{pgfonlayer}
	\begin{pgfonlayer}{edgelayer}
		\draw [style=boldedge] (1) to (0.center);
		\draw [style=boldedge] (1) to (2);
	\end{pgfonlayer}
\end{tikzpicture}
\ \ = \ \, 
\end{equation}
\noindent The quantum maps $\Phi^i$ in list (\ref{eq:qprocesslist}) are called \textit{branches}. If the quantum instrument includes only one branch, it is \textit{deterministic}. If it includes more than one branch, it is \textit{non-deterministic}. When a system is subjected to a quantum instrument, one of the branches is realised. 
\end{defn}

\begin{remark}
	Note that the abovementioned quantum instrument is more precisely qualified as \textit{uncontrolled}, because it does not depend on any outcome of earlier processes~\cite{CKbook}. 
\end{remark}
\noindent When a system $\boldsymbol \rho$ is subjected to a quantum instrument, the probability of occurrence of each branch $\Phi^i$ is given by the Born rule.
\begin{equation}\label{bornrule}
	P(\Phi^i| \boldsymbol \rho) \ \ \vcentcolon= \ \ \  \input{interpretations/born.tikz}
\end{equation}

\subsection{Classical Systems}

Most discussions of quantum theory involve not only quantum systems but also classical systems and how quantum systems interact with classical data. This is especially the case when measurement or a controlled process is involved. In many accounts of measurement, classical data are obtained as a result of the measurement of quantum systems. In controlled processes, classical data control which quantum processes are applied, thereby affecting quantum systems. 
        
To recap, doubling \textbf{linear maps}---the process theory of finite-dimensional Hilbert spaces and complex linear maps---resulted in the process theory of \textbf{pure quantum maps}. Introducing the discarding operation yielded impure quantum maps, leading to the theory of \textbf{quantum maps}. Now, adding single (\emph{i.e.} non-doubled) wires/systems to \textbf{quantum maps} adds classical data to the process theory. The doubled wires denote finite-dimensional quantum systems while the single wires denote finite-dimensional classical systems/data. 

\[  
\left( \ \ \textrm{quantum} \ \ := \ \ \didwire \ \  \right)
\quad \neq \quad
\left( \ \ \textrm{classical} \ \ := \ \ \idwire \ \ \right)
\]        

\begin{defn}
\textit{Classical states} are defined by fixing an ONB $\left\{ \pointmap{i} \right\}_i$ and then taking the basis states as classical values of data. A classical state $\pointmap{i}$ denotes `providing the classical value $i$'.
\end{defn}

\begin{defn}
	The corresponding \textit{classical effects} are given by $\copointmap{i}$ which represents `testing for the classical value $i$'.
\end{defn}

\noindent This definition of classical states and effects is consistent with orthonormality: 
\[
\input{interpretations/orthonormality.tikz} \ \ = \ \ \delta_i^j \ 
\]

\begin{convention}
All the definitions in this section from this point onwards will be with respect to the same fixed ONB $\mathcal{A} \ \ := \ \  \left\{ \pointmap{i} \right\}_i$.
\end{convention}

\begin{example}
	For a classical \textit{bit}, the ONB can be fixed as
	\[\mathcal{B} \ \ := \ \  \left\{ \pointmap{0}\ ,\ \pointmap{1}\ \right\}\]
	and the two basis states give classical values of data.
\end{example}


\begin{defn}
A \textit{probability distribution} $\{p^i\}_i$ can be associated with the state
\underline{}	 \[  \pointmap{p} \ \ := \ \ \sum_i p^i  \pointmap{i}\]
	 The states $\pointmap{i}$ then represent probability distributions called \textit{point distributions}.
\end{defn}
The state $\pointmap{p}$ can be thought of as the most general state of a classical system, whose value is not exactly known and hence a probability distribution is needed to describe it. 

\begin{defn}[Definition~8.17 in~\cite{CKbook}] 
The \textit{copy} and \textit{delete} operations are defined respectively as
\[
\input{interpretations/copy.tikz} \ :=\  \sum_i 
\begin{array}{c}
	\pointmap{i} \pointmap{i}\\
	\copointmap{i}
\end{array}
\qquad\qquad\qquad\quad
\begin{tikzpicture}
	\begin{pgfonlayer}{nodelayer}
		\node [style=white dot] (0) at (0, 0.5) {};
		\node [style=none] (1) at (0, -0.5) {};
	\end{pgfonlayer}
	\begin{pgfonlayer}{edgelayer}
		\draw [style=swap] (0) to (1.center); 
	\end{pgfonlayer}
\end{tikzpicture}  \ :=\   \sum_i \copointmap{i}
\]
\end{defn}
\noindent One can verify that
\[
\input{interpretations/copy2.tikz} \ =\  \pointmap{i} \pointmap{i} \qquad\qquad\qquad\quad \begin{tikzpicture}
	\begin{pgfonlayer}{nodelayer}
		\node [style=white dot] (0) at (0, 0.5) {};
		\node [style=none] (1) at (0, -0.5) {};
		\node [style=point] (2) at (0, -0.75) {$i$};
	\end{pgfonlayer}
	\begin{pgfonlayer}{edgelayer}
		\draw [style=swap] (0) to (1.center);
	\end{pgfonlayer}
\end{tikzpicture}
 \ = \ \emptydiag
\]
Note that the copy map works only for the ONB states. In other words, a state can be copied by the copy operation if and only if it is a basis state~\cite{CKbook}. The general probability distribution state $\pointmap{p}$ cannot be copied.

Other important classical maps are as follows.
\[
\input{interpretations/cocopy.tikz}\ := \ \sum_i 
\begin{array}{c}
	\pointmap{i} \\
	\copointmap{i} \copointmap{i}
\end{array}  \qquad\qquad\qquad\quad \begin{tikzpicture}
	\begin{pgfonlayer}{nodelayer}
		\node [style=white dot] (0) at (0, -0.5) {};
		\node [style=none] (1) at (0, 0.5) {};
	\end{pgfonlayer}
	\begin{pgfonlayer}{edgelayer}
		\draw [style=swap] (0) to (1.center);
	\end{pgfonlayer}
\end{tikzpicture}
  \ :=\  \sum_i \pointmap{i}
\]
\[
\begin{tikzpicture}
	\begin{pgfonlayer}{nodelayer}
		\node [style=white dot] (0) at (0, -0.5) {};
		\node [style=none] (1) at (1, 0.5) {};
		\node [style=none] (2) at (-1, 0.5) {};
	\end{pgfonlayer}
	\begin{pgfonlayer}{edgelayer}
		\draw [style=swap, in=165, out=-90] (2.center) to (0);
		\draw [style=swap, in=-90, out=15] (0) to (1.center);
	\end{pgfonlayer}
\end{tikzpicture}
 \ := \  \sum_i \pointmap{i} \pointmap{i} \qquad\qquad\qquad\quad \begin{tikzpicture}
	\begin{pgfonlayer}{nodelayer}
		\node [style=white dot] (0) at (0, 0.5) {};
		\node [style=none] (1) at (1, -0.5) {};
		\node [style=none] (2) at (-1, -0.5) {};
	\end{pgfonlayer}
	\begin{pgfonlayer}{edgelayer}
		\draw [style=swap, in=-165, out=90] (2.center) to (0);
		\draw [style=swap, in=90, out=-15] (0) to (1.center);
	\end{pgfonlayer}
\end{tikzpicture}
  \ := \  \sum_i \copointmap{i} \copointmap{i}
\]
The first map is called \textit{matching}. It takes two ONB states as input; if they are equal, it gives the same state as its output. Otherwise, it gives a zero. The second map  gives a uniform probability distribution. The last two maps are the cup and cap in disguise (as can be checked from their ONB expression; they give the state and effect for two classical systems that are perfectly correlated).

\begin{defn}[Definition~8.16 in~\cite{CKbook}] 
The \textit{uniform probability distribution} is defined as the state
\[
_{\frac{1}{D}} \  
\]
where $D$ represents the dimension of the ONB.
\end{defn}

\begin{defn}
\textit{Perfect classical correlations} are defined as the state
\[
_{\frac{1}{D}} \  
\]
where $D$ represents the dimension of the ONB.
\end{defn}

The aforementioned classical maps are special cases of a `spider'.
\begin{defn}[Definition~8.31 in~\cite{CKbook}] 
A \textit{spider} is defined as
	\begin{equation}\label{defspider}
		\input{interpretations/spidernm.tikz} \ \  := \ \  \input{interpretations/spiderdefNEW.tikz}
	\end{equation}
In Dirac notation, the right hand side of the equation corresponds to $\sum_i \ket{\underbrace{i \cdots i}_n} \bra{\underbrace{i \cdots i}_m}$. Spiders can be composed using the following rule.
	\begin{equation}\label{spfuse}
	\input{interpretations/bastards-compose-bis.tikz} 
	\end{equation}
\end{defn}
\noindent A spider can be thought of as a `generalised wire'. A wire has a single input and a single output. A spider can have multiple inputs and outputs. The spider in the definition above has $m$ inputs and $n$ outputs. 

\subsection{Classical-quantum Interaction}

Doubling \textit{classical spiders} yields \textit{quantum spiders}:
\[ 
\input{interpretations/zon2.tikz}\ \ := \ \  \input{interpretations/qspider2s.tikz} 
\]
Key examples of quantum spiders are the Bell state and the Greenberger–Horne–Zeilinger (GHZ) state, respectively given by
\[
\input{interpretations/qspidera.tikz} \ = \ \text{double}\left(\sum_i \pointmap{i}\pointmap{i} \right) , \input{interpretations/qspiderb.tikz} \ = \ \text{double}\left(\sum_i \pointmap{i}\pointmap{i}\pointmap{i} \right)   
\]
In Dirac notation, these states are $\frac{\ket{00} + \ket{11}}{\sqrt{2}}$ and $\frac{\ket{000} + \ket{111}}{\sqrt{2}}$ respectively.

Just as classical spiders are used to represent and reason about classical systems, so quantum spiders do for quantum systems. 
Classical and quantum spiders combine to give rise to a more general form of a spider, called the \textit{bastard spider}~\cite{CKbook}:
\begin{equation}\label{bast}
\input{interpretations/CQMII-3.tikz}
\end{equation}

\begin{remark}
It is pertinent to note that, due to the subtle differences between classical and quantum spiders in diagrams, some extra care is warranted. Classical spiders are represented by thin wires and dots, whereas quantum spiders are depicted using thicker wires and dots. For instance, consider the bastard spider on the left-hand side of Eq. (\ref{bast}). The wires on the left are thick, and therefore quantum, while the wires on the right are thin, indicating they are classical. The dot is also classical.
\end{remark}

Bastard spiders are important for representing and reasoning about interactions between classical and quantum systems. Three important examples in this regard are the \textit{encode}, \textit{measure} and \textit{decoherence} processes.

\begin{defn}
The \textit{encode} process encodes classical data into a quantum system. 
\[ 
\input{interpretations/encode.tikz} \ = \   \sum_i 
\begin{array}{c}
	\pointmap{i} \pointmap{i}\\
	\copointmap{i}
\end{array} \ = \   \sum_i 
\begin{array}{c}
\dpointmap{i}\vspace{1mm}\\
\copointmap{i}
\end{array}
\]
\end{defn}
\noindent In other words, it takes a classical state as an input and gives a quantum state as an output. For instance, a probability distribution, which is a general classical state, is encoded as
\[
\input{interpretations/encode2.tikz} \ = \ \input{interpretations/encode3.tikz} \ = \ \sum_i p^i \pointmap{i} \pointmap{i} \ = \ \sum_i p^i \dpointmap{i} 
\]
\noindent Note that the encode process is actually the copy process in disguise. Measuring is the adjoint of encoding and is therefore matching in disguise.
\begin{defn}
The \textit{measure} process 
\[ 
\input{interpretations/measure.tikz} \ = \  \sum_i 
\begin{array}{c}
 \pointmap{i}\\
	\copointmap{i} 	\copointmap{i}
\end{array} \ = \   \sum_i 
\begin{array}{c}
	\pointmap{i}\vspace{1mm}\\
	\dcopointmap{i}
\end{array}
\] 
measures the state of a quantum system with respect to the basis 
\[
\left\{ \dpointmap{i} \right\}_i
\]
and the outcome (classical data) of the measurement is a probability distribution.
\end{defn}
\noindent For example, consider the general quantum state
\[
\dkpoint{\boldsymbol\rho} \ = \ \sum_{ij} p^{ij} \pointmap{i} \pointmap{j} 
\]
that is measured as follows:
\[
\input{interpretations/measure2.tikz} \ = \ \input{interpretations/measure3.tikz} \ = \ \sum_i p^{ii} \pointmap{i}
\]
where the probabilities $\{p^{ii}\}_i$ are given by the Born rule:
\[
P(i | \boldsymbol\rho) \ := \ p^{ii} \ = \ \input{interpretations/measure4.tikz} \ = \ \begin{tikzpicture}
	\begin{pgfonlayer}{nodelayer}
		\node [style=dcopoint] (6) at (0, 0.75) {$i$};
		\node [style=dkpoint] (10) at (0, -0.75) {$\boldsymbol\rho$};
	\end{pgfonlayer}
	\begin{pgfonlayer}{edgelayer}
		\draw [style=boldedge] (6) to (10);
	\end{pgfonlayer}
\end{tikzpicture}

\]

\begin{defn}[Definition~8.61 in~\cite{CKbook}] 
Decoherence can be defined as a measurement followed by encoding:
\[
\input{interpretations/decoherence1-CQMII.tikz} \ := \ \input{interpretations/decoherence0.tikz} 
\]
\end{defn}
\noindent To illustrate, consider the general quantum state
\[
\dkpoint{\boldsymbol\rho} \ = \ \sum_{ij} p^{ij} \pointmap{i} \pointmap{j} 
\]
that undergoes decoherence:
\[
\input{interpretations/decoherence0b.tikz} \ = \ \input{interpretations/decoherence0c.tikz}  \ = \ \input{interpretations/decoherence0d.tikz} \ = \ \sum_i p^{ii} \pointmap{i} \pointmap{i} \ = \ \sum_i p^{ii} \dpointmap{i} 
\]

\section{Categorical Quantum Mechanics}\label{secCQM}

Armed with all of the diagrammatic tools we have discussed so far, we are now ready to jump into full-blown diagrammatic quantum theory. 

\subsection{Quantum Processes}

First, we describe a process theory that incorporates both quantum and classical maps. 
    \begin{definition}[Definition~8.3 in~\cite{CKbook}] 
   The process theory of \textbf{classical-quantum maps} (abbreviated as \textbf{cq-maps}) comprises finite-dimensional Hilbert spaces and doubled finite-dimensional Hilbert spaces as system-types and the processes are diagrams composed of quantum maps, the encode process, and the measure process:
\[
\input{interpretations/cqmaps.tikz}\]
    \end{definition}
\noindent Requiring the processes in this theory to be causal gives us quantum theory.

\begin{defn}[Definition~8.10 in~\cite{CKbook}]\label{quantumtheorydef}
The theory of \textbf{quantum processes}, which is \textit{quantum theory}, comprises \textbf{cq-maps} 
\[
\input{interpretations/classquantmap1.tikz}\]
that are causal: 
\[ 
\input{interpretations/classquantmapcausality.tikz}
\] 
\end{defn}
\noindent If we restrict this theory to processes with no classical inputs and outputs
\[ 
\input{interpretations/quantummap1.tikz}
\] 
we get the process theory of \textbf{causal quantum maps}.
\noindent Conversely, if we restrict to processes with no quantum inputs and outputs,
\[ 
 \map{f}\ \, :=\ \   \input{interpretations/classicalmap1.tikz}
\] 
we get the process theory of causal classical maps. This theory is also called \textbf{classical processes} or \textbf{stochastic processes}.

\subsection{Quantum Measurement}

Having finally described a quantum process theory, we are now ready to look at the different notions of measurement that show up in standard quantum theory. Previously, we mentioned the simplest form of measurement, given by the process
\begin{equation}\label{meas}
	\input{interpretations/measure.tikz}
 \end{equation}   
defined with respect to a fixed ONB. A more general version of this process is obtained by composition with a unitary.

\begin{defn}
	A \emph{demolition ONB} measurement is given by
\begin{equation}\label{demonb}    
\input{interpretations/dem-onb.tikz}
\end{equation}
where $\widehat U$ is a unitary quantum process.
\end{defn}
\noindent The measure process (\ref{meas}) is a special case of demolition ONB measurement in which the unitary process is the identity. 

The \textit{eigenstates} of ONB measurements are given by doubled ONB states. The ONB measurement (\ref{meas}) has eigenstates $\left\{\begin{tikzpicture}
	\begin{pgfonlayer}{nodelayer}
		\node [style=none] (1) at (0, 1) {};
		\node [style=none] (2) at (0, -0.5) {};
		\node [style=dkpoint] (11) at (0, -0.5) {$i$};
	\end{pgfonlayer}
	\begin{pgfonlayer}{edgelayer}
		\draw [style=boldedge] (2.center) to (1.center);
	\end{pgfonlayer}
\end{tikzpicture}
\right\}_i$, whereas the more general measurement (\ref{demonb}) has eigenstates $\left\{\input{interpretations/copenhagen0d3.tikz}\right\}_i$. The role of the unitary $\widehat{U}$ is to obtain an arbitrary ONB measurement from the ONB measurement (\ref{meas}). Any state that is not an eigenstate of the measurement is called a \textit{superposition} state with respect to that measurement~\cite{CKbook}.

These measurements are called \emph{demolition} measurements because the quantum state is destroyed as a result of these processes. A \emph{non-demolition} measurement is one in which the quantum state is not destroyed. An example of a non-demolition measurement is 
\begin{equation}\label{nondemmeas2}
 \input{interpretations/CQMII-7.tikz} 
\end{equation}

\begin{remark}\label{discarddec}
We get a demolition measurement by discarding the quantum output of a non-demolition measurement.
\[
\input{interpretations/nondem-disc-quantum1.tikz}
\]
Conversely, discarding the classical outcome results in decoherence:
\[
\input{interpretations/CQMII-8.tikz} \ \ \ = \ \  \ \input{interpretations/decoherence1-CQMII.tikz}
\] 
\noindent In other words, a non-demolition measurement followed by discarding the outcome of the measurement has the same effect on the quantum state as decoherence.
\end{remark}

\begin{defn}
 A \textit{non-demolition measurement} is defined as
\begin{equation}\label{nondemmeas}
\input{interpretations/non-dem-map.tikz} 
\end{equation}
where $\widehat U$ is a unitary. The process (\ref{nondemmeas2}) is a non-demolition measurement in which the unitary is the identity.
\end{defn}

What happens to a quantum state after a non-demolition measurement, say the process (\ref{nondemmeas2}), is performed and the classical outcome $i$ is obtained? Diagrammatically, this non-demolition measurement of the quantum state $\boldsymbol{\rho}$ is given by
\[
  \input{interpretations/CQMII-9-1.tikz}
\] 
The measurement results in changing the state $\rho$ to the state $i$. This is usually called a \emph{collapse}:
\[
\dkpoint{\boldsymbol\rho}\ \ \mapsto\ \ \raisebox{1mm}{\dpoint{i}}  
\]

Measurements based on spiders and unitaries only are known as \textit{degenerate}~\cite{coecke2016categorical}. There exists a more general variant of measurement, called a \textit{von Neumann measurement} or \textit{projective measurement}. 

\begin{defn}\cite[p.~23]{coecke2016categorical} 
A (non-demolition) \textit{von Neumann measurement} is defined as a quantum process $\ \input{interpretations/measdiag1.tikz}\ $ that obeys the equation $\ \input{interpretations/diagramsUUorth3.tikz}\ $.
\end{defn}

In other words, making a von Neumann measurement twice is equal to making the measurement once and then copying the outcome of the measurement. This implies that once a von Neumann measurement is made, making the same measurement again will result in the same outcome, and the quantum state will stay the same. This characteristic of von Neumann measurements is commonly known as the \textit{von Neumann projection postulate}. One can verify that measurements defined by processes (\ref{nondemmeas}) and (\ref{nondemmeas2}) obey the projection postulate and are hence von Neumann measurements. If the quantum output of the von Neumann measurement is discarded, a demolition von Neumann measurement is obtained. 

More generally, a demolition quantum measurement can be defined as a quantum process from a quantum input to a classical output.
\begin{definition}
A \textit{demolition positive operator-valued measure (POVM) measurement} is defined as
\begin{equation}\label{nondempovm}
\input{interpretations/demo-povm.tikz}
\end{equation}
where $\Phi$ is a quantum process with pure quantum maps as branches. A \textit{non-demolition POVM measurement} is obtained by purifying (\ref{nondempovm}):
\begin{equation}
\input{interpretations/POVMg.tikz}
\end{equation}
Here $\widehat U$ is an isometry. 
\end{definition}

\begin{remark}
In fact, every non-demolition POVM measurement is obtained by composing one output of an isometry $\widehat U$ with an ONB measurement~\cite{CKbook}:
\[
\input{interpretations/CQMII-11.tikz}
\]
This is known as \textit{Naimark dilation}.
\end{remark}

So far, we have defined measurements and processes/operations that are independent of the earlier measurement outcomes. We can also define processes controlled by measurement outcomes or classical inputs; for instance, we can define controlled unitaries. 

\begin{defn}
	A \textit{controlled unitary} is a quantum-classical process 
	\[
	\input{interpretations/controliso1.tikz}
	\]
satisfying the equations:
	\[
	\input{interpretations/diagramsWW3copy.tikz}\qquad\qquad\quad\input{interpretations/diagramsWW4copy.tikz}
	\]
\end{defn}

\subsection{Mixtures}

There are situations in which we are not certain about which one of a known set of possible quantum processes actually took place. 
In such a case, we can represent the set of processes as a single quantum process controlled by a classical input. This classical input is supplied by a probability distribution to take into account our ignorance.

\begin{defn}\label{statmix}
A \textit{mixture} of quantum processes is defined as
\[
\input{interpretations/CQMII-4.tikz}
\]    
\end{defn}
\noindent In fact, we can represent any mixture of quantum processes like this~\cite{CKbook}:
\[
\sum_{i} p^i \dmap{\Phi^i} \ = \ \input{interpretations/bekan13.tikz}
\]
\noindent where $\{p^i\}_i$ represents a probability distribution.
The components of the mixture can be obtained by plugging ONB states into the classical input~\cite{CKbook}: 
\[
\input{interpretations/bekana.tikz}  \ = \ \dmap{\Phi^i}
\]

\begin{remark}
For a mixture (Definition~\ref{statmix}), it is known that one of the processes from a certain collection occurs, but it is not known which one. The probability distribution supplied to the classical input reflects our lack of knowledge about which process actually took place. On the other hand, a quantum instrument with multiple branches (as described in Definition~\ref{qprocessdef}) is inherently non-deterministic---that is, the non-determinism does not stem from our ignorance.

\end{remark}

\section{The ZX-calculus}\label{secZX}

We have discussed most of the key concepts in quantum theory using our diagrammatic formalism. What is left is a description of \textit{phases} and \textit{complementarity}. Incorporating these aspects of quantum theory requires us to look inside the boxes of the process theory we have described. This granular description of quantum processes involves spiders corresponding to two complementary ONBs and `decorated' with phases. In this section, we restrict the core of our discussion to qubits.

\subsection{Complementarity}

We saw that spiders were described with respect to a fixed ONB. For two complementary bases, we shall need spiders of two colours. 
\[
	\input{interpretations/Zspider.tikz} \quad \quad\quad  \input{interpretations/Xspider.tikz}
\]
We fix the complementary bases $\{\ket{0}, \ket{1}\}$ and $\{\ket{+}, \ket{-}\}$ for the white and gray spiders respectively. This gives us 
\[
\input{interpretations/Zspider.tikz} \ := \  \begin{array}{c}
	\pointmap{0} \cdots \pointmap{0}  \vspace{1mm}\\
	\copointmap{0} \cdots \copointmap{0} 
\end{array} + \begin{array}{c}
\pointmap{1} \cdots \pointmap{1}  \vspace{1mm}\\
\copointmap{1} \cdots \copointmap{1} 
\end{array}  \ = \ \ketbra{0 \cdots 0}{0 \cdots 0} + \ketbra{1 \cdots 1}{1 \cdots 1}
\]
\[
\input{interpretations/Xspider.tikz}  \ := \  \begin{array}{c}
	\pointmap{+} \cdots \pointmap{+}  \vspace{1mm}\\
	\copointmap{+} \cdots \copointmap{+} 
\end{array} + \begin{array}{c}
	\pointmap{-} \cdots \pointmap{-}  \vspace{1mm}\\
	\copointmap{-} \cdots \copointmap{-} 
\end{array}  \ =  \ \ketbra{+ \cdots +}{+ \cdots +} + \ketbra{- \cdots -}{- \cdots -}
\]
Since the chosen bases lie on the Z- and X-axes of the Bloch sphere, these are called Z and X bases; this is where the Z and X of the ZX-calculus come from. The corresponding spiders are called Z and X spiders. The diagrams made of Z and X spiders are called ZX-diagrams.

How do we make sense of complementarity using these two kinds of spiders? Intuitively, complementarity is the idea that if the information is encoded in one basis and measured in a complementary basis, there must be no transfer of information. It can be checked that this is the case for the Z and X spiders:
\begin{equation}\label{compex}
\input{interpretations/complementarity.tikz} \ = \  \input{interpretations/complementarity2.tikz} \ = \ \begin{tikzpicture}
	\begin{pgfonlayer}{nodelayer}
		\node [style=white dot] (0) at (0, -0.75) {};
		\node [style=none] (4) at (0, -1.5) {};
		\node [style=gray dot] (5) at (0, 0.75) {};
		\node [style=none] (7) at (0, 1.5) {};
	\end{pgfonlayer}
	\begin{pgfonlayer}{edgelayer}
		\draw [style=swap] (4.center) to (0);
		\draw [style=swap] (7.center) to (5);
	\end{pgfonlayer}
\end{tikzpicture}

\end{equation}
This is interpreted as follows: encoding in one basis and then measuring in a complementary basis allows for no flow of information.

In the last step of Eq.~(\ref{compex}), the symbol $=$ should be read as equality up to a non-zero global scalar factor. As we shall be mostly concerned with how the processes are composed and not with the exact numbers, we ignore such global scalars in diagrammatic calculations. 

\begin{convention}
     In diagrammatic equations and calculations in what follows, all equalities are up to non-zero global scalars. 
\end{convention}

A more general version of the complementarity of Z and X spiders is given by
\begin{equation}\label{compex2}
\input{interpretations/complementarity4.tikz} 
\end{equation}

\subsection{Phases}

A \textit{decorated spider} is a spider decorated with a phase. A general Z spider with the phase $\alpha$ is given by
\begin{equation}\label{eq:Z-spider-def}
	\input{interpretations/Zsp-a.tikz} \ := \  \begin{array}{c}
		\pointmap{0} \cdots \pointmap{0}  \vspace{1mm}\\
		\copointmap{0} \cdots \copointmap{0} 
	\end{array} + e^{i \alpha} \begin{array}{c}
		\pointmap{1} \cdots \pointmap{1}  \vspace{1mm}\\
		\copointmap{1} \cdots \copointmap{1} 
	\end{array} \ = \  \ketbra{0 \cdots 0}{0 \cdots 0} + e^{i \alpha} \ketbra{1 \cdots 1}{1 \cdots 1}
\end{equation}
For instance, the decorated spider $\begin{tikzpicture}
	\begin{pgfonlayer}{nodelayer}
		\node [style=white phase dot] (0) at (0, -0.75) {$\alpha$};
		\node [style=none] (1) at (0, 0.75) {};
	\end{pgfonlayer}
	\begin{pgfonlayer}{edgelayer}
		\draw (0) to (1.center);
	\end{pgfonlayer}
\end{tikzpicture}
$ corresponds to the state $\ket{0} + e^{i\alpha}\ket{1}$, and its conjugate is $\begin{tikzpicture}
	\begin{pgfonlayer}{nodelayer}
		\node [style=white phase dot] (0) at (0, -0.75) {$-\alpha$};
		\node [style=none] (1) at (0, 0.75) {};
	\end{pgfonlayer}
	\begin{pgfonlayer}{edgelayer}
		\draw (0) to (1.center);
	\end{pgfonlayer}
\end{tikzpicture}
 =   \ket{0} + e^{-i\alpha}\ket{1}$.

The old spider fusion rule (\ref{spfuse}) does not apply to decorated spiders, for which there is the following fusion rule:
\begin{equation}\label{spfuse2}
	\input{interpretations/fuseZ.tikz}
\end{equation}

Roughly speaking, phase is the information in the quantum system that gets lost when a quantum state is measured. This can be verified for decorated spiders using the fusion rule (\ref{spfuse2}):
\[
	\input{interpretations/phase1.tikz} \ = \  	\input{interpretations/phase2.tikz}  \  = \ \begin{tikzpicture}
	\begin{pgfonlayer}{nodelayer}
		\node [style=white dot] (5) at (0, -0.5) {};
		\node [style=none] (7) at (0, 0.75) {};
	\end{pgfonlayer}
	\begin{pgfonlayer}{edgelayer}
		\draw [style=swap] (7.center) to (5);
	\end{pgfonlayer}
\end{tikzpicture}
 
\]

A general X spider with the phase $\alpha$ is given by 
\begin{equation}
	\input{interpretations/Xsp-a.tikz} \  :=\  \begin{array}{c}
		\pointmap{+} \cdots \pointmap{+}  \vspace{1mm}\\
		\copointmap{+} \cdots \copointmap{+} 
	\end{array} + e^{i \alpha}  \begin{array}{c}
		\pointmap{-} \cdots \pointmap{-}  \vspace{1mm}\\
		\copointmap{-} \cdots \copointmap{-} 
	\end{array} \ = \  \ketbra{+ \cdots +}{+ \cdots +} + e^{i \alpha} \ketbra{- \cdots -}{- \cdots -}
\end{equation}
and it has a fusion rule similar to rule (\ref{spfuse2}).

\begin{remark}
Representing spiders with round dots — rather than asymmetric boxes — makes intuitive sense, since spiders exhibit symmetries such as `leg-swapping' and `leg-flipping'.
 
 \[
 	\input{interpretations/Zsp-a2.tikz} \ = \ 	\input{interpretations/Zsp-a.tikz}  \ \ \ \ \ , \ \ \ \ \ \     	\input{interpretations/Xsp-a2.tikz} \ = \ 	\input{interpretations/Xsp-a.tikz} 
 \]
 
 \[
  	\input{interpretations/Zsp-a3.tikz} \ = \ 	\input{interpretations/Zsp-a4.tikz}  \ \ \ \ \ , \ \ \ \ \ \     	\input{interpretations/Xsp-a3.tikz} \ = \ 	\input{interpretations/Xsp-a4.tikz} 
 \]
 
\end{remark}

\subsection{Rules of the ZX-calculus}

We have seen some rules about how spiders of the same colour and different colours interact. Adding few more rules, we get the complete set for the stabiliser fragment of ZX-calculus~\cite{van2020zx}:
\[
\input{interpretations/ZX-rules.tikz}
\]
The letters above the equality signs are shorthand notation for the rule names and stand for spider $(f)$usion, ($\pi$)-commute, $(c)$opy, $(b)$ialgebra, $(h)$adamard,  $(hh)$-cancellation and $(id)$entity. All these rules hold up to  a global scalar factor and for phases $\alpha, \beta \in \{0, \pi/2, \pi, -\pi/2\}$. The Hadamard process is a convenient notation for a frequently-used ZX diagram:
\[
\begin{tikzpicture}
	\begin{pgfonlayer}{nodelayer}
		\node [style=none] (97) at (0, 1.25) {};
		\node [style=none] (98) at (0, -1.25) {};
		\node [style=small hadamard] (134) at (0, 0) {};
	\end{pgfonlayer}
	\begin{pgfonlayer}{edgelayer}
		\draw (98.center) to (97.center);
	\end{pgfonlayer}
\end{tikzpicture}
 \ \ = \ \ \input{interpretations/hadamard2.tikz}
\]
and, as illustrated in the $(h)$ rule above, changes spider colour. The $(h)$ and $(hh)$ rules imply that the above rules also hold for spiders with their colours inverted.  

Complementarity (Eq.~(\ref{compex})) can be derived using the $(f)$, $(id)$, $(b)$, and $(c)$ rules. The decorated-spider version of the general complementarity rule (Eq.~(\ref{compex2})
\begin{equation}\label{spchop}
	\input{interpretations/chop.tikz}
\end{equation}
can be derived using Eq.~(\ref{compex}) and the $(f)$-rule. Another rule we shall use later is derived as follows:
\begin{equation}\label{picopyrule}
	\input{interpretations/picopy1.tikz}
\end{equation}

A reader interested in a more thorough exposition of the ZX-calculus is referred to the comprehensive tutorial~\cite{van2020zx} or the textbook~\cite{KissingerWetering2024Book}; a high-level crash course for professionals can be found in Ref.~\cite{coecke2023basic}. Since 2023, there has been a book on the ZX-calculus targeted to the general public~\cite{CGbook} as well!

It has been a century since the inception of quantum theory, whereas the ZX-calculus was proposed only in 2007~\cite{coecke2008interacting, coecke2011interacting}. When presenting the ZX-calculus as an alternative to the standard Hilbert space formalism, three desirable criteria are typically considered: \textit{universality}, \textit{soundness} and \textit{completeness}. Universality requires that any linear map between qubits be representable as a ZX-diagram. Soundness means that any equality derivable in the ZX-calculus also holds in the Hilbert space formalism. The ZX-calculus is indeed both universal and sound for qubit quantum theory~\cite{coecke2008interacting, coecke2011interacting}. 

\begin{theorem}\label{zxuniversalthm}
	The ZX-calculus is universal and sound for linear maps between qubits.
\end{theorem}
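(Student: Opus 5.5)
The statement has two parts, and I would prove them separately.

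\emph{Soundness.} Fix the interpretation of a ZX-diagram as a linear map: each Z spider with phase $\alpha$ is sent to the matrix of Eq.~(\ref{eq:Z-spider-def}), X spiders are handled analogously, and wires, swaps, cups and caps go to their standard linear maps. Sequential and parallel composition go to matrix product and Kronecker product. This assignment is a strict symmetric monoidal functor, so soundness reduces to checking that every generating rule holds as an equation between linear maps, up to a non-zero scalar. Since interpretation respects composition, any derivation then yields an equality of linear maps. The rules to check are $(f)$, $(\pi)$, $(c)$, $(b)$, $(h)$, $(hh)$ and $(id)$, together with the leg-symmetry equations.

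\begin{itemize}
\item For $(f)$, expand both spiders in the Z basis and use orthonormality: only the all-$0$ and all-$1$ terms survive the contraction, so the phases add.
\item For $(h)$, $(hh)$ and $(id)$, use that the Hadamard matrix exchanges $\{\ket{0},\ket{1}\}$ and $\{\ket{+},\ket{-}\}$, and $H^2=I$.
\item For $(\pi)$ and $(c)$, compute directly on basis states.
\item $(b)$ is a finite $4\times 4$ matrix check.
\end{itemize}

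Rules stated for arbitrary arities follow from the two- and three-legged cases by induction using $(f)$.

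\emph{Universality.} I would argue in three steps.

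\begin{enumerate}
\item By process--state duality (cups and caps), any linear map $\mathbb{C}^{2^m}\to\mathbb{C}^{2^n}$ corresponds to a state on $m+n$ qubits. It therefore suffices to represent every state, and more conveniently every map, via circuits.
\item Every unitary on $k$ qubits decomposes into single-qubit unitaries and CNOTs (standard circuit universality). Every single-qubit unitary is, up to global phase, a product of $Z$ and $X$ rotations $R_Z(\alpha)R_X(\beta)R_Z(\gamma)$ (Euler decomposition), which are one-in-one-out Z and X spiders with phases. CNOT is a Z spider $1\to 2$ connected to an X spider $2\to 1$, checked by direct computation.
\item An arbitrary linear map $L$ is handled by rescaling $L$ to be a contraction and embedding it as a block of a unitary dilation $U$ on one extra qubit, $L = c\,(\bra{0}\otimes I)U(\ket{0}\otimes I)$. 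The states $\ket{0}$ and effects $\bra{0}$ are phase-free X spiders with one leg, up to scalar, so $L$ is a ZX-diagram up to a non-zero scalar. Non-square maps are handled by padding with ancillas and discards-by-effect, and the zero map by an appropriate zero scalar diagram (e.g.\ a Z $\pi$ state plugged into an X $0$ effect composition).
\end{enumerate}

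The main obstacle is the dilation step together with scalar bookkeeping. The paper works up to scalars, but universality should be checked honestly, including the zero map and exact normalisation. I would first try to stay within the "up to non-zero scalar" convention and treat zero separately.
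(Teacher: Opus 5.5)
The paper does not prove this theorem; it cites Coecke--Duncan \cite{coecke2008interacting, coecke2011interacting}. Your outline follows the standard argument given there.

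\emph{Soundness.} You interpret the generators by a strict symmetric monoidal (indeed compact-closed) functor into linear maps and check each rule. This suffices because each rule holds up to a non-zero scalar, and such scalars multiply through composition.

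\emph{Universality.} You rely on exact universality of CNOT together with single-qubit unitaries, Euler-decomposed into Z and X phase spiders.

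The one real difference is how you pass from unitaries to arbitrary maps. The usual route completes your step~1. By process--state duality, a map $\mathbb{C}^{2^m}\to\mathbb{C}^{2^n}$ is a state on $m+n$ qubits. Any non-zero state is, after normalisation, $U\ket{0\cdots 0}$ for some unitary $U$, i.e.\ a circuit applied to phase-free one-legged X spiders. You instead block-encode a rescaled square map into a unitary on one extra qubit (Halmos dilation), and pad non-square maps with $\ket{0}$ ancillas and $\bra{0}$ effects. Both routes work:
\begin{itemize}
\item The state route is uniform in the shape of the map, but must treat $L=0$ separately, since zero cannot be normalised.
\item Your route needs the padding, but covers $L=0$ automatically. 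The interpretation is $c\,(\bra{0}\otimes I)U(\ket{0}\otimes I)=0$ exactly. For instance, with $U=\sigma_x\otimes I$ the ancilla part fuses to a legless X spider with phase $\pi$, whose value is $1+e^{i\pi}=0$. So no separate treatment of zero is needed.
\end{itemize}

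One concrete slip: your suggested zero scalar is wrong. With the paper's conventions, the Z $\pi$ state is $\ket{0}-\ket{1}$ and the phase-free X effect is $\bra{+}+\bra{-}=\sqrt{2}\bra{0}$. Their composite is therefore $\sqrt{2}$, not $0$. A correct zero scalar is the legless Z spider with phase $\pi$, for example the Z $\pi$ state plugged into a phase-free Z effect.

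The ``exact normalisation'' obstacle you flag also closes easily, if you want universality on the nose rather than up to non-zero scalars (which is all the paper's convention requires). Three scalar diagrams suffice:
\begin{itemize}
\item the legless Z spider with phase $\alpha$ equals $1+e^{i\alpha}$, whose modulus ranges over $[0,2]$;
\item the Z $\alpha$ state composed with the X $\pi$ effect equals $\sqrt{2}\,e^{i\alpha}$;
\item a phase-free Z spider and X spider joined by three wires give $1/\sqrt{2}$.
\end{itemize}
Products of these give every complex number. Hence the rescaling constant $c$, the CNOT and $\ket{0}$ normalisation factors, and the Euler global phases can all be absorbed exactly.

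Finally, be explicit that universality needs arbitrary real phases. The rule set displayed in the paper is quoted for phases in $\{0,\pm\pi/2,\pi\}$, and with phases restricted to that set only stabiliser maps are expressible. Your soundness checks of $(f)$, $(\pi)$ and $(h)$ should therefore be done for general $\alpha$, which is routine.
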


Completeness of the ZX-calculus means that its rules are sufficient to derive any equality that can be derived using the Hilbert space formalism in qubit quantum mechanics. Proving completeness was a non-trivial task and it was achieved in multiple stages, starting with a small fragment of quantum theory and gradually increasing its size~\cite{backens2014zx, backens2014zxb, hadzihasanovic2015diagrammatic, jeandel2018complete}. This culminated in the proof of completeness of the ZX-calculus for qubit quantum theory in 2017~\cite{hadzihasanovic2018two, jeandel2018diagrammatic, vilmart2018near}. 

\begin{theorem}\label{zxcompletethm}
	The ZX-calculus is complete for linear maps between qubits. 
\end{theorem}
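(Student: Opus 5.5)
The plan is to follow the route taken in the literature cited above: reduce completeness for all qubit linear maps to an already-known complete calculus, by translating back and forth and checking that the translation respects the rules. Throughout, the convention that equalities hold up to non-zero global scalars is kept; a scalar-exact version would require adding explicit scalar rules at the end.

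\textbf{Step 1: fix the rule set.} The stabiliser rules listed above are not enough for arbitrary phases. I would extend them with the extra axioms of~\cite{jeandel2018diagrammatic, hadzihasanovic2018two}. These are essentially a generalised supplementarity/Euler-decomposition rule and a rule relating sums of phases through the W-type or triangle gadget, now allowing arbitrary $\alpha,\beta\in[0,2\pi)$.

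\textbf{Step 2: choose a complete target calculus.} The target is the ZW-calculus, or one of its variants over a commutative ring. Hadzihasanovic proved it complete via a normal form: every diagram rewrites to a unique form encoding the coefficients of the represented tensor, which is possible because ZW-spiders directly express multilinear sums.

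\textbf{Step 3: build two interpretations.} I would define $F:\mathrm{ZX}\to\mathrm{ZW}$ and $G:\mathrm{ZW}\to\mathrm{ZX}$, both preserving the standard Hilbert-space semantics of Theorem~\ref{zxuniversalthm}. Three properties must then be shown.
\begin{enumerate}
\item For every ZW rule $l=r$, the ZX rules prove $G(l)=G(r)$.
\item For every ZX generator $D$, the ZX rules prove $G(F(D))=D$.
\item $F$ is sound, which is immediate from the semantics.
\end{enumerate}

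\textbf{Step 4: conclude completeness.} Suppose $\llbracket D_1\rrbracket=\llbracket D_2\rrbracket$. Then $\llbracket F(D_1)\rrbracket=\llbracket F(D_2)\rrbracket$, so ZW completeness gives a ZW derivation $F(D_1)=F(D_2)$. Applying $G$ turns it into a ZX derivation $GF(D_1)=GF(D_2)$, and the second property of Step 3 then yields $D_1=GF(D_1)=GF(D_2)=D_2$ in ZX.

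\textbf{Main obstacle.} The hard part is the first property of Step 3: deriving the images of the ZW rules inside ZX. The ZW rules encode addition of amplitudes, such as the W-node's behaviour and phase addition through W. In ZX these become nonlinear trigonometric identities between phases. I would first derive a ZX ``triangle'' gadget, then prove its algebraic laws (bialgebra with Z, and the additive rule $\alpha+\beta$ realised through the triangle) using the new generalised supplementarity axiom. The remaining ZW rules should then follow mechanically.
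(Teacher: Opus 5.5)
The thesis gives no proof of this theorem, only the citations \cite{hadzihasanovic2018two, jeandel2018diagrammatic, vilmart2018near}. Your plan matches the route taken in those works: semantics-preserving translations $F$, $G$ between ZX and Hadzihasanovic's complete ZW-calculus, with $G\circ F$ provably the identity on ZX generators and the $G$-images of all ZW axioms derivable in ZX. So your strategy is correct.

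One thing to fix is in your ``main obstacle'': the amplitude-addition law for the triangle has to come from the genuinely non-linear axiom you add in Step~1 (an addition rule, or a general-angle Euler decomposition), not from (generalised) supplementarity. Rules linear in the phases, with constants in $\mathbb{Q}\pi$, stay sound if every angle is reinterpreted through a non-identity additive automorphism of $\mathbb{R}$ fixing $\mathbb{Q}\pi$, whereas the addition law does not. This is why Jeandel, Perdrix and Vilmart had to add a non-linear axiom.
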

\noindent In other words, the ZX-calculus is as expressive as the Hilbert space formalism and hence is a viable substitute of the latter for qubit quantum theory. Research aimed at proving completeness of ZX and similar graphical calculi (such as ZW~\cite{de2024minimality}, ZH~\cite{backens2018zh} and ZXW~\cite{poor2023completeness, wang2023completeness}) for bigger fragments of quantum theory is in full swing. In fact, very recently, there has been a presentation of \textit{finite-dimensional ZX-calculus} and its completeness for finite-dimensional Hilbert spaces~\cite{poor2024zx}.

The ZX-calculus has been extensively employed in quantum computing and technology research~\cite{KissingerWetering2024Book}.\footnote{As of writing this chapter, the website \url{https://zxcalculus.com/publications.html} lists 300+ papers related to the ZX-calculus.} Some examples of research topics include quantum circuit optimisation~\cite{duncan2020graph, de2020fast, kissinger2020reducing, van2024optimal}, quantum error correction~\cite{de2020zx, kissinger2022phase, huang2023graphical, cowtan2024css}, measurement-based quantum computing~\cite{duncan2010rewriting, backens2021there, mcelvanney2022complete}, and fusion-based quantum computing~\cite{bombin2023unifying, pankovich2023flexible}.

\pgfplotsset{compat=1.17}
\usetikzlibrary{trees}
\usetikzlibrary{topaths}
\usetikzlibrary{decorations.pathmorphing}
\usetikzlibrary{fadings}
\usetikzlibrary{decorations.pathreplacing}
\usetikzlibrary{decorations.markings}
\usetikzlibrary{matrix,backgrounds,folding}
\usetikzlibrary{chains,scopes,positioning,fit}
\usetikzlibrary{arrows,shadows}
\usetikzlibrary{calc} 
\usetikzlibrary{chains}
\usetikzlibrary{shapes,shapes.geometric,shapes.misc}
\usetikzlibrary{circuits.ee.IEC}
\usetikzlibrary{decorations.markings}

\tikzstyle{every picture}=[baseline=-0.5em,scale=1]

\pgfdeclarelayer{edgelayer}
\pgfdeclarelayer{nodelayer}
\pgfsetlayers{background,edgelayer,nodelayer,main}
\tikzstyle{none}=[inner sep=0mm]
\tikzstyle{every loop}=[]
\tikzstyle{mark coordinate}=[inner sep=0pt,outer sep=0pt,minimum size=3pt,fill=black,circle]

\tikzstyle{smalldotb}=[fill=black, inner sep=0mm,minimum width=1mm,minimum height=1mm,draw,shape=circle]
\tikzstyle{H}=[-, style=dashed]

\chapter{\label{chapconstructors}Constructors, Processes, and Quantum Theory}

\epigraph{`How often have I said to you that when you have eliminated the impossible, whatever remains, \textit{however improbable}, must be the truth?'}  
{Sherlock Holmes to Dr Watson, \\ The Sign of Four~\cite[p.~93]{doyle2010sign}}

\textit{The novel contributions of this chapter are presented in Sections~\ref{secconsproc}, ~\ref{seclocquant} and~\ref{secconstCQM}. Section~\ref{secconsproc} is adapted from the publication~\cite{Gogioso2023Constructors}, based on work carried out in collaboration with Stefano Gogioso, Vincent Wang-Ma{\'s}cianica, Carlo Maria Scandolo, and Bob Coecke. The author of this thesis is the third author of the publication~\cite{Gogioso2023Constructors}. Only the parts of this work to which the author directly contributed are described here. Sections~\ref{seclocquant} and~\ref{secconstCQM} are unpublished and represent original work carried out solely by the author of this thesis.}

\section{Introduction}

Constructor theory~\cite{deutsch2013constructor, deutsch2015constructor, marletto2022science} is a framework to formulate fundamental scientific theories in terms of the \textit{possibility} and \textit{impossibility} of \textit{tasks}. 
It particularly aims to provide an alternative to the prevailing approach in physics that characterises physical phenomena in terms of initial conditions and dynamical laws. While the prevailing approach formulates theories by specifying a dichotomy between \textit{what happens} and \textit{what does not happen}, constructor theory seeks to do so by specifying a dichotomy between \textit{what can happen} (\emph{i.e.}, what is possible) and \textit{what cannot happen} (\emph{i.e.}, what is impossible)~\cite{deutsch2015constructor}.\footnote{The eponymous popular science book~\cite{marletto2022science} dubs constructor theory `the science of can and can't'.}

In constructor theory, tasks are defined as \textit{transformations} between \textit{systems}. Auxiliary systems may also be required as inputs by these transformations. For instance, consider the task of transforming black shoes into brown shoes. This task of painting shoes requires an amount of brown paint in addition to black shoes that are to be painted. The paint plays the role of a catalyst.

Tasks transform \textit{states} of systems into other states, and \textit{attributes} of systems, like the blackness of a shoe, into other attributes. In the physical world, the amount of brown paint is limited and the aforementioned task can be performed only a finite number of times before one runs out of brown paint. Ideally, if an unlimited amount of brown paint were available, the painting task could be performed as many times as needed. Such inexhaustible catalysts are called \textit{constructors}. 

A task is called \textit{possible} if there is a constructor that makes the task performable an arbitrary number of times. Otherwise, the task is called \textit{impossible}. Even though constructors and tasks are abstract, they offer explanatory value. The aim of constructor theory is to characterise physical theories in terms of which tasks are possible and which are impossible. Constructor theory has been described as a deeper (meta)theory than other theories of physics.\footnote{In an article in Quanta Magazine, it was called `a master theory---a set of ideas so fundamental that all other theories would spring from it.'~\cite{quantagefter}} In the seminal article by David Deutsch, it is pitched as `the ultimate generalisation of the theory of computation'~\cite[p.~12]{deutsch2013constructor} and also a theory that could `underlie all other theories including relativity and quantum theory'~\cite[p.~5]{deutsch2013constructor}. The relationship between constructor theory and other theories (called \textit{subsidiary theories} by Deutsch~\cite{deutsch2013constructor}) was described as follows:

``Other theories specify what substrates and tasks exist, and provide the multiplication tables for serial and parallel composition of tasks, and state that some of the tasks are impossible, and explain why. Constructor theory provides a unifying formalism in which other theories can do this, and its principles constrain their laws, and in particular, require certain types of task to be possible.''~\cite[p.~5]{deutsch2013constructor}

Being a metatheory, constructor theory is implementation-agnostic. That is, the user of constructor theory is free to pick a formal system of mathematics of their liking as a concrete language to interpret it. One such language is that of process theories. We contend that process theories represent a good choice of mathematical language to formally incarnate constructor theory because they are especially suitable for describing the composition of processes in space-time. Furthermore, they are expressible as string diagrams, enabling the user to work at an abstraction-level of their choosing. 

This chapter is organised as follows. It comprises three main sections, briefly summarised below. 

In Section~\ref{secconsproc}, the ideas and jargon of constructor theory are formally interpreted using the string-diagrammatic language of process theories. The aim is to use string diagrams to bridge work between the research areas of constructor theory~\cite{deutsch2013constructor, deutsch2015constructor, marletto2015constructor, marletto2016constructor, marletto2016constructorb} and process theories~\cite{coecke2010universe, coecke2011categories, SelingerSurvey, coecke2013alternative, CKbook}. 
The target is to provide a rigorous and intuitive mathematical language for constructor theory, which may help constructor theorists develop, express and communicate their ideas. This work is also intended to invite constructor theorists to the research area of process theories---potentially an extremely fruitful arena within which the ramifications of constructor theory can be explored. 

In Section~\ref{seclocquant}, it is argued that a constructor-theoretic formulation of non-relativistic quantum theory is afflicted with a fundamental problem: there is an inconsistency between the restrictions imposed by principles of locality and composition. Unless there is a fully compositional and local formulation of quantum theory (which, to the best of our knowledge, does not exist), either the principle of locality or the composition principle must be discarded from constructor theory. 

Finally, in Section~\ref{secconstCQM}, it is demonstrated via examples that categorical quantum mechanics (CQM) is a viable constructor theory of non-relativistic quantum physics---one that is completely compositional but non-local. 

Section~\ref{sumconst} concludes the chapter.

\section{Constructor Theory as a Process Theory}\label{secconsproc}

This section is adapted from our publication~\cite{Gogioso2023Constructors}. The expository presentation of this section is intended for those, particularly constructor theorists, who are not familiar with process theories. But it serves another purpose as well: it offers a dictionary to understand the mathematics of constructor theory, transliterated into diagrams with the fewest possible interpretational choices and without any bells and whistles.

\subsection{Conceivable Tasks}

From a constructor-theoretic point of view, a physical theory is characterised by answering the question `which \textit{tasks} can be performed within this physical theory?'. There are two notions of tasks in constructor theory: the abstract \textit{conceivable tasks}, and the concrete \textit{possible tasks}. Conceivable tasks are independent of particular scientific theories. They provide a formal setting for formulating principles or deriving constraints. Possible tasks, on the other hand, are dependent on the particular theory under discussion and are induced by constructors physically available to implement the tasks.

According to the seminal paper on constructor theory~\cite{deutsch2013constructor}, it is required of conceivable tasks that they be composable in parallel and sequence (series). 
In other words, conceivable tasks form a symmetric monoidal category (SMC). In the same paper, Deutsch used relations between sets to model tasks in constructor theory. The literature of constructor theory~\cite{deutsch2013constructor, deutsch2015constructor, marletto2015constructor, marletto2016constructor, marletto2016constructorb} has followed this modelling choice ever since. In our formalisation of constructor theory as a process theory, we also stick to this choice.

\begin{remark}
In this chapter, monoidal categories are taken to be \textit{strict}.\footnote{Strict monoidal categories are defined in Definition~\ref{moncat}.} Particularly, it is assumed that in a monoidal category $\smc{C}$, the objects obj($\smc{C}$) form a strict monoid. For SMC $\Rel$, this means a choice of a singleton set $1 := \{*\}$ that acts as a strict unit for the Cartesian product, \emph{i.e.,} $X \times 1 = X = 1 \times X$.\footnote{SMCs are defined in Definition~\ref{symmoncat}.}

This also implies that we can write tuples without worrying about using parentheses or nesting. For instance, we can write triples as $X \times Y \times Z = \suchthat{(x,y,z)}{x \in X, y \in Y, z \in Z}$.
It must be noted that strictness does not apply to symmetry isomorphisms; \emph{i.e.,} for symmetry isomorphisms, we have $X \times Y \cong Y \times X$ but this does not imply $X \times Y = Y \times X$
\end{remark}

\begin{definition}[Definition~4.97 in~\cite{CKbook}]
	A dagger symmetric monoidal category ($\dagger$-SMC) is a symmetric monoidal category with a \textit{dagger functor} $\dagger$ that 
\begin{itemize}
\item does not alter objects: $A^{\dagger} \coloneqq A$
\item reverses morphisms: $(f : A \rightarrow B)^{\dagger} \coloneqq f^{\dagger} : B \rightarrow A$
\item is involutive: $(f^{\dagger})^{\dagger} = f$
\item and respects the symmetric monoidal category structure:
\[
(g \circ f)^{\dagger} = f^{\dagger} \circ g^{\dagger} \quad \quad \quad (f \otimes g)^{\dagger} = f^{\dagger} \otimes g^{\dagger }  \quad \quad \quad \sigma^{\dagger}_{A, B} = \sigma_{B, A} \ 
\]
\end{itemize}
\end{definition}


In this work, the theory of \textit{conceivable tasks} is taken to be $\Rel$, which is the $\dagger$-SMC of sets and relations. The components of this theory are described below.

A task or relation $\task{A} \subseteq X \times Y$ is denoted by $\task{A}: X \rightarrow Y$ where sets $X$ and $Y$ comprise legitimate input and output states respectively of the task. In order to avoid confusion, the pair or tuple notation is used for pairs or tuples of a Cartesian product, and the \textit{maplet} notation is used for pairs of domain and codomain elements in a relation.
\[
x \mapsto y \;\;:\equiv\;\; (x,y)
\hspace{2.5cm}
x \tmapsto{A} y \;\;:\equiv\;\; (x,y) \in \task{A}
\]
The $\task{A}$ symbol is omitted from $\tmapsto{A}$ when there is no ambiguity and the context is clear.
Diagrammatically, this task can be represented as
\[
\input{constructors/task.tikz}
\]
The arrow shape of the box represents the direction of the task from input to output.

As mentioned before, conceivable tasks can be composed in sequence and parallel. 

The \textit{sequential composition} of tasks $\task{A} : X \rightarrow Y$ and $\task{B} : Y \rightarrow Z$, denoted by $\task{B} 
\circ \task{A} : X \rightarrow Z$, is defined as $
\task{B} \circ \task{A}
:= \suchthat{x\mapsto z}{\exists y \in Y.\, x \stackrel{\task{A}}{\mapsto} y \text{ and } y \stackrel{\task{B}}{\mapsto} z}$. 
Here task $\task{A}$ happens first, followed by $\task{B}$. Diagrammatically, this sequential composition is represented as
\[
\input{constructors/seqcomp.tikz}
\]

Using two sets of states $X$ and $Y$, a composite set of states can be obtained using the Cartesian product $X \times Y$, as $
X \times Y
:= \suchthat{(x,y)}{x \in X \text{ and }y \in Y}$. The \textit{parallel composition} of two tasks $\task{A}: X \rightarrow Y$ and $\task{B}: Z \rightarrow W$, denoted by $\task{A} \times \task{B}: X \times Z \rightarrow Y \times W$, is defined as $\task{A} \times \task{B}
:= \suchthat{ (x,z) \mapsto (y,w) }{ x \tmapsto{A} y \text{ and } z \tmapsto{B} w }$. 
Diagrammatically, the parallel composition is given by
\[
\input{constructors/parcomp.tikz}
\]

The \textit{transpose} of a task $\task{A} : X \rightarrow Y$, denoted by $\task{A}^{\dagger} : Y \rightarrow X$, is defined as $\task{A}^\dagger
:= \suchthat{ y \mapsto x}{x \tmapsto{A} y }$
and diagrammatically represented as
\[
\input{constructors/transpose.tikz}
\]

\textit{Symmetry isomorphisms}, which are also known as \textit{swaps}, are denoted by $\sigma_{X,Y}: X \times Y \stackrel{\cong}{\rightarrow} Y \times X$ and defined as $\sigma_{X,Y} := \suchthat{ (x,y) \mapsto (y,x) }{ x \in X \text{ and } y \in Y }$.
Diagrammatically, swaps are represented as 
\[
\input{constructors/syms.tikz}
\]
Swaps are a structural feature of the category and have the following properties. 
\[
\input{constructors/symeqs.tikz}
\]
Thanks to swaps and their nice properties shown above, relations can be composed into acyclic networks, in which outputs of relations can be connected to inputs of other relations.

\begin{remark}
 The sets $X$ and $Y$ are taken to be distinct for the sake of generality. Restricting these sets to be always equal means restricting the theory of conceivable tasks to be the $\dagger$-SMC $\EndoRel$ of sets and endo-relations $R: X \rightarrow X$, which is a sub-$\dagger$-SMC of $\Rel$.
\end{remark}

The \textit{copy map}, denoted as $\delta_X : X \rightarrow X \times X$, is defined as follows on a set $X$:
\begin{align*}
	\delta_X &:= \suchthat{x \mapsto (x,x)}{ x \in X}\\
\end{align*}	
Diagrammatically, the copy map is given by
\[
\input{constructors/copy.tikz}
\]
The \textit{discarding map}, denoted as $\epsilon_X : X \rightarrow 1$, is defined as follows on a set $X$:
\begin{align*}	
	\epsilon_X &:= \suchthat{x \mapsto *}{ x \in X}
\end{align*}
diagrammatically represented as
\[
\input{constructors/del.tikz}
\]

\noindent Copy and discarding maps have some useful properties. Copies are indistinguishable under swaps
\[
\scalebox{0.8}{$
	\input{constructors/copydeleqs1.tikz}
	$}
\]
and under repeated copying
\[
\scalebox{0.8}{$
	\input{constructors/copydeleqs2.tikz}
	$}
\]
Moreover, copying and then discarding one of the copies results in the identity:
\[
\scalebox{0.8}{$
	\input{constructors/copydeleqs3.tikz}
	$}
\]

The transpose $\delta_X^\dagger : X \times X \rightarrow X$ of the copy map is a partial function that 
gives the common value of its inputs as the output when the inputs are equal, and is not defined otherwise. 
\[
\delta_X^\dagger := \suchthat{(x,x) \mapsto x}{ x \in X}
\]
It is called the \textit{match map}. Diagrammatically, it is represented as
\[
\input{constructors/match.tikz}
\]

It is important to distinguish between states and attributes: considering the set $X$, its elements $x \in X$ are referred to as \textit{states} while its subsets $S \subseteq X$ are referred to as \textit{attributes.}.

Relations $S: 1 \rightarrow X$ 
can be identified with all possible \textit{attributes} of states in $X$, \textit{i.e.,} with all possible subsets $S \subseteq X$: $S \cong \suchthat{* \mapsto x}{ x \in S}$.

In diagrams, states and attributes are represented by the same notation. This is because states $x \in X$ can be identified with singleton sets $\{x\}\subseteq X$. Thus, states or attributes are diagrammatically given by
\[
\input{constructors/attr.tikz}
\]

The transpose $\epsilon_X^{\dagger}: 1 \rightarrow X$ of the discarding map is the \textit{trivial attribute}, corresponding to the subset $X \subseteq X$. It is defined as $\eta_X := \epsilon_X^\dagger = \suchthat{* \mapsto x}{ x \in X}$ and is represented diagrammatically as
\[
\input{constructors/unit.tikz}
\]

Tasks can be conditioned to specific input states using attributes.

\begin{definition}
Let $\task{A}: X \times Z \rightarrow Y$ be a task and let $S \subseteq Z$ be an attribute on states in $Z$. The task obtained by forgetting all information about the $Z$ input of task $\task{A}$ except the fact that the input state has attribute $S$ is the following \textit{pre-conditioned task}.
\[    
\substack{
	\input{constructors/preconditioned.tikz}\\
	\ \\
	\\
		\task{A} \circ (\id{X} \times S) \ \ 
}
= \suchthat{x \mapsto y}{\exists z \in S.\, (x,z) \tmapsto{A} y}
\]
As a special case, the $Z$ input can be discarded completely. This is obtained by pre-conditioning against the trivial attribute $\eta_Z$, as follows:
\[
\substack{
	\input{constructors/prediscard.tikz}\\
	\ \\
	\\
		\task{A} \circ (\id{X} \times \eta_Z)\ \ 
}
= \suchthat{x \mapsto y}{\exists z \in Z.\, (x,z) \tmapsto{A} y}
\]
\end{definition}

The relations $X \rightarrow 1$ are exactly the transposes $S^{\dagger}: X \rightarrow 1$ of the attributes $S: 1 \rightarrow X$. Explicitly, they are the constant partial functions with the attribute $S$ as their domain: $S^\dagger := \suchthat{ x \mapsto * }{ x \in S }$. The transposes of attributes are called \textit{tests}. Tasks can be conditioned to specific output states using tests.

\begin{definition}
Let $\task{A}: X \rightarrow Y \times Z$ be a task and let $S \subseteq Z$ be an attribute on states in $Z$. The task obtained by forgetting all information about the $Z$ output of task $\task{A}$ except the fact that the output state has attribute $S$ is the following \textit{post-conditioned task}.
\[
\substack{
	\input{constructors/postconditioned.tikz}\\
	\ \\
	\\
		(\id{Y} \times S^\dagger) \circ \task{A} \ \ 
}
= \suchthat{x \mapsto y}{\exists z \in S.\, x \tmapsto{A} (y, z)}
\]
As a special case, the $Z$ output can be discarded completely. This is obtained by post-conditioning against the trivial attribute $\eta_Z$, as follows:
\[
\substack{
	\input{constructors/postdiscard.tikz}\\
	\ \\
	\\
		(\id{Y} \times \epsilon_Z) \circ \task{A} \ \ 
}
= \suchthat{x \mapsto y}{\exists z \in Z.\, x \tmapsto{A} (y, z)}
\]
\end{definition}

\begin{remark}
A task $\task{A}: X \times Z \rightarrow Y \times W$ can simultaneously be pre-conditioned against an attribute $P \subseteq Z$ and post-conditioned against an attribute  $Q \subseteq W$ as follows:
\[
\substack{
	\input{constructors/prepost.tikz}\\ 
	\ \\
	\\
	(\id{Y} \times Q^\dagger) \circ \task{A} \circ (\id{X} \times P) \ \ 	
}
= \suchthat{x \mapsto y}{\exists z \in P, w \in Q.\, (x, z) \tmapsto{A} (y, w)}
\]
\end{remark}

\subsection{Possible Tasks}

As mentioned earlier, conceivable tasks are theory-independent whereas possible tasks are theory-dependent. A choice of substrates within a theory of processes is needed to determine possible tasks. 

\begin{defn}\label{chsub}
	A \textit{choice of substrates}, represented as $\left( \smc{C}, \Sigma, \Gamma \right)$, consists of the following components:
	\begin{enumerate}
		\item  A reference \textit{theory of processes}, in the form of a strict SMC $\smc{C}$ with monoidal product $\otimes$ and unit $I$.\footnote{For instance, this could be the theory of finite-dimensional quantum systems and unitary transformations. However, as far as we know, the literature on constructor theory only considers $\Rel$, in which the tensor product $\otimes$ is the Cartesian product. In contrast, quantum theory requires the tensor product to be the Kronecker product of Hilbert spaces.} This specifies the theory of conceivable tasks.
		\item A choice of \textit{substrates}, in the form of a subset $\Sigma \subseteq \obj{\smc{C}}$ of systems in the theory of processes. 
		\item A choice of \textit{sets of substrate states}, in the form of a family $\Gamma = \left( \Gamma_{\substr{H}} \right)_{\substr{H} \in \Sigma}$ where $\Gamma_\substr{H} \subseteq \states{\smc{C}}{\substr{H}}$ is a set of states in $\smc{C}$ for each substrate $\substr{H} \in \Sigma$.
	\end{enumerate}
	The choice of substrates is required to be closed under parallel composition: $I \in \Sigma$ and  $\substr{H} \otimes \substr{K} \in \Sigma$ for all $\substr{H}, \substr{K} \in \Sigma$. Moreover, the set of substrate states is required to respect parallel composition of substrates: $\Gamma_I = 1$ and $\Gamma_{\substr{H}\otimes\substr{K}} = \Gamma_\substr{H} \times \Gamma_\substr{K}$ for all $\substr{H}, \substr{K} \in \Sigma$.\footnote{This requirement states that the state of a combined system is the ordered set of the subsystem states. This requirement on substrate states to be `Cartesian' is called `the principle of locality'~\cite{deutsch2013constructor,deutsch2015constructor} in the literature of constructor theory. This requirement ensures that there is explicitly no entanglement arising out of parallel composition of substrates.} 
\end{defn}

For two substrates $\substr{H}, \substr{K} \in \Sigma$, considering the tasks $\Gamma_{\substr{H}} \rightarrow \Gamma_{\substr{K}}$, one is interested in the question `which of these tasks are possible within the given theory of processes?'. A task is possible if there exists a \textit{constructor} that can enable the task to be performed. To elaborate and make this statement precise, we need the following definitions.

\begin{definition}
        Let $\left( \smc{C}, \Sigma, \Gamma \right)$ be a choice of substrates and consider two substrates $\substr{H}, \substr{K} \in \Sigma$. A process $f: \substr{H} \rightarrow \substr{K}$
        that maps states in $\Gamma_\substr{H}$ to states in $\Gamma_\substr{K}$ is called a
         \textit{task-inducing} process:
\[
\forall \rho \in \Gamma_\substr{H}.\;
f(\rho) \in \Gamma_\substr{K}
\]
The task \textit{induced by $f$} is denoted by $\indtask{f}$:
\[
\indtask{f} := \suchthat{\rho \mapsto f(\rho)}{\rho \in \Gamma_\substr{H}}
\]
\end{definition}

\begin{defn}\label{defpossible}	
	Let $\left( \smc{C}, \Sigma, \Gamma \right)$ be a choice of substrates and consider a task $\task{A}: \Gamma_\substr{H} \rightarrow \Gamma_\substr{K}$. The task $\task{A}$ is \textit{possible} if there are:
	\begin{itemize}
		\item[(i)] a substrate \substr{C} (acting as a \textit{constructor} for the task),
		\item[(ii)] an attribute $P \subseteq \Gamma_\substr{C}$ (singling out the relevant constructor states), and
		\item[(iii)] a task-inducing process $f: \substr{H} \otimes \substr{C} \rightarrow \substr{K} \otimes \substr{C}$ (actually performing the task)
	\end{itemize}
	such that the following two conditions are satisfied:
	\begin{enumerate}
		\item The task $\task{A}$ is obtained from the induced task $\indtask{f}$ by requiring that the input constructor state has attribute $P$ and discarding the constructor output:
\begin{align*}
	\task{A} &=
	\substack{
		\input{constructors/def28.tikz}\\
		\ \\
		\\
		(\id{\Gamma_\substr{K}} \times \epsilon_{\Gamma_{\substr{C}}})
		\circ \indtask{f}
		\circ (\id{\Gamma_\substr{H}} \times P)
	}= \suchthat{
		\rho \mapsto \rho'
	}{
		\exists \gamma \in P, \gamma' \in \Gamma_\substr{C}.\,
		f(\rho \otimes \gamma) = \rho' \otimes \gamma'
	}
\end{align*}
		\item The attribute $P$ is preserved by the induced task $\indtask{f}$. While a particular constructor state $\gamma \in P$ may be modified to become $\gamma'$ by the underlying process of the induced task $\indtask{f}$, $\gamma'$ remains a constructor state for the same induced task $\indtask{f}$, \emph{i.e.} $\gamma' \in P$. In $\Rel$, this constraint is equivalently expressed as the induced task $\indtask{f}$ sending the set of constructors $P$ to a subset of itself, regardless of the input and output on the substrates $\substr{H},\substr{K}$:
\[
\substack{
	\input{constructors/def282.tikz}\\
	\ \\
	\\
	(\epsilon_{\Gamma_\substr{K}} \times \id{\Gamma_{\substr{C}}})
	\circ \indtask{f}
	\circ (\eta_{\Gamma_\substr{H}} \times P) \ \ \subseteq \ \ P
}
\]
	\end{enumerate}
 The set of possible tasks under the given choice of substrates is denoted by $\possibletasks{\left( \smc{C}, \Sigma, \Gamma \right)}$.
\end{defn}

We give the main result of this section in the following proposition. The result shows that for a given choice of substrates, the set of possible tasks forms a sub-SMC of $\Rel$, the theory of conceivable tasks. In other words, the set of possible tasks is closed under composition in arbitrary (acyclic) networks.

\begin{proposition}
	For a given choice of substrates $\left( \smc{C}, \Sigma, \Gamma \right)$, the set of possible tasks $\possibletasks{\left( \smc{C}, \Sigma, \Gamma \right)}$ forms a sub-SMC of $\Rel$.
\end{proposition}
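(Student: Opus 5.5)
We have to show that the objects $\Gamma_\substr{H}$ for $\substr{H}\in\Sigma$ and the possible tasks between them form a sub-SMC of $\Rel$. Closure of $\Sigma$ and $\Gamma$ under parallel composition ($\Gamma_I = 1$, $\Gamma_{\substr{H}\otimes\substr{K}} = \Gamma_\substr{H}\times\Gamma_\substr{K}$) already makes the objects a strict sub-monoid of $\obj{\Rel}$. It remains to check that the possible tasks contain the identities and swaps and are closed under $\circ$ and $\times$.

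\textbf{Identities and swaps.} Take the trivial constructor $\substr{C}=I$, so $\Gamma_I=1$, with $P=1$. For identities take $f=\id{\substr{H}}$. For swaps take $f=\sigma_{\substr{H},\substr{K}}$ in $\smc{C}$. By strictness $f\otimes \id{I} = f$, and $f$ is task-inducing because $\sigma(\rho\otimes\kappa)=\kappa\otimes\rho$ and $\Gamma$ is Cartesian. The induced task is then exactly $\id{\Gamma_\substr{H}}$, respectively $\sigma_{\Gamma_\substr{H},\Gamma_\substr{K}}$. Condition 2 holds trivially because $P=1$.

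\textbf{Sequential composition.} Suppose $\task{A}:\Gamma_\substr{H}\to\Gamma_\substr{K}$ is witnessed by $(\substr{C},P,f)$ and $\task{B}:\Gamma_\substr{K}\to\Gamma_\substr{L}$ by $(\substr{D},Q,g)$. Use the constructor $\substr{C}\otimes\substr{D}$ with attribute $P\times Q\subseteq \Gamma_\substr{C}\times\Gamma_\substr{D}=\Gamma_{\substr{C}\otimes\substr{D}}$. The process is $h = (g\otimes \id{\substr{C}})\circ(\id{\substr{K}}\otimes\sigma)\circ(f\otimes\id{\substr{D}})$, with swaps arranged so that $f$ acts on $\substr{H}\otimes\substr{C}$ and $g$ on $\substr{K}\otimes\substr{D}$. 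Task-inducingness follows from that of $f$, $g$ and swaps together with the Cartesian form of $\Gamma$.

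The key computation is $h(\rho\otimes\gamma\otimes\delta) = g(f_1\otimes\delta)$-type factorisation. Here one subtlety appears: $f(\rho\otimes\gamma)$ is a state of $\substr{K}\otimes\substr{C}$ lying in $\Gamma_\substr{K}\times\Gamma_\substr{C}$, so it is a product $\rho'\otimes\gamma'$ by the Cartesian requirement. This is exactly what lets the existential descriptions in condition 1 compose. One gets $\indtask{h}$ post/pre-conditioned equal to $\{\rho\mapsto\rho''\mid \exists \rho',\gamma\in P,\delta\in Q,\dots\}$, which is $\task{B}\circ\task{A}$ as defined in $\Rel$.

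For condition 2, the $\substr{C}$-output of $h$ is the $\substr{C}$-output of $f$ on some $\Gamma_\substr{H}$ input, so it lies in $P$. The $\substr{D}$-output of $h$ is the $\substr{D}$-output of $g$ on some $\Gamma_\substr{K}$ input, so it lies in $Q$. Hence $P\times Q$ is preserved.

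\textbf{Parallel composition.} For $\task{A}:\Gamma_\substr{H}\to\Gamma_\substr{K}$ and $\task{B}:\Gamma_{\substr{H}'}\to\Gamma_{\substr{K}'}$, again take constructor $\substr{C}\otimes\substr{D}$ with attribute $P\times Q$. The process is $f\otimes g$ conjugated by swaps so that the substrates sit in front of the constructors. The induced task factors as $\indtask{f}\times\indtask{g}$ up to swaps, and the conditioning and discarding distribute over $\times$. This gives $\task{A}\times\task{B}$ and preservation of $P\times Q$.

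The main obstacle is the bookkeeping in the sequential case. One must verify carefully, diagrammatically in $\Rel$ via naturality of swaps and the copy/discard equations, that discarding the constructor outputs commutes with the intermediate composition. In particular, the existentials over $\gamma'$ must not interfere with those for the second constructor. The second delicate point is exactly where the Cartesian "locality" requirement is used: it guarantees that outputs of $f$ split as $\rho'\otimes\gamma'$. I would isolate this as a small lemma, that for task-inducing $f$ one has $\indtask{(g\otimes\id{})\circ f} = (\indtask{g}\times\id{})\circ\indtask{f}$ and $\indtask{f\otimes g}=\indtask{f}\times\indtask{g}$, i.e.\ $\indtask{-}$ is a strict monoidal functor on task-inducing processes. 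Once that lemma is in place, both closure properties reduce to string-diagram rewriting in $\Rel$.
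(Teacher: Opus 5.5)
Your proposal is correct and follows the paper's proof. As there, you use the trivial constructor $I$ for identities and swaps, and the constructor $\substr{C}\otimes\substr{D}$ with attribute $P\times Q$ for both sequential and parallel composition. The Cartesian locality condition $\Gamma_{\substr{H}\otimes\substr{K}}=\Gamma_\substr{H}\times\Gamma_\substr{K}$ does the essential work, as the paper's remark after the proof also stresses, and your functoriality lemma for $\indtask{-}$ is an algebraic packaging of the paper's diagrammatic steps. One bookkeeping fix: your $h$ as written has codomain $\substr{L}\otimes\substr{D}\otimes\substr{C}$, so you need a final $\id{\substr{L}}\otimes\sigma_{\substr{D},\substr{C}}$ to return the constructor to type $\substr{C}\otimes\substr{D}$.
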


\begin{proof}
The identity tasks for all systems are made possible with the identity isomorphisms of $\smc{C}$:
\[
\input{constructors/idtask.tikz}
\]
enabled by the trivial constructor $\substr{C} := I$.

Likewise, the swap tasks for all systems are made possible by the symmetry isomorphisms of $\smc{C}$, using the trivial constructor $\substr{C} := I$:
\[
\input{constructors/swaptask.tikz}
\]
For two possible tasks $\task{A}$ and $\task{B}$, having constructors $C$ and $D$ respectively, the sequential composition $\task{B} \circ \task{A}$ is made possible with the constructor $C \otimes D$, as follows:
\[\input{constructors/seqtaskcomp.tikz}\] 
For two possible tasks $\task{A}$ and $\task{B}$, having constructors $C$ and $D$ respectively, the parallel composition $\task{A} \times \task{B}$ is made possible with the constructor $C \otimes D$, as follows:
\[\input{constructors/partaskcomp.tikz} \]
This concludes the proof. \qedhere
\end{proof}

\begin{remark}
		The possible tasks form a sub-SMC of $\Rel$ because all 
		structural operations are inherited from $\Rel$. The objects 
		are the sets $\Gamma_\substr{H}$, which are objects of $\Rel$. 
		The morphisms are relations between these sets, which are 
		morphisms of $\Rel$. Sequential composition of tasks is 
		relational composition, which is sequential composition in 
		$\Rel$. The identity tasks are identity relations in $\Rel$, 
		and the swap tasks are the symmetry isomorphisms of $\Rel$. 
		For parallel composition, the key point is the locality 
		condition $\Gamma_{\substr{H} \otimes \substr{L}} = 
		\Gamma_\substr{H} \times \Gamma_\substr{L}$. This ensures 
		that the parallel composition of tasks 
		$\task{A}: \Gamma_\substr{H} \rightarrow \Gamma_\substr{K}$ 
		and $\task{B}: \Gamma_\substr{L} \rightarrow \Gamma_\substr{M}$ 
		is a relation 
		$\task{A} \times \task{B}: \Gamma_\substr{H} \times 
		\Gamma_\substr{L} \rightarrow \Gamma_\substr{K} \times 
		\Gamma_\substr{M}$, which is exactly the monoidal product 
		of $\Rel$ applied to the relevant objects. Since all 
		operations coincide with those of $\Rel$ and the proof 
		establishes closure under these operations, the possible 
		tasks form a sub-SMC of $\Rel$.
\end{remark}

\begin{remark}
While closure under composition may be expected, since Deutsch required conceivable tasks to be composable in parallel and sequence, the result shows something more. For any subsidiary theory $\smc{C}$, the possible tasks form a sub-SMC of $\Rel$, the theory of conceivable tasks. The subsidiary theory can be anything, be it quantum theory, classical mechanics, or any other theory of physics, but the possible tasks it gives rise to are always relations between sets of states, living in $\Rel$. In this sense, $\Rel$ serves as the universal arena for constructor theory, regardless of the underlying physics. This universality hinges on the locality condition, which aligns the monoidal product of tasks with that of $\Rel$.
\end{remark}

To recap, taking the theory of conceivable tasks to be $\Rel$, we showed that, for a given choice of substrates, the set of possible tasks forms a sub-SMC of $\Rel$. Our choice of $\Rel$ aligns with the constructor theory literature, wherein tasks are modelled by relations between sets~\cite{deutsch2013constructor, deutsch2015constructor, marletto2015constructor, marletto2016constructor, marletto2016constructorb}. This work constitutes, to the best of our knowledge, the first process-theoretic formalisation of constructor theory.

\subsection{Attributes as states}\label{attvsst}

As originally proposed, constructor theory defined tasks as transformations of states~\cite{deutsch2013constructor}. Later literature defined tasks as acting on the attributes of states instead of on the underlying states~\cite{deutsch2015constructor, marletto2015constructor, marletto2016constructor, marletto2016constructorb}. This was motivated by the idea that the abstract specification of (possible) tasks should be based on the observable `macrostates' (\emph{i.e.,} attributes, or subsets of a set) of a physical system, instead of on the unobservable `microstates' (\emph{i.e.,} states, or elements of a set) that constitute them. In Ref.~\cite{Gogioso2023Constructors}, it was shown that the attribute-focused perspective is derivable from the state-focused perspective in a compositionally sound way via a suitable coarse-graining. We summarise the main insights here.\footnote{We do not include all details and provide only a summary, as the author of this thesis was not much involved in this part of the work.} 

A notion of `coarse-graining' is defined for tasks. This coarse graining is in order to obtain tasks transforming attributes (macrostates) from tasks transforming states (microstates). The attributes are allowed to have non-trivial overlap. More specifically, the attributes do not need to form a partition, but nesting of attributes is not allowed. That is, no attribute can be a proper superset of another attribute.\footnote{This ensures that the attributes are distinguishable in a strong sense, as required by constructor theory~\cite{deutsch2015constructor}.} With these requirements, coarse-grained tasks are defined and the following results are obtained. 

First, it is proved that for any process theory of tasks, the coarse-grained tasks can also be arranged into a process theory. 
This implies that the tasks defined on attributes are just as compositionally sound as those defined on states. Second, the tasks defined on states can be compositionally embedded into the universe of coarse-grained tasks. This proves that coarse-grained tasks are a sound generalisation of the tasks originally defined on states. Third, the coarse-grained tasks can be embedded back into the universe of ordinary tasks. This shows that the ordinary tasks---\emph{i.e.} those defined on states---are as expressive as coarse-grained tasks.

To summarise, a distinction between states (microstates) and attributes (macrostates) is made in the constructor theory literature for dealing with theories such as thermodynamics and information theory~\cite{deutsch2015constructor, marletto2016constructorb}. In categorical semantics, as discussed above, this corresponds to the distinction between ordinary tasks defined on states and coarse-grained tasks defined on attributes. The distinction may or may not make sense in a physical setting. However, mathematically speaking, there is no distinction between the two approaches since they are equivalent as far as their expressivity is concerned.



\definecolor{hexcolor0xa9a9a9}{rgb}{0.663,0.663,0.663} 
\tikzstyle{GrayLine}=[dashed,draw=hexcolor0xa9a9a9] 
\tikzstyle{gray}=[dashed,draw=hexcolor0xa9a9a9]

\section{Locality in Quantum Theory}\label{seclocquant}

There is an overlap between the research areas of constructor theory and process theories as applied to quantum theory \emph{i.e.} CQM. However, there also exists a bone of contention between the two areas: the principle of locality, which lies at the core of constructor theory~\cite{deutsch2013constructor, deutsch2015constructor} but is rejected by CQM. In this section, we discuss the principle of locality and how it appears in the Deutsch-Hayden approach to quantum theory~\cite{deutsch2000information}. We also argue that Deutsch-Hayden locality is not compositional. This means that employing Deutsch-Hayden locality leads to a conflict between two main principles of constructor theory: composition and locality.

Constructor theory, since its conception, has stuck by `the principle of locality'~\cite{deutsch2013constructor}, according to which the state of a joint system is the ordered set of the states of its subsystems. This is implemented in categorical semantics by imposing the `Cartesian' requirement on substrate states: $\Gamma_{\substr{H}\otimes\substr{K}} = \Gamma_\substr{H} \times \Gamma_\substr{K}$ for all $\substr{H}, \substr{K} \in \Sigma$ in Definition~\ref{chsub}. The need for keeping locality in a physical theory is debatable. However, from a mathematical point of view, the generic tensor products for symmetric monoidal categories conservatively and very fruitfully generalise Cartesian products of sets. This is especially important in physical theories like quantum theory. 

Below we provide more background on the principle of locality, and how it is made to work in quantum theory. Finally, we show how implementing the principle of locality renders quantum theory non-compositional. 

\subsection{The Principle of Locality}

Constructor theory takes seriously Einstein's notion of locality~\cite{einstein1949albert}, which comprises two components as described by Don Howard:

``The first, which I call the `separability principle', asserts that any
two spatially separated systems possess their own separate real states.
The second, the `locality principle' asserts that all physical effects are
propagated with finite, subluminal velocities, so that no effects can be
communicated between systems separated by a space-like interval.''~\cite[p.~3]{howard1985einstein}

The separability condition in constructor theory is stronger than the one in the passage quoted above. According to this condition, the whole is separable into independent parts, and these parts when combined make the whole. In other words, `the whole is not more than the union of the parts'~\cite[p.~15]{tibau2023locality}.

Focusing on the separability part of Einstein's notion of locality, Deutsch posits that it is `a necessary condition for tasks to be composable into networks'~\cite[p.~24]{deutsch2013constructor} according to the composition principle~\cite{deutsch2013constructor}. 

In constructor-theoretic terms, separability means that the joint state of any two substrates can always be expressed as an ordered pair of individual states of the two substrates; in Definition~\ref{chsub}, this was imposed as the following requirement on substrate states: $\Gamma_{\substr{H}\otimes\substr{K}} = \Gamma_\substr{H} \times \Gamma_\substr{K}$ for all $\substr{H}, \substr{K} \in \Sigma$.

Einstein locality (read: separability) is uncontroversial in classical physics. However, in the standard formulation of quantum theory, there are entangled states that are non-separable by definition. Deutsch and Hayden came up with a formulation of quantum theory that is arguably consistent with Einstein locality~\cite{deutsch2000information}. The existence of this so-called `local' formulation~\cite{deutsch2000information, deutsch2012vindication} is offered as an evidence of the compatibility of quantum theory with constructor theory. Below, we give a brief overview of this formulation~\cite{deutsch2000information}.

\subsection{Deutsch-Hayden Descriptors}

The Deutsch-Hayden formulation of quantum theory describes qubits not in terms of states but in terms of \textit{descriptors}. Roughly speaking, a descriptor of a qubit is a set of operators that determines all the observables of that system. Since the Deutsch-Hayden descriptors are based on the Heisenberg picture of quantum theory, we first describe how it differs from the more commonly used Schr\"{o}dinger picture. Our exposition follows that of Ref.~\cite{bedard2021abc}.

Consider a pure state $\ket{\psi}$. Such a state could be prepared by applying a unitary operation $U$ to the state $\ket{0}$, where the latter is deemed an initial state. In standard quantum theory, the expectation of measurement outcomes is computed using the formula
\[
\braket{0| U^{\dagger} O U |0}
\]
where $O \equiv \sum_i \lambda_i \ket{\phi_i} \bra{\phi_i} $ represents a general observable. The set of vectors $\{\ket{\phi_i}\}_i$ represents the measurement basis and $\lambda_i$ are the corresponding eigenvalues. The \textit{Schr\"{o}dinger picture} reads the above formula as the evolution of states while the observable does not change: $\left(\bra{0} U^{\dagger} \right) O \left( U \ket{0} \right)$. In other words, the state $\ket{0}$ evolves to the state $U\ket{0}$ while $O$ does not evolve. In contrast, the \textit{Heisenberg picture} views the observable as evolving whereas the state remains fixed: $\braket{0| \left( U^{\dagger} O U \right) |0}$. That is, $O$ changes to $U^{\dagger} O U$ while the state remains fixed to $\ket{0}$. As the state remains fixed, it is called the \textit{reference vector}. 

In short, the Heisenberg picture describes quantum systems not in terms of the evolution of their states but in terms of the evolution of their observables. 

The Deutsch-Hayden approach describes a qubit via a mathematical object that encapsulates information about \textit{all} the evolved observables~\cite{deutsch2000information, bedard2021abc}. This task is made tractable by the fact that observables are linear operators, and linear operators form a vector space. As the evolution of the general observable $O$ to $U^{\dagger} O U$ is linear, one needs to track the evolution of only the basis operators of $O$. That is, if $O = \sum_k a_k A_k$ where $\{A_k\}_k$ are the basis operators of $O$, we have $U^{\dagger} O U = \sum_k a_k U^{\dagger} A_k U $. Therefore, it is sufficient to track evolution of each basis operator $A_k$ to determine the evolution of any observable by $U$.

\subsubsection{Descriptor of a Single Qubit}

Observables of a single qubit are given by $2\times 2$ matrices, for which a suitable basis is provided by the Pauli matrices along with the identity:
\[
\boldsymbol{\sigma} = (\sigma_x, \sigma_y, \sigma_z) = \left(  \left[ {\begin{array}{cc}
		0 & 1 \\
		1 & 0 \\
\end{array} } \right],   \left[ {\begin{array}{cc}
		0 & -i \\
		i & 0 \\
\end{array} } \right],   \left[ {\begin{array}{cc}
		1 & 0 \\
		0 &-1 \\
\end{array} } \right] \right) \ \text{and}\  \sigma_0 = \mathbb{1} =   \left[ {\begin{array}{cc}
		1 & 0 \\
		0 & 1 \\
\end{array} } \right] 
\]
The evolution of $\mathbb{1}$, \emph{i.e.}, $U^{\dagger} \mathbb{1} U = \mathbb{1} $ is trivial and can be ignored. Hence, one needs to keep track of the evolution of $\boldsymbol{\sigma}$ to determine any evolved observable of the single qubit. So, the \textit{descriptor} of the single qubit after unitary evolution $U$ is given by 
\[
\boldsymbol{q} = U^{\dagger} \boldsymbol{\sigma} U  
\]

\begin{example}\cite{bedard2021abc}\label{DHex1}
	Consider a circuit in which a state is initialised as $\ket{0}$ and then subjected to a Hadamard gate:
	\[
	\input{constructors/DH1.tikz}
	\]

	\noindent We consider the state $\ket{0}$ to be fixed and give a description of the quantum system in terms of the evolution of its descriptor. The unitary of the Hadamard is given by $H = \frac{1}{\sqrt{2}} \left[ {\begin{array}{cc}
			1 & 1 \\
			1 & -1 \\
	\end{array} } \right] $. The descriptor is initially $\boldsymbol{q}^0 = \boldsymbol{\sigma} = (\sigma_x, \sigma_y, \sigma_z)$, which after the application of the Hadamard becomes $\boldsymbol{q}^1 = H^{\dagger} \boldsymbol{\sigma} H = H^{\dagger} (\sigma_x, \sigma_y, \sigma_z) H = (\sigma_z, -\sigma_y, \sigma_x) $. The evolution of the observable $\ket{0}\bra{0}$ can be determined using the descriptors. After the Hadamard is applied, the observable becomes
\[
H^{\dagger} \ket{0}\bra{0} H = H^{\dagger}  \left[ {\begin{array}{cc}
		1 & 0 \\
		0 & 0 \\
\end{array} } \right] H = H^{\dagger} \left(\frac{\mathbb{1} + \sigma_z}{2}\right) H = H^{\dagger} \left(\frac{\mathbb{1} + q_{z}^0}{2}\right) H = \frac{\mathbb{1} + q_{z}^1}{2} = \frac{\mathbb{1} + \sigma_x}{2}
\]
The expectation of this observable can be calculated using the reference vector $\ket{0}$ as follows:
\[
\bra{0}H^{\dagger} \ket{0}\bra{0} H \ket{0} = \bra{0} \frac{\mathbb{1} + \sigma_x}{2} \ket{0} = \frac{1}{2}
\]
which is the probability of obtaining the measurement outcome $\ket{0}$.	
\end{example}

\subsubsection{Descriptor of $n$ Qubits}

Consider $n$ qubits prepared in the state $\ket{0}^{\otimes n}$ and subjected to a unitary operation $U$. We are interested in describing the evolution of all the observables of this quantum system. For $n$ qubits, the observables are given by $2^n \times 2^n = 4^n$ dimensional complex matrices. A suitable basis for these is given by products of Pauli operators 
\[
\mathcal{A} \equiv \{\sigma_{\mu_1} \otimes \sigma_{\mu_2} \otimes \cdots \sigma_{\mu_n}\  | \  \mu_i \in \{0, x, y, z\} \},
\]
which are linearly independent and $4^n$ in number.

This implies that by tracking the evolution of each operator in the basis $\mathcal{A}$ from $\sigma_{\mu_1} \otimes \sigma_{\mu_2} \otimes \cdots \sigma_{\mu_n}$ to $U^{\dagger} \sigma_{\mu_1} \otimes \sigma_{\mu_2} \otimes \cdots \sigma_{\mu_n} U$, one can determine the evolution of any observable.

Instead of tracking all the $4^n$ basis observables, one may track the evolution of the following set of observables
\[
\boldsymbol{q}_{i}^0 = \mathbb{1}^{i-1} \otimes \boldsymbol{\sigma} \otimes \mathbb{1}^{n-i}, \quad \quad \quad i=1, \cdots, n, 
\]
where $\mathbb{1}^{k}$ represents the tensor product of $k$ copies of the $2\times2$ identity matrix. $\boldsymbol{q}_{i}^0$ has three components for each $i$. These $3n$ observables defined above can be multiplied to obtain any of the $4^n$ basis observables~\cite{tibau2023locality, bedard2021abc}. $\boldsymbol{q}_{i}^0$ is the \textit{descriptor} of qubit $i$ at time $0$. The time-evolved descriptor at time $t$ after unitary evolution is given by
\[
\boldsymbol{q}_{i}^t =  U^{\dagger} \boldsymbol{q}_{i}^0 U
\]
The $n$-tuple with $\boldsymbol{q}_{i}^0$ as components is denoted as $\boldsymbol{q}^0$. It is the joint descriptor of $n$ qubits. The operators in $\boldsymbol{q}^0$ satisfy the Lie algebra $\mathfrak{su}(2)^{\otimes n}$:
\[
[q_{iw}^0, q_{jw'}^0] = 0  \quad \quad ( i \neq j\  \text{and}\ \forall w, w' \in \{0, x, y, z\}  )\]
\[
q_{ix}^0 q_{iy}^0 = iq_{iz}^0 \quad \quad \text{(and its cyclic permutations)} \]
\[
(q_{iw}^0)^2 = \mathbb{1} \quad \quad (\forall w \in \{0, x, y, z\} )
\]

\noindent which is preserved by unitary evolution. 

These algebraic relations imply that for any $i$, there is a redundancy in the observables. That is, to completely determine any triple of observables $(q_{ix}^0, q_{iy}^0, q_{iz}^0)$, one needs only two components as the third can be calculated from the other two using the algebraic relations. Therefore, one may avoid tracking the $y$ component of each descriptor. Hence, one needs to track evolution of $2n$ observables to completely determine the evolution of an $n$-qubit system.

To summarise, for an $n$-qubit system, the Deutsch-Hayden approach focuses on tracking the evolution of \textit{all} the observables from $O$ to $U^{\dagger} O U$. This is achieved by tracking the evolution of $2n$ basis observables (each of dimension $2^n \times 2^n$) from $\boldsymbol{q}^0$ to $\boldsymbol{q}^t = U^{\dagger} \boldsymbol{q}^0 U$, from which the $4^n$ observables of the basis $\mathcal{A}$ can be determined. The $4^n$ observables can, in turn, be used to determine the evolution of any observable.

\begin{example}\label{DHex2}
	Consider again the circuit of Example~\ref{DHex1}
	\[
	\input{constructors/DH1.tikz}
	\]
	where a state is initialised as $\ket{0}$ and then subjected to a Hadamard gate.

	Based on the foregoing discussion, after fixing $\ket{0}$ as the reference vector, one needs to track evolution of only two basis observables $(\sigma_x, \sigma_z)$ to fully characterise the circuit. In other words, $\boldsymbol{q}^0 = (\sigma_x, \sigma_z)$ represents the descriptor for the single qubit before the application of the Hadamard gate.

After the Hadamard is applied, the descriptor becomes $\boldsymbol{q}^1 = H^{\dagger} (\sigma_x, \sigma_z) H = (\sigma_z, \sigma_x) $.
\end{example}

\begin{example}\label{DHex3}
Consider the following circuit, which is usually used to create the Bell state $\frac{\ket{00} + \ket{11}}{\sqrt{2}}$: 
\[
\input{constructors/DH2.tikz}
\]
In this circuit, two qubits are initialised in the $\ket{00}$ state. The first qubit is subjected to a Hadamard gate. This is followed by an application of a CNOT gate to the two qubits, where the first qubit acts as the control and the second the target. 

We fix $\ket{00}$ as the reference vector, and track evolution of two descriptors initially given by $\boldsymbol{q}_{1}^0 = (q_{1x}, q_{1z}) = (\sigma_{x} \otimes \mathbb{1}, \sigma_{z} \otimes \mathbb{1})$ and $\boldsymbol{q}_{2}^0 = (q_{2x}, q_{2z}) =(\mathbb{1} \otimes \sigma_{x}, \mathbb{1} \otimes \sigma_{z})$. The joint description of the two systems is given by the ordered pair of the descriptors of the two qubits $\boldsymbol{q}_{1}^0$ and $\boldsymbol{q}_{2}^0$, \emph{i.e.,} $\boldsymbol{q}^0 = (\boldsymbol{q}_{1}^0, \boldsymbol{q}_{2}^0)$. After the application of the Hadamard but before that of the CNOT, the descriptors are given by $\boldsymbol{q}_{1}^1 =  (H \otimes \mathbb{1})^{\dagger} (\sigma_{x} \otimes \mathbb{1}, \sigma_{z} \otimes \mathbb{1}) (H \otimes \mathbb{1}) = (\sigma_{z}\otimes \mathbb{1}, \sigma_{x} \otimes \mathbb{1}) = (q_{1z}, q_{1x})$ and  $\boldsymbol{q}_{2}^1 =  (H \otimes \mathbb{1})^{\dagger} (\mathbb{1} \otimes \sigma_{x}, \mathbb{1} \otimes \sigma_{z}) (H \otimes \mathbb{1}) = (\mathbb{1} \otimes \sigma_{x}, \mathbb{1} \otimes \sigma_{z}) = (q_{2x}, q_{2z})$. As expected, there is no change in the descriptor of the second qubit because the Hadamard acts on the first qubit only. At this point, the joint description of the two systems is given by $\boldsymbol{q}^1 = (\boldsymbol{q}_{1}^1, \boldsymbol{q}_{2}^1)$. After the application of the CNOT gate, the unitary of which is $U_{\text{CNOT}} = \left[ {\begin{array}{cccc}
		1 & 0 & 0 & 0 \\
		0 & 1 & 0 & 0 \\
		0 & 0 & 0 & 1 \\
		0 & 0 & 1 & 0 \\
\end{array} } \right]$, the qubit descriptors are given by $\boldsymbol{q}_{1}^2 =  (H \otimes \mathbb{1})^{\dagger}  U_{\text{CNOT}}^{\dagger}  (\sigma_{x} \otimes \mathbb{1}, \sigma_{z} \otimes \mathbb{1}) U_{\text{CNOT}} (H \otimes \mathbb{1}) = (\sigma_{z} \otimes \sigma_{x}, \sigma_{x} \otimes \mathbb{1}) = (q_{1z} q_{2x}, q_{1x}) $ and $\boldsymbol{q}_{2}^2 =  (H \otimes \mathbb{1})^{\dagger}  U_{\text{CNOT}}^{\dagger}  (\mathbb{1} \otimes \sigma_{x}, \mathbb{1} \otimes \sigma_{z}) U_{\text{CNOT}} (H \otimes \mathbb{1}) = (\mathbb{1} \otimes \sigma_{x}, \sigma_{x} \otimes \sigma_{z}) = (q_{2x}, q_{1x} q_{2z}) $. The joint descriptor of the two-qubit system is $\boldsymbol{q}^2 = (\boldsymbol{q}_{1}^2, \boldsymbol{q}_{2}^2)$.
\end{example}


\subsection{Locality and Compositionality}\label{loccomp}

Here, we see how locality as implemented in the Deutsch-Hayden approach interacts with the composition of circuits. 

\begin{remark}\label{DHrem1}
In Examples~\ref{DHex2} and~\ref{DHex3}, we characterised qubits in terms of their descriptors. Example~\ref{DHex2} involved a single qubit and thus required only one descriptor, the unitary evolution of which was tracked. Example~\ref{DHex3} was about two qubits and hence there were two descriptors to be tracked. Moreover, the descriptor of the joint two-qubit system was obtained by taking the ordered pair of the individual descriptors of the two qubits. 

Consider the parallel composition of the circuits from Examples~\ref{DHex2} and~\ref{DHex3}. 
\[
\input{constructors/DH3b.tikz}
\]
If we want to characterise this circuit in terms of qubit descriptors, we cannot use those from Examples~\ref{DHex2} and~\ref{DHex3}. The above circuit is a three-qubit system and requires that all the descriptors be of dimension $2^3 \times 2^3$ whereas the descriptors in Examples~\ref{DHex2} and~\ref{DHex3} are of dimensions $2\times2$ and $2^2 \times 2^2$ respectively.

More explicitly, from Example~\ref{DHex2}, we have the evolved descriptor of the first qubit after the application of the Hadamard gate:  $\boldsymbol{q}_{1}^2 = (\sigma_z, \sigma_x)$.  From Example~\ref{DHex3}, we have the descriptor of the joint system comprising the second and third qubits after the application of the Hadamard and CNOT gates:  $(\boldsymbol{q}_{2}^2, \boldsymbol{q}_{3}^2) = ((\sigma_{z} \otimes \sigma_{x}, \sigma_{x} \otimes \mathbb{1}), (\mathbb{1} \otimes \sigma_{x}, \sigma_{x} \otimes \sigma_{z}) )$. The descriptor of the joint system comprising the three qubits after the application of all the unitary gates is \textit{not} given by the tuple $(\boldsymbol{q}_{1}^2, \boldsymbol{q}_{2}^2, \boldsymbol{q}_{3}^2) = ((\sigma_z, \sigma_x), (\sigma_{z} \otimes \sigma_{x}, \sigma_{x} \otimes \mathbb{1}), (\mathbb{1} \otimes \sigma_{x}, \sigma_{x} \otimes \sigma_{z}) ) $. Instead, it is given by $((\sigma_z \otimes \mathbb{1} \otimes \mathbb{1}, \sigma_x \otimes \mathbb{1} \otimes \mathbb{1}), (\mathbb{1} \otimes \sigma_{z} \otimes \sigma_{x},\mathbb{1} \otimes  \sigma_{x} \otimes \mathbb{1}), (\mathbb{1} \otimes \mathbb{1} \otimes \sigma_{x}, \mathbb{1} \otimes \sigma_{x} \otimes \sigma_{z}))$, which is not the ordered tuple of the descriptors from Examples~\ref{DHex2} and~\ref{DHex3}. In this sense, the descriptors of the Deutsch-Hayden approach are not compositional.
\end{remark}

Example~\ref{DHex3} and Remark~\ref{DHrem1} show that the descriptors of two systems can be composed using the ordered tuple only if they are defined with reference to the same circuit. If two circuits, the descriptors of which we have individually tracked, are composed in parallel, the ordered tuple of their descriptors does not give the descriptor of the overall circuit. In fact, as soon as the two circuits are put together in parallel, the old descriptors no longer apply to the individual circuits. Instead, one needs to define and track new descriptors. This shows that the descriptors are not compositional.

\begin{remark}
The presence of another circuit, or even just a qubit, changes the descriptors of the original circuit even if there is no interaction between the two circuits. In this sense, the Deutsch-Hayden descriptors are not \textit{local}. 
\end{remark}

\begin{remark}
	Two key principles of constructor theory are the composition principle and the principle of locality~\cite{deutsch2013constructor, deutsch2015constructor}. The composition principle states that every regular network of possible tasks is a possible task, whereas the principle of locality states that the complete description of the system is given by the ordered tuple of the description of its parts. For the latter principle to be applicable to quantum theory, the Deutsch-Hayden formulation~\cite{deutsch2000information, deutsch2012vindication} is quoted as a supporting framework. Our analysis in Example~\ref{DHex3} and Remark~\ref{DHrem1} has shown that there is a clash between the two principles if the Deutsch-Hayden approach is used. In quantum theory, preparing qubits in $\ket{0}$ states and applying unitary operations on them, like we did in Examples~\ref{DHex2} and~\ref{DHex3}, are possible tasks. The parallel composition of these tasks, like the one in Remark~\ref{DHrem1}, is a regular network and, hence, a possible task according to the composition principle.
	However, the example in Remark~\ref{DHrem1} shows that the locality principle is violated---the descriptor of the joint system is not obtainable by taking the ordered tuple of the descriptors of its parts. 
		\end{remark}
	
		\begin{remark}
	Another way to describe the tension between composition and locality is as follows. Preparing circuits like those in Examples~\ref{DHex2} and~\ref{DHex3} are possible tasks. The `states' of these tasks are given by the descriptors. The principle of locality states that the ordered tuple of these `states' gives the `state' of the joint task obtained by composing the individual tasks in parallel. However, taking the ordered tuple of the `states' does not result in any legitimate `state'.  
\end{remark}

The foregoing discussion shows that constructor theorists need to reject either of the two key principles, namely locality or composition, or find a completely local formulation of quantum theory that is compatible with the principle of composition. Otherwise, the claim that constructor theory is a deeper or more fundamental theory than quantum theory does not hold.

\section{Constructors in Categorical Quantum Mechanics}\label{secconstCQM}

The central idea in constructor theory is that of \textit{possibility}, understood as what \textit{could} happen. Constructor theory and process theories are alike in the sense that they have a common lineage of counter-factual reasoning, which can be traced back to the distinction between `actuality' and `potentiality' by Aristotle~\cite{cohen2000aristotle}.

In constructor theory, theories are characterised by defining the dichotomy between possible and impossible tasks. The impossibility of cloning a general state has been suggested to yield quantum theory, at least in a broad sense~\cite{deutsch2015constructor}. In CQM~\cite{CKbook, CGbook, abramsky2004categorical}, back in at least 2006, classicality was indeed defined by the ability to clone~\cite{coecke2007quantum}. This led to the development of spiders~\cite{coecke2008classical} and the ZX-calculus~\cite{coecke2008interacting, coecke2011interacting}. As discussed in Chapter~\ref{chapinterpretations1}, the latter is now a prominent formalism in quantum computing and technologies\footnote{The website \url{https://zxcalculus.com/publications.html} lists 300+ publications related to the ZX-calculus in some way.} as well as in quantum science education~\cite{dundar2023quantum, dundarcoecke2025makingquantumworldaccessible}.

When it is formulated as a concrete SMC, a process theory is about possible and impossible processes that obey the axioms of the corresponding category. The reconstructions of quantum theory in terms of process theories convert these categorical axioms into physical postulates that are considered reasonable~\cite{hardy2011foliable, selby2021reconstructing}.

In the previous section, we showed that the Deutsch-Hayden formulation of quantum theory obeys Einstein locality at the expense of compositionality, thereby weakening the claim to fundamentality of constructor theory. Dispensing with the principle of locality in constructor theory, one can formulate quantum theory in constructor-theoretic terms. In this section, we show that CQM achieves just that. Here, we conceive CQM as a fully compositional theory of possible tasks in which parallel composition is given by the Kronecker product. We present examples and constructor-theoretic explanations of possible tasks in quantum theory and quantum computation within the diagrammatic language of CQM. Some of these examples correspond to those found in the literature of constructor theory.

\begin{convention}
The notation for CQM/ZX-diagrams in this section follows that introduced in Chapter~\ref{chapinterpretations1}. In diagrammatic equations and calculations, all equalities are up to non-zero global scalars. 
\end{convention}

\begin{remark}
	As discussed in Section~\ref{attvsst}, state-based and attribute-based approaches to constructor theory are equally expressive. Here, we stick to the state-based perspective.
\end{remark}

\subsection{Tasks with Trivial Constructors}

We start with examples that involve trivial constructors. In Ref.~\cite{deutsch2015constructor}, swapping two states is mentioned as an example of a reversible computation task.

\begin{example}\label{exswap}
	Swapping two quantum states is a possible task. Such a task comes for free in a process theory such as CQM. It is represented by 
	\[
	\input{constructors/CQMswap.tikz}
	\]
	Applying this task to any two states $\widehat{\psi}$ and $\widehat{\phi}$ gives
	\[
	\input{constructors/CQMswap2.tikz} \ = \ 	\input{constructors/CQMswap3.tikz} 
	\]
	This task requires two substrates that are the two quantum systems whose states are to be swapped, but does not require any explicit constructor. In other words, this task is made possible by the trivial constructor $I$.	
\end{example}

\begin{example}\label{nottask}
	Transforming a $\widehat 0$ state into a $\widehat 1$ state or vice versa is a possible task, performed by a NOT gate, represented by
	\[ \begin{tikzpicture}
	\begin{pgfonlayer}{nodelayer}
		\node [style=none] (15) at (0, 1.25) {};
		\node [style=none] (16) at (0, -1.25) {};
		\node [style=none] (22) at (0, 0) {};
		\node [style=gray ddot] (23) at (0, 0) {$\ \pi \ $};
	\end{pgfonlayer}
	\begin{pgfonlayer}{edgelayer}
		\draw [style=boldedge] (16.center) to (15.center);
	\end{pgfonlayer}
\end{tikzpicture}
\]
	This task requires no explicit constructor. That is, this task is made possible by the trivial constructor $I$.	
\end{example}

The Deutsch-Hayden approach~\cite{deutsch2000information} was originally introduced to provide a local formulation of scenarios involving entangled states, such as the Einstein-Podolsky-Rosen (EPR) experiment and the quantum teleportation protocol. In these scenarios, a Bell state is generated by preparing two systems, each in the $\widehat 0$ state, and applying the Hadamard and CNOT gates to them. It is another example of a possible task that does not require an explicit constructor. 

\begin{example}\label{exconsbell}
	The task of transforming two states $\widehat 0$ and $\widehat 0$ into the Bell state is given by 
\begin{equation}\label{belltask}
	\input{constructors/CQMbell1.tikz} 
\end{equation}
where the first system undergoes a Hadamard transformation. This is followed by the application of a CNOT gate to the two systems. The first system is the control whereas the second is the target. Supplying the input states to this task, we get
	\[
\input{constructors/CQMbell2.tikz} \ \overset{(h)}{=}  \  \input{constructors/CQMbell3.tikz}  \ \overset{(f)}{=}  \ \input{constructors/CQMbell4.tikz}  \ \overset{(id)}{=}  \ \begin{tikzpicture}
	\begin{pgfonlayer}{nodelayer}
		\node [style=none] (0) at (-1.25, 1) {};
		\node [style=none] (1) at (1.25, 1) {};
	\end{pgfonlayer}
	\begin{pgfonlayer}{edgelayer}
		\draw [style=boldedge, in=-90, out=-90, looseness=2.25] (0.center) to (1.center);
	\end{pgfonlayer}
\end{tikzpicture}
 
\]	
In CQM, the Bell state is non-separable and hence CQM does not obey the principle of locality. However CQM is a completely compositional theory, unlike the Deutsch-Hayden formulation of quantum theory~\cite{deutsch2000information}. This aspect shall be exemplified and further explained later. 
\end{example}

\begin{remark}
Process (\ref{belltask}) is itself a composite process. It is obtained by the parallel composition of the Hadamard and identity gates, followed by a serial composition with the CNOT gate. 
\end{remark}

\subsection{Tasks with Explicit Constructors}

In Ref.~\cite{violaris2022irreversibility}, there is an example of a possible task requiring an explicit constructor, which we revisit using CQM below.

\begin{example}\label{expconst1}
	Consider the task of transforming a $\widehat 0$ state into a $\widehat 1$ state using a CNOT gate. This task is made possible by the state $\widehat 1$ which acts as a constructor.
	
	\noindent Here, the CNOT gate acts as a task-inducing process.
	\[
	\input{constructors/CQMcnot.tikz}
	\]
	The first input is the target and takes the substrate state to be transformed. The second input is the control and takes the constructor state. The task is obtained from the task-inducing process by requiring that the input constructor state is $\widehat 1$ and discarding the constructor output. This is given by the diagram
	 \begin{equation}\label{cnottask}
	 \input{constructors/CQMcnot2.tikz}
	 \end{equation}
	 In agreement with the conditions of a possible task according to Definition~\ref{defpossible}, it can be verified that this task actually transforms a substrate state from $\widehat 0$ into $\widehat 1$
\[
\input{constructors/CQMcnot3.tikz} \ \overset{(\ref{picopyrule})}{=}  \ \input{constructors/CQMcnot4.tikz} 
\ \overset{(f)}{=}  \  \begin{tikzpicture}
	\begin{pgfonlayer}{nodelayer}
		\node [style=none] (15) at (0, 1.25) {};
		\node [style=none] (16) at (0, -0.75) {};
		\node [style=none] (22) at (0, -0.75) {};
		\node [style=gray ddot] (23) at (0, -0.75) {$\ \pi \ $};
	\end{pgfonlayer}
	\begin{pgfonlayer}{edgelayer}
		\draw [style=boldedge] (16.center) to (15.center);
	\end{pgfonlayer}
\end{tikzpicture}
 
\]	 
and that the constructor state is preserved by the induced task:
	 \[
\input{constructors/CQMcnot8.tikz} \ \overset{(f)}{=}  \ \input{constructors/CQMcnot9.tikz}  \ \overset{(id)}{=}  \ \input{constructors/CQMcnot10.tikz} \ \overset{(f)}{=}  \ \ \ \ \ \ \input{constructors/CQMcnot11.tikz} \ \overset{(f)}{=}  \ \ \ \ \ \ \input{constructors/CQMcnot12.tikz} \ \overset{(\ref{picopyrule})}{=}  \ \ \ \ \ \ \input{constructors/CQMcnot13.tikz} \ \overset{(\ref{picopyrule})}{=} \ \ \ \ \ \   
\]
\end{example} 

\begin{remark}
	Task (\ref{cnottask}) can be simplified as follows
	\[
		\input{constructors/CQMcnot2.tikz}  \ \overset{(\ref{picopyrule})}{=} \ \input{constructors/CQMcnot4b.tikz} 
		\ \overset{(f)}{=} \ 
	\]
	which gives the NOT gate, discussed in Example~\ref{nottask}. This shows that the task of transforming $\widehat{0}$ to $\widehat{1}$ using a CNOT gate requires an explicit constructor while the same task does not require any explicit constructor when one uses a NOT gate. Once the constructor is supplied to a CNOT gate at the input and the constructor output is discarded, we get a NOT gate. 	
\end{remark}

\begin{example}\label{expconst2}
	Consider the task of transforming a $\widehat{0}$ state into the maximally mixed state. This task can be performed by a CNOT gate 
	\[
	\input{constructors/CQMcnotb.tikz}
	\]
	and is made possible by a maximally mixed state acting as the constructor. Here, the CNOT gate is the task-inducing process and the required task is obtained by requiring that the input constructor state is the maximally mixed state and discarding the constructor output. This is given by the diagram
	 	\[
	 \input{constructors/CQMcnotb2.tikz}
	 \]

	 The conditions of a possible task according to Definition~\ref{defpossible} can be verified. We plug the state $\widehat{0}$ at the task input and simplify
	 	\[
\input{constructors/CQMcnotb3.tikz} \ \overset{(f)}{=}  \ \input{constructors/CQMcnotb4.tikz}  \ \overset{(id)}{=}  \ \input{constructors/CQMcnotb5.tikz} \ \overset{(f)}{=}  \ 
\begin{tikzpicture}
	\begin{pgfonlayer}{nodelayer}
		\node [style=none] (18) at (0, 1) {};
		\node [style=none] (19) at (0, -0.75) {};
		\node [style=none] (21) at (0, -0.75) {};
		\node [style=downground] (24) at (0, -1) {};
	\end{pgfonlayer}
	\begin{pgfonlayer}{edgelayer}
		\draw [style=boldedge, in=-90, out=90] (19.center) to (18.center);
	\end{pgfonlayer}
\end{tikzpicture}
 
\]	 
confirming that the task performs the required transformation.

To check whether the constructor is preserved by the task-inducing process, we have
	 	\[
\input{constructors/CQMcnotb9.tikz} \ \overset{(f)}{=}  \ \input{constructors/CQMcnotb10.tikz}  \ \overset{(id)}{=}  \ \input{constructors/CQMcnotb11.tikz}  \ \overset{(f)}{=}  \ \ \ \ \ \  
\begin{tikzpicture}
	\begin{pgfonlayer}{nodelayer}
		\node [style=none] (15) at (0, 1) {};
		\node [style=none] (16) at (0, -0.75) {};
		\node [style=downground] (17) at (0, -1) {};
	\end{pgfonlayer}
	\begin{pgfonlayer}{edgelayer}
		\draw [style=boldedge] (16.center) to (15.center);
	\end{pgfonlayer}
\end{tikzpicture}
  
\]	 
This shows that the constructor is preserved by the process, and hence the task under discussion is a possible task.
\end{example}

\subsection{Composition of Tasks}

According to the principle of composition~\cite{deutsch2013constructor, deutsch2015constructor}, series and parallel compositions of possible tasks result in possible tasks. We provide examples of series and parallel composition below.

\begin{example}
	Consider two possible tasks: (a) that of transforming a $\widehat{1}$ state into $\widehat{0}$ using a CNOT gate. This task is similar to that of Example~\ref{expconst1}, except that the input state to be transformed is $\widehat{1}$ instead of $\widehat{0}$; and (b) the task from Example~\ref{expconst2}, which transforms $\widehat{0}$ into the maximally mixed state.
	
	We are interested in the serial composition of task (a) followed by task (b). The serial composition of the corresponding task-inducing processes is given by
	\[
	\input{constructors/CQMseries.tikz}  
	\]
	The composite task is obtained by supplying the required constructor inputs and discarding the constructor outputs.
	\[
\input{constructors/CQMseries2.tikz}  
\]	
Applying the state $\widehat{1}$ as the input to this task, we have
	\[
\input{constructors/CQMseries3.tikz}  \ \overset{(f)}{=}  \ \input{constructors/CQMseries4.tikz} \ \overset{(\ref{picopyrule})}{=}  \ \input{constructors/CQMseries5.tikz}    
 \ \overset{(f)}{=}  \ \input{constructors/CQMseries10.tikz}    
\]	
\[
\overset{(id)}{=}  \ \input{constructors/CQMseries11.tikz} \ \ \ \ \   
\ \overset{(f)}{=} \ \input{constructors/CQMseries13.tikz}       
\]		
which shows that the composite task	transforms $\widehat{1}$ into the maximally mixed state.
	
We can also check that the constructor for this task is preserved by the induced task:
\[
\input{constructors/CQMseries14.tikz}    \ \overset{(\ref{picopyrule})}{=}  \ \input{constructors/CQMseries15.tikz}    
 \overset{(f)}{=}  \ \input{constructors/CQMseries18.tikz}    \ \overset{(\ref{spchop})}{=} \ 
\ \ \ \ \ \input{constructors/CQMseries21.tikz}   
\]
In this example, the constructor of the composite process is the tensor product of $\widehat{1}$ and the maximally mixed state.
\end{example}

\begin{example}
	Consider the parallel composition of tasks in Examples~\ref{expconst1} and~\ref{expconst2}. The parallel composition of the corresponding task-inducing processes is given by 
		\[
	\input{constructors/CQMparallel.tikz}  
	\]
	The composite task is obtained by supplying the required constructor inputs and discarding the constructor outputs.
	\[
	\input{constructors/CQMparallel2.tikz}  
	\]	
To verify that the task performs the desired transformation, we plug the states $\widehat{0}$ and  $\widehat{0}$ at the two inputs of the composite task and simplify the diagram as follows:
	\[
\input{constructors/CQMparallel3.tikz}  \  \overset{(f)}{=}  \  \input{constructors/CQMparallel4.tikz}   \  \overset{(\ref{picopyrule})}{=}  \ \input{constructors/CQMparallel5.tikz}  
\]		
	\[
 \  \overset{(f)}{=}  \ \input{constructors/CQMparallel8.tikz}  
 \overset{(id)}{=}  \ \input{constructors/CQMparallel9.tikz}  \  \overset{(f)}{=}  \  
  \input{constructors/CQMparallel12.tikz}
\]		
which is the desired output: the tensor product of $\widehat{1}$ and the maximally mixed state. To verify that the constructor for this task is preserved by the induced task, we apply maximally mixed states to the two inputs of the task and discard the task outputs.
	\[
\input{constructors/CQMparallel13.tikz}  \  \overset{(f)}{=}  \  \input{constructors/CQMparallel14.tikz}   \  \overset{(\ref{picopyrule})}{=}  \ \input{constructors/CQMparallel15.tikz}  
\]		
	\[
 \overset{(f)}{=}  \ \ \ \ \ \ \input{constructors/CQMparallel17.tikz}   \  \overset{(\ref{spchop})}{=}  \ 
\ \ \ \ \ \ \ \ \ \  \input{constructors/CQMparallel12.tikz}
\]
The simplified diagram is the tensor product of $\widehat{1}$ and the maximally mixed state---the constructor for this task. Hence, the constructor is preserved. 	
\end{example}

Tasks with explicit constructors can be composed with tasks with none. We give an example of one such parallel composition below.

\begin{example}
	Consider a parallel composition of tasks in Examples~\ref{exconsbell} and~\ref{expconst1}. The task-inducing process for this composite task is given by
\[
\input{constructors/CQMparallelb0.tikz}  
\]	
The constructor for this task is the tensor product of the trivial constructor $I$ and $\widehat{1}$ (\emph{i.e.}, $I \otimes \widehat{1} = \widehat{1}$). The composite task is obtained by supplying the required state to the constructor input and discarding the constructor output. This is given by the parallel composition of diagrams (\ref{belltask}) and (\ref{cnottask}):
	\[
	\input{constructors/CQMparallelb.tikz}  
	\]
Supplying states $\widehat{0}$, $\widehat{0}$, and $\widehat{0}$ to the three inputs, we have
	\[
\input{constructors/CQMparallelb2.tikz}  \  \overset{(f)}{=}  \  \input{constructors/CQMparallelb3.tikz} \  \overset{(h)}{=}  \ \input{constructors/CQMparallelb4.tikz}
\]
	\[
 \overset{(f)}{=}  \ \input{constructors/CQMparallelb5.tikz}  \  \overset{(\ref{picopyrule})}{=}  \  \input{constructors/CQMparallelb6.tikz} \ 
 \overset{(f)}{=}  \ \input{constructors/CQMparallelb8.tikz}
\]
which shows that the task transforms the states $\widehat{0}$, $\widehat{0}$, and $\widehat{0}$ into the tensor product of the Bell state and $\widehat{1}$.
\end{example}

\begin{remark}
The output of the task discussed in the above example is a parallel composition of the Bell state and $\widehat{1}$:
\begin{equation}\label{remlocal}
\input{constructors/CQMparallelb8.tikz}
\end{equation}
If one subscribes to the principle of locality in constructor theory~\cite{deutsch2013constructor, deutsch2015constructor}, the Bell state must be expressible in terms of two separable parts. Using the Deutsch-Hayden formalism~\cite{deutsch2000information} (which has been quoted in favour of the principle of locality~\cite{deutsch2015constructor}), one can indeed find such an expression for the Bell state. However, as soon as another quantum state is introduced in parallel to the Bell state (such as in diagram (\ref{remlocal})), the old Deutsch-Hayden descriptors cannot be used anymore, and one needs to use new Deutsch-Hayden descriptors for the Bell state. In this sense, the so-called `local' description of the Bell state is not compositional. In another sense, it is also non-local: the local descriptors of a system are dependent on the presence of other quantum systems which are not part of the local system to be described. We discussed this in detail in Section~\ref{loccomp}.

So, if we desire a compositionally sound constructor theory of quantum physics and quantum information/computation, we cannot use the Deutsch-Hayden formalism for quantum theory as it clashes with the principle of composition~\cite{deutsch2013constructor, deutsch2015constructor}. In contrast, CQM offers a constructor theory of quantum physics but one which rejects the principle of locality, as it includes non-separable states such as the Bell state as legitimate inputs and outputs of possible tasks.
\end{remark}

\subsection{Reversibility of Tasks}

The tasks described in Examples~\ref{exswap},~\ref{nottask},~\ref{exconsbell}, and~\ref{expconst1} are reversible from a constructor-theoretic point of view. That is, the reverse of these tasks (transforming outputs of the original tasks into the corresponding inputs) are made possible by the constructors of the original tasks. Tasks in Examples~\ref{exswap},~\ref{nottask}, and~\ref{exconsbell} are trivially reversible as the processes involved are reversible and do not require explicit constructors. The corresponding reverse tasks are obtained by vertically flipping the diagrams. We provide two examples of reversible tasks below: one with an explicit constructor and one with a trivial constructor.

\begin{example}
Consider Example~\ref{expconst1}, in which a task transforms a state $\widehat{0}$ into $\widehat{1}$. Its reverse task would transform $\widehat{1}$ into $\widehat{0}$, and is made possible by the same constructor as the one that enabled the original task. It is obtained by vertically flipping the task-inducing process (which changes inputs into outputs and vice versa) and then plugging the constructor state at the constructor input and discarding the constructor output. 
\[
	 \input{constructors/CQMcnot2.tikz}
\]
It can be verified that the above process achieves the desired task.
\end{example}

\begin{example}
	Consider Example~\ref{exconsbell} in which a task transforms the tensor product of two states $\widehat 0$ and $\widehat 0$ into the Bell state. This task is made possible by the trivial constructor $I$. Its reverse task would transform the Bell state into the tensor product of $\widehat 0$ and $\widehat 0$, and is enabled by the same constructor $I$. This is obtained by vertically flipping the task-inducing process.
	\[
	\input{constructors/CQMbell1b.tikz} 
	\]
	We verify that this process performs the desired reverse task:
		\[
	\input{constructors/CQMbell1b2.tikz} \ \overset{(id)}{=}  \  \input{constructors/CQMbell1b3.tikz}  \ \overset{(f)}{=}  \ \input{constructors/CQMbell1b4.tikz}  \ \overset{(\ref{spchop})}{=}  \ \input{constructors/CQMbell1b5.tikz} \ \overset{(h)}{=}  \ \input{constructors/CQMbell1b6.tikz} 
	\]	 
\end{example}

A constructor for a task is not necessarily a constructor for its reverse task. This is called \textit{constructor-based irreversibility}~\cite{marletto2022emergence, violaris2022irreversibility}. We provide an example of constructor-based irreversibility below.

\begin{example}
 In Example~\ref{expconst2}, a task transforms $\widehat{0}$ into the maximally mixed state; this task is enabled by a maximally mixed state acting as the constructor. We can check whether the reverse task is made possible by the same constructor. We vertically flip the task-inducing process (which gives the same process as in the original example), plug the constructor state at the constructor input and discard the constructor output. We have
 \[
 \input{constructors/CQMcnotb2.tikz}
 \]
 Applying a maximally mixed state as the input to this process, we have
  \[
 \input{constructors/CQMcnotc.tikz} \  \overset{(f)}{=}  \  
  \input{constructors/CQMcnotc3.tikz}  \  \overset{(\ref{spchop})}{=}  \  
  \begin{tikzpicture}
	\begin{pgfonlayer}{nodelayer}
		\node [style=none] (18) at (0, 1) {};
		\node [style=none] (19) at (0, -0.75) {};
		\node [style=none] (21) at (0, -0.75) {};
		\node [style=downground] (33) at (0, -1) {};
	\end{pgfonlayer}
	\begin{pgfonlayer}{edgelayer}
		\draw [style=boldedge, in=-90, out=90] (19.center) to (18.center);
	\end{pgfonlayer}
\end{tikzpicture}
 
 \]
 which is not the desired output: the state $\widehat{0}$. Therefore, this is an example of constructor-based irreversibility. 
\end{example}

In the above example, the task is not reversible but the task-inducing process is reversible.

\begin{remark}
Note that in Example~\ref{expconst2} and, therefore, the above example as well, the task-inducing process is reversible. If we forget the distinction between constructors and regular substrates, and treat $\widehat{0}$ and the maximally mixed state as merely two inputs of the task-inducing process, we get
\[
 \input{constructors/CQMcnotd.tikz} \  \overset{(f)}{=}  \   \input{constructors/CQMcnotd2.tikz} \  \overset{(id)}{=}  \   \input{constructors/CQMcnotd3.tikz} \  \overset{(f)}{=}  \   \input{constructors/CQMcnotd4.tikz} \  \overset{(f)}{=}  \ \input{constructors/CQMcnotd5.tikz} 
\]
Now, applying this output state as the input to the vertically-flipped task-inducing process, we have
\[
\input{constructors/CQMcnotd6.tikz} \  \overset{(f)}{=}  
\ \input{constructors/CQMcnotd8.tikz} \  \overset{(\ref{spchop})}{=}  \ 
\input{constructors/CQMcnotd10.tikz}
\]
This shows that the output of the reverse process is the input of the original process, confirming that the task-inducing process is reversible.

In summary, constructor-based irreversibility is the idea that even if the task-inducing process is reversible, the corresponding possible task may not be. A possible task is reversible if its constructor also makes its reverse task possible.
\end{remark}

\begin{remark}
A foundational question in physics concerns how irreversibility arises from time-reversal symmetric physical laws. Conventional approaches to reconciling irreversibility with reversibility involve statistical mechanical methods~\cite{andrews1965statistical, wallace2015quantitative}. Constructor theory offers a fresh perspective on the emergence of irreversibility in terms of the possibility and impossibility of tasks. Task-inducing processes correspond to time-reversal symmetric physical laws, but that alone is not sufficient for the reversibility of tasks. If the constructor of a possible task does not also act as a constructor for its reverse task (meaning the latter task is impossible), then the task is irreversible---even if the underlying task-inducing process is reversible. Unlike the statistical approaches to irreversibility, which are restricted to macroscopic physics and rely on approximations, constructor-theoretic irreversibility is scale-independent and exact~\cite{violaris2022irreversibility, marletto2022emergence}.
\end{remark}

\section{Summary and Outlook}\label{sumconst}

In this chapter, we presented categorical semantics for constructor theory. Our focus was on the desired mathematical foundations explained in Deutsch's seminal paper~\cite{deutsch2013constructor}. We took the theory of conceivable tasks to be $\Rel$, and then showed that, for a given choice of substrates, the set of possible tasks forms a sub-SMC of $\Rel$. While the result may not be surprising, it constitutes, to the best of our knowledge, the first process-theoretic formalisation of constructor theory. This formalisation offers a rigorous yet intuitive language to express and explore constructor-theoretic ideas.

We then identified a key inconsistency in the constructor-theoretic formulation of non-relativistic quantum theory: a conflict between the principles of locality and composition. Our analysis of the Deutsch–Hayden approach~\cite{deutsch2000information} demonstrated that locality comes at the cost of compositionality. Finally, we argued that CQM can be considered a viable constructor-theory of quantum physics---one that is fully compositional but non-local---and illustrated this with examples grounded in quantum theory and computation.

The string diagrammatic syntax we employed to formally represent constructor theory can also be interpreted in other SMCs. This provides an opportunity to investigate and explore the ramifications of constructor theory beyond $\Rel$, which has long been the default modelling choice in the constructor theory literature.

Our work suggests promising avenues for future research: namely, developing process-theoretic counterparts of constructor theories of information~\cite{deutsch2015constructor}, thermodynamics~\cite{marletto2016constructorb}, probability~\cite{marletto2016constructor}, and life~\cite{marletto2015constructor}, as well as exploring the connection between constructor theory and resource theories~\cite{coecke2016mathematical}.

\usetikzlibrary{backgrounds}
\usetikzlibrary{arrows}
\usetikzlibrary{shapes,shapes.geometric,shapes.misc}

\tikzstyle{tikzfig}=[baseline=-0.25em,scale=0.5]

\pgfkeys{/tikz/tikzit fill/.initial=0}
\pgfkeys{/tikz/tikzit draw/.initial=0}
\pgfkeys{/tikz/tikzit shape/.initial=0}
\pgfkeys{/tikz/tikzit category/.initial=0}

\pgfdeclarelayer{edgelayer}
\pgfdeclarelayer{nodelayer}
\pgfsetlayers{background,edgelayer,nodelayer,main}

\tikzstyle{none}=[inner sep=0mm]

\tikzstyle{every loop}=[]

\usetikzlibrary{decorations.markings}
\usetikzlibrary{shapes.geometric}

\pgfdeclarelayer{edgelayer}
\pgfdeclarelayer{nodelayer}
\pgfsetlayers{edgelayer,nodelayer,main}

\tikzstyle{none}=[inner sep=0pt]


\tikzstyle{every picture}=[baseline=-0.25em,scale=0.6]
\tikzstyle{dotpic}=[] 



\tikzstyle{red dot}=[fill=red, draw=black, shape=circle]
\tikzstyle{green dot}=[fill={rgb,255: red,0; green,131; blue,0}, draw=black, shape=circle]
\tikzstyle{new style 0}=[fill=white, draw=black, shape=circle]
\tikzstyle{dot}=[fill=white, draw=black, shape=circle, inner sep=1pt]

\chapter{\label{chapwavelogic}Waves, Interference, and Computation}

\epigraph{`Civilization advances by extending the number of important operations which we can perform without thinking about them.'}  
{Alfred North Whitehead, \\ An Introduction to Mathematics~\cite[p.~61]{whitehead1911introduction}}

\textit{This chapter is adapted from the publication~\cite{waseem2023stringwave}, based on joint work with Alexy Karenowska. The author of this thesis is the first author of the publication~\cite{waseem2023stringwave}, and played a leading role in all the work described in this chapter. The novel contributions are presented in Sections~\ref{secwavestring} and~\ref{secappwave}.}

\section{Introduction}

The prevalent paradigm of classical computing technology is based on encoding information in electrical voltages or currents and performing computation by manipulating the movement of charge. The building block of classical computers is the complementary metal-oxide-semiconductor (CMOS) transistor. CMOS transistors can be employed as switches and amplifiers~\cite{boylestad2018electronic} with which one can physically implement logic gates to perform any Boolean operation~\cite{jiang2019theory, uyemura1999cmos}. 

The practical success of conventional computing has relied on the continuing minituarisation of the CMOS technology~\cite{dennard1974design, mahmoud2020introduction}, increasing the packing density of transistors in integrated circuits (ICs). This trend of progressive miniaturisation and the corresponding increase in the number of transistors on ICs has been famously described in an observation~\cite{moore1965cramming, moore1975progress} now known as `Moore's law'. The original observation stated that the number of transistors per circuit chip doubled every year.

Recently, there has been a realisation that Moore's law is not going to hold anymore because of fundamental physical scaling limits~\cite{mahmoud2020introduction, waldrop2016chips, theis2017end}. Therefore, there is interest in novel, `beyond-CMOS' technologies and paradigms~\cite{irds2022}. One such paradigm is wave-based computation, which is based on encoding information in the phase and/or amplitude of waves and performing computation through phase shifting and wave interference. There exist a number of physical platforms for implementing wave-based computation. These include optical computing~\cite{imai1986optical, ambs2010optical, kazanskiy2022optical}, neuromorphic computing~\cite{rahman2015wave}, and spin-wave computing~\cite{mahmoud2020introduction}. 

Wave-based computation is a form of analogue computation, and therefore inherits the main limitations that have long affected analogue systems, such as limited precision, challenges in miniaturisation, poor scalability, restricted programmability, lack of robust memory, limited support for general-purpose computation, and most importantly, noise susceptibility and the unavailability of an effective approach for fault tolerance~\cite{zangeneh2021analogue, maclennan2014promise, ostrove2019improving, belostotski2025survey}. While recent advances in metamaterials~\cite{zangeneh2021analogue} and metasurfaces~\cite{xu2022metasurface} offer prospects for improving scalability and miniaturisation, these developments do not resolve the fundamental issue of noise susceptibility.

In analogue computing, information is represented by continuous physical quantities. These quantities are inherently sensitive to disturbances caused by thermal noise, device mismatch, and fabrication imperfections. Unlike digital computing, which repeatedly restores signals to well-defined states, analogue-only computing cannot remove noise in this way. As a result, small perturbations accumulate as signals propagate and interact, leading to progressively increasing errors over the course of a computation~\cite{zangeneh2021analogue}. Noise therefore remains a fundamental limitation rather than an issue that can be eliminated through engineering alone.

Nevertheless, despite these constraints, wave-based computation remains an interesting and valuable paradigm to study. It offers potential advantages, including parallel computing~\cite{khitun2012multi}, reversible computing~\cite{cuykendall1987reversible, balynskiy2018reversible}, and low energy consumption~\cite{atulasimha2016nanomagnetic, chumak2015magnon}, which may be beneficial for specific computational tasks. Recently, it was experimentally demonstrated that classical wave-based computing (with phase encoding) can execute certain quantum computing algorithms that do not require entanglement~\cite{balynsky2021quantum}. Although this approach cannot compete with quantum computing, it may still provide an efficient alternative to classical digital computing for certain computational tasks~\cite{balynsky2021quantum}.

There have been a number of implementations of Boolean logic gates and circuits using different wave-based computing platforms. However, a general theoretical (syntactic) framework to design, analyse and optimise these circuits has thus far been lacking.

In this chapter, we offer such a syntactic framework. We present a string-diagrammatic formalism for wave-based computation with phase encoding. We motivate the approach and demonstrate its application with reference to spin-wave or `magnonic' circuits. However, the same syntax is applicable to other wave-based technologies that use phase encoding~\cite{imre2006majority, iwamura1998single, vaysset2016toward, li2013three, vacca2014nanomagnet, khokhriakov2022multifunctional, rose2007programmable}. Through the example of spin-wave circuits, we demonstrate the usage of the formalism in designing, analysing and simplifying Boolean logic circuits. 

The remainder of the chapter is structured as follows. Section~\ref{spcomput} provides a brief introduction to spin-wave computing. Section~\ref{secwavestring} presents a string-diagrammatic formalism for spin-wave Boolean circuits. Section~\ref{secappwave} discusses its applications in the design, analysis, and optimisation of wave-based logic circuits. Section~\ref{sumwave} offers concluding remarks.

\section{Spin-wave Computing}\label{spcomput}

Spin wave dynamics (also known as magnonics) is the study of \textit{spin waves}---collective excitations of the electronic spin lattice of magnetic materials~\cite{demokritov2012magnonics, rezende2020fundamentals}. Spin waves behave classically at room temperature but demonstrate quantum behaviour at millikelvin temperatures~\cite{chumak2019magnonics}. The spin-wave quantum is called the \textit{magnon}. Magnonics research is concerned with generating, manipulating and detecting spin waves in order to characterise magnetic materials; or study interactions between magnons, other quasiparticles like photons or phonons, and superconducting qubits~\cite{chumak2015magnon, chumak2019magnonics, lachance2019hybrid, flebus20242024}. Moreover, certain magnonic systems allow for the possibility of propagating spin waves that can be harnessed to design logic devices~\cite{flebus20242024, serga2010yig, khitun2010magnonic, khitun2016magnonic}. 

Magnonic or spin-wave logic devices rely on propagating spin waves, instead of electronic charges, as information carriers~\cite{mahmoud2020introduction, chumak2019magnon}. Spin-wave computing is considered attractive for two reasons. Firstly, it offers generation of spin waves of gigahertz frequencies at nanometer wavelengths, leading to the prospect of miniaturised devices with high clock speeds~\cite{chumak2015magnon}. Secondly, the propagation of spin waves through certain magnetic insulators involves relatively low energy dissipation~\cite{chumak2019magnon}, as compared to the movement of charges in CMOS devices.

There have been multiple implementations of spin-wave-based Boolean logic gates.\footnote{Both Boolean~\cite{khitun2010magnonic, kostylev2005spin, schneider2008realization} and non-Boolean~\cite{papp2017nanoscale, khitun2013magnonic, macia2011spin, papp2014spin} computing have been explored using spin-wave systems. In this chapter, we focus on the Boolean case.} A NOT gate was first experimentally demonstrated in 2005~\cite{kostylev2005spin}. It was soon followed by implementations of XNOR and NAND gates~\cite{schneider2008realization}. As the NAND gate is a universal gate, its realisation opened the possibility of implementing any Boolean logical circuit within the spin-wave platform. In the last few years, designs of more complex spin-wave circuits have also been proposed~\cite{mahmoud2021spinb, mahmoud2020spinb, mahmoud2022would}, including half adders~\cite{wang2019integrated}, full adders~\cite{mahmoud2021spin}, and a 32-bit ripple-carry adder~\cite{garlando2023numerical}.

Wave-based computing, and especially spin-wave computing, offers a very economical implementation of the majority gate~\cite{khitun2010magnonic, fischer2017experimental, zografos2017non, reuben2020rediscovering}. The simplest majority gate has three Boolean inputs and one output. The output is equal to the value which is in majority at the input. In conventional CMOS-based computing, the implementation of a majority gate requires a combination of AND and OR gates, equivalent to $\sim10$ transistors~\cite{chumak2015magnon}. In contrast, (spin-)wave computing offers the majority gate as a primitive logic gate. Additionally, controlling one of the inputs of the majority gate results in an AND gate or an OR gate. Spin-wave computing also offers the XOR gate as a primitive. This can potentially replace the CMOS XOR gate that usually comprises multiple AND, OR and NOT gates, or equivalently $\sim8$ transistors~\cite{chumak2015magnon}. 

Although spin-wave computing shows great promise for energy-efficient and compact logic devices and circuits, several key hurdles remain. Like other forms of analogue computation~\cite{zangeneh2021analogue, maclennan2014promise}, it is inherently susceptible to noise, signal degradation, and accumulated errors, making reliable operation challenging~\cite{mahmoud2020introduction}. Error correction is essential but difficult, particularly because it often involves non-linear processes that can significantly increase energy consumption, thereby undermining the benefits of spin-wave systems~\cite{verba2019correction}. Cascading gates---necessary for constructing complex circuits---requires amplitude normalisation to preserve signal integrity. However, amplitude normalisation itself is an energy-intensive, non-linear process~\cite{dutta2015compact}. While spin-wave devices are amenable to miniaturisation and scaling, a major challenge at smaller dimensions is crosstalk noise~\cite{dutta2015compact}. Efforts are underway to overcome these barriers by integrating spin-wave computing with complementary platforms, including optics, superconducting qubits, and CMOS circuits~\cite{chumak2015magnon, li2020hybrid, lachance2019hybrid}. Having acknowledged these experimental challenges, we now restrict our discussion to the modelling of spin-wave circuits for the remainder of this chapter.

We use string diagrams~\cite{penrose1971applications, joyal1991geometry, SelingerSurvey} to develop a formalism for wave-based Boolean logic circuits, based on the architecture of the spin-wave majority gate. However, the formalism can be applied to other physical platforms and technologies that are based on wave-based computation or majority gate networks~\cite{vemuru2014majority, parhami2020majority, zhang2023survey}, including quantum-dot cellular automata~\cite{imre2006majority, walus2004circuit}, single-electron circuits~\cite{iwamura1998single}, nanomagnet logic~\cite{vacca2014nanomagnet}, graphene spin circuits~\cite{khokhriakov2022multifunctional}, spin torque majority gates~\cite{vaysset2016toward}, molecular scale electronics~\cite{rose2007programmable}, and DNA logic circuits~\cite{li2013three, fan2020propelling}.

In wave-based computation, information is encoded in any measurable property of the wave. The encoding one chooses has a bearing on the physical structures one can use to perform computation, and on the robustness of signals~\cite{mahmoud2020introduction}. In our work, we choose phase encoding as it allows composition of logic gates to form more complex circuits without requiring any components that are not wave-based. That is, phase-encoded wave-based circuits are compositional. Moreover, phase also tends to be favoured over the other common candidate, amplitude, because it is often more robust to noise~\cite{mahmoud2020introduction}. The phase information of the waves is extracted either from time-domain measurements obtained using oscilloscopes with high sampling rates~\cite{kanazawa2017role, fischer2017experimental}, or from frequency-domain measurements performed with vector network analysers~\cite{mahmoud2020introduction}.

To illustrate phase encoding, consider a sinusoidal signal of the form $A \sin(\omega t + \phi)$, where $A$ denotes the amplitude, $\omega$ the angular frequency, $t$ the time, and $\phi$ the phase. Phase $\phi = 0$ can be defined to represent the logical bit $0$ and, likewise, $\phi = \pi$ to represent the logical bit~$1$. As an example, if a computation is performed in which two waves corresponding to the bit $1$ and one wave corresponding to the bit $0$ are superposed, the resultant $A \sin(\omega t +\pi) + A \sin(\omega t +\pi) + A \sin(\omega t) = A \sin(\omega t +\pi)$ corresponds to a logical $1$. Note that each wave has the same amplitude $A$. If the output had a non-zero amplitude which was not $A$, it would be normalised to $A$ before using it as an input to another gate. Amplitude normalisation is performed using devices like directional couplers~\cite{mahmoud2020spinb} that exploit non-linear wave interactions~\cite{mahmoud2020introduction}. 

In order to ensure the compositionality of circuits, situations involving total destructive interference must be prevented. For example, one configuration to avoid is a circuit that superposes the waves $A \sin(\omega t)$ and $A \sin(\omega t + \pi)$, as it produces an output with zero amplitude and an undefined phase, making it unsuitable as an input to another circuit. Essentially, since inputs are normalised and only two phases are allowed in Boolean computation, total destructive interference can arise only if an even number of waves interfere. This is why, in majority gates, it is generally preferable to use an odd number of inputs, ensuring that interference of an even number of waves never occurs~\cite{dutta2017proposal,mahmoud2020introduction}.

In this work, we restrict ourselves to using only three-input majority gates for 
interfering waves. AND and OR gates are derivatives of the majority gate and 
thus also involve interference of three waves---more on this in 
Section~\ref{loggatcirc}. Because input amplitudes are always normalised, 
three-input majority gates preclude total destructive interference.

\section{Wave-based Computation with String Diagrams}\label{secwavestring}

\subsection{Logic Gates and Circuits}\label{loggatcirc}

The basic structures through which waves or signals propagate in wave-based circuits are called \textit{waveguides}. We denote a waveguide string-diagrammatically by a wire:
\begin{equation}\label{wire}
	\scalebox{0.75}{\begin{tikzpicture}
	\begin{pgfonlayer}{nodelayer}
		\node [style=none] (13) at (1.25, 2) {};
		\node [style=none] (15) at (1.25, -2) {};
	\end{pgfonlayer}
	\begin{pgfonlayer}{edgelayer}
		\draw (13.center) to (15.center);
	\end{pgfonlayer}
\end{tikzpicture}
}
\end{equation}
The wave is transmitted through the wire from bottom to top. A wire is essentially an \textit{identity gate}. 

\begin{convention}\label{sigconv}
	In the remainder of the chapter, a wave $\sin(\omega t)$ (with amplitude $A=1$ and phase $\phi =0$) is assumed to enter each circuit from the bottom and exit at the top. If the exiting wave has the phase $\pi$, this corresponds to the logical output $1$; if it is $0$, to the logical output $0$. 
\end{convention}

A waveguide may impose a \textit{phase shift}, say $\theta$, on the wave that is transmitted through it. This is denoted by a wire with the phase shift $\theta$:
\begin{equation}\label{phase}
	\scalebox{0.75}{\begin{tikzpicture}
	\begin{pgfonlayer}{nodelayer}
		\node [style=none] (2) at (-2.75, 2) {};
		\node [style=new style 0] (5) at (-2.75, 0) {$\theta$};
		\node [style=none] (7) at (-2.75, -2) {};
	\end{pgfonlayer}
	\begin{pgfonlayer}{edgelayer}
		\draw (5) to (2.center);
		\draw (7.center) to (5);
	\end{pgfonlayer}
\end{tikzpicture}
}
\end{equation}
The phase shift has a single parameter $\theta \in \{0, \pi\}$ where $\theta = p \pi$ ($\theta=\pi$ corresponds to $p=1$ and $\theta=0$ corresponds to $p=0$). A single phase shift along a wire initialises a Boolean logic \textit{variable}. Note that, in accordance with Convention~\ref{sigconv}, this initialisation of the variable is with respect to the reference wave $\sin(\omega t)$, which enters the circuit with phase 0. If $\theta = \pi$, the wave becomes $\sin(\omega t + \pi)$, and therefore bit 1 is initialised; conversely, if $\theta = 0$, bit 0 is initialised. Logical bits 0 and 1 are respectively represented as
\[
\scalebox{0.75}{\input{fig2c.tikz}} \ \ , \ \ \ \ \ \ \ \  \scalebox{0.75}{\begin{tikzpicture}
	\begin{pgfonlayer}{nodelayer}
		\node [style=none] (2) at (-2.75, 2) {};
		\node [style=new style 0] (5) at (-2.75, 0) {$\pi$};
		\node [style=none] (7) at (-2.75, -2) {};
	\end{pgfonlayer}
	\begin{pgfonlayer}{edgelayer}
		\draw (5) to (2.center);
		\draw (7.center) to (5);
	\end{pgfonlayer}
\end{tikzpicture}
}
\]
More complex circuits can be created by composing copies of circuits (\ref{wire}) and (\ref{phase}). For example, series or sequential composition of two copies of circuit (\ref{phase}) with phase shifts $\alpha$ ($=a\pi$) and $\beta$ ($=b\pi$) respectively gives the diagram

\begin{equation}\label{series1}
	\scalebox{0.75}{\begin{tikzpicture}
	\begin{pgfonlayer}{nodelayer}
		\node [style=new style 0] (1) at (0, -1) {$\alpha$};
		\node [style=none] (2) at (0, -3) {};
		\node [style=none] (4) at (0, 3) {};
		\node [style=new style 0] (5) at (0, 1) {$\beta$};
	\end{pgfonlayer}
	\begin{pgfonlayer}{edgelayer}
		\draw (2.center) to (1);
		\draw (5) to (4.center);
		\draw (5) to (1);
	\end{pgfonlayer}
\end{tikzpicture}
}
\end{equation}
This circuit modifies the phase of the input wave by the shift $\alpha$ followed by the shift $\beta$. 

Setting the second phase shift to $0$, the circuit becomes
\[
\scalebox{0.75}{\input{series0.tikz}}
\]
which is the series composition of the initialised variable $\alpha$ and an identity gate. This means that the circuit outputs $\alpha$.

Conversely, setting the second phase shift of circuit (\ref{series1}) to $\pi$ yields
\[
\scalebox{0.75}{\begin{tikzpicture}
	\begin{pgfonlayer}{nodelayer}
		\node [style=new style 0] (1) at (0, -1) {$\alpha$};
		\node [style=none] (2) at (0, -3) {};
		\node [style=none] (4) at (0, 3) {};
		\node [style=new style 0] (5) at (0, 1) {$\pi$};
	\end{pgfonlayer}
	\begin{pgfonlayer}{edgelayer}
		\draw (2.center) to (1);
		\draw (5) to (4.center);
		\draw (5) to (1);
	\end{pgfonlayer}
\end{tikzpicture}
}
\]
which is the series composition of the variable $\alpha$ and phase shift $\pi$. It is instructive to determine the input-output relations of this logic gate. Input $\alpha = 0$ gives the output $\pi$ while input $\alpha = \pi$ yields the output $0$. Therefore, the truth table for this gate is
\[
\begin{tabular}{c|c}
	$a$ & $out$ \\
	\hline
	0 & 1 \\
	1 & 0 \\
\end{tabular}
\]
which means that it is a NOT gate~\cite{kostylev2005spin}. Symbolically, it can be denoted by $a'$. In other words, $a'$ is the Boolean algebraic interpretation of this diagram. The NOT gate gives the complement of the input variable. An abbreviated version of the diagram for the complement (Comp) of the variable $a$ is  
\[
\scalebox{0.75}{\input{comp.tikz}}
\] 

Let's consider circuit (\ref{series1}) again
\[
\scalebox{0.75}{}
\]
and find its truth table. It has two input variables $\alpha$ and $\beta$. Setting both inputs to $0$ or $\pi$ yields the output $0$. Complementary inputs give the output $\pi$. Therefore, this circuit has the truth table
\[
\begin{tabular}{c|c|c}
	$a$ & $b$ & $out$ \\
	\hline
	0 & 0 & 0 \\
	0 & 1 & 1 \\
	1 & 0 & 1 \\
	1 & 1 & 0 \\        
\end{tabular}
\]
which is that of an exclusive OR (XOR) gate~\cite{schneider2008realization}. 

Composing a NOT gate (which is a $\pi$ phase shift) in series with an XOR gate gives the diagram
\[
\scalebox{0.75}{\input{fig3d.tikz}}
\]
which has the truth table
\[
\begin{tabular}{c|c|c}
	$a$ & $b$ & $out$ \\
	\hline
	0 & 0 & 1 \\
	0 & 1 & 0 \\
	1 & 0 & 0 \\
	1 & 1 & 1 \\        
\end{tabular}
\]
and therefore is an XNOR gate. 

So far, we have discussed only the series composition of circuits. One can compose circuits in parallel too. For instance, two wires having phase shifts $\alpha$ and $\beta$ can be composed in parallel as
\[
\scalebox{0.75}{\input{parallel.tikz}}
\]
This circuit initialises two Boolean variables. 

In phase-encoded wave-based computation, a wave signal may be required to be split or copied into multiple waveguides. For a wave copied into three waveguides, the \textit{copy} operation is represented as
\[
\scalebox{0.75}{\input{copy.tikz}}
\]
For an input wave entering the bottom wire, three copies are produced at the output. 

Interference of waves is performed by combining signals from multiple waveguides into a single waveguide. Diagrammatically, it is given by the \textit{merge} operation:
\[
\scalebox{0.75}{\input{merge.tikz}}
\]
The output of a merge is obtained by adding the waves entering the three input wires and then normalising the resultant amplitude to $1$.

We use only one-to-three copy and three-to-one merge operations here because, in addition to phase shifts, they are all we need to create all other standard logic gates. Three phase shifts ($\alpha$, $\beta$, and $\gamma$) taken in parallel, along with a copy and a merge, give the circuit
\begin{equation}\label{majabc}
	\scalebox{0.75}{\input{fig4a.tikz}} 
\end{equation}
In keeping with Convention~\ref{sigconv}, 
$\sin(\omega t)$ enters the input wire at the bottom, and is copied into three wires. The three waves are phase shifted by $\alpha$, $\beta$, $\gamma$ (corresponding to the logical inputs $a, b, c$
) and are then merged into the single output wire. The output is obtained by normalising the amplitude of the sum $\sin(\omega t + \alpha) +  \sin(\omega t + \beta) +  \sin(\omega t + \gamma)$. An output wave $\sin(\omega t)$ corresponds to logical $0$ whereas $\sin(\omega t + \pi)$ corresponds to logical~$1$. 
The truth table for this gate is as follows:
\[
\begin{tabular}{c|c|c|c}
	$a$ & $b$ & $c$ & $out$ \\
	\hline
	0 & 0 & 0 & 0\\
	0 & 0 & 1 & 0 \\
	0 & 1 & 0 & 0 \\
	0 & 1 & 1 & 1 \\  
	1 & 0 & 0 & 0 \\
	1 & 0 & 1 & 1 \\
	1 & 1 & 0 & 1 \\
	1 & 1 & 1 & 1 \\   
\end{tabular}
\]
\emph{i.e.,} it is a majority (MAJ) gate~\cite{khitun2010magnonic, fischer2017experimental, zografos2017non}. In Boolean algebra, it is interpreted as $(a \land b)  \lor (b \land c) \lor (c \land a)$.

If one of the inputs, say $c$, is set to $0$, the following truth table is obtained:
\[
\begin{tabular}{c|c|c}
	$a$ & $b$ & $out$ \\
	\hline
	0 & 0 & 0 \\
	0 & 1 & 0 \\
	1 & 0 & 0 \\
	1 & 1 & 1 \\        
\end{tabular}
\]
That is, the resultant is effectively an AND gate~\cite{parhami2020majority}.

Setting the input $c$ to $1$ yields the truth table of an OR gate~\cite{parhami2020majority}:
\[
\begin{tabular}{c|c|c}
	$a$ & $b$ & $out$ \\
	\hline
	0 & 0 & 0 \\
	0 & 1 & 1 \\
	1 & 0 & 1 \\
	1 & 1 & 1 \\        
\end{tabular}
\]

In string diagrams, these gates are given by 
\[
\scalebox{0.75}{\input{fig4b.tikz}}  \ \ , \ \ \ \ \ \    \scalebox{0.75}{\input{fig4c.tikz}} 
\]
which are interpreted in Boolean algebra as $a \land b$ and $a \lor b$, respectively. Note that we have reordered the phases using the commutativity rule (\ref{eq:CM}):
\[
\scalebox{0.75}{\input{ocm1.tikz}} \ = \ \scalebox{0.75}{\input{ocm4.tikz}} 
\]

\begin{convention}
	We follow the convention of drawing the input phase shifts of AND and OR gates on the sides. This helps to visually keep track of the input variables, especially in more complex diagrams.
\end{convention}

More complex circuits can be obtained by composing logic gates. For instance, an AND gate and an OR gate can be composed such that the output of the AND gate is used as an input of the OR gate. Consider an AND gate $a \land b$ the output of which is used an input of an OR gate to implement the circuit $(a \land b) \lor c$. Diagrammatically, this circuit is obtained by plugging the $a \land b$ gate in as an input inside an $\lor$ gate with $c$ as the other input:
\[
\scalebox{0.75}{\input{nesting1.tikz}}
\] 

Another example of composing logic gates is as follows. A NOT gate can be composed with an AND gate $a \land b$ to yield a circuit that inverts the output of the AND gate: $(a \land b)'$. This is actually a NAND gate. In diagrams, it is generated by composing a NOT gate (\textit{i.e.} a $\pi$ phase shift) in series with the AND gate:
\[
\scalebox{0.75}{\input{andnot.tikz}}
\]

\subsection{Diagrammatic Reasoning}\label{diagreas}

For diagrammatic reasoning, we need to define a notion of equality and then a set of diagram substitution or rewrite rules based on this notion of equality. In this chapter, two diagrams are considered to be (operationally) equal if they represent the same logical operation, \emph{i.e.} produce the same truth table.  Using this notion of equality, we define rewrite rules as follows.

Two definitional rules were already discussed in the previous section: the identity (ID) rule and the complement (Comp) rule: 
\begin{equation}
\scalebox{0.75}{\input{fig2c.tikz}} 
	\tag{ID}\label{eq:ID}
\end{equation}
\begin{equation}
\scalebox{0.75}{\input{comp.tikz}} 
	\tag{Comp}\label{eq:Comp}
\end{equation}
The remaining rules are the fusion (F) rule, the copy (C) rule, the commutativity (CM) rule, the distributivity (D) rule, the majority (M) rule, the associativity (A) rule, and the chopping (CH) rule:
\begin{equation}
\scalebox{0.75}{\input{fusion.tikz}}
	\tag{F}\label{eq:F}
\end{equation}
\begin{equation}
\scalebox{0.75}{\input{copyrule3.tikz}} = \scalebox{0.75}{\input{copyrule4.tikz}} \ , \ \scalebox{0.75}{\input{copyrule1.tikz}} = \scalebox{0.75}{\input{copyrule2.tikz}} \ \ \ \ 
	\tag{C}\label{eq:C}
\end{equation}
\begin{equation}
	\scalebox{0.75}{\input{ocm1.tikz}} = \scalebox{0.75}{\input{ocm4.tikz}} = \ \cdots \ = \scalebox{0.75}{\input{ocm6.tikz}}
	\tag{CM}\label{eq:CM}
\end{equation}
\begin{equation}
\scalebox{0.75}{\input{dist1.tikz}} 
	\tag{D}\label{eq:D}
\end{equation}
\begin{equation}
\scalebox{0.75}{\input{maj1.tikz}} = \scalebox{0.75}{\input{idema6b.tikz}} 
	\tag{M}\label{eq:M}
\end{equation}
\begin{equation}
\scalebox{0.75}{\input{ass1.tikz}} 
	\tag{A}\label{eq:A}
\end{equation}

\begin{equation}
\scalebox{0.75}{\input{rule31.tikz}} = \scalebox{0.75}{\input{rule32.tikz}} 
	\tag{CH}\label{eq:CH}
\end{equation}
where $\alpha, \beta, \gamma, \phi, \theta \in \{0, \pi\}$, $\phi \neq \theta$, and $+$ is the sum modulo $2\pi$.

As an example of diagrammatic reasoning, we shall now use the rules defined above to derive another chopping rule (CH2) that will be of use later.

\begin{proposition}
For all $\phi, \alpha \in \{0, \pi\}$,

\begin{equation}
\scalebox{0.75}{\input{rule31b.tikz}} = \scalebox{0.75}{\input{rule32b.tikz}}
	\tag{CH2}\label{eq:CH2}
\end{equation}
\end{proposition}
\begin{proof}
\[
\scalebox{0.75}{\input{rule31b.tikz}} \overset{\text{(Comp)}}{=} \scalebox{0.75}{\input{rule31b2.tikz}} \overset{\text{(C)}}{=} \scalebox{0.75}{\input{rule31b3.tikz}} 
\overset{\text{(F)}}{=} \scalebox{0.75}{\input{rule31b4.tikz}}  \overset{\text{}}{=} \scalebox{0.75}{\input{rule31b5.tikz}} 
\]
\[ \overset{\text{(ID)}}{=} \scalebox{0.75}{\input{rule31b6.tikz}} 
\overset{\text{(CH)}}{=} \scalebox{0.75}{\input{rule31b7.tikz}}  \overset{\text{(F)}}{=} \scalebox{0.75}{\input{rule31b8.tikz}}  \overset{\text{}}{=} \scalebox{0.75}{\input{rule31b9.tikz}}  \overset{\text{(ID)}}{=} \scalebox{0.75}{\input{rule31b10.tikz}}   
\]
\end{proof}

An important question to ask is whether this string-diagrammatic formalism is universal, sound and complete for Boolean algebra. Here, universality means that the string diagrams can represent every valid Boolean algebraic expression; soundness means that any derivation in string diagrams is correct when interpreted as Boolean algebra; and completeness means that any valid Boolean algebraic equation can be derived using string diagrams.

Any Boolean algebraic expression comprises Boolean variables connected by 
$\land$, $\lor$ and $'$ operations. String diagrammatic formulae for these variables (phase shifts) and operations (logic gates) have already been demonstrated. Taking the 
$\land$ or $\lor$ of a Boolean expression with another one diagrammatically means nesting the corresponding diagrams in an 
AND or OR circuit. Taking the 
$'$ (negation) of an expression corresponds to composing the circuit with a NOT gate (\emph{i.e.} a $\pi$ phase shift). Our string-diagrammatic formalism is therefore universal by construction.

Proving soundness and completeness of this formalism for Boolean algebra will be the subject of future work. However, we offer some remarks below. 

The axioms of Boolean algebra are
\begin{enumerate}[label=(\alph*)]
	\item $\forall x, y \in B,\ x \land y = y \land x$
	\item $\forall x, y \in B,\ x \lor y = y \lor x$
	\item $\forall x \in B,\ x \lor 0 = x$
	\item $\forall x \in B,\ x \land 1 = x$
	\item $\forall x \in B,\ x \lor x' = 1$
	\item $\forall x \in B,\ x \land x' = 0$
	\item $\forall x, y, z \in B,\ x \land (y \lor z) = (x \land y) \lor (x \land z)$ 
	\item $\forall x, y, z \in B,\ x \lor (y \land z) = (x \lor y) \land (x \lor z)$
\end{enumerate}
where $B = \{0, 1\}$.    

Axioms (a) and (b) correspond diagrammatically to the (CM) rule, (c) and (d) to the (CH) rule, (e) and (f) to the (CH2) rule, and (g) and (h) to the (D) rule. The (CH2) rule is derived string-diagrammatically using the (CH) rule. Notice that associativity is not one of the algebraic axioms since it can be derived from the eight above. In fact, its corresponding rule, (A), is derivable string-diagrammatically as well---see Appendix~\ref{appendwaves} for details. The rest of the rewrite rules are purely string-diagrammatic, \emph{i.e.} they come for free with the formalism.

\section{Application to Wave Logic Circuits}\label{secappwave}

One can use the diagrammatic formalism for design, analysis and optimisation of wave-logic circuits. Some examples are presented below.

\subsection{Design}

\begin{example}
A half adder is characterised by the 
truth table: 
\[
\begin{tabular}{c|c|c|c}
	$a$ & $b$ & $carry$ & $sum$ \\
	\hline
	0 & 0 & 0 & 0\\
	0 & 1 & 0 & 1 \\
	1 & 0 & 0 & 1 \\
	1 & 1 & 1 & 0 \\            \end{tabular}
\]
The implementation of an AND gate has already been discussed. Hence the design of the carry circuit is straightforward. The algebraic expression of the sum circuit is that of an XOR gate, which (as discussed before) is implemented by composing phase shifts (corresponding to the input variables) in series. This means the complete half adder circuit is given by 
\[
\scalebox{0.75}{\input{half_add.tikz}}
\]

\end{example}

\begin{example}
For a slightly more complex example, consider a full adder. Its truth table is as follows:
\[
\begin{tabular}{c|c|c|c|c}
	$c\_in$ & $a$ & $b$ & $c\_out$ & $sum$ \\
	\hline
	0 & 0 & 0 & 0 & 0 \\
	0 & 0 & 1 & 0 & 1 \\
	0 & 1 & 0 & 0 & 1 \\
	0 & 1 & 1 & 1 & 0 \\    
	1 & 0 & 0 & 0 & 1 \\
	1 & 0 & 1 & 1 & 0 \\
	1 & 1 & 0 & 1 & 0 \\
	1 & 1 & 1 & 1 & 1 \\          
\end{tabular}
\]
The truth table shows that the $c\_out$ output is equal to that of the MAJ gate, whose implementation has already been discussed. On the other hand, the first half of $\text{sum}$ (when $c\_in=0$) is the output of an XOR gate for inputs $a$ and  $b$ whereas the second half (when $c\_in=1$) is its complement. This means that the $\text{sum}$ circuit can be obtained by composing the XOR of $a$ and $b$ with the phase shift $c\_in$ in series. Hence, the full adder is given by the diagram
\[
\scalebox{0.75}{\input{full_add2.tikz}}
\]
where we represent the phase shift corresponding to $c\_in$ by $\gamma_i$.
\end{example}

\begin{remark}
	In diagrams such as the half adder and full adder, a Boolean parameter may appear in more than one location. This repetition is represented at the level of the diagram's interface: several phase shifts may carry the same label, indicating multiple uses of the same external Boolean input.
	
	We do not realise these repetitions via the internal copy operation, because they do not correspond to any physical splitting of a wave inside the device; only to multiple references to the same external parameter, exactly as in ordinary Boolean algebraic syntax.
	
	This light form of external bookkeeping does not affect the universality of the diagrammatic language. Universality here concerns the ability to represent any Boolean algebraic expression diagrammatically, and repeated variables in such expressions are realised by repeated labels.
	
	Whenever internal duplication of a wave is required, the calculus already provides a copy operation; its absence in the half adder and full adder examples is simply a modelling choice that keeps the diagrams faithful to the physical architecture of those specific circuits.
\end{remark}

\subsection{Analysis}

\begin{example}
Considering the circuit $\scalebox{0.75}{\input{analyse1.tikz}}$, 
perhaps we are interested in its behaviour if the input $a$ is $0$. This corresponds to replacing $\scalebox{0.75}{\input{analyse6.tikz}}$ by $\scalebox{0.75}{\input{analyse5.tikz}}$ in the circuit. The resulting circuit can then be simplified using the rewrite rules, as follows:
\[
\scalebox{0.75}{\input{analyse2.tikz}} 
\overset{\text{(M)}}{=}
\scalebox{0.75}{\input{analyse3.tikz}} \overset{\text{(CH)}}{=}
\scalebox{0.75}{\input{analyse4.tikz}} 
\]
This shows that if input $a$ is set to $0$, the output of the circuit is given by the variable $c$, corresponding to the phase $\gamma$.
\end{example}

\subsection{Optimisation}

\begin{example}
Consider the circuit $ \scalebox{0.75}{\input{circ1.tikz}}$
for which an optimal implementation is desired. Such an implementation would achieve the same logical operation with the simplest possible circuit and the minimum possible number of logic gates.

The above circuit can be simplified using the diagram rewrite rules as follows.
\[
\scalebox{0.55}{\input{circ1.tikz}} \overset{\text{(C)}}{=} \scalebox{0.55}{\input{circ2.tikz}}
\]
\[\overset{\text{(F), (ID), (Comp)}}{=} \scalebox{0.55}{\input{circ3.tikz}} \overset{\text{(D)}}{=} \scalebox{0.55}{\input{circ4.tikz}}
\]
\[\overset{\text{(CH2)}}{=} \scalebox{0.55}{\input{circ5.tikz}} \overset{\text{(CH)}}{=} \scalebox{0.55}{\input{circ6.tikz}}
\]
\[\overset{\text{(D)}}{=} \scalebox{0.55}{\input{circ7.tikz}} 
\overset{\text{(A), (CM)}}{=} \scalebox{0.55}{\input{circ8.tikz}}
\]
\[\overset{\text{(M), (A)}}{=} \scalebox{0.55}{\input{circ9.tikz}} \overset{\text{(M)}}{=} \scalebox{0.55}{\input{circ10.tikz}}
\]
\[\overset{\text{(CM)}}{=} \scalebox{0.55}{\input{circ11.tikz}} \overset{\text{(M)}}{=} \scalebox{0.55}{\input{circ12.tikz}}
\]
\end{example}

\begin{remark}
When a logic circuit is required to be simplified, the conventional approach is to translate it into its corresponding symbolic algebraic expression, simplify the expression, and then translate the resultant expression back into the circuit. 
\[
\text{Circ}_1 \longrightarrow \text{Alg}_1 \longrightarrow \text{Alg}_2 \longrightarrow \text{Circ}_2
\]

The string-diagrammatic formalism introduced in this chapter can be exploited to simplify circuits without translation into algebraic expressions. This means that the diagrams offer an alternative method for simplifying Boolean circuits.
\[
\text{Circ}_1 \longrightarrow \text{Circ}_2
\]
\end{remark}

\begin{remark}
	In fact, it is, in principle, possible to use the diagrams as an alternative to Boolean algebra: by mapping the expressions to diagrams, performing graphical simplification, and then converting the simplified diagrams into algebraic expressions again. 
	\[
	\text{Alg}_1 \longrightarrow 
	\text{Circ}_1 \longrightarrow \text{Circ}_2
	\longrightarrow 
	\text{Alg}_2 
	\]
	Although as a syntax, Boolean algebra is more concise as compared with string diagrams, the latter are more amenable to visual intuition which is useful for a circuit designer or analyst.
\end{remark}

\section{Summary and Outlook}\label{sumwave}
	
In this chapter, we introduced a string-diagrammatic formalism for wave-based computation with phase encoding. Through the example of spin-wave circuits, we demonstrated the usage of the formalism in designing, analysing and simplifying Boolean logic circuits. The approach is device-independent and hence useful as a visual, intuitive aid to design, optimise and analyse circuits in different physical platforms that are based on wave-based or majority-logic-based computation~\cite{imai1986optical, ambs2010optical, kazanskiy2022optical, rahman2015wave, mahmoud2020introduction}. 

A key advantage of the approach lies in its immediate applicability to spin-wave logic circuits~\cite{mahmoud2020introduction, khitun2010magnonic}, in the sense that the components of the circuit diagrams map directly to their physical counterparts. In other words, `what you see is what you get'. The formalism, moreover provides a theoretical framework to represent and characterise spin-wave logic implementations such as those in Refs.~\cite{khitun2010magnonic, kostylev2005spin, schneider2008realization, fischer2017experimental, zografos2017non}.
	
In addition, diagrammatic rewrite rules allow graphical simplification of circuits which potentially lend themselves to automation and/or software-based optimisation. Indeed, broadly, the approach is very expressive in the flexibility it lends to circuit design and implementation. For instance, one may tweak a particular circuit to obtain the desired design (possibly subject to hardware constraints) by moving the phase shifts along the wires and across other phase shifts, and using copy and merge operations (applying the corresponding rules). More concretely, if one is to compose a NOT gate with a circuit, it may be cascaded at the top or the bottom of the circuit (or the circuit may be modified internally in a more elaborate manner), giving the same logical outcome in either case.
	\[
	\scalebox{0.75}{\input{andnot.tikz}} 
	\overset{\text{}}{=}
	\scalebox{0.75}{\input{andnot5.tikz}} \overset{\text{}}{=}
	\scalebox{0.75}{\input{andnot4.tikz}} 
	\]
	\[
	\overset{\text{}}{=} \scalebox{0.75}{\input{andnot3.tikz}} 
	\overset{\text{}}{=}
	\scalebox{0.75}{\input{andnot2.tikz}} \overset{\text{}}{=}
	\scalebox{0.75}{\input{andnot6.tikz}} 
	\]
	
An obvious avenue for future work is providing formal semantics for the string diagrams presented here, and proving that the formalism is sound and complete for Boolean algebra. Another line of research is to conceive of the formalism as a diagrammatic alternative to the traditional symbolic approach, wherein lies the possibility of a new axiomatisation of Boolean algebra. 
	
This study focuses on the theoretical modelling of spin-wave circuits and introduces a formalism that abstracts away many hardware-level challenges. However, since spin-wave computing inherits key limitations of analogue systems---such as noise sensitivity, signal degradation, error accumulation, and energy-intensive error-correction mechanisms---these factors are not fully captured in the current framework. In particular, future extensions of the formalism may need to incorporate models for error correction, which often relies on non-linear processes that can compromise energy efficiency~\cite{verba2019correction}. Similarly, cascading gates requires amplitude normalisation to maintain signal integrity, yet this too is a non-linear and energy-intensive operation~\cite{dutta2015compact}. Modelling these aspects---along with crosstalk noise at small scales and potential integration with complementary platforms such as optics, superconducting qubits, or CMOS circuits~\cite{chumak2015magnon, li2020hybrid, lachance2019hybrid}---will be crucial for improving the practical relevance of the formalism. Doing so would enhance its applicability to real-world implementations and support the development of more robust, scalable spin-wave computing systems.

\usetikzlibrary{decorations.markings}
\usetikzlibrary{shapes.geometric}

\pgfdeclarelayer{edgelayer}
\pgfdeclarelayer{nodelayer}
\pgfsetlayers{edgelayer,nodelayer,main}

\tikzstyle{none}=[inner sep=0pt]
\definecolor{hexcolor0xff0000}{rgb}{1.000,0.000,0.000}
\definecolor{hexcolor0x000000}{rgb}{0.000,0.000,0.000}
\definecolor{hexcolor0x00ff00}{rgb}{0.000,1.000,0.000}
\definecolor{hexcolor0x000000}{rgb}{0.000,0.000,0.000}
\definecolor{hexcolor0xffff00}{rgb}{1.000,1.000,0.000}
\definecolor{hexcolor0xffffff}{rgb}{1.000,1.000,1.000}

\tikzstyle{rn}=[circle,fill=hexcolor0xff0000,draw=hexcolor0x000000,line width=0.8 pt]
\tikzstyle{gn}=[circle,fill=hexcolor0x00ff00,draw=hexcolor0x000000,line width=0.8 pt]
\tikzstyle{yn}=[circle,fill=hexcolor0xffff00,draw=hexcolor0x000000,line width=0.8 pt]
\tikzstyle{wn}=[circle,fill=hexcolor0xffffff,draw=hexcolor0x000000,line width=0.8 pt]
\tikzstyle{wnthick}=[circle,fill=hexcolor0xffffff,draw=hexcolor0x000000,line width=2.500]

\tikzstyle{simple}=[-,draw=hexcolor0x000000,line width=2.000]
\tikzstyle{arrow}=[-,draw=hexcolor0x000000,postaction={decorate},decoration={markings,mark=at position .5 with {\arrow{>}}},line width=2.000]
\tikzstyle{tick}=[-,draw=hexcolor0x000000,postaction={decorate},decoration={markings,mark=at position .5 with {\draw (0,-0.1) -- (0,0.1);}},line width=2.000]
\tikzstyle{halfthickness}=[-,draw=hexcolor0x000000,line width=0.500]
\tikzstyle{thick}=[-,draw=hexcolor0x000000,line width=2.500]
\tikzstyle{thicker}=[-,draw=hexcolor0x000000,line width=4.000]

\tikzstyle{env}=[copoint,regular polygon rotate=0,minimum width=0.2cm, fill=black]

\tikzstyle{probs}=[shape=semicircle,fill=white,draw=black,shape border rotate=180,minimum width=1.2cm]

%
%


\tikzstyle{every picture}=[baseline=-0.5em,scale=0.7]

\tikzstyle{dotpic}=[] 
\tikzstyle{diredges}=[every to/.style={diredge}]
\tikzstyle{math matrix}=[matrix of math nodes,left delimiter=(,right delimiter=),inner sep=2pt,column sep=1em,row sep=0.5em,nodes={inner sep=0pt},text height=1.5ex, text depth=0.25ex]


\tikzstyle{inline text}=[text height=1.5ex, text depth=0.25ex,yshift=0.5mm]
\tikzstyle{label}=[font=\footnotesize,text height=1.5ex, text depth=0.25ex,yshift=0.5mm]
\tikzstyle{left label}=[label,anchor=east,xshift=1.5mm]
\tikzstyle{right label}=[label,anchor=west,xshift=-1.5mm]

\tikzstyle{braceedge}=[decorate,decoration={brace,amplitude=2mm,raise=-1mm}]
\tikzstyle{small braceedge}=[decorate,decoration={brace,amplitude=1mm,raise=-1mm}]

\tikzstyle{doubled}=[line width=1.6pt] 
\tikzstyle{boldedge}=[doubled,shorten <=-0.17mm,shorten >=-0.17mm]
\tikzstyle{boldedgegray}=[doubled,gray,shorten <=-0.17mm,shorten >=-0.17mm]

\tikzstyle{semidoubled}=[line width=1.4pt] 
\tikzstyle{semiboldedgegray}=[semidoubled,gray,shorten <=-0.17mm,shorten >=-0.17mm]

\tikzstyle{boldedgedashed}=[very thick,dashed,shorten <=-0.17mm,shorten >=-0.17mm]
\tikzstyle{vboldedgedashed}=[doubled,dashed,shorten <=-0.17mm,shorten >=-0.17mm]
\tikzstyle{left hook arrow}=[left hook-latex]
\tikzstyle{right hook arrow}=[right hook-latex]
\tikzstyle{sembracket}=[line width=0.5pt,shorten <=-0.07mm,shorten >=-0.07mm]

\tikzstyle{causal edge}=[->,thick,gray]
\tikzstyle{causal nondir}=[thick,gray]
\tikzstyle{timeline}=[thick,gray, dashed]

\tikzstyle{cedge}=[<->,thick,gray!70!white]

\tikzstyle{empty diagram}=[draw=gray!40!white,dashed,shape=rectangle,minimum width=1cm,minimum height=1cm]
\tikzstyle{empty diagram small}=[draw=gray!50!white,dashed,shape=rectangle,minimum width=0.6cm,minimum height=0.5cm]

\newcommand{\measurement}{\tikz[scale=0.6]{ \draw [use as bounding box,draw=none] (0,-0.1) rectangle (1,0.7); \draw [fill=white] (1,0) arc (0:180:5mm); \draw (0,0) -- (1,0) (0.5,0) -- +(60:7mm);}}


\tikzstyle{dot}=[inner sep=0mm,minimum width=2mm,minimum height=2mm,draw,shape=circle]
\tikzstyle{ddot}=[inner sep=0mm, doubled, minimum width=2.5mm,minimum height=2.5mm,draw,shape=circle]

\tikzstyle{black dot}=[dot,fill=black]
\tikzstyle{white dot}=[dot,fill=white,,text depth=-0.2mm]
\tikzstyle{green dot}=[white dot] 
\tikzstyle{gray dot}=[dot,fill=gray!40!white,,text depth=-0.2mm]
\tikzstyle{red dot}=[gray dot] 


\tikzstyle{black ddot}=[ddot,fill=black]
\tikzstyle{white ddot}=[ddot,fill=white]
\tikzstyle{gray ddot}=[ddot,fill=gray!40!white]

\tikzstyle{gray edge}=[gray!40!white]

\tikzstyle{small dot}=[inner sep=0.5mm,minimum width=0pt,minimum height=0pt,draw,shape=circle]

\tikzstyle{small black dot}=[small dot,fill=black]
\tikzstyle{small white dot}=[small dot,fill=white]
\tikzstyle{small gray dot}=[small dot,fill=gray!40!white]

\tikzstyle{causal dot}=[inner sep=0.4mm,minimum width=0pt,minimum height=0pt,draw=white,shape=circle,fill=gray!40!white]


\tikzstyle{phase dimensions}=[minimum size=5mm,font=\footnotesize,rectangle,rounded corners=2.5mm,inner sep=0.2mm,outer sep=-2mm]
\tikzstyle{dphase dimensions}=[minimum size=5mm,font=\footnotesize,rectangle,rounded corners=2.5mm,inner sep=0.2mm,outer sep=-2mm]

\tikzstyle{white phase dot}=[dot,fill=white,phase dimensions]
\tikzstyle{white phase ddot}=[ddot,fill=white,dphase dimensions]
\tikzstyle{green phase ddot}=[ddot,fill=green,dphase dimensions]

\tikzstyle{white rect ddot}=[draw=black,fill=white,doubled,minimum size=5mm,font=\footnotesize,rectangle,rounded corners=2.5mm,inner sep=0.2mm]
\tikzstyle{gray rect ddot}=[draw=black,fill=gray!40!white,doubled,minimum size=6mm,font=\footnotesize,rectangle,rounded corners=3mm]

\tikzstyle{gray phase dot}=[dot,fill=gray!40!white,phase dimensions]
\tikzstyle{gray phase ddot}=[ddot,fill=gray!40!white,dphase dimensions]
\tikzstyle{red phase ddot}=[ddot,fill=red,dphase dimensions]

\tikzstyle{grey phase dot}=[gray phase dot]
\tikzstyle{grey phase ddot}=[gray phase ddot]

\tikzstyle{small phase dimensions}=[minimum size=4mm,font=\tiny,rectangle,rounded corners=2mm,inner sep=0.2mm,outer sep=-2mm]
\tikzstyle{small dphase dimensions}=[minimum size=4mm,font=\tiny,rectangle,rounded corners=2mm,inner sep=0.2mm,outer sep=-2mm]

\tikzstyle{small gray phase dot}=[dot,fill=gray!40!white,small phase dimensions]
\tikzstyle{small gray phase ddot}=[ddot,fill=gray!40!white,small dphase dimensions]


\tikzstyle{small map}=[draw,shape=rectangle,minimum height=4mm,minimum width=4mm,fill=white]

\tikzstyle{cnot}=[fill=white,shape=circle,inner sep=-1.4pt]

\tikzstyle{asym hadamard}=[fill=white,draw,shape=NEbox,inner sep=0.6mm,font=\footnotesize,minimum height=4mm]
\tikzstyle{asym hadamard conj}=[fill=white,draw,shape=NWbox,inner sep=0.6mm,font=\footnotesize,minimum height=4mm]
\tikzstyle{asym hadamard dag}=[fill=white,draw,shape=SEbox,inner sep=0.6mm,font=\footnotesize,minimum height=4mm]

\tikzstyle{hadamard}=[fill=white,draw,inner sep=0.6mm,font=\footnotesize,minimum height=4mm,minimum width=4mm]
\tikzstyle{small hadamard}=[fill=white,draw,inner sep=0.6mm,minimum height=1.5mm,minimum width=1.5mm]
\tikzstyle{dhadamard}=[hadamard,doubled]
\tikzstyle{small dhadamard}=[small hadamard,doubled]
\tikzstyle{small dhadamard rotate}=[small hadamard,doubled,rotate=45]
\tikzstyle{antipode}=[white dot,inner sep=0.3mm,font=\footnotesize]

\tikzstyle{scalar}=[diamond,draw,inner sep=0.5pt,font=\small]
\tikzstyle{dscalar}=[diamond,doubled, draw,inner sep=0.5pt,font=\small]

\tikzstyle{small box}=[rectangle,inline text,fill=white,draw,minimum height=5mm,yshift=-0.5mm,minimum width=5mm,font=\small]
\tikzstyle{small gray box}=[small box,fill=gray!30]
\tikzstyle{medium box}=[rectangle,inline text,fill=white,draw,minimum height=5mm,yshift=-0.5mm,minimum width=10mm,font=\small]
\tikzstyle{square box}=[small box] 
\tikzstyle{medium gray box}=[small box,fill=gray!30]
\tikzstyle{semilarge box}=[rectangle,inline text,fill=white,draw,minimum height=5mm,yshift=-0.5mm,minimum width=12.5mm,font=\small]
\tikzstyle{large box}=[rectangle,inline text,fill=white,draw,minimum height=5mm,yshift=-0.5mm,minimum width=15mm,font=\small]
\tikzstyle{large gray box}=[small box,fill=gray!30]

\tikzstyle{Bayes box}=[rectangle,fill=black,draw, minimum height=3mm, minimum width=3mm]

\tikzstyle{gray square point}=[small box,fill=gray!50]

\tikzstyle{dphase box white}=[dhadamard]
\tikzstyle{dphase box gray}=[dhadamard,fill=gray!50!white]

\tikzstyle{point}=[regular polygon,regular polygon sides=3,draw,scale=0.75,inner sep=-0.5pt,minimum width=9mm,fill=white,regular polygon rotate=180]
\tikzstyle{copoint}=[regular polygon,regular polygon sides=3,draw,scale=0.75,inner sep=-0.5pt,minimum width=9mm,fill=white]
\tikzstyle{dpoint}=[point,doubled]
\tikzstyle{dcopoint}=[copoint,doubled]

\tikzstyle{wide copoint}=[fill=white,draw,shape=isosceles triangle,shape border rotate=90,isosceles triangle stretches=true,inner sep=0pt,minimum width=1.5cm,minimum height=6.12mm]
\tikzstyle{wide point}=[fill=white,draw,shape=isosceles triangle,shape border rotate=-90,isosceles triangle stretches=true,inner sep=0pt,minimum width=1.5cm,minimum height=6.12mm,yshift=-0.0mm]
\tikzstyle{wide point plus}=[fill=white,draw,shape=isosceles triangle,shape border rotate=-90,isosceles triangle stretches=true,inner sep=0pt,minimum width=1.74cm,minimum height=7mm,yshift=-0.0mm]

\tikzstyle{wide dpoint}=[fill=white,doubled,draw,shape=isosceles triangle,shape border rotate=-90,isosceles triangle stretches=true,inner sep=0pt,minimum width=1.5cm,minimum height=6.12mm,yshift=-0.0mm]
\tikzstyle{wide dcopoint}=[fill=white,doubled,draw,shape=isosceles triangle,shape border rotate=90,isosceles triangle stretches=true,inner sep=0pt,minimum width=1.5cm,minimum height=6.12mm,yshift=-0.0mm]

\tikzstyle{tinypoint}=[regular polygon,regular polygon sides=3,draw,scale=0.55,inner sep=-0.15pt,minimum width=6mm,fill=white,regular polygon rotate=180]

\tikzstyle{white point}=[point]
\tikzstyle{white dpoint}=[dpoint]
\tikzstyle{green point}=[white point] 
\tikzstyle{white copoint}=[copoint]
\tikzstyle{gray point}=[point,fill=gray!40!white]
\tikzstyle{gray dpoint}=[gray point,doubled]
\tikzstyle{red point}=[gray point] 
\tikzstyle{gray copoint}=[copoint,fill=gray!40!white]
\tikzstyle{gray dcopoint}=[gray copoint,doubled]

\tikzstyle{white point guide}=[regular polygon,regular polygon sides=3,font=\scriptsize,draw,scale=0.65,inner sep=-0.5pt,minimum width=9mm,fill=white,regular polygon rotate=180]

\tikzstyle{black point}=[point,fill=black,font=\color{white}]
\tikzstyle{black copoint}=[copoint,fill=black,font=\color{white}]

\tikzstyle{tiny gray point}=[tinypoint,fill=gray!40!white]

\tikzstyle{diredge}=[->]
\tikzstyle{ddiredge}=[<->]
\tikzstyle{rdiredge}=[<-]
\tikzstyle{thickdiredge}=[->, very thick]
\tikzstyle{pointer edge}=[->,very thick,gray]
\tikzstyle{pointer edge part}=[very thick,gray]
\tikzstyle{dashed edge}=[dashed]
\tikzstyle{thick dashed edge}=[very thick,dashed]
\tikzstyle{thick gray dashed edge}=[thick dashed edge,gray!40]
\tikzstyle{thick map edge}=[very thick,|->]


\makeatletter

\renewcommand{\boxshape}[3]{%
\pgfdeclareshape{#1}{
\inheritsavedanchors[from=rectangle] 
\inheritanchorborder[from=rectangle]
\inheritanchor[from=rectangle]{center}
\inheritanchor[from=rectangle]{north}
\inheritanchor[from=rectangle]{south}
\inheritanchor[from=rectangle]{west}
\inheritanchor[from=rectangle]{east}
\backgroundpath{
\southwest \pgf@xa=\pgf@x \pgf@ya=\pgf@y
\northeast \pgf@xb=\pgf@x \pgf@yb=\pgf@y

\@tempdima=#2
\@tempdimb=#3

\pgfpathmoveto{\pgfpoint{\pgf@xa - 5pt + \@tempdima}{\pgf@ya}}
\pgfpathlineto{\pgfpoint{\pgf@xa - 5pt - \@tempdima}{\pgf@yb}}
\pgfpathlineto{\pgfpoint{\pgf@xb + 5pt + \@tempdimb}{\pgf@yb}}
\pgfpathlineto{\pgfpoint{\pgf@xb + 5pt - \@tempdimb}{\pgf@ya}}
\pgfpathlineto{\pgfpoint{\pgf@xa - 5pt + \@tempdima}{\pgf@ya}}
\pgfpathclose
}
}}

\boxshape{NEbox}{0pt}{5pt}
\boxshape{SEbox}{0pt}{-5pt}
\boxshape{NWbox}{5pt}{0pt}
\boxshape{SWbox}{-5pt}{0pt}
\boxshape{EBox}{-3pt}{3pt}
\boxshape{WBox}{3pt}{-3pt}
\makeatother

\tikzstyle{cloud}=[shape=cloud,draw,minimum width=1.5cm,minimum height=1.5cm]

\tikzstyle{map}=[draw,shape=NEbox,inner sep=2pt,minimum height=6mm,fill=white]
\tikzstyle{dashedmap}=[draw,dashed,shape=NEbox,inner sep=2pt,minimum height=6mm,fill=white]
\tikzstyle{mapdag}=[draw,shape=SEbox,inner sep=2pt,minimum height=6mm,fill=white]
\tikzstyle{mapadj}=[draw,shape=SEbox,inner sep=2pt,minimum height=6mm,fill=white]
\tikzstyle{maptrans}=[draw,shape=SWbox,inner sep=2pt,minimum height=6mm,fill=white]
\tikzstyle{mapconj}=[draw,shape=NWbox,inner sep=2pt,minimum height=6mm,fill=white]

\tikzstyle{langmap}=[draw,shape=NEbox,inner sep=2pt,minimum height=2.4mm,minimum width=3.2mm,fill=white]
\tikzstyle{langmaptrans}=[draw,shape=SWbox,inner sep=2pt,minimum height=2.4mm,minimum width=3.2mm,fill=white]

\tikzstyle{medium map}=[draw,shape=NEbox,inner sep=2pt,minimum height=6mm,fill=white,minimum width=7mm]
\tikzstyle{medium map dag}=[draw,shape=SEbox,inner sep=2pt,minimum height=6mm,fill=white,minimum width=7mm]
\tikzstyle{medium map adj}=[draw,shape=SEbox,inner sep=2pt,minimum height=6mm,fill=white,minimum width=7mm]
\tikzstyle{medium map trans}=[draw,shape=SWbox,inner sep=2pt,minimum height=6mm,fill=white,minimum width=7mm]
\tikzstyle{medium map conj}=[draw,shape=NWbox,inner sep=2pt,minimum height=6mm,fill=white,minimum width=7mm]
\tikzstyle{semilarge map}=[draw,shape=NEbox,inner sep=2pt,minimum height=6mm,fill=white,minimum width=9.5mm]
\tikzstyle{semilarge map trans}=[draw,shape=SWbox,inner sep=2pt,minimum height=6mm,fill=white,minimum width=9.5mm]
\tikzstyle{semilarge map adj}=[draw,shape=SEbox,inner sep=2pt,minimum height=6mm,fill=white,minimum width=9.5mm]
\tikzstyle{semilarge map dag}=[draw,shape=SEbox,inner sep=2pt,minimum height=6mm,fill=white,minimum width=9.5mm]
\tikzstyle{semilarge map conj}=[draw,shape=NWbox,inner sep=2pt,minimum height=6mm,fill=white,minimum width=9.5mm]
\tikzstyle{large map}=[draw,shape=NEbox,inner sep=2pt,minimum height=6mm,fill=white,minimum width=12mm]
\tikzstyle{large map conj}=[draw,shape=NWbox,inner sep=2pt,minimum height=6mm,fill=white,minimum width=12mm]
\tikzstyle{very large map}=[draw,shape=NEbox,inner sep=2pt,minimum height=6mm,fill=white,minimum width=17mm]

\tikzstyle{medium dmap}=[draw,doubled,shape=NEbox,inner sep=2pt,minimum height=6mm,fill=white,minimum width=7mm]
\tikzstyle{medium dmap dag}=[draw,doubled,shape=SEbox,inner sep=2pt,minimum height=6mm,fill=white,minimum width=7mm]
\tikzstyle{medium dmap adj}=[draw,doubled,shape=SEbox,inner sep=2pt,minimum height=6mm,fill=white,minimum width=7mm]
\tikzstyle{medium dmap trans}=[draw,doubled,shape=SWbox,inner sep=2pt,minimum height=6mm,fill=white,minimum width=7mm]
\tikzstyle{medium dmap conj}=[draw,doubled,shape=NWbox,inner sep=2pt,minimum height=6mm,fill=white,minimum width=7mm]
\tikzstyle{semilarge dmap}=[draw,doubled,shape=NEbox,inner sep=2pt,minimum height=6mm,fill=white,minimum width=9.5mm]
\tikzstyle{semilarge dmap trans}=[draw,doubled,shape=SWbox,inner sep=2pt,minimum height=6mm,fill=white,minimum width=9.5mm]
\tikzstyle{semilarge dmap adj}=[draw,doubled,shape=SEbox,inner sep=2pt,minimum height=6mm,fill=white,minimum width=9.5mm]
\tikzstyle{semilarge dmap dag}=[draw,doubled,shape=SEbox,inner sep=2pt,minimum height=6mm,fill=white,minimum width=9.5mm]
\tikzstyle{semilarge dmap conj}=[draw,doubled,shape=NWbox,inner sep=2pt,minimum height=6mm,fill=white,minimum width=9.5mm]
\tikzstyle{large dmap}=[draw,doubled,shape=NEbox,inner sep=2pt,minimum height=6mm,fill=white,minimum width=12mm]
\tikzstyle{large dmap conj}=[draw,doubled,shape=NWbox,inner sep=2pt,minimum height=6mm,fill=white,minimum width=12mm]
\tikzstyle{large dmap trans}=[draw,doubled,shape=SWbox,inner sep=2pt,minimum height=6mm,fill=white,minimum width=12mm]
\tikzstyle{large dmap adj}=[draw,doubled,shape=SEbox,inner sep=2pt,minimum height=6mm,fill=white,minimum width=12mm]
\tikzstyle{large dmap dag}=[draw,doubled,shape=SEbox,inner sep=2pt,minimum height=6mm,fill=white,minimum width=12mm]
\tikzstyle{very large dmap}=[draw,doubled,shape=NEbox,inner sep=2pt,minimum height=6mm,fill=white,minimum width=19.5mm]

\tikzstyle{muxbox}=[draw,shape=rectangle,minimum height=3mm,minimum width=3mm,fill=white]
\tikzstyle{dmuxbox}=[muxbox,doubled]

\tikzstyle{box}=[draw,shape=rectangle,inner sep=2pt,minimum height=6mm,minimum width=6mm,fill=white]
\tikzstyle{dbox}=[draw,doubled,shape=rectangle,inner sep=2pt,minimum height=6mm,minimum width=6mm,fill=white]
\tikzstyle{dmap}=[draw,doubled,shape=NEbox,inner sep=2pt,minimum height=6mm,fill=white]
\tikzstyle{dmapdag}=[draw,doubled,shape=SEbox,inner sep=2pt,minimum height=6mm,fill=white]
\tikzstyle{dmapadj}=[draw,doubled,shape=SEbox,inner sep=2pt,minimum height=6mm,fill=white]
\tikzstyle{dmaptrans}=[draw,doubled,shape=SWbox,inner sep=2pt,minimum height=6mm,fill=white]
\tikzstyle{dmapconj}=[draw,doubled,shape=NWbox,inner sep=2pt,minimum height=6mm,fill=white]

\tikzstyle{ddmap}=[draw,doubled,dashed,shape=NEbox,inner sep=2pt,minimum height=6mm,fill=white]
\tikzstyle{ddmapdag}=[draw,doubled,dashed,shape=SEbox,inner sep=2pt,minimum height=6mm,fill=white]
\tikzstyle{ddmapadj}=[draw,doubled,dashed,shape=SEbox,inner sep=2pt,minimum height=6mm,fill=white]
\tikzstyle{ddmaptrans}=[draw,doubled,dashed,shape=SWbox,inner sep=2pt,minimum height=6mm,fill=white]
\tikzstyle{ddmapconj}=[draw,doubled,dashed,shape=NWbox,inner sep=2pt,minimum height=6mm,fill=white]

\boxshape{sNEbox}{0pt}{3pt}
\boxshape{sSEbox}{0pt}{-3pt}
\boxshape{sNWbox}{3pt}{0pt}
\boxshape{sSWbox}{-3pt}{0pt}
\tikzstyle{smap}=[draw,shape=sNEbox,fill=white]
\tikzstyle{smapdag}=[draw,shape=sSEbox,fill=white]
\tikzstyle{smapadj}=[draw,shape=sSEbox,fill=white]
\tikzstyle{smaptrans}=[draw,shape=sSWbox,fill=white]
\tikzstyle{smapconj}=[draw,shape=sNWbox,fill=white]

\tikzstyle{dsmap}=[draw,dashed,shape=sNEbox,fill=white]
\tikzstyle{dsmapdag}=[draw,dashed,shape=sSEbox,fill=white]
\tikzstyle{dsmaptrans}=[draw,dashed,shape=sSWbox,fill=white]
\tikzstyle{dsmapconj}=[draw,dashed,shape=sNWbox,fill=white]

\boxshape{mNEbox}{0pt}{10pt}
\boxshape{mSEbox}{0pt}{-10pt}
\boxshape{mNWbox}{10pt}{0pt}
\boxshape{mSWbox}{-10pt}{0pt}
\tikzstyle{mmap}=[draw,shape=mNEbox]
\tikzstyle{mmapdag}=[draw,shape=mSEbox]
\tikzstyle{mmaptrans}=[draw,shape=mSWbox]
\tikzstyle{mmapconj}=[draw,shape=mNWbox]

\tikzstyle{mmapgray}=[draw,fill=gray!40!white,shape=mNEbox]
\tikzstyle{smapgray}=[draw,fill=gray!40!white,shape=sNEbox]

\makeatletter
\pgfdeclareshape{cornerpoint}{
\inheritsavedanchors[from=rectangle] 
\inheritanchorborder[from=rectangle]
\inheritanchor[from=rectangle]{center}
\inheritanchor[from=rectangle]{north}
\inheritanchor[from=rectangle]{south}
\inheritanchor[from=rectangle]{west}
\inheritanchor[from=rectangle]{east}
\backgroundpath{
\southwest \pgf@xa=\pgf@x \pgf@ya=\pgf@y
\northeast \pgf@xb=\pgf@x \pgf@yb=\pgf@y

\pgfmathsetmacro{\pgf@shorten@left}{\pgfkeysvalueof{/tikz/shorten left}}
\pgfmathsetmacro{\pgf@shorten@right}{\pgfkeysvalueof{/tikz/shorten right}}

\pgfpathmoveto{\pgfpoint{0.5 * (\pgf@xa + \pgf@xb)}{\pgf@ya - 5pt}}
\pgfpathlineto{\pgfpoint{\pgf@xa - 8pt + \pgf@shorten@left}{\pgf@yb - 1.5 * \pgf@shorten@left}}
\pgfpathlineto{\pgfpoint{\pgf@xa - 8pt + \pgf@shorten@left}{\pgf@yb}}
\pgfpathlineto{\pgfpoint{\pgf@xb + 8pt - \pgf@shorten@right}{\pgf@yb}}
\pgfpathlineto{\pgfpoint{\pgf@xb + 8pt - \pgf@shorten@right}{\pgf@yb - 1.5 * \pgf@shorten@right}}
\pgfpathclose
}
}

\pgfdeclareshape{cornercopoint}{
\inheritsavedanchors[from=rectangle] 
\inheritanchorborder[from=rectangle]
\inheritanchor[from=rectangle]{center}
\inheritanchor[from=rectangle]{north}
\inheritanchor[from=rectangle]{south}
\inheritanchor[from=rectangle]{west}
\inheritanchor[from=rectangle]{east}
\backgroundpath{
\southwest \pgf@xa=\pgf@x \pgf@ya=\pgf@y
\northeast \pgf@xb=\pgf@x \pgf@yb=\pgf@y

\pgfmathsetmacro{\pgf@shorten@left}{\pgfkeysvalueof{/tikz/shorten left}}
\pgfmathsetmacro{\pgf@shorten@right}{\pgfkeysvalueof{/tikz/shorten right}}

\pgfpathmoveto{\pgfpoint{0.5 * (\pgf@xa + \pgf@xb)}{\pgf@yb + 5pt}}
\pgfpathlineto{\pgfpoint{\pgf@xa - 8pt + \pgf@shorten@left}{\pgf@ya + 1.5 * \pgf@shorten@left}}
\pgfpathlineto{\pgfpoint{\pgf@xa - 8pt + \pgf@shorten@left}{\pgf@ya}}
\pgfpathlineto{\pgfpoint{\pgf@xb + 8pt - \pgf@shorten@right}{\pgf@ya}}
\pgfpathlineto{\pgfpoint{\pgf@xb + 8pt - \pgf@shorten@right}{\pgf@ya + 1.5 * \pgf@shorten@right}}
\pgfpathclose
}
}

\pgfdeclareshape{langpoint}{
\inheritsavedanchors[from=rectangle] 
\inheritanchorborder[from=rectangle]
\inheritanchor[from=rectangle]{center}
\inheritanchor[from=rectangle]{north}
\inheritanchor[from=rectangle]{south}
\inheritanchor[from=rectangle]{west}
\inheritanchor[from=rectangle]{east}
\backgroundpath{
\southwest \pgf@xa=\pgf@x \pgf@ya=\pgf@y
\northeast \pgf@xb=\pgf@x \pgf@yb=\pgf@y

\pgfmathsetmacro{\pgf@shorten@left}{\pgfkeysvalueof{/tikz/shorten left}}
\pgfmathsetmacro{\pgf@shorten@right}{\pgfkeysvalueof{/tikz/shorten right}}

\pgfpathmoveto{\pgfpoint{0.5 * (\pgf@xa + \pgf@xb)}{\pgf@ya - 2pt}}
\pgfpathlineto{\pgfpoint{\pgf@xa - 8pt}{\pgf@yb - 3 * \pgf@shorten@left + 5pt}} 
\pgfpathlineto{\pgfpoint{\pgf@xa - 8pt}{\pgf@yb -1pt}}
\pgfpathlineto{\pgfpoint{\pgf@xb + 8pt}{\pgf@yb -1pt}}
\pgfpathlineto{\pgfpoint{\pgf@xb + 8pt}{\pgf@yb - 3 * \pgf@shorten@left + 5pt}}
\pgfpathclose
}
}

\pgfdeclareshape{langcopoint}{
\inheritsavedanchors[from=rectangle] 
\inheritanchorborder[from=rectangle]
\inheritanchor[from=rectangle]{center}
\inheritanchor[from=rectangle]{north}
\inheritanchor[from=rectangle]{south}
\inheritanchor[from=rectangle]{west}
\inheritanchor[from=rectangle]{east}
\backgroundpath{
\southwest \pgf@xa=\pgf@x \pgf@ya=\pgf@y
\northeast \pgf@xb=\pgf@x \pgf@yb=\pgf@y

\pgfmathsetmacro{\pgf@shorten@left}{\pgfkeysvalueof{/tikz/shorten left}}
\pgfmathsetmacro{\pgf@shorten@right}{\pgfkeysvalueof{/tikz/shorten right}}

\pgfpathmoveto{\pgfpoint{0.5 * (\pgf@xa + \pgf@xb)}{\pgf@yb +0pt}}
\pgfpathlineto{\pgfpoint{\pgf@xa - 8pt}{\pgf@ya + 3 * \pgf@shorten@left - 5pt}} 
\pgfpathlineto{\pgfpoint{\pgf@xa - 8pt}{\pgf@ya + 1pt}}
\pgfpathlineto{\pgfpoint{\pgf@xb + 8pt}{\pgf@ya + 1pt}}
\pgfpathlineto{\pgfpoint{\pgf@xb + 8pt}{\pgf@ya + 3 * \pgf@shorten@left - 5pt}}
\pgfpathclose
}
}

\pgfdeclareshape{langrect}{
\inheritsavedanchors[from=rectangle] 
\inheritanchorborder[from=rectangle]
\inheritanchor[from=rectangle]{center}
\inheritanchor[from=rectangle]{north}
\inheritanchor[from=rectangle]{south}
\inheritanchor[from=rectangle]{west}
\inheritanchor[from=rectangle]{east}
\backgroundpath{
\southwest \pgf@xa=\pgf@x \pgf@ya=\pgf@y
\northeast \pgf@xb=\pgf@x \pgf@yb=\pgf@y

\pgfmathsetmacro{\pgf@shorten@left}{\pgfkeysvalueof{/tikz/shorten left}}
\pgfmathsetmacro{\pgf@shorten@right}{\pgfkeysvalueof{/tikz/shorten right}}

\pgfpathmoveto{\pgfpoint{\pgf@xa - 8pt}{\pgf@ya + 3 * \pgf@shorten@left - 5pt}} 
\pgfpathlineto{\pgfpoint{\pgf@xa - 8pt}{\pgf@ya + 1pt}}
\pgfpathlineto{\pgfpoint{\pgf@xb + 8pt}{\pgf@ya + 1pt}}
\pgfpathlineto{\pgfpoint{\pgf@xb + 8pt}{\pgf@ya + 3 * \pgf@shorten@left - 5pt}}
\pgfpathclose
}
}

\makeatother

\pgfkeyssetvalue{/tikz/shorten left}{0pt}
\pgfkeyssetvalue{/tikz/shorten right}{0pt}

\tikzstyle{kpoint common}=[draw,fill=white,inner sep=1pt,minimum height=4mm]

\tikzstyle{langstate}=[shape=langcopoint,shorten left=5pt,kpoint common,font=\footnotesize]
\tikzstyle{langeffect}=[shape=langpoint,shorten left=5pt,kpoint common,font=\footnotesize]
\tikzstyle{langstatedash}=[shape=langcopoint,dashed, shorten left=5pt,kpoint common,font=\footnotesize]
\tikzstyle{langeffectdash}=[shape=langpoint,dashed, shorten left=5pt,kpoint common,font=\footnotesize]
\tikzstyle{langbox}=[shape=langrect,shorten left=5pt,kpoint common,font=\footnotesize] 

\tikzstyle{kpoint}=[shape=cornerpoint,shorten left=5pt,kpoint common]
\tikzstyle{kpoint adjoint}=[shape=cornercopoint,shorten left=5pt,kpoint common]

\tikzstyle{kpoint conjugate}=[shape=cornerpoint,shorten right=5pt,kpoint common]
\tikzstyle{kpoint transpose}=[shape=cornercopoint,shorten right=5pt,kpoint common]
\tikzstyle{kpoint symm}=[shape=cornerpoint,shorten left=5pt,shorten right=5pt,kpoint common]

\tikzstyle{black kpoint}=[shape=cornerpoint,shorten left=5pt,kpoint common,fill=black,font=\color{white}]
\tikzstyle{black kpoint adjoint}=[shape=cornercopoint,shorten left=5pt,kpoint common,fill=black,font=\color{white}]
\tikzstyle{black kpointadj}=[shape=cornercopoint,shorten left=5pt,kpoint common,fill=black,font=\color{white}]

\tikzstyle{black dkpoint}=[shape=cornerpoint,shorten left=5pt,kpoint common,fill=black, doubled,font=\color{white}]
\tikzstyle{black dkpoint adjoint}=[shape=cornercopoint,shorten left=5pt,kpoint common,fill=black, doubled,font=\color{white}]
\tikzstyle{black dkpointadj}=[shape=cornercopoint,shorten left=5pt,kpoint common,fill=black, doubled,font=\color{white}]

\tikzstyle{kpointdag}=[kpoint adjoint]
\tikzstyle{kpointadj}=[kpoint adjoint]
\tikzstyle{kpointconj}=[kpoint conjugate]
\tikzstyle{kpointtrans}=[kpoint transpose]

\tikzstyle{big kpoint}=[kpoint, minimum width=1.2 cm, minimum height=8mm, inner sep=4pt, text depth=3mm]

\tikzstyle{wide kpoint}=[kpoint, minimum width=1 cm, inner sep=2pt]
\tikzstyle{wide kpointdag}=[kpointdag, minimum width=1 cm, inner sep=2pt]
\tikzstyle{wide kpointconj}=[kpointconj, minimum width=1 cm, inner sep=2pt]
\tikzstyle{wide kpointtrans}=[kpointtrans, minimum width=1 cm, inner sep=2pt]

\tikzstyle{gray kpoint}=[kpoint,fill=gray!50!white]
\tikzstyle{gray kpointdag}=[kpointdag,fill=gray!50!white]
\tikzstyle{gray kpointadj}=[kpointadj,fill=gray!50!white]
\tikzstyle{gray kpointconj}=[kpointconj,fill=gray!50!white]
\tikzstyle{gray kpointtrans}=[kpointtrans,fill=gray!50!white]

\tikzstyle{gray dkpoint}=[kpoint,fill=gray!50!white,doubled]
\tikzstyle{gray dkpointdag}=[kpointdag,fill=gray!50!white,doubled]
\tikzstyle{gray dkpointadj}=[kpointadj,fill=gray!50!white,doubled]
\tikzstyle{gray dkpointconj}=[kpointconj,fill=gray!50!white,doubled]
\tikzstyle{gray dkpointtrans}=[kpointtrans,fill=gray!50!white,doubled]

\tikzstyle{white label}=[draw,fill=white,rectangle,inner sep=0.7 mm]
\tikzstyle{gray label}=[draw,fill=gray!50!white,rectangle,inner sep=0.7 mm]
\tikzstyle{black label}=[draw,fill=black,rectangle,inner sep=0.7 mm]

\tikzstyle{dkpoint}=[kpoint,doubled]
\tikzstyle{wide dkpoint}=[wide kpoint,doubled]
\tikzstyle{dkpointdag}=[kpoint adjoint,doubled]
\tikzstyle{wide dkpointdag}=[wide kpointdag,doubled]
\tikzstyle{dkcopoint}=[kpoint adjoint,doubled]
\tikzstyle{dkpointadj}=[kpoint adjoint,doubled]
\tikzstyle{dkpointconj}=[kpoint conjugate,doubled]
\tikzstyle{dkpointtrans}=[kpoint transpose,doubled]

\tikzstyle{kscalar}=[kpoint common, shape=EBox, inner xsep=-1pt, inner ysep=3pt,font=\small]
\tikzstyle{kscalarconj}=[kpoint common, shape=WBox, inner xsep=-1pt, inner ysep=3pt,font=\small]


 \tikzstyle{upground}=[circuit ee IEC,ground,rotate=90,scale=2.5]
 \tikzstyle{downground}=[circuit ee IEC,ground,rotate=-90,scale=2.5]
 \tikzstyle{bigground}=[regular polygon,regular polygon sides=3,draw=gray,scale=0.50,inner sep=-0.5pt,minimum width=10mm,fill=gray]


\tikzstyle{arrs}=[-latex,font=\small,auto]
\tikzstyle{arrow plain}=[arrs]
\tikzstyle{arrow dashed}=[dashed,arrs]
\tikzstyle{arrow bold}=[very thick,arrs]
\tikzstyle{arrow hide}=[draw=white!0,-]
\tikzstyle{arrow reverse}=[latex-]
\tikzstyle{cdnode}=[]


\tikzstyle{H}=[-, style=dashed]
\tikzstyle{K}=[-, line width=1pt]
\tikzstyle{Kv}=[-, line width=1pt, ->]
\tikzstyle{Kv<>}=[-,line width=1pt,{<->}]
\tikzstyle{gF}=[-, draw=none, fill={rgb,255: red,191; green,191; blue,191}]
\tikzstyle{KB}=[-, draw=blue, line width=1pt]
\tikzstyle{KR}=[-, draw=red, line width=1pt] 
\tikzstyle{KP}=[-, draw={rgb,255: red,128; green,0; blue,128}, line width=1pt] 
\tikzstyle{KO}=[-, draw={rgb,255: red,255; green,128; blue,0}, line width=1pt]
\tikzstyle{KL}=[-, draw={rgb,255: red,191; green,255; blue,0}, line width=1pt]
\tikzstyle{KHO}=[-, draw={rgb,255: red,255; green,128; blue,0}, style=dashed, line width=1pt]
\tikzstyle{KTG}=[-, draw={rgb,255: red,128; green,128; blue,128}, style=dotted, line width=1pt]
\tikzstyle{KTlG}=[-, draw={rgb,255: red,191; green,191; blue,191}, style=dotted, line width=1pt]
\tikzstyle{KBv}=[-, draw=blue, ->]
\tikzstyle{KOv}=[-, draw={rgb,255: red,255; green,128; blue,0}, line width=1pt, ->]
\tikzstyle{KLv}=[-, draw={rgb,255: red,191; green,255; blue,0}, ->]
\tikzstyle{T}=[-, style=dotted]
\tikzstyle{wF}=[-, fill=white, draw=none]
\tikzstyle{KH}=[-, style=dashed, line width=1pt]
\tikzstyle{Hv}=[->, style=dashed]
\tikzstyle{cv}=[-,right hook->]
\tikzstyle{vv}=[-,->>]
\tikzstyle{v}=[-,->]
\tikzstyle{<>}=[-,<->]
\tikzstyle{Hvv}=[-,->>,style=dashed]
\tikzstyle{Kvp}=[->]
\tikzstyle{KTB}=[-, draw=blue, style=dotted, line width=1pt]
\tikzstyle{KBgF}=[-, fill={rgb,255: red,191; green,191; blue,191}, draw=blue, line width=1 pt]
\tikzstyle{KBggF}=[-, fill={rgb,255: red,128; green,128; blue,128}, draw=blue, line width=1 pt]
\tikzstyle{b-wf}=[-, fill=white]
\tikzstyle{b-gf}=[-, fill={rgb,255: red,191; green,191; blue,191}, draw=black]
\tikzstyle{KHB}=[-, draw=blue, style=dashed, line width=1pt]

\tikzstyle{bigunit}=[dot,fill=white,text depth=-0.2mm]
\tikzstyle{smallblackdot}=[fill=black, inner sep=0mm,minimum width=1mm,minimum height=1mm,draw,shape=circle]
\tikzstyle{smallorangedot}=[fill={rgb,255: red,255; green,128; blue,0}, inner sep=0mm,minimum width=1mm,minimum height=1mm,draw,shape=circle]
\tikzstyle{smallbluedot}=[fill=blue, inner sep=0mm,minimum width=1mm,minimum height=1mm,draw,shape=circle]

\chapter{\label{chapdiscocirc}Distributional Compositional Circuits for English and Urdu Text}

\epigraph{`A sentence is not a state, but a process.... Text is a process that alters meanings of words.'}  
{Bob Coecke,\\ The Mathematics of Text Structure~\cite[pp.~198--199]{coecke2021mathematics}}

\textit{This chapter is adapted from the publication~\cite{DiscocircUrdu}, based on work carried out in collaboration with Vincent Wang-Ma\'{s}cianica, Jonathon Liu, and Bob Coecke. The author of this thesis is the first author of the publication~\cite{DiscocircUrdu}, and played a leading role in all the work described in this chapter. The novel contributions are presented in Sections~\ref{urdugram} and~\ref{urducircuits}.}

\section{Introduction}

Distributional Compositional Circuits (DisCoCirc) is a framework that incorporates both distributional (\emph{i.e.} vectorial) and compositional aspects to model the meaning of natural languages~\cite{coecke2021mathematics}. DisCoCirc gives a process-theoretic account of language and builds further on the earlier work on the Categorical Distributional Compositional (DisCoCat) model for combining grammar and meaning~\cite{CSC, FrobMeanI}. An important distinction between the two is that DisCoCirc is able to model not only the meaning of individual sentences, but also the interaction of sentences giving rise to the meaning of texts generally. The central idea is that the information associated with noun entities appearing in the text (encoded in circuit wires) is updated by sentences (modelled as gates) as the text progresses. 

DisCoCirc admits a two-dimensional string-diagrammatic formalism, inspired by quantum circuits/networks, and therefore offers prospects for quantum-computational natural language processing for texts comprising multiple sentences or paragraphs~\cite{laakkonen2024quantum, duneau2024scalable}. It also has been applied to a number of problems including spatio-temporal models of language meaning~\cite{TalkSpace}, logical and conversational negation in natural language~\cite{rodatz2021conversational, shaikh2021composing}, and solving logical puzzles~\cite{semspace_tiffany_2020, duneau2021parsing}. The relationship between DisCoCirc and discourse-representation theory~\cite{kamp2013discourse} has also been briefly explored~\cite{wang2023distilling}. 

Recently, Wang-Ma\'{s}cianica, Liu and Coecke proposed a method for generating DisCoCirc diagrams for a significant fragment of English~\cite{wang2023distilling}. They started by creating a \textit{hybrid grammar} for English text, incorporating \textit{phrase structure}, \textit{pronominal links}, and \textit{phrase regions}.\footnote{To provide some intuition for these linguistic concepts, here are a few examples.
	
Consider the sentence ``Alice likes Bob''. The \textit{phrase structure} provides a breakdown of the sentence into its basic constituents: ``Alice'' and ``Bob'' are noun phrases, while ``likes'' is a verb phrase.
	
If we have two sentences, ``Alice likes Bob'' and ``Bob hates Charlie'', with ``Bob'' referring to the same person, a \textit{pronominal link} keeps track of this information. When the two sentences are combined into ``Alice likes Bob who hates Charlie'', the pronominal link resolves the reference of the relative pronoun ``who'', establishing a clear link back to its antecedent ``Bob'' and ensuring semantic coherence.
	
Finally, \textit{phrase regions} help show how groups of words belong together. For instance, in the sentence ``Dan sees Eve laugh'', they make it clear that ``Eve laugh'' falls within the scope of ``sees''. In other words, the region headed by ``sees'' includes both ``Eve'' and ``laugh'', showing that Dan is observing the entire action of Eve laughing.} It is called a hybrid grammar because it integrates key ideas from different grammatical formalisms. The main structure is based on transformational phrase structure grammars~\cite{peters1973generative}, with modifications inspired by pregroup grammars~\cite{lambek2008word}. Pronominal links are derived from discourse representation theory~\cite{kamp2013discourse}, whereas phrase regions are based on dependency grammars~\cite{tesniere2015elements}.

These hybrid grammar representations of text were translated into DisCoCirc \textit{text circuits}, via an intermediate structure called \textit{text diagrams} that involve string diagrams. The entire translation process preserves the compositionality and connectedness of text meaning. 

The same work describes how in the reverse direction, text circuits can be used as a generative grammar. For each freely generated text circuit, we can write some corresponding text. In this chapter, we use `text' to refer not only to a string of words forming sentences, but to this string of words further endowed with a hybrid grammar structure. The text corresponding to a given text circuit is not unique owing to the loss of grammatical `bureaucracy'~\cite{wang2023distilling} in the passage from one-dimensional syntax to two-dimensional text circuits.\footnote{Some of this was already observed in Ref.~\cite{GramEqs}.}
Here, grammatical bureaucracy is used in a broad sense to refer to all of the stylistic choices one must make when communicating some desired meaning in the form of a text.
That is, whenever two different texts communicate what is essentially the same meaning, we attribute the differences in the structure of these texts to grammatical bureaucracy.
Thus, grammatical bureaucracy includes syntactic rules like those governing word order, but also choices like the use of pronouns, whether to use a single long sentence with multiple clausal constructions or multiple short sentences, etc.

It was, in particular, suggested that because of eliminating grammatical bureaucracy and stylistic choices embedded in a particular natural language, text circuits are to some degree language-independent~\cite{wang2023distilling}. We take this suggestion seriously in this chapter, and show how DisCoCirc does indeed undo the grammatical bureaucracy related to word and phrase order for restricted fragments of English and Urdu. English and Urdu are both Indo-European languages and share similarities in grammatical rules and idioms. While they both follow the noun-verb distinction, they differ in sentence structure due to differences in word order. For instance, English uses a subject–verb–object (SVO) order, whereas Urdu follows a subject–object–verb (SOV) order.

Our main argument is structured as follows.

\begin{itemize}
	\item In Ref.~\cite{wang2023distilling}, a hybrid grammar was developed for English. A surjection from the set of all English text generated with the English hybrid grammar to the set of all English text circuits was demonstrated:
	\[ \text{English text} \twoheadrightarrow \text{English text circuits} \]
	\item In a similar vein, in this chapter, we describe how the hybrid grammar can be adapted for Urdu. We then provide rules for its translation into text diagrams and text circuits, which is essentially the same as in Ref.~\cite{wang2023distilling}.\footnote{We expect texts in other Indo-European languages to be amenable to translation into text diagrams and circuits. However, our approach may not be directly applicable to languages like Arabic, which lie outside this language family. We will return to this point later.} We show that this gives a surjective map from the set of all Urdu text generated with Urdu hybrid grammar to the set of all Urdu text circuits:
	\[ \text{Urdu text} \twoheadrightarrow \text{Urdu text circuits} \]
	\item For these restricted fragments, there is a clear isomorphism between the hybrid grammars for English and Urdu: 
	\[ \text{English hybrid grammar} \simeq \text{Urdu hybrid grammar} \] 
	\item Given the above correspondence, it turns out that text circuits for English and Urdu become the same, up to translating the labels on the gates. In other words, the following diagram commutes:
	\[
	\begin{tikzcd}
		\text{English text} \arrow[r, twoheadrightarrow] & \text{English circuits} \arrow[d, leftrightarrow, "Dictionary\qquad\ \ "] \\ 
		\text{Urdu text} \arrow[u, leftrightarrow, "\begin{matrix}
			Grammar\  \&\
			Dictionary
		\end{matrix}"] \arrow[r, twoheadrightarrow] & \text{Urdu circuits}
	\end{tikzcd}
	\]
	Now, formally speaking, our texts consist of a hybrid grammar structure with labels (words) provided by a dictionary. If we restrict to just the grammatical structures and forget the language-specific word labels, the circuits for English and Urdu become literally the same. In this case, the above diagram reduces to:
	\[
	\begin{tikzcd}
		\text{English text} \arrow[dr, twoheadrightarrow] \arrow[dd, leftrightarrow] &  \\
		& \text{Circuits} \\
		\text{Urdu text} \arrow[ur, twoheadrightarrow] &
	\end{tikzcd}
	\]
\end{itemize}

The chapter is organised as follows. The hybrid grammar, text diagrams and text circuits for English are introduced in Sections~\ref{enggram},~\ref{textdiagrams} and~\ref{textcircuits} respectively. A hybrid grammar for Urdu is introduced in Section~\ref{urdugram}, followed by the main results related to Urdu text diagrams and circuits in Section~\ref{urducircuits}. Section~\ref{conclusion} concludes the chapter.

\section{A Hybrid Grammar for English Text}\label{enggram}

In this section, we introduce the hybrid (generative) grammar for English developed in Ref.~\cite{wang2023distilling}. We focus on modelling the language fragment that includes nouns, adjectives, verbs (intransitive and transitive), adverbs (modifying verbs), adpositions (for intransitive verbs), verbs with sentential complements, conjunctions, and pronominal links (subject relative pronouns and object relative pronouns). The technical terms will be explained later.
	
\begin{disclaimer}	
For ease of analysis, the modelled language fragment is limited to propositional statements in a single tense. It is also assumed that adverbs and adjectives can be stacked indefinitely and appear to the left of their respective nouns and verbs. Additionally, determiners are not explicitly modelled.
\end{disclaimer}

In  Ref.~\cite{wang2023distilling}, first, \textit{simple} sentences were modelled, containing verbs, adverbs, adjectives and adpositions. Then, \textit{pronominal links} were introduced to account for recurring nouns and pronoun-referent pairs. Rewrite rules were introduced that allowed for the fusion of simple sentences into more complex ones, and the introduction of relative pronouns. Rules for modelling verbs with sentential complement and \textit{phrase scope boundary} were introduced to accommodate compound sentences formed of components that are themselves sentences. We shall provide detailed descriptions and examples later.

The hybrid grammar begins with a standard \textit{phrase structure} grammar that generates the simple sentences, based on a \textit{string rewrite system} with finitely many \textit{production rules} of the form $\alpha \mapsto \beta$~\cite{hopcroft2001introduction}.
These are valid transformations of strings of symbols represented by $\alpha$ and $\beta$. Individual symbols may be phrase components or entire words of the language we are modelling. In the latter case, the symbol is \textit{terminal}, meaning no more rewrite rules can be further applied. 
In a grammar such as ours, a particular language comprises all strings of terminal symbols---forming grammatically correct sentences---that can be generated by applying finitely many production rules (associated with that language) to a start symbol, $\texttt{S}$. Only those productions that yield grammatical sentences are considered valid. For example, using the rules:
\begin{equation*}
	\begin{split}
		\texttt{S} &\mapsto \texttt{NP}_1 \cdot \texttt{TVP} \cdot \texttt{NP}_2 \\
		\texttt{NP}_1 &\mapsto \texttt{\underline{John}} \\
		\texttt{TVP} &\mapsto \texttt{\underline{reads}} \\
		\texttt{NP}_2 &\mapsto \texttt{\underline{books}}
	\end{split}    
\end{equation*}
where $\texttt{NP}$ and $\texttt{TVP}$ represent noun phrase and transitive verb phrase respectively, underlining denotes terminals, and $\cdot$ denotes string concatenation; we can generate the sentence \texttt{\underline{John reads books.}}: 
\begin{equation*}
	\begin{split}
		\texttt{S} &\mapsto \texttt{NP}_1 \cdot \texttt{TVP} \cdot \texttt{NP}_2 \\
		&\mapsto  \texttt{\underline{John}}  \cdot \texttt{TVP} \cdot \texttt{NP}_2 \\
		&\mapsto  \texttt{\underline{John}}  \cdot  \texttt{\underline{reads}}\cdot \texttt{NP}_2 \\
		&\mapsto  \texttt{\underline{John}}  \cdot  \texttt{\underline{reads}}\cdot \texttt{\underline{books}}
	\end{split}    
\end{equation*}

The rewriting of strings can be represented as two-dimensional tree diagrams, read from top to bottom. Our example sentence can be represented as
\[\scalebox{0.7}{\input{urdu/johnreadsbook.tikz}}\]

\begin{convention}
	In this chapter, generative production rules $\alpha \mapsto \beta$ are depicted from top to bottom.
\end{convention}

\begin{disclaimer}
	We restrict to a version of hybrid grammar~\cite{wang2023distilling} generated by the explicitly context-free rules. A context-free rule is one that applies to an individual non-terminal symbol without depending on the surrounding symbols. 
	We also exclude reflexive pronouns. 	
\end{disclaimer}	

Symbolic labels can be dropped for intermediate edges and, instead, a colour coding can be used. See Table~\ref{tab:colourcode} for the colour coding employed in this chapter. 

\begin{table}[H]
	\centering
	\scalebox{1}{
		\begin{tabular}{|c|c|c|}
			\hline
			\text{Grammatical type} & Symbol/word & \text{Colour code} \\ \hline
			\text{Noun phrase} & NP & \input{urdu/engdiags/black.tikz}   \\ \hline
			\text{Intransitive verb phrase} & IVP & \input{urdu/engdiags/green.tikz}  \\ \hline
			\text{Transitive verb phrase} & TVP & \input{urdu/engdiags/green.tikz}   \\ \hline
			\text{Sentential complement verb} & SCV & \input{urdu/engdiags/green.tikz}   \\ \hline			
			\text{Adjective} & ADJ & \input{urdu/engdiags/orangedash.tikz}   \\ \hline					
			\text{Adverb} & ADV & \input{urdu/engdiags/orange.tikz}   \\ \hline		
			\text{Conjunction} & CNJ & \input{urdu/engdiags/blue.tikz}   \\ \hline					
			\text{Adposition} & ADP & \input{urdu/engdiags/purple.tikz}   \\ \hline												
			\text{Copula} & \texttt{\underline{is}} (English), \texttt{\underline{hai}} (Urdu) & \input{urdu/engdiags/red.tikz}   \\ \hline				
			\text{Phrase scope boundary} & $\textcolor{blue}{\texttt{(}}$, $\textcolor{blue}{\texttt{)}}$ & \input{urdu/engdiags/blue.tikz}   \\ \hline									
		\end{tabular}
	}
	\caption{Colour coding for grammatical types}
	\label{tab:colourcode}
\end{table}

In this section, a \textit{phrase structure grammar} is introduced by providing the tree fragments for various grammatical types. We start with \textit{simple} sentences, which include single verbs that do not take sentential complements. 

\subsection{Simple Sentences}

The \textit{verb} in a simple sentence may be \textit{intransitive} or \textit{transitive}. If the verb is intransitive, the sentence comprises a noun phrase followed by an intransitive verb phrase. For example, \texttt{\underline{Bob laughs.}} The production rule for generating this sentence is $\texttt{S} \mapsto \texttt{NP} \cdot \texttt{IVP}$ and the tree fragment is
\begin{equation}\label{IVPrule}
\input{urdu/engdiags/IVcsg.tikz}
\end{equation}
On the other hand, if the verb is transitive, the sentence comprises a noun phrase, followed by a transitive verb phrase and another noun phrase. For instance, \texttt{\underline{Alice loves Bob.}} The production rule for this case is $\texttt{S} \mapsto \texttt{NP}_1 \cdot \texttt{TVP} \cdot \texttt{NP}_2$  and the tree fragment is
\begin{equation}\label{TVPrule}
\input{urdu/engdiags/TVcsg.tikz}
\end{equation}

\begin{remark}
In the language fragment considered here, we require the subject and object of a transitive verb to be distinct.
\end{remark}

Apart from the aforementioned production rules, there are terminal rules for each verb type: $\texttt{IVP} \mapsto \texttt{\underline{IV}}$ and $\texttt{TVP} \mapsto \texttt{\underline{TV}}$, respectively, where terminals are denoted by underlining.
\[
\input{urdu/engdiags/IVterminal.tikz} \quad\quad\quad \input{urdu/engdiags/TVterminal.tikz} 
\]

\begin{convention}
	From here onward, we do not explicitly show the terminal rules in the diagrams. The terminal symbols can be directly inferred from the finished derivations. 
\end{convention}

A simple sentence may consist of \textit{adjectives} in two ways: adjectives can appear to the left of the noun phrase that they are modifying, for example, \texttt{\underline{tall Bob}}; alternatively, a single adjective can appear to the right of a noun phrase after the copular verb \texttt{\underline{is}}, for example, \texttt{\underline{Bob is tall.}} The corresponding production rules are $\texttt{NP} \mapsto \texttt{ADJ} \cdot \texttt{NP}$ and $\texttt{NP} \mapsto \texttt{NP} \cdot \texttt{\underline{is}} \cdot \texttt{ADJ}$, respectively. The tree fragments are as follows:
\begin{equation}\label{ADJrule}
\input{urdu/engdiags/ADJcsg.tikz} \quad \quad \quad \input{urdu/engdiags/ADJiscsg.tikz} 
\end{equation}

In simple sentences, \textit{adverbs} can appear to the left of verbs. An example sentence is: \texttt{\underline{Alice happily talks.}} For intransitive and transitive verb phrases, the adverb production rules are $\texttt{IVP} \mapsto \texttt{ADV} \cdot \texttt{IVP}$ and $\texttt{TVP} \mapsto \texttt{ADV} \cdot \texttt{TVP}$ respectively. The tree fragments are:
\[
\input{urdu/engdiags/ADVIVcsg.tikz} \quad \quad \quad \input{urdu/engdiags/ADVTVcsg.tikz} 
\]

A simple sentence may contain an \textit{adposition} that appears to the right of an intransitive verb phrase, followed by a noun phrase. For example, adding an adposition to the sentence \texttt{\underline{Alice smiles.}}, we can get \texttt{\underline{Alice smiles at Bob.}} The production rule for an adposition is $\texttt{IVP} \mapsto \texttt{IVP} \cdot \texttt{ADP} \cdot \texttt{NP}$, with the tree fragment given by
\begin{equation}\label{ADPrule}
\input{urdu/engdiags/ADPivcsg.tikz} 
\end{equation}

\begin{example}
	Using the \texttt{IVP}-production rule yields $\texttt{NP} \cdot \texttt{IVP}$, corresponding to the grammatical structure of \texttt{\underline{Alice laughs.}} 
	\[
\input{urdu/engdiags/treeex2.tikz} 
\]	
	Applying the \texttt{ADP}-production rule to \texttt{IVP}---before the application of the terminal rule---gives the grammatical structure of \texttt{\underline{Alice laughs at Bob.}}
	\[
	\input{urdu/engdiags/treeex.tikz} 
	\]
\end{example}

\subsection{Compound Sentences}

\textit{Compound} sentences contain more than one verb. Compound sentences can be obtained by joining two simple sentences using \textit{relative pronouns}. Pronoun information is modelled using \textit{pronominal links} identifying nouns in the same or different sentences. These are represented by arrows pointing at the identified nouns drawn below the trees. For instance, the two instances of  \texttt{\underline{Bob}} in the sentences  \texttt{\underline{Bob is kind.}} and \texttt{\underline{Bob smiles at Alice.}} can be identified as follows:
	\[
\input{urdu/engdiags/treeex3.tikz} 
\]
If there is a pronominal link, the later occurrences of a noun can be replaced by pronouns referring to the earlier occurrence.
	\[
\input{urdu/engdiags/treeex4.tikz} 
\]
In certain settings of pronominal links, two consecutive sentences (\emph{i.e.,} their parse trees) can be joined together using relative pronouns. Consider the configuration in which the subject noun of the second tree $\texttt{S}_2$ points to the object noun of the first tree $\texttt{S}_1$. Then, a \textit{subject relative pronoun} can be used to replace the subject noun in the second tree $\texttt{S}_2$, and the two sentences can be fused. 
Diagrammatically, it can be represented by the rewrite rule:
\begin{equation}\label{esrp}
\input{urdu/engdiags/srprule.tikz} 
\end{equation}
On the right hand side, the big triangle denotes the single sentence formed by the fusion of the two sentences. The above rewrite rule modifies phrase structure trees and hence is an example of a \textit{transformational grammar rule}.

\begin{example}
	In the sentences \texttt{\underline{Bob likes Alice.}} and \texttt{\underline{Alice hates Eve.}}, the two occurrences of \texttt{\underline{Alice}} can be identified using a pronominal link. 
		\[
	\input{urdu/treeex5.tikz} 
	\]
	The two trees can be joined by replacing the second occurrence of \texttt{\underline{Alice}} with \texttt{\underline{who}}. This results in the following diagram: 
		\[
\input{urdu/treeex6.tikz} 
\]	
\end{example}

A subject relative pronoun can also be used to relate subject nouns of two sentences in the following way. If the subject noun of the second parse tree $\texttt{S}_2$ refers to the subject noun of the first parse tree $\texttt{S}_1$, the transformation rule yields: the noun phrase of $\texttt{S}_1$, followed by $\texttt{S}_1$ with its noun phrase replaced by the relative pronoun, followed by $\texttt{S}_2$ without its subject noun. Diagrammatically, it can be denoted as
\begin{equation}\label{essrp}
\input{urdu/engdiags/ssrprule.tikz} 
\end{equation}
where the multiarrow identifies the pronominally linked nouns.

\begin{example}
Consider the following example again.
\[
\input{urdu/engdiags/treeex3.tikz} 
\]
Applying the transformational rule, we get the sentence
		\[
\input{urdu/treeex7.tikz} 
\]	
The artefacts $\texttt{!}$ (denoting isolated nouns) and \textvisiblespace  \  (denoting blank labels) will be eliminated in the setting of text diagrams, introduced in Section~\ref{textdiagrams}.
\end{example}

If the object noun of the second sentence $\texttt{S}_2$ refers to the object noun of the first sentence $\texttt{S}_1$, the two sentences can be fused using an \textit{object relative pronoun}. The transformational rule gives $\texttt{S}_1$, followed by the object relative pronoun, followed by $\texttt{S}_2$ with its object noun removed. Diagrammatically, it can be represented by
\begin{equation}\label{eorp}
\input{urdu/engdiags/orprule.tikz} 
\end{equation}

\begin{example}
	Consider the sentences: \texttt{\underline{Bob likes Alice.}} and \texttt{\underline{Eve hates Alice.}} We identify the two occurrences of the object noun \texttt{\underline{Alice}} using a pronominal link. 
			\[
	\input{urdu/engdiags/treeex8.tikz} 
	\]	
	Applying the transformational rule, we have 
				\[
	\input{urdu/engdiags/treeex9.tikz} 
	\]	
\end{example}

\begin{remark}
	The language fragment considered in this chapter does not include reflexive pronouns. Therefore, we do not discuss intra-sentence pronominal links, introduced in Ref.~\cite{wang2023distilling}.
\end{remark}

Compound sentences can also be obtained by including in sentences phrases that are themselves sentences. Two such kinds of compound sentences are introduced here.

A compound sentence can be formed using a \textit{verb with a sentential complement}. Examples of such verbs are \texttt{\underline{sees}} and \texttt{\underline{thinks}}. The compound sentence includes a noun phrase, followed by a verb, followed by a sentence that is the sentential complement of the verb. For example, in \texttt{\underline{Alice sees Bob laugh.}}, the sentential complement is \texttt{\underline{Bob laughs.}} The production rule is given by $\texttt{S} \mapsto \texttt{NP} \cdot \texttt{SCV} \cdot \textcolor{blue}{\texttt{(}} \cdot \texttt{S} \cdot \textcolor{blue}{\texttt{)}}$ where the types $\textcolor{blue}{\texttt{(}}$ and $\textcolor{blue}{\texttt{)}}$ represent the boundaries of the phrase scope. Its tree fragment is as follows.
\[
 \input{urdu/engdiags/SCVscopecsg.tikz} 
\]
\begin{example}
	The tree diagram of the sentence \texttt{\underline{Alice sees Bob laugh.}} is given by
				\[
	\input{urdu/engdiags/treeex10.tikz} 
	\]	
\end{example}

A compound sentence can also be formed by joining two sentences using \textit{conjunctions}. The production rule is $\texttt{S} \mapsto \textcolor{blue}{\texttt{(}} \cdot \texttt{S} \cdot \textcolor{blue}{\texttt{)}} \cdot \texttt{CNJ} \cdot \textcolor{blue}{\texttt{(}} \cdot \texttt{S} \cdot \textcolor{blue}{\texttt{)}}$ and its tree fragment is given by.
\[
\input{urdu/engdiags/CNJscopecsg.tikz} 
\]

\begin{example}
	The phrase structure of the sentence \texttt{\underline{Alice laughs and Bob cries.}} is as follows.
				\[
\input{urdu/engdiags/treeex11.tikz} 
\]			
\end{example}

\section{Text Diagrams}\label{textdiagrams}

\textit{Text diagrams} were introduced in~\cite{wang2023distilling} as an intermediate representation between hybrid grammar and text circuits. In a single mathematical framework, text diagrams capture the connectedness of meaning or grammatical structure, including not just the grammatical relations of simple sentences but also pronominal links as well as phrase scope.

Text diagrams do away with artefacts such as those handling pronominal links. Based on string diagrams~\cite{SelingerSurvey, BaezLNP, CKbook}, text diagrams offer a structural framework for boxes with inputs and outputs that allow for parallel and sequential composition.

\begin{convention}
		In this chapter, all string diagrams must be read from top to bottom and left to right.
\end{convention}

\subsection{Simple Sentences}

When moving from hybrid grammar to text diagrams, the phrase structure rules of the previous section are modified in the following way: each $\texttt{S}$-type is replaced by $\texttt{NP}$-types, the number of which depends on the sentence. It is ensured that every rule that includes $\texttt{NP}$-types has the same number of input and output $\texttt{NP}$-types. The modified rules, corresponding respectively to rules (\ref{IVPrule}), (\ref{TVPrule}), (\ref{ADJrule}), and (\ref{ADPrule}), are given below:
\[
\input{urdu/IVgraphl.tikz} \quad \quad 
\input{urdu/TVgraphl.tikz} \quad \quad  \input{urdu/ADJisgraphl.tikz}  \quad \quad \input{urdu/ADPIVgraphl.tikz}
\]

\begin{convention}
	To guide the eye, we vertically align the ends of input wires with those of the corresponding output wires in text diagrams (and, later, in text circuits as well). Moreover, we bend the wires to make them point upward or downward, reflecting their roles as inputs or outputs string-diagrammatically. We also drop labels that appear more than once.
	\[
	\input{urdu/IVgraphb.tikz} \quad \quad 
	\input{urdu/TVgraphb.tikz} \quad \quad  \input{urdu/ADJisgraphb.tikz}  \quad \quad \input{urdu/ADPIVgraphb.tikz}
	\]
	The copular verb $\texttt{\underline{is}}$ can be eliminated via the rewrite
	\[
	\input{urdu/ADJistransform.tikz} 
	\]
	converting the postpositonal adjective into a prepositional one.
\end{convention}

\begin{example}
	The sentence  \texttt{\underline{Alice laughs at Bob.}} has the following grammatical structure
		\[
	\input{urdu/engdiags/treeex.tikz} 
	\]
which, in text diagrams, is given by
				\[
\input{urdu/textdiag5b.tikz} 
\]		
\end{example}

This modification allows composition of tree fragments (while respecting the grammatical types). In text diagrams, pronominal links and phrase scope constructions are part of the same unified mathematical framework. Moreover, unlike in the hybrid grammar tree diagrams, wires in text diagrams may cross one another.

\subsection{Compound Sentences}

In the passage from hybrid grammar to text diagrams, the (multi)arrows become wires linking the pronominally-linked noun wires.
\begin{equation}\label{link2d}
\scalebox{0.9}{\input{urdu/link2diagram.tikz}} 
\end{equation}
The pronominal link wires can be eliminated using the diagram rewrite rules:
							\[
\input{urdu/prontransforms.tikz} 
\]
This implies that for two text diagrams $\mathcal{D}_1$ and $\mathcal{D}_1$, we have the following rewrite rules.
\begin{equation}\label{pronlink}
\input{urdu/pronlinkrule.tikz} 
\end{equation}
\begin{equation}\label{pronlink2}
\input{urdu/pronlinkrule2.tikz} 
\end{equation}

\begin{example}
	Consider the following hybrid grammar text.
			\[
	\input{urdu/treeex5.tikz} 
	\]
	Each tree can be replaced with the corresponding text diagram. 
			\[
\input{urdu/textdiag2.tikz} 
\]	
The pronominal link arrow can be replaced by a wire.
			\[
\input{urdu/textdiag3.tikz} 
\]		
Notice that there is only one available output wire corresponding to \texttt{\underline{Alice}}, and it is not labelled by the pronoun $\texttt{\underline{who}}$. Using the rewrite rule for pronoun wire elimination, we get the following diagram. 
			\[
\input{urdu/textdiag4.tikz} 
\]		
\end{example}

\begin{example}
	Consider the following example sentence again. 
	\[
	\input{urdu/treeex7.tikz} 
	\]	
	The artefacts $\texttt{!}$ and \textvisiblespace \  were required in the hybrid grammar to account for relative pronouns. In text diagrams, these artefacts are eliminated in the following way. Replacing the grammar trees with the corresponding text diagrams
				\[
	\input{urdu/textdiag5.tikz} 
	\]		
	and replacing the multiarrows with pronominal link wires, we get
				\[
\input{urdu/textdiag6.tikz} 
\]			
	which can be rewritten as follows.
				\[
\input{urdu/textdiag7.tikz} 
\]				
\end{example}

Moving from hybrid grammar to text diagrams, the sentence type $\texttt{S}$ was replaced by a sentence-dependent number of $\texttt{NP}$ wires. Therefore, string diagrams with regions~\cite{wang2023distilling} are needed to accommodate phrase scope. 

\begin{convention}
	\textit{Phrase regions} delineate phrase scope in text diagrams. These regions act as planar obstacles to everything except $\texttt{NP}$ wires and pronominal link wires. This means that only $\texttt{NP}$ and pronominal link wires can enter or exit the phrase regions.
	$\texttt{\underline{NP}}$ labels must always be placed outside the phrase regions.  
\end{convention}

The text diagrams corresponding to verbs with sentential complements and conjunctions are respectively given by
\[
\input{urdu/SCVtranslate.tikz} \quad \quad \quad \input{urdu/CNJtranslate.tikz}
\]
where $i$ and $j$ denote the number of $\texttt{NP}$ wires.

\begin{example}
	The sentence \texttt{\underline{Alice sees Bob laugh.}} corresponds to the following text diagram.
			\[
\input{urdu/textdiag8.tikz} 
\]		
\end{example}

\begin{example}
	The text diagram of the sentence \texttt{\underline{Alice laughs and Bob cries.}} is as follows.
				\[
	\input{urdu/textdiag9.tikz} 
	\]		
\end{example}

\section{Text Circuits}\label{textcircuits}

\textit{Text circuits} are composed of wires (or types), boxes (or first-order gates), and boxes with holes (or second-order gates). Nouns are represented by wires with labels:
\[
\input{urdu/nounABC.tikz} 
\]
Here, each wire represents a distinct noun.

Adjectives are represented by single-input-single-output boxes or gates that update or act upon noun wires:
\[
\input{urdu/ADJgate.tikz}
\]

Like adjectives, intransitive verbs are also represented by single-input-single-output gates, whereas transitive verbs are given by double-input-double-output gates as they act on subject and object noun wires:
\[\input{urdu/IVgate.tikz} \quad\quad\quad \input{urdu/TVgate.tikz}
\]

Adverbs modify verbs, which are themselves boxes. Hence, adverbs are represented by second-order gates, which are boxes with holes that are to be filled with verb boxes. The shape of the expected verb box is denoted by the wires inside the holes.
\[
\input{urdu/ADVbox.tikz}
\]
When the holes in the adverb boxes are filled, gates are obtained.
\[
\input{urdu/ADVgate.tikz}
\]

Adpositons also act upon verbs and add another noun wire to the right of the circuit. An adposition acting on an intransitive verb is given by a box with a hole expecting an intransitive verb circuit.
\[
\input{urdu/ADPIVbox.tikz}
\]

Verbs with sentential complements are represented by second-order gates that add a noun wire to the left of the circuit. Since the sentential complement can have an arbitrary number of noun wires, these gates form a class of boxes to account for the different number of wires.
\[
\input{urdu/SCVbox.tikz}
\]
For conjunctions as well, there is a class of second-order gates. The conjunction boxes have two holes that accept boxes/gates that may have some overlapping noun wires.
\[
\input{urdu/CNJbox2.tikz}
\]
As special cases, the conjunction boxes may have disjoint or completely shared noun wires.
\[
\input{urdu/CNJbox.tikz} \qquad\qquad\qquad\qquad \input{urdu/CNJbox3.tikz}
\]

Gates can be composed in parallel or series/sequence to form \textit{text circuits}. If no noun labels are shared between the gates, they are composed in parallel. Conversely, if the noun labels are shared, the gates are composed in sequence with the shared noun wires matched. 

\begin{example}
	The text \texttt{\underline{Alice likes Bob. Bob hates Eve.}} can be drawn as the following circuit.
	\[
	\input{urdu/textcircuit1.tikz}
	\]
\end{example}

\begin{convention}
In text circuits, nouns are represented by wires, which may cross or twist past one another. Two circuits are considered the same if they have the same connectivity of gates. This allows us to eliminate unnecessary wire twists to work with simpler diagrams. An example is as follows.
	\[
\input{urdu/textcircuit6b.tikz} \ \ = \ \  \input{urdu/textcircuit6.tikz}
\]
\end{convention}

\begin{example}
	The text \texttt{\underline{Dan knows Bob likes Alice whom Eve smiles at.}} can be drawn as the following circuit.
		\[
	\input{urdu/textcircuit2.tikz}
	\]	
\end{example}

\begin{thesis}[Text Circuit Thesis; Thesis~5.17 in~\cite{wang2023distilling}]\label{circuitthesis}
	Equal text circuits correspond to equal text meanings.
\end{thesis}

\begin{theorem}[Text Circuit Theorem; Theorem~5.1 in~\cite{wang2023distilling}]\label{textcircth}
Let $\texttt{E}$ denote the set of generators of the hybrid English grammar.\footnote{In this chapter, for the sake of simplicity, we focus on a restricted version of the hybrid grammar developed in Ref.~\cite{wang2023distilling}. We do not include reflexive pronouns. Moreover, the following three hybrid grammar rules are excluded from our analysis.
			\[
	\input{urdu/engdiags/ADPTVcsg.tikz} \quad\quad 	\input{urdu/engdiags/SCV1.tikz} \quad \quad 	\input{urdu/engdiags/SCV2.tikz}
	\]
The first rule on the left involves an adposition (ADP) and a transitive verb phrase (TVP), while the remaining rules allow noun wires to exit phrase scope. Note that these rules are context-sensitive and do not apply to individual non-terminal symbols.} 
Let $\texttt{T}_\texttt{E}$ denote the set of all English text constructed with grammar $\texttt{E}$, and let $\texttt{C}_\texttt{E}$ denote the set of all text circuits for English. Then, there exists a surjection $\texttt{T}_\texttt{E} \twoheadrightarrow \texttt{C}_\texttt{E}$.
\end{theorem}

\begin{remark}
The significance of Theorem~\ref{textcircth} stems from Thesis~\ref{circuitthesis}. Theorem~\ref{textcircth} implies that the text corresponding to a given text circuit may not be unique. According to Thesis~\ref{circuitthesis}, all texts corresponding to the same text circuit share the same meaning. In this sense, DisCoCirc text circuits eliminate differences between texts that convey the same meaning.

When two different texts express essentially the same meaning, we attribute their structural variation to grammatical bureaucracy. This includes stylistic choices such as the use of pronouns, the preference for a single long sentence with multiple clauses versus several shorter ones, and similar considerations. This grammatical bureaucracy is eliminated in the transition from the linear syntax of text to the two-dimensional structure of text circuits.
\end{remark}

\begin{example}
		Consider the following hybrid grammar text.
	\[
\input{urdu/treeex36.tikz} 
\]	
Replacing the grammar trees with the corresponding text diagrams, we have the following diagram.
				\[
\input{urdu/textdiag6b.tikz} 
\]		
Composing the pronominal link wires, we get
				\[
\input{urdu/textdiag7b.tikz} 
\]
which can be expressed as gates and boxes to yield a circuit.
				\[
\input{urdu/textcircuit3b.tikz}
\]		

\end{example}

\section{A Hybrid Grammar for Urdu Text}\label{urdugram}

It was shown in Ref.~\cite{wang2023distilling} that English text generated from the English hybrid grammar is surjective to text circuits. In this chapter, we demonstrate a similar result for Urdu.

First, we develop a hybrid grammar for Urdu text. The language fragment we consider for Urdu is the same as that for English discussed in the previous sections.

Since a language is specified by set of production rules, different production rules lead to different languages. In English, the sentence \texttt{\underline{John reads books.}} has the following structure. 
\[\scalebox{0.7}{\input{urdu/johnreadsbook.tikz}}\]
Translating to Urdu, \texttt{\underline{John reads books.}}  becomes: 
\begin{center}
    \includegraphics[scale=0.22]{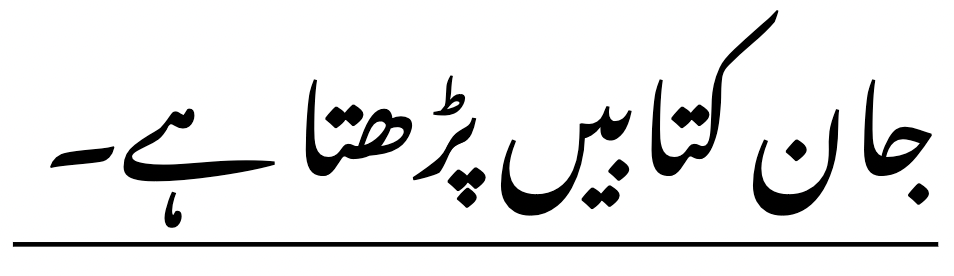}
\end{center}
It can be transliterated\footnote{Urdu script is written from right to left. Throughout this chapter, for ease of readability and linguistic analysis, we shall use English transliteration of Urdu text. However, to dispel any ambiguity which may arise from transliteration, we shall provide the Urdu script and the respective phrase meanings in English as well. 
} into English as
\begin{center}
	\texttt{\underline{John}} \ \  \ \texttt{\underline{kitabein}}\ \ \ 
 \texttt{\underline{parhta hai}} (Urdu) \\
	\ \texttt{\underline{John}} \ \ \ \ \ \   \texttt{\underline{books}} \ \  \ \ \  \  \texttt{\underline{reads}}  \ \ \ \ \ \ \ \ \ \ \ \ \ \ \ \ \ \ \ \  \  
\end{center}

\noindent Using the production rules
\begin{equation*}
	\begin{split}
		\texttt{S} &\mapsto \texttt{NP}_1 \cdot \texttt{NP}_2 \cdot \texttt{TVP} \\
		\texttt{NP}_1 &\mapsto \texttt{\underline{John}} \\
		\texttt{NP}_2 &\mapsto \texttt{\underline{kitabein}} \\
		\texttt{TVP} &\mapsto \texttt{\underline{parhta hai}} \\
	\end{split}    
\end{equation*}
we can generate the sentence under discussion:
\begin{equation*}
	\begin{split}
		\texttt{S} &\mapsto \texttt{NP}_1 \cdot \texttt{NP}_2 \cdot \texttt{TVP} \\
		&\mapsto \texttt{\underline{John}}  \cdot \texttt{NP}_2  \cdot \texttt{TVP} \\
		&\mapsto\texttt{\underline{John}} \cdot  \texttt{\underline{kitabein}}  \cdot \texttt{TVP}  \\  
		&\mapsto \texttt{\underline{John}}  \cdot \texttt{\underline{kitabein}}  \cdot \texttt{\underline{parhta hai}}
	\end{split}    
\end{equation*}
Its tree diagram is given by
\[\scalebox{0.7}{\input{urdu/johnkitaabe.tikz}}\]

From this example, we can immediately spot an obvious difference between English and Urdu---the order of subject, verb and object: in Urdu, the verb is placed at the end of the sentence. 
This contrast plays a significant role in differentiating Urdu and English grammars.

We now develop production rules and tree diagrams for the fragment of Urdu text (including verbs, adjectives, adverbs, adpositions, pronominal links, phrase scope) corresponding to that of English discussed in Section~\ref{enggram}. In so doing, we realise that many of the rules and tree fragments are in fact the same. The ones that are different differ mainly in the relative placement of the verb. See Table~\ref{tab:diff_gram}. The differences are summarised as follows: 
\begin{itemize}
	\item In contrast to subject-verb-object (SVO) as in English, Urdu has the subject-object-verb (SOV) order.
	\[
	\input{urdu/urdudiags/TVPgram.tikz} 
	\]
	\item The placement of adpositions differs from that in English; in Urdu, the verb simply comes last.   
	\[
	\input{urdu/urdudiags/ADPivgrammar.tikz} 	
	\] 
	\item Sentential complements precede verbs in Urdu.
	\[
		\input{urdu/urdudiags/SCVgram.tikz} 
	\]
	\item In Urdu, the copula $\texttt{\underline{hai}}$ in the postpositional adjectival construction appears to the right of the adjective. On the other hand, in English, the copula $\texttt{\underline{is}}$ appears to the left of the adjective. 
	\[
	\input{urdu/urdudiags/ADJpostgrammar.tikz} 	
	\]
\end{itemize}

The transformational grammar rules for pronominal links are the same for English and Urdu in terms of the sentence-order, and the connectivity of pronominal links. However, there are differences arising from the different word order within individual simple sentences.

The counterpart of rewrite (\ref{esrp}) for subject relative pronoun in Urdu is as follows.
\begin{equation}\label{usrp}
\input{urdu/urdudiags/srpruleurdu.tikz} 
\end{equation}
\begin{example}
	Consider the Urdu sentences 
	\begin{center}
		\includegraphics[scale=0.6]{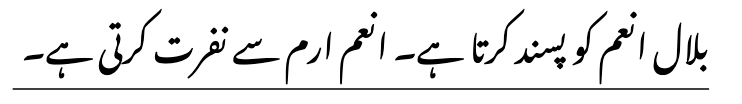}
	\end{center}
	\noindent transliterated as:
	\begin{center}
		\textit{\underline{Bilal} \ \ \underline{Anam (ko)} \ \ \  \underline{pasand karta hai.} \ \ \ \ \ \ \underline{Anam} \ \ \ \  \underline{Erum (se)} \ \ \underline{nafrat karti hai.} \ \ } (Urdu) \\
		\ \ \underline{Bilal} \ \ \ \ \ \ \  \underline{Anam} \ \ \ \ \ \  \ \underline{likes.} \ \ \ \ \ \ \ \ \ \ \ \ \  \   \underline{Anam} \ \ \ \ \ \ \   \underline{Erum} \ \  \ \ \ \ \ \ \ \ \underline{hates.} \ \ \ \ \ \ \ \ \ \ \ \ \ \ \ \  \ \ \ \ 
	\end{center}
\noindent The two occurrences of \texttt{\underline{Anam}} can be identified using a pronominal link. 
			\[
	\input{urdu/treeex30.tikz} 
	\]
\noindent The two trees can be joined by replacing the second occurrence of \texttt{\underline{Anam}} with \texttt{\underline{jo}}. This results in the following diagram: 	
				\[
	\input{urdu/treeex31.tikz} 
	\]
\noindent corresponding to the Urdu sentence
	\begin{center}
	\includegraphics[scale=0.6]{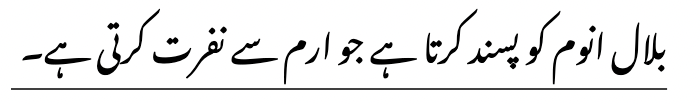}
\end{center}
	\noindent transliterated as:
\begin{center}
	\textit{\underline{Bilal} \ \ \underline{Anam (ko)} \ \ \  \underline{pasand karta hai} \ \ \ \ \ \ \underline{jo} \ \ \ \  \underline{Erum (se)} \ \ \underline{nafrat karti hai.} \ \ } (Urdu) \\
		\ \ \underline{Bilal} \ \ \ \ \ \ \  \underline{Anam} \ \ \ \ \ \  \ \underline{likes} \ \ \ \ \ \ \ \ \ \ \ \ \  \   \underline{who} \ \ \ \ \ \ \   \underline{Erum} \ \  \ \ \ \ \ \ \ \ \underline{hates.} \ \ \ \ \ \ \ \ \ \ \ \ \ \ \ \  \ \ \ \ 
\end{center}
\end{example}

The counterpart of rewrite (\ref{essrp}) for subject relative pronoun in Urdu is given by
\begin{equation}\label{ussrp}
\input{urdu/engdiags/ssrprule.tikz} 
\end{equation}

\begin{example}
	Consider the Urdu sentences 
\begin{center}
	\includegraphics[scale=0.6]{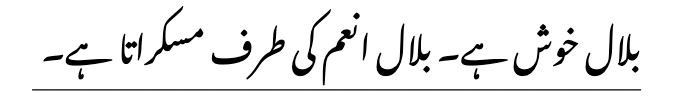}
\end{center}
\noindent transliterated as:
\begin{center}
	\textit{\underline{Bilal} \ \ \underline{khush} \ \ \  \underline{hai.} \ \ \ \ \ \ \underline{Bilal} \ \  \underline{Anam} \ \  \underline{ki taraf} \ \  \underline{muskurata hai.} \ \ } (Urdu) \\
\ \ \	\ \ \underline{Bilal} \ \ \  \  \underline{happy}  \ \ \ \ \ \underline{is.} \ \ \ \ \  \ \   \underline{Bilal} \ \ \  \underline{Anam} \ \  \ \ \ \   \underline{at} \ \ \ \ \ \ \  \underline{smiles.}  \ \ \ \ \ \ \ \  \ \ \  \ \  \ \ \ \ \ \ \ \ \ \ \ \ \  \ \ \ \  \ \ \ 
\end{center}
\noindent The two occurrences of \texttt{\underline{Bilal}} can be identified using a pronominal link. 
	\[
	\input{urdu/treeex32.tikz} 
	\]
Applying the transformational rule, we get the sentence	
	\[
	\input{urdu/treeex33.tikz} 
	\]
\noindent corresponding to the Urdu sentence
\begin{center}
	\includegraphics[scale=0.6]{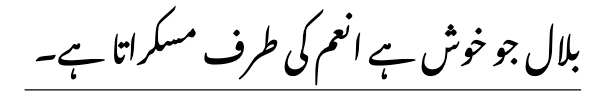}
\end{center}
\noindent transliterated as:
\begin{center}
	\textit{\underline{Bilal} \ \ \underline{jo} \ \  \underline{khush} \ \ \  \underline{hai} \ \ \ \ \ \  \underline{Anam} \ \  \underline{ki taraf} \ \  \underline{muskurata hai.} \ \ } (Urdu) \\
\ \ \ \  \underline{Bilal} \  \   \underline{who} \ \   \underline{happy}  \ \ \ \  \underline{is} \ \ \ \ \  \ \ \ \  \underline{Anam} \ \  \ \ \ \  \underline{at} \ \ \ \ \ \ \ \  \underline{smiles.}  \ \ \ \ \ \ \ \  \ \ \ \ \  \ \ \ \ \ \ \ \ \ \  \ \ \ \ \ \ \ \ \ \ 
\end{center}	
	
\end{example}

The counterpart of rewrite (\ref{eorp}) for object relative pronoun in Urdu is as follows.
\begin{equation}\label{uorp}
\input{urdu/urdudiags/orpruleurdu.tikz} 
\end{equation}

\begin{example}
	Consider the Urdu sentences 
\begin{center}
	\includegraphics[scale=0.6]{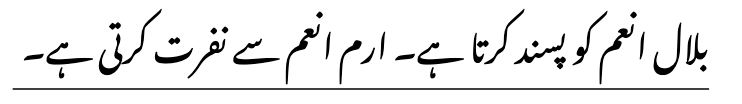}
\end{center}
\noindent transliterated as:
\begin{center}
	\textit{\underline{Bilal} \ \ \underline{Anam (ko)} \ \ \  \underline{pasand karta hai.} \ \ \ \ \ \ \underline{Erum} \ \ \ \  \underline{Anam (se)} \ \ \underline{nafrat karti hai.} \ \ } (Urdu) \\
		\ \ \underline{Bilal} \ \ \ \ \ \ \  \underline{Anam} \ \ \ \ \ \  \ \underline{likes.} \ \ \ \ \ \ \ \ \ \ \ \ \  \   \underline{Erum} \ \ \ \ \ \ \   \underline{Anam} \ \  \ \ \ \ \ \ \ \ \underline{hates.} \ \ \ \ \ \ \ \ \ \ \ \ \ \ \ \  \ \ \ \ 
\end{center}
\noindent The two occurrences of \texttt{\underline{Anam}} can be identified using a pronominal link. 	
	\[
	\input{urdu/treeex34.tikz} 
	\]
Applying the transformational rule, we get the sentence		
	\[
	\input{urdu/treeex35.tikz} 
	\]
\noindent corresponding to the Urdu sentence
\begin{center}
	\includegraphics[scale=0.6]{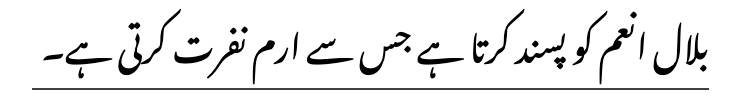}
\end{center}
\noindent transliterated as:
\begin{center}
	\textit{\underline{Bilal} \ \ \underline{Anam (ko)} \ \ \  \underline{pasand karta hai} \ \ \ \ \ \ \underline{jis (se)} \ \ \ \  \underline{Erum} \ \ \underline{nafrat karti hai.} \ \ } (Urdu) \\
			\ \ \underline{Bilal} \ \ \ \ \ \ \  \underline{Anam} \ \ \ \ \ \ \ \ \ \ \underline{likes}  \ \ \ \ \ \ \ \ \ \  \   \underline{whom} \ \ \ \ \ \ \   \underline{Erum} \ \  \ \ \ \ \ \ \ \ \underline{hates.} \ \ \ \ \ \ \ \ \ \ \ \ \ \ \ \  \ \ \ \ 
\end{center}	
\end{example}

\begin{remark}\label{pronconnect}
	Moving from the hybrid grammar to text diagrams (using (\ref{link2d}), (\ref{pronlink}) and (\ref{pronlink2})), the connectivity of pronominal links is retained, and is the same for English and Urdu. 
\end{remark}

\section{Text Diagrams and Circuits for Urdu Text}\label{urducircuits}

\subsection{Main result}

Let $\texttt{E}$ denote the set of generators of the hybrid English grammar, summarised in Table~\ref{tab:diff_gram}. 
Let $\texttt{T}_\texttt{E}$ denote the set of all English text constructed with grammar $\texttt{E}$, and let $\texttt{C}_\texttt{E}$ denote the set of all text circuits for English. Then, there exists a surjection $\texttt{T}_\texttt{E} \twoheadrightarrow \texttt{C}_\texttt{E}$~\cite{wang2023distilling}.

We have created a set of generators for Urdu grammar $\texttt{U}$ which closely correspond with the English generators $\texttt{E}$; see  Table~\ref{tab:diff_gram}. Let $\texttt{T}_\texttt{U}$ denote the set of all Urdu text generated with $\texttt{U}$. Let $\texttt{C}_\texttt{U}$ denote the set of all text circuits for Urdu. Then,
\begin{itemize}
\item there exists a surjection $\texttt{T}_\texttt{U}\twoheadrightarrow \texttt{C}_\texttt{U}$, and
\item $\texttt{C}_\texttt{U}$ is isomorphic to $\texttt{C}_\texttt{E}$, i.e., $\texttt{C}_\texttt{U}
\xrightarrow{\simeq}
\texttt{C}_\texttt{E}$ (up to word translations at the gate level).
\end{itemize}

\subsection{Urdu Text Surjects onto Circuits}

The method of turning hybrid grammar trees into text diagrams is the same in English and Urdu: each component of a hybrid grammar tree is modified so that the number of $\texttt{NP}$ wires for inputs and outputs is equal, and sentence types $\texttt{S}$ are eliminated.

Table~\ref{tab:diff_gram} illustrates the similarities and differences between Urdu and English grammar, as reflected in the hybrid grammar trees and text diagrams. 

The same text diagram reductions hold in Urdu as those in English. The following is the reduction of a postpositional adjectival construction using a copula $\texttt{\underline{hai}} (\simeq \texttt{\underline{is}})$ to a prepositional adjective that does not require a copula:
	\[
\input{urdu/ADJistransformurdu.tikz} 
\]

Just as in the case of English hybrid grammar, in the passage to text diagrams, the pronominal-link (multi)arrows become wires (\ref{link2d}) that can be eliminated using the diagram rewrite rules (\ref{pronlink}) and (\ref{pronlink2}) (see Remark~\ref{pronconnect}).

\begin{table}
	\centering
	\scalebox{0.75}{
		\begin{tabular}{|c|c|c|c|c|}
			\hline
			\text{Rule} & \text{English grammar} & \text{English diagram} & \text{Urdu grammar} & \text{Urdu diagram} \\ \hline
			Intrans.Verb & \input{urdu/engdiags/IVcsg.tikz} & \input{urdu/engdiags/IVgraph.tikz} & \input{urdu/engdiags/IVcsg.tikz} & \input{urdu/engdiags/IVgraph.tikz} \\ \hline
			\textcolor{red}{Trans.Verb} & \input{urdu/engdiags/TVcsg.tikz} & \input{urdu/engdiags/TVgraph.tikz} & \input{urdu/urdudiags/TVPgram.tikz} & \input{urdu/urdudiags/TVP.tikz} \\ \hline
			Adjective(Pre.) & \input{urdu/engdiags/ADJcsg.tikz} & \input{urdu/engdiags/ADJgraph.tikz} & \input{urdu/engdiags/ADJcsg.tikz} & \input{urdu/engdiags/ADJgraph.tikz} \\ \hline
			\textcolor{red}{Adjective(Post.)} &  \input{urdu/engdiags/ADJiscsg.tikz} &  \input{urdu/engdiags/ADJisgraph.tikz} & \input{urdu/urdudiags/ADJpostgrammar.tikz} & \input{urdu/urdudiags/ADJpostdiagram.tikz} \\ \hline
			Adverb(IV) & \input{urdu/engdiags/ADVIVcsg.tikz} & \input{urdu/engdiags/ADVIVgraph.tikz} & \input{urdu/engdiags/ADVIVcsg.tikz} & \input{urdu/engdiags/ADVIVgraph.tikz}  \\ \hline
			Adverb(TV) & \input{urdu/engdiags/ADVTVcsg.tikz} & \input{urdu/engdiags/ADVTVgraph.tikz} & \input{urdu/engdiags/ADVTVcsg.tikz} & \input{urdu/engdiags/ADVTVgraph.tikz} \\ \hline
			\textcolor{red}{Adposition(IV)} & \input{urdu/engdiags/ADPIVcsg.tikz} & \input{urdu/engdiags/ADPiveng.tikz} & \input{urdu/urdudiags/ADPivgrammar.tikz}  & \input{urdu/urdudiags/ADPivdiagram.tikz} \\ \hline
			\textcolor{red}{Sent.Comp.Verb} & \input{urdu/engdiags/SCVscopecsg.tikz} & \input{urdu/SCVtranslate.tikz} & \input{urdu/urdudiags/SCVgram.tikz} & \input{urdu/urdudiags/SCVtranslateurdu.tikz}  \\ \hline
			Conjunction & \input{urdu/engdiags/CNJscopecsg.tikz} & \input{urdu/CNJtranslate.tikz}  & \input{urdu/engdiags/CNJscopecsg.tikz} & \input{urdu/CNJtranslate.tikz}  \\ \hline
		\end{tabular}
	}
	\caption{Generators for hybrid Urdu and English grammar and the corresponding diagrams. Note the different constructions for English and Urdu in the rules labelled red: Trans.Verb, Adjective(Post.), Adposition(IV) and Sent.Comp.Verb.}
	\label{tab:diff_gram}
\end{table}

\begin{lemma}
	Let $\texttt{U}$ denote the set of generators of the hybrid Urdu grammar.
	Let $\texttt{T}_\texttt{U}$ denote the set of all Urdu text constructed with grammar $\texttt{U}$, and let $\texttt{C}_\texttt{U}$ denote the set of all text circuits for Urdu. Then, there exists a surjection $\texttt{T}_\texttt{U} \twoheadrightarrow \texttt{C}_\texttt{U}$.
\end{lemma}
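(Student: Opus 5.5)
The plan is to reduce the lemma to the Text Circuit Theorem (Theorem~\ref{textcircth}) for English. I would build a grammar-level translation from Urdu hybrid-grammar text to English hybrid-grammar text and show that it commutes with the passage to text diagrams and text circuits. Surjectivity for Urdu then follows from surjectivity for English.

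First I would define a map $\tau : \texttt{T}_\texttt{U} \to \texttt{T}_\texttt{E}$ by structural induction on derivations. Each Urdu generator in Table~\ref{tab:diff_gram} is sent to the English generator in the same row, and word labels are translated by a dictionary. For Intrans.Verb, Adjective(Pre.), Adverb(IV), Adverb(TV) and Conjunction the rows are literally identical, so only the labels change. For the four red rows (Trans.Verb, Adjective(Post.), Adposition(IV), Sent.Comp.Verb) the rules differ only in the linear order of the children, for example SOV versus SVO. Pronominal links are sent to pronominal links between the corresponding nouns, and the relative-pronoun transformations (\ref{usrp}), (\ref{ussrp}), (\ref{uorp}) go to (\ref{esrp}), (\ref{essrp}), (\ref{eorp}). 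Running the same construction with the inverse dictionary gives an inverse map, so $\tau$ is a bijection up to word labels.

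The key step is to check, generator by generator, that the Urdu text diagram of each rule equals the English text diagram up to relabelling. I would verify this directly on the red rows of Table~\ref{tab:diff_gram}. In each case the diagram has the same gate with the same noun wires entering and leaving in the same roles: subject and object for transitive verbs, the extra right noun wire for adpositions, and a phrase region containing the complement for sentential-complement verbs. The only difference is where the word label sits, and in text diagrams this is immaterial because only connectivity matters. The copula elimination for \texttt{\underline{hai}} is the same rewrite as for \texttt{\underline{is}}. By Remark~\ref{pronconnect}, the pronominal link wires and their elimination via (\ref{link2d}), (\ref{pronlink}) and (\ref{pronlink2}) have the same connectivity in both languages. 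Since text diagrams are built compositionally from these fragments, induction on the derivation gives $D_\texttt{E}(\tau(t)) = D_\texttt{U}(t)$ up to labels. Passing to circuits in the same way gives $\Phi_\texttt{U} = \Phi_\texttt{E}\circ\tau$ up to gate-label translation.

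Surjectivity is then immediate. Take any Urdu circuit $c \in \texttt{C}_\texttt{U}$ and view it, after dictionary translation, as an English circuit. Theorem~\ref{textcircth} gives an English text $e$ mapping to it. Then $\tau^{-1}(e)$ is Urdu text mapping to $c$.

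The main obstacle is the Sent.Comp.Verb row and the relative-pronoun rewrites. There, reordering the children, with the complement placed before the verb in Urdu, must not change which phrase region the noun wires lie in, nor the left/right position of the new noun wire. I would handle this by noting that phrase regions and the left-added noun wire are fixed by the tree structure, not by surface order. I would also need to confirm that the Urdu fragment, like the English one, excludes the context-sensitive rules, so that the induction covers every generator.
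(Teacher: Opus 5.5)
Your proposal is correct and follows essentially the same route as the paper. The paper's proof likewise rests on two observations: the four differing generators in Table~\ref{tab:diff_gram} (Trans.Verb, Adjective(Post.), Adposition(IV), Sent.Comp.Verb) are isomorphic to their English counterparts as rooted labelled trees, differing only in the order of children; and the relative-pronoun rewrites have the same pronominal-link connectivity. From these it transfers Theorem~\ref{textcircth} to Urdu. The difference is one of explicitness. The paper only asserts that these modifications ``translate the formal claim'' of the English theorem, whereas your translation map $\tau$ and commuting relation $\Phi_\texttt{U} = \Phi_\texttt{E}\circ\tau$ spell out how surjectivity is pulled back. In doing so, your proof folds in content that the paper establishes only afterwards, in Lemma~\ref{syntaxiso} and the subsequent $\texttt{C}_\texttt{U}\simeq\texttt{C}_\texttt{E}$ lemma.
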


\begin{proof}
	Table~\ref{tab:diff_gram} shows the generators of English and Urdu grammar, four of which are different for the two languages. Viewing the generators as rooted labelled trees (vertices are labels like $\texttt{NP}$ or $\underline{\texttt{hai}}$, and the root is the initial sentence type $\texttt{S}$), the generators of English and Urdu for each rule are isomorphic in the graph-theoretic sense. 
	
	\begin{itemize}
		\item Transitive verb
		\[
		\input{urdu/urdudiags/rootlabtree2.tikz} \cong 	 \input{urdu/urdudiags/rootlabtree1.tikz} 
		\]
		\item Postpositional adjective (where $\texttt{\underline{hai}} (\cong \texttt{\underline{is}})$)
		\[
		\input{urdu/urdudiags/rootlabtree7.tikz} \cong 	 \input{urdu/urdudiags/rootlabtree8.tikz} 
		\]
		
		\item Adposition 
		\[
		\input{urdu/urdudiags/rootlabtree3.tikz} \cong 	 \input{urdu/urdudiags/rootlabtree4.tikz} 
		\]
		
		\item Verb with a sentential complement
		\[
		\input{urdu/urdudiags/rootllabtree5.tikz} \cong 	 \input{urdu/urdudiags/rootlabtree6.tikz} 
		\]
	\end{itemize}
	
	The transformational grammar rules for Urdu (\ref{usrp},~\ref{ussrp},~\ref{uorp}) and English ((\ref{esrp},~\ref{essrp},~\ref{eorp}) are the same in terms of the connectivity of pronominal links and sentence-order. They differ with respect to the intra-sentence word orders. These differences are eliminated due to the isomorphisms of generators described above. Therefore, essentially the same transformational grammar rules and diagram rewrites apply to English and Urdu.
	
	The aforementioned modifications translate the formal claim of Theorem~\ref{textcircth}~\cite{wang2023distilling} for Urdu; the hybrid grammar text for Urdu surjects onto text circuits.
\end{proof}

\begin{remark}
The graph-theoretic isomorphism of the generators of English and Urdu grammars is not surprising, as both languages belong to the Indo-European language family~\cite{kapovic2017indo}. For all the grammatical rules considered, the generators of both languages have the same types, although they may appear in different orderings. For contrast, consider null-subject languages, in which subjects can be omitted~\cite{jaeggli2012null}. One example is Arabic, where pronouns are dropped because they can be inferred from verb morphology~\cite{kenstowicz1989null}. Therefore, the generators of Arabic grammar---or those of other null-subject languages---are not isomorphic to those of English and Urdu, even for the limited fragment of language considered in this chapter.
\end{remark}

\subsection{English and Urdu Give the Same Circuits}

In the case that we only consider context-free generators, the desired isomorphism between English and Urdu circuits essentially follows from the isomorphism between the trees generated by the English and Urdu hybrid grammar.


\begin{lemma}\label{syntaxiso}
Let $\texttt{T}_\texttt{E}$ denote the set of texts generated by the English production rules $\texttt{E}$, and let $\texttt{T}_\texttt{U}$ denote the set of texts generated by the Urdu production rules $\texttt{U}$.
Viewing the syntax trees as rooted labelled trees (vertices are labels like $\texttt{NP}$ or $\underline{\texttt{hai}}$, and the root is the initial sentence type $\texttt{S}$), there is an isomorphism $\texttt{T}_\texttt{E}\simeq \texttt{T}_\texttt{U}$ in the graph-theoretic sense.
\end{lemma}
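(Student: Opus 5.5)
The plan is to prove the isomorphism $\texttt{T}_\texttt{E}\simeq \texttt{T}_\texttt{U}$ by structural induction on derivations. I will build a bijection $\Phi$ from English derivation trees to Urdu derivation trees, and show that it preserves the rooted labelled tree structure.

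\emph{Generator level.} Each production rule in Table~\ref{tab:diff_gram} is a one-level rooted labelled tree, with the parent symbol as root and the right-hand-side symbols as children. I would fix, for each rule $r\in\texttt{E}$, its Urdu counterpart $r'\in\texttt{U}$ from the same row of the table. I would then record the isomorphism $\phi_r$ of labelled trees, exhibited in the proof of the preceding Lemma. For the five identical rules (IV, pre-adjective, both adverbs, conjunction), $\phi_r$ is the identity. For the four red rules, $\phi_r$ permutes the children and sends the copula label $\texttt{\underline{is}}$ to $\texttt{\underline{hai}}$.
\begin{itemize}
\item For Trans.Verb, $\phi_r$ swaps $\texttt{TVP}$ and $\texttt{NP}_2$.
\item For the post-adjective, it swaps $\texttt{\underline{is}}/\texttt{\underline{hai}}$ and $\texttt{ADJ}$.
\item For the adposition, it moves $\texttt{IVP}$ past $\texttt{ADP}\cdot\texttt{NP}$.
\item For the SCV rule, it moves $\texttt{SCV}$ past the bracketed $\texttt{S}$.
\end{itemize}
The key observation is that each $\phi_r$ preserves labels (modulo $\texttt{\underline{is}}\leftrightarrow\texttt{\underline{hai}}$) and the parent–child relation. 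It changes only the left-to-right order of siblings, and rooted labelled trees in the graph-theoretic sense do not see that order.

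\emph{Induction on context-free derivations.} A single-sentence tree is obtained from $\texttt{S}$ by finitely many applications of context-free rules. Because the rules are context-free, each rule application expands one leaf independently of its siblings. I would define $\Phi$ by replacing each local one-level subtree instance of $r$ by $r'$ via $\phi_r$, recursively in the children. Terminal rules are handled by the identity (or by dictionary translation of the word, which does not change the label type).

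By induction on derivation depth, $\Phi(t)$ is an Urdu derivation tree and the induced vertex map is a label-preserving graph isomorphism. The inverse $\Psi$ is built the same way from the $\phi_r^{-1}$, so $\Phi$ is a bijection. Well-definedness uses the fact that each vertex's subtree is determined by the rule applied at it, and this is unaffected by sibling order.

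\emph{Multi-sentence texts and transformational rules.} A text is a sequence of trees plus pronominal links between $\texttt{NP}$ vertices. I would map the sequence componentwise and transport each link along the vertex isomorphisms. I would then check, rule by rule, that the relative-pronoun rewrites (\ref{esrp}), (\ref{essrp}), (\ref{eorp}) and their Urdu counterparts (\ref{usrp}), (\ref{ussrp}), (\ref{uorp}) commute with $\Phi$, with $\texttt{\underline{who}},\texttt{\underline{whom}}\mapsto\texttt{\underline{jo}},\texttt{\underline{jis}}$. These rules only delete or relabel a linked $\texttt{NP}$ and fuse trees in the same sentence order, and Remark~\ref{pronconnect} records that this connectivity agrees.

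I expect this last step to be the main obstacle. The transformational rules are not context-free, and their Urdu versions differ in where the fused material sits within a sentence. So I must verify that, as unordered rooted trees, the fused outputs correspond under $\Phi$. I would first try to show that each fusion rule is determined purely by the link endpoints (subject or object position, which $\phi_r$ preserves) and the tree order, and is therefore natural with respect to $\Phi$.
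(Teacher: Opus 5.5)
Your proposal is essentially the paper's proof. The paper also notes that each Urdu generator is the English one with its children reordered, so the two are isomorphic as unordered rooted labelled trees (with $\texttt{\underline{is}}\cong\texttt{\underline{hai}}$); it then inducts on the number of rule applications, tracking symbol instances with indices where you use an explicit vertex map. Your final step on pronominal links and the relative-pronoun rewrites is not needed here: the paper proves this lemma only for trees generated by the context-free rules of Table~\ref{tab:diff_gram}, and handles the transformational rules (informally) in the subsequent lemma on circuits.
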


\begin{proof}
Essentially, the Urdu generators $\texttt{U}$ correspond exactly to the English generators $\texttt{E}$, except some of them have the order of their outputs switched around.

So, given an English syntax tree generated by some sequence of applications of production rules in $\texttt{E}$, to translate this to an Urdu syntax tree we simply apply the corresponding rules in $\texttt{U}$ to the appropriate symbols. 
The exact proof of this is a simple induction. Note that in order to `apply the right rule to the right symbol', we must track the identities of different symbols, \textit{e.g.} distinguishing between different instances of $\texttt{NP}$ in our string. 
This tracking can be done by numbering the symbols with indices.

Refer to Table~\refeq{tab:diff_gram}. For the base case, the intransitive verb, the transitive verb, the adjective (postpositional), the sentential complement verb or the conjunction rule can be applied. Of these rules, two are exactly the same for English and Urdu. The rest of the rules generate syntax trees that when viewed as rooted labelled trees are isomorphic graph-theoretically.
	\begin{itemize}
	\item Transitive verb
	\[
	\input{urdu/urdudiags/rootlabtree2.tikz} \cong 	 \input{urdu/urdudiags/rootlabtree1.tikz} 
	\]
	\item Postpositional adjective (where $\texttt{\underline{hai}} (\cong \texttt{\underline{is}})$)
	\[
	\input{urdu/urdudiags/rootlabtree7.tikz} \cong 	 \input{urdu/urdudiags/rootlabtree8.tikz} 
	\]
	\item Verb with a sentential complement
	\[
	\input{urdu/urdudiags/rootllabtree5.tikz} \cong 	 \input{urdu/urdudiags/rootlabtree6.tikz} 
	\]
\end{itemize}
The inductive hypothesis is that each pair of English and Urdu syntax trees generated by $n$ applications of corresponding pairs of production rules is isomorphic. 
\[
\input{urdu/urdudiags/tree1.tikz} \ \cong \  \input{urdu/urdudiags/tree2.tikz}
\]

For the induction step, we apply another production rule to the syntax trees generated by the inductive hypothesis. Any of the possible production rules from Table~\refeq{tab:diff_gram} can be applied. All except four rules are different for English and Urdu. Of the four, three were shown to be isomorphic in the base case. The last one is also isomorphic in the same sense. 

Adposition 
\[
\input{urdu/urdudiags/rootlabtree3.tikz} \cong 	 \input{urdu/urdudiags/rootlabtree4.tikz} 
\]
Then, using the inductive hypothesis, English and Urdu syntax trees generated by $n+1$ applications of production rules are isomorphic. 
\[
\input{urdu/urdudiags/tree1b.tikz}\ \cong \  \input{urdu/urdudiags/tree2b.tikz}
\]
\end{proof}

\begin{lemma}
	$\texttt{C}_\texttt{U}$ is isomorphic to $\texttt{C}_\texttt{E}$, i.e., $\texttt{C}_\texttt{U}
	\xrightarrow{\simeq}
	\texttt{C}_\texttt{E}$ (up to word translations at the gate level).
\end{lemma}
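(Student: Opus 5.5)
We plan to prove the lemma by factoring the passage from text to circuits through the syntax-tree isomorphism of Lemma~\ref{syntaxiso}. The goal is to show that the two translations, English and Urdu, commute with that isomorphism up to relabelling of gates by a dictionary. Concretely, we first fix a dictionary $d$: a bijection between English and Urdu terminal words in each grammatical type (nouns, intransitive and transitive verbs, adjectives, adverbs, adpositions, sentential complement verbs, conjunctions). The copula is sent as $\texttt{\underline{is}}\mapsto\texttt{\underline{hai}}$ and the relative pronouns as $\texttt{\underline{who}}\mapsto\texttt{\underline{jo}}$ and $\texttt{\underline{whom}}\mapsto\texttt{\underline{jis (se)}}$. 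Combined with Lemma~\ref{syntaxiso}, $d$ gives a bijection $\Phi:\texttt{T}_\texttt{E}\to\texttt{T}_\texttt{U}$ on labelled texts. Applying $d$ to gate labels likewise gives a map $\Delta:\texttt{C}_\texttt{E}\to\texttt{C}_\texttt{U}$ that preserves wire connectivity, and $\Delta$ is evidently invertible.

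The main step is to show that the square commutes. Writing $\tau_\texttt{E}$ and $\tau_\texttt{U}$ for the surjections of Theorem~\ref{textcircth} and of the preceding lemma, we want $\tau_\texttt{U}\circ\Phi=\Delta\circ\tau_\texttt{E}$. We will argue by induction on the number of production-rule applications, mirroring the proof of Lemma~\ref{syntaxiso}. For each generator in Table~\ref{tab:diff_gram}, we check that the English and Urdu text diagrams have the same boxes, labels up to $d$, and the same $\texttt{NP}$ wire connectivity.

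\begin{itemize}
\item For the four rules marked in red, the Urdu diagram differs from the English one only in the horizontal position of the verb or copula label. The noun wires enter and exit the box in the same subject/object roles, so the two diagrams are equal as string diagrams by the only-connectivity-matters convention.
\item The copula reduction for $\texttt{\underline{hai}}$ is the same rewrite as for $\texttt{\underline{is}}$.
\item By Remark~\ref{pronconnect}, pronominal-link wires and their eliminations (\ref{link2d}), (\ref{pronlink}), (\ref{pronlink2}) have identical connectivity in the two languages.
\item The phrase regions for sentential complement verbs and conjunctions enclose the same wires.
\end{itemize}

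Since text circuits are obtained from text diagrams by a compositional, connectivity-preserving reading of boxes as gates and regions as second-order gates, the English and Urdu circuits coincide up to $d$. Commutativity then follows by induction, because both translations are defined generator by generator and respect composition. We then deduce the isomorphism itself. $\Delta$ is a bijection whose inverse is $d^{-1}$ on labels, and surjectivity of $\tau_\texttt{U}$ ensures that every Urdu circuit is realised, so $\texttt{C}_\texttt{U}\simeq\texttt{C}_\texttt{E}$. Forgetting the word labels altogether, the two circuit sets become literally equal.

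We expect the main obstacle to be the transformational rules for relative pronouns, (\ref{usrp}), (\ref{ussrp}), (\ref{uorp}). They are not context-free tree productions, so the induction over generators must be extended to cover them. The plan is to avoid comparing trees and instead compare the resulting text diagrams directly. Before fusion, both languages give two sentence diagrams joined by a pronominal-link wire between the same noun roles (subject–object, subject–subject, or object–object). After applying (\ref{pronlink}) and (\ref{pronlink2}), the fused diagram depends only on this link and not on word order, so the English and Urdu outputs agree up to $d$.
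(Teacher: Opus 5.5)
Your proposal is correct and follows essentially the paper's route: the syntax-tree isomorphism of Lemma~\ref{syntaxiso} is lifted through the pronominal links and the conversion to text diagrams, where the English and Urdu diagrams coincide up to labels. The same diagram-to-circuit conversion then yields the same circuits up to gate-level translation. Your commuting square $\tau_\texttt{U}\circ\Phi=\Delta\circ\tau_\texttt{E}$ and your choice to compare the relative-pronoun fusions at the level of text diagrams, rather than trees, just make explicit what the paper states as `the isomorphism lifts to an isomorphism of the structures at each intermediate step'.
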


\begin{proof}
According to Lemma~\ref{syntaxiso}, there is an isomorphism $\texttt{T}_\texttt{E}\simeq \texttt{T}_\texttt{U}$ between individual syntax trees.
We introduce pronominal links and the rewrite rules ((\ref{esrp},~\ref{essrp},~\ref{eorp}) for English; (\ref{usrp},~\ref{ussrp},~\ref{uorp}) for Urdu) that allow us to adjoin pronominally-linked trees into single sentences, thus attaining the full-blown hybrid grammar. The transformational rewrite rules for English and Urdu are the same in terms of the connectivity of pronominal links. The rewrite rules differ with respect to the word order in individual syntax trees. This difference is eliminated by Lemma~\ref{syntaxiso}. Therefore, the isomorphism between trees lifts to a kind of isomorphism between these full hybrid grammar structures. 

Next, we convert hybrid grammar to text diagrams. We apply the rules in Table~\ref{tab:diff_gram} for resolving the $\texttt{S}$ type into the constituent $\texttt{NP}$ wires and converting phrase scope into phrase regions. Then we turn the pronominal links into dashed wires (\ref{link2d}) in preparation for composition. After performing the composition using the rewrite rule (\ref{pronlink}), we recover text diagrams. 

The isomorphism of hybrid grammar structures simply lifts to an isomorphism of the structures at each intermediate step. At the step where we reach the level of text diagrams, the isomorphism becomes an exact equality (modulo the different word labels, and allowing a certain degree of topological deformation as we usually do in string diagrams).

With the (topologically) identical structure of the English and Urdu diagrams, we simply apply the same conversion map from text diagrams to text circuits to obtain the same text circuit up to word-translations at the level of individual gates.
\end{proof}

\begin{example}
Consider the English sentence \texttt{\underline{The young student who sees the honest}}\\ \texttt{\underline{teacher passionately teach smiles at him.}} which we translate into the Urdu sentence 
\begin{center}
	\includegraphics[scale=0.5]{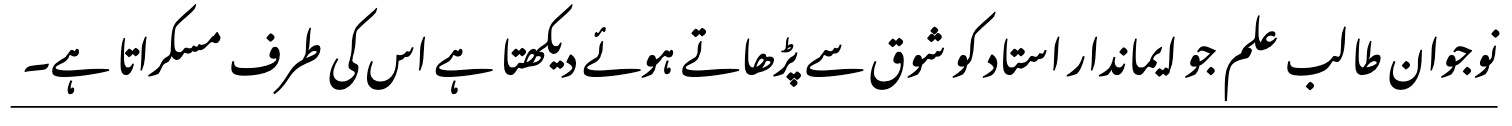}
\end{center}
\noindent transliterated as:
\begin{center}
	\textit{\underline{nojawan} \ \ \underline{talib-e-ilm} \ \ \  \underline{jo} \ \ \ \ \ \ \underline{imandar} \ \ \ \  \underline{ustad (ko)} \ \ \underline{shauq se} \ \ \underline{parhate huwe}} (Urdu) \\
	\ \ \underline{(the) young} \ \ \  \underline{student} \ \ \ \ \underline{who} \ \ \underline{(the) honest} \ \ \  \underline{teacher} \ \   \underline{passionately} \ \ \ \ \ \  \underline{teach} \ \ \ \ \ \ \ \ \ \ \ \ \ \ \ \ \ \ \  \\
	\ \\
	\textit{\underline{dekhta hai} \ \ \underline{us} \ \ \underline{ki taraf} \ \  \underline{muskurata hai}} (Urdu) \\
	\ \  \ \underline{sees}  \ \ \ \ \ \ \   \underline{him} \ \ \ \ \     \underline{at} \ \ \ \ \ \ \ \  \underline{smiles} \ \  \ \ \ \ \ \  \ \ \ \ \ \ \ \ \  \ \  
\end{center}

\noindent We start with the hybrid grammar structure of the English and Urdu sentences. 

\[\scalebox{0.7}{\input{urdu/text2circENG_1.tikz}}\]
\[\scalebox{0.65}{\input{urdu/circ2textURD_6.tikz}}\]

\noindent We replace the individual tree diagrams with the corresponding text diagrams, and the pronominal link (multi)arrows with dashed wires.

\[\scalebox{0.7}{\input{urdu/text2circENG_2.tikz}}\]
\[\scalebox{0.7}{\input{urdu/circ2textURD_5.tikz}}\]


\noindent Eliminating the dashed wires with diagram rewrite rules, we obtain text diagrams for the the whole sentences. 

\[\scalebox{0.7}{\input{urdu/text2circENG_3b.tikz} \quad\quad\quad\quad\quad\quad \input{urdu/circ2textURD_4b.tikz}}\]

\noindent At this point, we can see that the text diagrams for English and Urdu have the same topological structure. We, therefore, apply the same conversion map to express text diagrams in terms of gates and boxes, resulting in the same text circuits, up to gate-level translations.
\[\scalebox{0.7}{\input{urdu/textcircuit4.tikz} \quad\quad\quad\quad\quad\quad\quad\quad\quad\quad\quad\quad\quad \input{urdu/textcircuit5.tikz}}\]
\end{example}

\begin{remark}
The process from text to circuits can be (non-deterministically) reversed. This makes text circuits generative formalisms for English and Urdu text. Note that we are only concerned here with the fragment of language described in this chapter.
\end{remark}

\section{Summary and Outlook}\label{conclusion}

In this chapter, we demonstrated how DisCoCirc text circuits eliminate grammatical bureaucracy based on word or phrase order for restricted fragments of English and Urdu. Building on Ref.~\cite{wang2023distilling}, which established a surjection from English hybrid grammar to English text circuits, we adapted the hybrid grammar for Urdu and defined translation rules into text circuits that mirror those for English. This yields a surjective map from Urdu hybrid grammar to Urdu text circuits. For the language fragments considered, the English and Urdu hybrid grammars are isomorphic. Using this isomorphism, we show that their corresponding text circuits are structurally identical, differing only in gate labels. 

A major difference between grammars in different natural languages arises from different word orderings for, for example, subject, verb and object. For instance, in English the usual ordering is subject-verb-object (SVO), whereas in Urdu, it is subject-object-verb (SOV). These differences, in turn, exist because human verbal communication is restricted to one dimension and different cultures and demographics made different stylistic choices as languages evolved~\cite{wang2023distilling}. But there is no such restriction on machines. Two-dimensional grammars such as ours may be a suitable abstraction of text for computers, particularly quantum computers~\cite{laakkonen2024quantum, duneau2024scalable}, and may prove advantageous for natural language processing tasks, such as machine translation. This chapter takes a step forward in this direction.  
Our work represents a first step, and there is work to be done to expand the fragments of natural language that we can handle.

A limitation of this work is that both English and Urdu belong to the Indo-European language family and share similar grammatical structures, including a clear noun–verb distinction~\cite{kapovic2017indo}. While they differ in word order---English follows SVO and Urdu follows SOV---the generators of their grammars are graph-theoretically isomorphic for the fragment considered, making the main results of this work somewhat unsurprising. In contrast, extending this framework to typologically distinct languages, such as null-subject languages like Arabic~\cite{kenstowicz1989null, jaeggli2012null}, presents a greater challenge. These languages allow subject omission due to rich verb morphology, resulting in grammatical generators that are not isomorphic to those of English or Urdu. Addressing such structural differences is an important direction for future work.

\chapter{\label{chapconclusions}Conclusion}

\epigraph{`The compartments into which human thought is divided are not so water–tight that fundamental progress in one is a matter of indifference to the rest.'}  
{Arthur Eddington, \\The Philosophy of Physical Science~\cite[p.~8]{eddington1939philosophy}}

\textit{In this chapter, we summarise the key contributions described in this thesis and discuss avenues for future research. At the end of this chapter, we make a case for the prospects of process-relational philosophy in science.}

\section*{A process-theoretic approach to constructor theory and quantum physics}

In Chapter~\ref{chapconstructors}, we describe three main contributions. 

Firstly, we present categorical semantics for constructor theory while adhering to the desired mathematical foundations described in the seminal article~\cite{deutsch2013constructor} on constructor theory. Taking the theory of conceivable tasks to be the symmetric monoidal category (SMC) $\Rel$, we show that for a given choice of substrates, the set of possible tasks forms a sub-SMC of $\Rel$. 

Secondly, we argue that in the constructor-theoretic formulation of non-relativistic quantum theory, there is an inconsistency between the restrictions imposed by the principles of locality and composition. Constructor theorists argue that there exists a local formulation of quantum theory: the Deutsch-Hayden approach~\cite{deutsch2000information}. We demonstrate, using concrete examples, that this approach is not compositional.

Thirdly, we discuss that a key difference between constructor theory and process theories concerns the principle of locality. 
We argue that categorical quantum mechanics (CQM) can be conceived as a constructor theory of quantum physics that is compositional but non-local.

\subsection*{Future work}

We recommend several directions for future research:

\begin{itemize}
	\item A key future direction is to develop process-theoretic formulations of constructor theories in domains such as information~\cite{deutsch2015constructor}, thermodynamics~\cite{marletto2016constructorb}, probability~\cite{marletto2016constructor}, and life~\cite{marletto2015constructor}. This will clarify their foundational principles and interconnections, enabling a unified understanding of processes across these fields.
	\item We also propose investigating the relationship between constructor theory and resource theories~\cite{coecke2016mathematical, plesnik2024impossibility}. Exploring this connection could yield new insights into how resources are transformed and conserved in physical and informational processes, enriching both frameworks and opening new avenues for application.
	\item The string-diagrammatic syntax of process theories can also be interpreted in SMCs other than the one discussed in this thesis. This provides an opportunity to investigate and explore the ramifications of constructor theory beyond $\Rel$, which has been the default modelling choice in the constructor theory literature so far~\cite{deutsch2013constructor, marletto2022emergence}.
\end{itemize}

\section*{Wave-based computation in diagrams}

In Chapter~\ref{chapwavelogic}, we develop a string-diagrammatic formalism for wave-based logic circuits with phase encoding. The formalism is motivated with reference to spin-wave or `magnonic' circuits. Through the example of spin-wave circuits, the usage of the formalism in designing, analysing and simplifying Boolean logic circuits is demonstrated.

\subsection*{Future work}

Some directions for future research are outlined below:

\begin{itemize}
	\item To extend this work, we would anticipate developing formal semantics for the string diagrams presented here, and proving that the formalism is sound and complete for Boolean algebra.
	\item Another avenue for future work is to conceive of the formalism as a diagrammatic alternative to symbolic Boolean algebra, wherein lies the possibility of a new axiomatisation of Boolean algebra.
	\item This work has proposed a formalism for the theoretical modelling of spin-wave circuits, which abstracts away many hardware-level complexities. However, spin-wave computing inherits core limitations of analogue systems, such as noise susceptibility, signal degradation, error propagation, and energy-intensive error correction. These issues are not fully addressed in the current framework. Future extensions may need to incorporate error-correction mechanisms---typically involving non-linear operations~\cite{verba2019correction}---and amplitude normalization for cascaded logic gates, which is also non-linear~\cite{dutta2015compact}. Additionally, modelling crosstalk noise at smaller scales and integrating with platforms such as optical systems, superconducting qubits, or CMOS technologies~\cite{chumak2015magnon, li2020hybrid, lachance2019hybrid} will be crucial for improving the framework’s practical applicability. Addressing these aspects will enhance its relevance to real-world applications and contribute to more scalable and resilient spin-wave computing architectures.
\end{itemize}

\section*{DisCoCirc beyond English text}

In Chapter~\ref{chapdiscocirc}, we show how word- or phrase-order-based grammatical bureaucracy is eliminated by DisCoCirc text circuits for restricted fragments of English and Urdu.

We describe a restricted version of the hybrid grammar, text diagrams and text circuits for English developed in Ref.~\cite{wang2023distilling}. According to Ref.~\cite{wang2023distilling}, there is a surjection from the set of all English text generated with the English hybrid grammar to the set of all English text circuits. In a similar vein, in this thesis, we describe how the hybrid grammar can be adapted for Urdu. We then provide rules for its translation into text diagrams and text circuits, which are essentially the same as those in Ref.~\cite{wang2023distilling}. We show that this gives a surjective map from the set of all Urdu text generated with Urdu hybrid grammar to the set of all Urdu text circuits. 

Furthermore, for the language fragments considered in this thesis, there is a clear isomorphism between the hybrid grammars for English and Urdu. Using this isomorpism, we show that text circuits for English and Urdu become the same, up to translation of gate-labels.

\subsection*{Future work}

Our work can be extended in several directions: 

\begin{itemize}
	\item This thesis focused on a limited fragment of English and Urdu. Future work can extend this to broader language fragments by incorporating additional grammatical features such as multiple tenses, reflexive pronouns, determiners etc.
	
	\item The primary syntactic variation in the language fragments studied lies in word order: English follows SVO, while Urdu follows SOV.\footnote{Different natural languages can share the same grammatical structure, including word order. For instance, Hindi and Urdu have nearly identical grammars but different vocabularies~\cite{bhatt2018teaching}. The results obtained for Urdu in this thesis directly apply to Hindi for the considered fragment.} Any language that differs from English or Urdu only in the word order of simple sentences can potentially be modelled using DisCoCirc. Hybrid grammars for such languages can be constructed to fit within the DisCoCirc framework. Future research can pursue both a theoretical characterisation of all possible DisCoCirc languages and an empirical investigation into which of them correspond to real-world languages. DisCoCirc, as a two-dimensional framework, abstracts away from the linear order of words and thus supports language-independent grammatical modelling.
	
	\item A limitation of this work is that both English and Urdu are Indo-European languages with comparable grammatical structures, such as clear noun-verb distinctions~\cite{kapovic2017indo}. Although they differ in word order, their grammatical generators for the studied fragment are graph-theoretically isomorphic, resulting in English and Urdu texts mapping to the same circuits. Extending DisCoCirc to typologically distinct languages, such as null-subject languages like Arabic~\cite{kenstowicz1989null, jaeggli2012null}, presents additional challenges. These languages allow subject omission due to rich verbal morphology, leading to non-isomorphic grammatical structures. Addressing such typological variation is essential for broadening the applicability of the DisCoCirc framework.
	
	\item The relationship between DisCoCirc and other linguistic formalisms---especially discourse representation theory~\cite{kamp2010discourse}, which shares conceptual similarities---requires further exploration. Some initial work in this direction appears in Ref.~\cite{wang2023distilling}.
\end{itemize}

\section*{An invitation to process-relational philosophy}

As mentioned in Chapter~\ref{chapintro}, this thesis can be considered a proof of concept of applied process-relational philosophy. 

CQM, reviewed in Chapter~\ref{chapinterpretations1}, is a process-theoretic formulation of quantum physics. It suggests that the most natural ontology for quantum physics centres on processes and their interactions, rather than on states and their dynamics.

Chapter~\ref{chapconstructors} shows how constructor theory is essentially process-theoretic, from both a conceptual and a mathematical point of view. This implies that constructor theories of information~\cite{deutsch2015constructor}, thermodynamics~\cite{marletto2016constructorb}, probability~\cite{marletto2016constructor}, and life~\cite{marletto2015constructor} are potential process theories as well. Concrete details of this claim will be the subject of future work. With regards to quantum theory, Chapter~\ref{chapconstructors} argues how the principles of locality and compositionality cannot coexist. Furthermore, discarding the principle of locality, CQM achieves the desiderata of a constructor theory for quantum physics---which is, of course, process-theoretic by construction.  

Chapter~\ref{chapwavelogic} develops a formalism for wave-based logic circuits in which processes are primitive. Logic variables are initialised as phase-shifting processes. Composing these phase shifting processes gives rise to different logic gates depending on the topological structure of the composite processes.

Chapter~\ref{chapdiscocirc} describes two-dimensional circuits to model grammar, meaning, and their interaction in natural languages. The key assumption underlying DisCoCirc is that the meaning of a piece of text is determined by how it updates the meanings of the words within it. In DisCoCirc, this assumption is implemented process-theoretically using text circuits. The meanings of nouns (denoted by wires) are updated by processes like verbs and adjectives (represented by gates), the meanings of which are in turn altered by processes like adverbs and adpositions (represented by higher order gates). DisCoCirc offers prospects in quantum natural language processing~\cite{laakkonen2024quantum, duneau2024scalable}, where the work on quantum processes and computation discussed in Chapters~\ref{chapinterpretations1} and~\ref{chapconstructors} is of immediate relevance. 

The person most notably associated with process philosophy in recent history is Alfred North Whitehead. Whitehead's treatise \textit{Process and Reality}~\cite{whitehead2010process} was published in 1929, but was based on his 1927-28 lectures at the University of Edinburgh. Quantum mechanics, as we know it today, was born in the mid-1920s and undoubtedly influenced Whitehead's work. Whitehead was a mathematician first and a philosopher later; however, he did not have access to the tools of applied category theory---which are arguable particularly well-suited to process ontology, relationalism and compositionality. The absence of an appropriate accompanying mathematical framework may be one reason process philosophy has not yet caught on in scientific research.

Category theory, developed in the mid-1940s~\cite{eilenberg1945general, mac1978categories}, was initially of interest in pure mathematics only and was considered very abstract. In the last two decades, it has found application in scientific domains as wide-ranging as quantum physics and quantum computing~\cite{coecke2006kindergarten, coecke2022kindergarden, CKbook, heunen2019categories, CGbook, coecke2023basic}, chemistry~\cite{baez2022applied, baez2017compositional, genovese2020categorical}, information theory~\cite{bradley2021entropy}, electrical circuit theory~\cite{baez2015compositional, boisseau2021string}, photonics~\cite{de2022quantum, clement2022lov}, linguistics~\cite{clark2008compositional, coecke2021mathematics, wang2023distilling}, game theory~\cite{hedges2015string, hedges2016compositionality, ghani2018compositional}, control theory~\cite{baez2014categories, erbele2016categories}, thermostatics~\cite{baez2023compositional}, cryptography~\cite{broadbent2022categoricalb, pavlovic2014chasing}, probability theory~\cite{sturtz2014categorical}, functional programming~\cite{milewski2018category}, machine learning~\cite{shiebler2021category, gavranovic2024fundamental} etc. This recent surge of interest in applied category theory~\cite{fong2019invitation} is perhaps due to the development of graphical calculi like string diagrams that marry visual intuition with mathematical rigour. 

String diagrams are a boon not only to professional scientists but also to high school students. In a recent study~\cite{dundar2023quantum}, students aged 16 to 18 were taught quantum physics and quantum computing using the ZX-calculus~\cite{CGbook}. The excellent performance of the students (over 80\% pass-rate and around half the number of students earning a distinction) on a postgraduate-level quantum theory exam~\cite{dundarcoecke2025makingquantumworldaccessible} underscores the potential impact of the string-diagrammatic formalisms on science education. 

The author of this thesis invites you to the research area of process-relational philosophy and its application to the sciences. A powerful mathematical toolbox is available to support the accompanying conceptual change of worldview. At best, this research area will help make concrete progress by solving problems or making discoveries. At worst, it will provide a new perspective to understand our existing scientific theories and thereby also learn something about nature. There is nothing to lose.

\begin{appendices}
\chapter{\label{appendwaves}Wave Logic Circuits: Derivation of the Associativity Rule}

\begin{proposition}\label{propa}
	For all $\alpha, \beta \in \{0, \pi\}$, $\scalebox{0.75}{\input{thmiii1.tikz}}  = \scalebox{0.75}{\input{thmiii5.tikz}}$.
\end{proposition}
\begin{proof}
	\[
	\scalebox{0.55}{\input{thmiii1.tikz}} \overset{\text{(CH)}}{=} \scalebox{0.55}{\input{thmiii2.tikz}}
	\overset{\text{(D)}}{=} \scalebox{0.55}{\input{thmiii3.tikz}}
	\]
	\[
	\overset{\text{(M)}}{=} \scalebox{0.55}{\input{thmiii4.tikz}}
	\overset{\text{(CH)}}{=} \scalebox{0.55}{\input{thmiii5.tikz}} 
	\]
\end{proof}

\begin{proposition}\label{propb}
	For all $\alpha, \beta \in \{0, \pi\}$, $\scalebox{0.75}{\input{thmiii1b.tikz}}  = \scalebox{0.75}{\input{thmiii5.tikz}}$.
\end{proposition}
\begin{proof}

	\[
	\scalebox{0.55}{\input{thmiii1b.tikz}} \overset{\text{(CH)}}{=} \scalebox{0.55}{\input{thmiii2b.tikz}}
	\overset{\text{(D)}}{=} \scalebox{0.55}{\input{thmiii3c.tikz}}
	\]
	\[
	\overset{\text{(M)}}{=} \scalebox{0.55}{\input{thmiii4b.tikz}}
	\overset{\text{(CH)}}{=} \scalebox{0.55}{\input{thmiii5.tikz}} 
	\]
\end{proof}

\begin{proposition}\label{sameness}
	If $\scalebox{0.75}{\input{andgate4.tikz}} = 	\scalebox{0.75}{\input{andgate5.tikz}}$ and $\scalebox{0.75}{\input{andgate6.tikz}} = 	\scalebox{0.75}{\input{andgate7.tikz}}$,\\ then $\scalebox{0.75}{\input{thmv1.tikz}} = 	\scalebox{0.75}{\input{thmv8.tikz}}$.
\end{proposition}
\begin{proof}
	Assume $\scalebox{0.55}{\input{andgate4.tikz}} = 	\scalebox{0.55}{\input{andgate5.tikz}}$ and $\scalebox{0.55}{\input{andgate6.tikz}} = 	\scalebox{0.55}{\input{andgate7.tikz}}$. Then, we have	
	\[
	\scalebox{0.55}{\input{thmv1.tikz}} \overset{\text{(CH)}}{=} \scalebox{0.55}{\input{thmv2.tikz}}
	\overset{\text{(CH2)}}{=} \scalebox{0.55}{\input{thmv3.tikz}}
	\overset{\text{(D)}}{=} \scalebox{0.55}{\input{thmv4.tikz}} 
	\]  
	\[
	\overset{\text{}}{=} \scalebox{0.55}{\input{thmv5.tikz}}
	\overset{\text{(D)}}{=} \scalebox{0.55}{\input{thmv6.tikz}} 
	\overset{\text{(CH2)}}{=} \scalebox{0.55}{\input{thmv7.tikz}}
	\overset{\text{(CH)}}{=} \scalebox{0.55}{\input{thmv8.tikz}}
	\]
\end{proof}

\begin{proposition}\label{assoc1}
	For all $\alpha, \beta, \gamma \in \{0, \pi\}$, 
	\[\scalebox{0.75}{\input{ass2.tikz}}\]
\end{proposition}

\begin{proof}
	In Proposition (\ref{sameness}), replace $\scalebox{0.55}{\input{thmv1.tikz}}$ by $\scalebox{0.55}{\input{ass2b.tikz}}$ and $\scalebox{0.55}{\input{thmv8.tikz}}$ by $\scalebox{0.55}{\input{ass2c.tikz}}\ $. Now, we need to show:
	
	(a)	$\scalebox{0.55}{\input{thmvia1.tikz}} \overset{\text{}}{=} \scalebox{0.55}{\input{thmvia4.tikz}}$
	\\ 
	and (b) $ \scalebox{0.55}{\input{thmvib1.tikz}} \overset{\text{}}{=} \scalebox{0.55}{\input{thmvib6.tikz}}$

	\[
		\text{(a)}\ \ \  L.H.S. = \scalebox{0.55}{\input{thmvia1.tikz}} \overset{\text{(CM)}}{=} \scalebox{0.55}{\input{thmvia2.tikz}}
	\overset{\text{Prop. (\ref{propa})}}{=} \scalebox{0.55}{\input{thmvia3.tikz}}
	\]
	\[
	R.H.S. = \scalebox{0.55}{\input{thmvia4.tikz}} 
	\overset{\text{(CM)}}{=} \scalebox{0.55}{\input{thmvia5.tikz}}
	\]
	\[
	\overset{\text{(D)}}{=} \scalebox{0.55}{\input{thmvia6.tikz}} 
	\overset{\text{Prop. (\ref{propa})}}{=} \scalebox{0.55}{\input{thmvia7.tikz}}
	\overset{\text{Prop. (\ref{propb})}}{=} \scalebox{0.55}{\input{thmvia8.tikz}}
	\]

	\[
	\text{(b)} \ \ \  L.H.S = \scalebox{0.55}{\input{thmvib1.tikz}} \overset{\text{(CM)}}{=} \scalebox{0.55}{\input{thmvib2.tikz}}
	\]
	\[
	\overset{\text{(D)}}{=} \scalebox{0.55}{\input{thmvib3.tikz}}
	\overset{\text{(CH2)}}{=} 
	\scalebox{0.55}{\input{thmvib4.tikz}} 
	\]
	\[
	\overset{\text{(CH)}}{=} \scalebox{0.55}{\input{thmvib5.tikz}}
	\]
	\[
	R.H.S. = \scalebox{0.55}{\input{thmvib6.tikz}} 
	\overset{\text{(CM)}}{=} \scalebox{0.55}{\input{thmvib7.tikz}}
	\]
	\[
	\overset{\text{(D)}}{=} \scalebox{0.55}{\input{thmvib8.tikz}}
	\overset{\text{(D)}}{=} \scalebox{0.55}{\input{thmvib9.tikz}}
	\]
	\[
	\overset{\text{(CH2)}}{=} \scalebox{0.55}{\input{thmvib10.tikz}}
	\overset{\text{(CH)}}{=} \scalebox{0.55}{\input{thmvib11.tikz}}
	\overset{\text{(D)}}{=} \scalebox{0.55}{\input{thmvib12.tikz}} \\
	\]
\end{proof}

\begin{proposition}
	For all $\alpha, \beta, \gamma \in \{0, \pi\}$, 
	\[\scalebox{0.75}{\input{ass3.tikz}}\]
\end{proposition}

\begin{proof}
	Define complements of $\alpha$, $\beta$ and $\gamma$: 
	\[
	\scalebox{0.55}{\input{comp2.tikz}}\ , \ \ \scalebox{0.55}{\input{comp3.tikz}}\ , \ \ \scalebox{0.55}{\input{comp4.tikz}}.
	\]
	Then, from Proposition (\ref{assoc1}), we have
	\[
	\scalebox{0.55}{\input{ass2d.tikz}}
	\]
	Taking the complement of both sides, we get	
	\[
	\scalebox{0.55}{\input{ass2d2.tikz}}
	\]
	\[
	\overset{\text{(C)}}{\iff} \scalebox{0.55}{\input{ass2d3.tikz}}
	\]
	\[
	\overset{\text{(C)}}{\iff} \scalebox{0.55}{\input{ass2d4.tikz}}
	\]
	\[
	\overset{\text{(F)}, \text{(ID)}}{\iff} \scalebox{0.55}{\input{ass3.tikz}}
	\]
\end{proof}
\end{appendices}

\renewcommand*{\bibfont}{\fontsize{10.5}{14}\selectfont}

\printbibliography[title={References}]
\end{document}